\documentclass[runningheads,envcountsame,orivec]{llncs}

\usepackage{amsmath,amssymb}
\usepackage{ebproof}
\usepackage{enumitem}
\usepackage[T1]{fontenc}
\usepackage{graphicx}
\usepackage{ifthen}
\usepackage[mathlines]{lineno}
\usepackage{mathtools}
\usepackage{stmaryrd}
\usepackage{tikz}
\usepackage{thm-restate}
\usepackage{xparse}

\usepackage{hyperref}
\usepackage[capitalize]{cleveref}

\usetikzlibrary{calc,decorations.pathmorphing,shapes}

\input{macros.tex}

\begin{document}
\title{Denotational reasoning for asynchronous multiparty session types}
\author{Dylan McDermott \and Nobuko Yoshida}
\institute{University of Oxford\\
\email{dylan@dylanm.org, nobuko.yoshida@cs.ox.ac.uk}}
\authorrunning{D.\ McDermott and N.\ Yoshida}
\maketitle

\begin{abstract}
We provide the first denotational semantics for asynchronous
multiparty session types with precise asynchronous subtyping. 
Our semantics enables us to reason about
asynchronous
message-passing, in which message-sending is non-blocking.
It enables us to prove the correctness of communication optimisations, in particular, those involving reordering of messages.
Our development crucially relies on modelling message-passing as a computational effect.
We apply grading, a paradigm for tracking computational effects, to
asynchronous message-passing, demonstrating that multiparty session
typing can be viewed as an instance of grading. 
We demonstrate the utility of our model by showing that it
forms an adequate denotational semantics for a call-by-value
asynchronous message-passing 
calculus, that ensures communication safety, deadlock-freedom
and liveness in the presence of communication optimisations.
\keywords{multiparty session type \and asynchronous subtyping \and denotational semantics \and graded monad}
\end{abstract}

\section{Introduction}
\label{sec:introduction}
\emph{Multiparty session types} (MPST) provide a typing discipline
for ensuring that multiple participants communicating via
message-passing conform to a \emph{multiparty protocol}, satisfying 
desired safety and liveness properties such as (1) \textbf{communication safety} 
(aka~\emph{type-safety}) 
(no participant will receive a message with a value of an unexpected type or unexpected label); (2) \textbf{deadlock-freedom} 
(the participants will never get stuck because they are waiting for each other to send a message);
(3) \textbf{liveness} (aka~\emph{progress}~\cite{CairesP10,CastellaniDG24,CastellaniDG23b}) 
(if one participant wishes to communicate with another, then that communication will \emph{eventually} happen following the protocol).
Deadlock-freedom implies that two participants $\pp$ and $\pq$ cannot both be blocked waiting for messages from each other (wait-free), while
liveness requires that, if either of $\pp$ and $\pq$ wants to communicate with the other, then that communication will eventually happen.
The theory of MPST can guarantee these properties for multiple 
participants either by taking the \emph{top-down 
approach} \cite{HYC2016,VeryGentle,SynchronousSubtyping}, which uses a
\emph{global type} to specify a global protocol, or by taking the
\emph{bottom-up approach} \cite{Scalas2019,GPPSY2023}, which 
model-checks a set of local types to ensure a desired property.
A vast number of session type theories have been proposed, 
but one simple question has been left open:
while several semantic studies have been 
made for binary (2-party) session types 
based on, e.g., logical 
relations \cite{balzer2023logicalrelationssessiontypedconcurrency}
and game semantics \cite{CY2019}, 
no extensional denotational semantics for an expressive MPST calculus
exists (see \cref{sec:related-work}). 
Our challenge is to find a simple but effective denotational semantics for MPST,
one that is useful for proving properties of message-passing programs.

We address this challenge by constructing a denotational semantics for MPST that is \emph{extensional} in the sense that it hides non-observable behaviours of programs (e.g.\ internal reductions), but still captures observable behaviours (e.g.\ sending a message to another participant).
This enables us to reason about systems, without non-observable details getting in the way.
For programming languages without message-passing, extensional denotational semantics have been successfully used for program reasoning, for instance, proofs of equivalences between syntactically distinct programs (e.g.\ \cite{kammar2012algebraic}).
We do the same, but for MPST; we can use our semantics to prove the validity of simple optimizations of MPST systems (e.g.\ \cref{ex:global-state-example-bisim} below).

Specifically, we introduce a simple notion of \textbf{computation tree}, and show that it provides a model of asynchronous message-passing.
We introduce \lambdasmp{}, an idealised call-by-value programming language with a type system based on session types.
\lambdasmp{} features \emph{asynchronous} message-passing, in which messages are buffered into queues so that sending a message does not block.
We show that our computation trees form a denotational semantics for \lambdasmp{}, and this semantics is \textbf{adequate} with respect to a typed notion of \emph{bisimilarity} for \lambdasmp{}.
This notion of bisimilarity accounts for asynchrony, and thus our model can reason about program equivalences that rely on asynchrony.

We design \lambdasmp{}, and its interpretation using computation trees, by
following the key insight that sending and receiving of messages are
\emph{computational effects}, analogous to mutating state or raising an
exception.
Recent work~\cite{katsumata2014parametric,orchard2019quantitative} has
established \textbf{grading} as the key technique of
tracking computational effects compositionally, and thus we design \lambdasmp{} as a graded
type system (aka an \emph{effect system}).
A graded type system assigns both a type and a \emph{grade} to each
computation; here the grades are multiparty session types, which provide enough information to enforce our desired safety and liveness properties.
Our type system is not based on linear logic, unlike previous session type systems (see \cref{sec:related-work}).
Integrating session types into a call-by-value 
$\lambda$-calculus elucidates the connection between effects 
and session types, and enables us to apply techniques from the
computational effects literature to session types.
This insight is crucial to the design of our semantics:
the standard way of modelling a graded type systems is to use a
\emph{graded monad}~\cite{borceux2005internal,smirnov2008graded,mellies2012parametric,katsumata2014parametric},
so we show that computation trees form a graded monad that can be used to model \lambdasmp{}.

Asynchronous message-passing is more widely adopted in distributed systems than
\emph{synchronous} message-passing, where a computation that sends a message has to wait until that message is received before making any further progress.
We give an operational semantics for \lambdasmp{}, accounting for asynchrony in the usual way, namely by buffering messages into a notion of \emph{queue}.
For our denotational semantics, we can use a simpler setup that 
does not involve queues, but is still asynchronous.
We also account for asynchrony in our types, using 
the sound and complete (\textbf{precise}) \textbf{asynchronous multiparty 
session subtyping} of Ghilezan et al.\ \cite{GPPSY2023}.
Asynchrony enables a more permissive subtyping, and hence a more permissive type system, than synchronous message-passing, because some deadlocks that occur with synchronous semantics do not occur with an asynchronous semantics. The precise asynchronous subtyping 
enables practically useful \emph{communication optimisations}.   
Several sound algorithms have been
developed 
\cite{DBLP:journals/pacmpl/Castro-PerezY20,DBLP:conf/ppopp/CutnerYV22,BocchiK0025}
and implemented in   
programming languages such as 
Rust, MPI and C.
Asynchronous subtyping has also been mechanised in Rocq~\cite{EY2024}.

In this paper, instead of taking the definition of subtyping from \cite{GPPSY2023}, we reformulate the definition from the ground up.
Our definition of subtyping improves on \cite{GPPSY2023} in that it only involves session types (not their infinite unfoldings as session \emph{trees}), and also enables subtyping for session types with free type variables (which are used in \lambdasmp).
Nevertheless, we show for closed session types that our definition is equivalent to \cite{GPPSY2023}.

\paragraph{\textbf{Contributions.}}
This paper provides an effective denotational semantics for MPST,
based on the observation that message-passing is a computational effect.
\textbf{\cref{sec:basic-calculus}} introduces our calculus \lambdasmp{}, the subsequent sections provide the main contributions of the paper.
\textbf{\cref{sec:session-types}} introduces our new formulation of asynchronous session subtyping. This is the first sound and complete definition of asynchronous session subtyping that permits subtyping in the presence of type variables, and the first that does not rely on session trees.
\textbf{\cref{sec:graded-calculus}} introduces our session type system for \lambdasmp{}. This is the first session type system for a call-by-value language based on grading instead of linearity, and demonstrates the connection between session types and computational effects.
\textbf{\cref{sec:computation-trees}} introduces \emph{computation trees} as a simple representation of asynchronous message-passing computation. It demonstrates how to account for asynchrony without using queues, and defines a graded monad for asynchronous message-passing.
\textbf{\cref{sec:semantics}} shows that computation trees provide a model of asynchronous message-passing, in the form of an \textbf{adequate} denotational semantics for \lambdasmp{}.
\textbf{\cref{sec:session-calculus}} demonstrates the utility of our model, by using the model to prove safety and liveness properties of \lambdasmp{}.

\section{A call-by-value message-passing calculus}\label{sec:basic-calculus}

We begin by introducing the syntax and operational semantics of our message-passing calculus \lambdasmp{}, and giving a few examples.
The purpose of \lambdasmp{} is to demonstrate our denotational semantics, and as such we only include enough features to have non-trivial message-passing programs.
As such we do not focus on the \emph{practicality} of \lambdasmp{}, neither for writing programs in \lambdasmp{}, nor for implementing \lambdasmp{} itself.\footnotemark{}
\footnotetext{%
  In particular, we do not consider decidability.
  Indeed, asynchronous session subtyping is known to be undecidable~\cite{bravetti2017undecidability}, so an implementation would be of a sound and decidable approximation of subtyping, as discussed in \cite{BocchiK0025}.
  An implementation would also have \emph{grade polymorphism}, as implemented for instance in Granule~\cite{orchard2019quantitative}.
}
Since we want to emphasize the connection with computational effects, we base \lambdasmp{} on an existing effectful calculus, namely the \emph{fine-grain call-by-value} calculus~\cite{levy2003modelling}.
We take a simplified\footnotemark{} fine-grain call-by-value, and add asynchronous message-passing as an effect, along with first-order guarded recursive functions.
\footnotetext{%
  We designed our calculus so that deadlock-freedom and liveness are non-trivial, while the calculus is otherwise as simple as possible.
  In particular, compared to fine-grain call-by-value, we do not have higher-order functions.
  This is not because they would be difficult to handle, indeed an advantage of the graded approach is that it easily takes care of higher-order functions, at the cost of a little extra complexity.
  Compared to some works on session types, we also do not have session \emph{delegation}.
  For our session types we targeted the same level of expressiveness as Ghilezan et al.\ \cite{GPPSY2023}, who also do not have delegation (precise asynchronous subtyping with session delegation is an open problem).
}

First, we start with some basic terminology.
A \emph{message} $m = \msg{\ell}{v}$ is a pair of a \emph{label} $\ell$ and a \emph{payload} $v$.
We assume throughout the paper that there is some fixed set $\Label$ of labels; $\ell$ ranges over elements of $\Label$.
A payload value is a constant value, for concreteness, we consider three \emph{ground types} $\basetype$, namely $\unittype$, $\booltype$, $\inttype$, and require a message payload to be a constant $v$ of one of those types, i.e.\,either $\star$ (of type $\unittype$), an integer $n$ (of type $\inttype$) or one of the booleans $\trueterm$ and $\falseterm$ (of type $\booltype$).
We write $v : \basetype$ for the typing relation between constants and ground types.
We also assume there is a fixed set of \emph{participants} $\pp, \pq, \pr, \dots$.
The idea is that the participants perform tasks in parallel and communicate by exchanging messages between each other.
They could, for instance, be implemented as separate programs running on separate servers, with messages being sent across a network.

\emph{Terms} of $\lambdasmp$ are stratified into \emph{values} $v, w$ and \emph{computations} $t, u$. 
They are generated inductively by the following grammar, subject to the constraints discussed below.
\[\begin{array}{r@{}l}
  v, w
  ~\Coloneq~&
  x~|~\unitterm~|~n~|~\trueterm~|~\falseterm
  \\[1ex]
  t, u
  ~\Coloneq~&
  \returnterm{v}~|~\bindterm{x}{t}{u}
  ~|~\addterm{v}{w}~|~\lessterm{v}{w}~|~\ifterm{v}{t_1}{t_2}
  \\[1ex]
  |~&\sendterm{\ell}{v}{\pp}{t}~|~\recvterm{\pp}{\ell_i}{x_i : \basetype_i}{t_i}{i \in I}
  \\[1ex]
  |~&\letrecterm{f}{(x_1, \dots, x_n)}{t}{u} ~|~f (v_1, \dots, v_n)
\end{array}\]
A value $v$ is either a variable $x$, or one of the constants
$\star$, $n$, $\trueterm$ or $\falseterm$.

The first part of the grammar of computations is the core of fine-grain
call-by-value: a computation can \emph{return} a value $v$, or it can be
a sequence $\bindterm{x}{t}{u}$ of computations.
The latter first evaluates $t$ and then, if $t$
returns a result, evaluates $u$ with $x$ bound to the result of $t$. 
Computations also 
include integer addition $\addterm{v}{w}$, integer
comparison $v < w$, and conditionals $\ifterm{v}{t_1}{t_2}$.
These operate on values; thus to e.g.\ compare two integer-returning computations
one would first use $\syntaxkeyword{let}$ to first evaluate the operands.

The computation $\sendterm{\ell}{v}{\pp}{t}$ sends the message $(\ell, v)$
to participant $\pp$, and then continues as the computation $t$.
The value $v$ could for instance be the result of a computation evaluated using $\syntaxkeyword{let}$.
Since we use an \emph{asynchronous} semantics, sending a message does not block; the computation does not have to wait for the message to be received before continuing as $t$.
The computation
$
  \longrecvterm{\pp}{\mltype{\ell_1}{x_1 : \basetype_1}.\,t_1, \dots, \mltype{\ell_n}{x_n : \basetype_n}.\,t_n}
$
receives one message from the participant $\pp$; if that message has label $\ell_i$, 
it continues as $t_i$, with $x_i$ bound to the message
payload.
If the message label is not one of the $\ell_i$, or the payload does not have the corresponding type $\basetype_i$, then this is a \emph{communication error}; in our operational semantics below, reduction gets stuck.
The message labels $\ell_i$ are drawn from our fixed set $\Label$.
We require the labels $\ell_i$ to be distinct from each other,
and we require $n > 0$.
We typically write
$
  \recvterm{\pp}{\ell_i}{x_i : \basetype_i}{t_i}{i \in I}
$
where $I$ is a finite non-empty set, but we do not consider $I$ to be
part of the syntax.
This computation blocks until the corresponding message is received.

Computations also include recursive function definitions, which are written as $\letrecterm{f}{(x_1, \dots, x_n)}{t}{u}$, and applications $f(v_1, \dots, v_n)$ of these functions to arguments.
Function names $f$ are distinct from variables $x$, and they are not values.
We require every recursive definition to be guarded, in the sense that every occurrence of $f$ in $t$ appears inside a $\sendtermkeyword$ or a $\recvtermkeyword$.
This means that to do a recursive call, one first has to send or receive a message; this ensures that a computation cannot simply diverge, it eventually has to \emph{do} something.

A value or computation is \emph{closed} when it has no free variables $x$ and no free function names $f$ (though we permit it to have free participants).

\begin{figure*}[t]
\begin{gather*}
  \intertext{\fbox{Computation reduction $\reduceaction{t}{u}{\alpha}$}}
  \qquad
  \begin{array}{r@{}l}
  \text{Reduction contexts}
  ~
  \rctx{R}[\hole]
  \Coloneq
  \hole
  ~|&~
  \bindterm{x}{\rctx{R}[\hole]}{u}
  \\~|&~
  \letrecterm{f}{(x_1, \dots, x_n)}{t}{\rctx{R}[\hole]}
  \end{array}
\\[1ex]
\RuleName{Cong}
 \frac
   {\reduceaction{t}{u}{\alpha}}
   {\reduceaction{\rctx{R}[t]}{\rctx{R}[u]}{\alpha}}
   \quad
  \RuleName{LetR}\frac
    {}
    {\reduceaction{\bindterm{x}{\returnterm{v}}{u}}{\subst{u}{x \mapsto v}}{\tauaction}}
  \\
  \RuleName{IfT}
    \frac
      {\phantom{|}}
      {\reduceaction{\ifterm{\trueterm}{t_1}{t_2}}{t_1}{\tauaction}}
  \quad
  \RuleName{IfF}
    \frac
      {}
      {\reduceaction{\ifterm{\falseterm}{t_1}{t_2}}{t_2}{\tauaction}}
  \\[0.1ex]
  \RuleName{Sub}
    \frac
      {m = n_1 - n_2}
      {\reduceaction{(\subtractterm{n_1\!}{\!n_2})}{\returnterm{m}}{\tauaction}}
  ~
  \RuleName{LeT}
    \frac
      {n_1 < n_2}
      {\reduceaction{(\lessterm{n_1\!}{\!n_2})}{\returnterm{\trueterm}}{\tauaction}}
  ~
  \RuleName{LeF}
    \frac
      {n_1 \geq n_2}
      {\reduceaction{(\lessterm{n_1\!}{\!n_2})}{\returnterm{\falseterm}}{\tauaction}}
\\[0.3ex]
\RuleName{Send}
  \frac
    {}
    {\reduceaction{\sendterm{\ell}{v}{\pp}{t}}{t}{\sendaction{\pp}{\msg{\ell}{v}}}}
  \quad
\RuleName{Recv}
  \frac
    {v : \basetype_j}
    {\reduceaction{\recvterm{\pp}{\ell_i}{x_i : \basetype_i}{t_i}{i \in I}}{\subst{t_j}{x_j \mspace{2mu}{\mapsto}\mspace{2mu} v}}{\mspace{-7mu}\recvaction{\pp}{\msg{\ell_j}{v}}\mspace{-7mu}}}
\\[0.1ex]
\RuleName{Apply}
 \frac
   {}
   {\begin{array}{l}\reduceaction{\letrecterm{f}{(x_1, \dots, x_n)}{t}{\rctx{R}[f(v_1, \dots, v_n)]}\\\mspace{100mu}}{\letrecterm{f}{(x_1, \dots, x_n)}{t}{\rctx{R}[\subst{t}{x_1 \mapsto v_1, \dots, x_n \mapsto v_n}]}}{\tauaction}\end{array}}
\\[2ex]
  \intertext{\fbox{(Asynchronous) configuration reduction $\reduceaction{\config{\queue{\rho}}{t}{\queue{\sigma}}}{\config{\queue{\rho}}{u}{\queue{\sigma}}}{\alpha}$}}
  \begin{array}{@{}l@{}}
    \RuleName{CInt}
    \\[-0.5ex]
  \begin{prooftree}
    \hypo{\reduceaction{t}{u}{\tauaction}}
    \infer1{\reduceaction
      {\config{\queue{\rho}}{t}{\queue{\sigma}}}
      {\config{\queue{\rho}}{u}{\queue{\sigma}}}
      {\tauaction}}
  \end{prooftree}
\end{array}
  \quad
  \begin{array}{@{}l@{}}
    \RuleName{CProd}
    \\[-0.5ex]
  \begin{prooftree}
    \hypo{\reduceaction{t}{u}{\sendaction{\pp}{m}}}
    \infer1{\reduceaction
      {\config{\queue{\rho}}{t}{\queue{\sigma}}}
      {\config{\queue{\rho}}{u}{\consfront{\pp}{m}{\queue{\sigma}}}}
      {\tauaction}}
  \end{prooftree}
\end{array}
  \quad
  \begin{array}{@{}l@{}}
    \RuleName{CCons}
    \\[-0.5ex]
  \begin{prooftree}
    \hypo{\reduceaction{t}{u}{\recvaction{\pp}{m}}}
    \infer1{\reduceaction
      {\config{\consback{\queue{\rho}}{\pp}{m}}{t}{\queue{\sigma}}}
      {\config{\queue{\rho}}{u}{\queue{\sigma}}}
      {\tauaction}}
  \end{prooftree}
\end{array}
  \\
  \begin{array}{@{}l@{}}
    \RuleName{CSend}
    \\[0.5ex]
  \begin{prooftree}
    \infer0{\reduceaction
      {\config{\queue{\rho}}{t}{\consback{\queue{\sigma}}{\pp}{m}}}
      {\config{\queue{\rho}}{t}{\queue{\sigma}}}
      {\sendaction{\pp}{m}}}
  \end{prooftree}
\end{array}
  \qquad
  \begin{array}{@{}l@{}}
    \RuleName{CRecv}
    \\[0.5ex]
  \begin{prooftree}
    \infer0{\reduceaction
      {\config{\queue{\rho}}{t}{\queue{\sigma}}}
      {\config{\consfront{\pp}{m}{\queue{\rho}}}{t}{\queue{\sigma}}}
      {\recvaction{\pp}{m}}}
  \end{prooftree}
\end{array}
\end{gather*}
\caption{Operational semantics of \lambdasmp{}\label{fig:operationalsemantics}.}
\end{figure*}

We define an operational semantics\footnotemark{} for \lambdasmp{}, in the form of a transition system labelled by \emph{local actions} $\alpha$.
\footnotetext{We extend this to an operational semantics for a notion of \emph{session} in \cref{sec:session-calculus} below.}
\[
\begin{array}{r@{~}r@{~}l@{\quad}l}
  \alpha &\Coloneqq& \tauaction &\text{(internal action)}
  \\
  &|& \sendaction{\pp}{m} &\text{(send message $m$ to participant $\pp$)}
  \\
  &|& \recvaction{\pp}{m} &\text{(receive message $m$ from participant $\pp$)}
\end{array}
\]
We define our operational semantics in two steps.
First, we define a synchronous notion of reduction $\reduceaction{t}{u}{\alpha}$ for closed computations, one that does not incorporate any form of reordering of messages from different participants.
This is generated by the rules in the first part of \cref{fig:operationalsemantics}.
We use a notion of reduction context $\rctx{R}[\hole]$ for congruence rules; such a context is a computation with a single hole $\hole$, indicating the position of the computation to reduce. We write $\rctx{R}[t]$ for the replacement of the hole with a computation $t$.

The second step is to use queues to model \emph{asynchronous} message-passing.
A \emph{queue} $\queue{\sigma}$ is a function that assigns, to every participant $\pp$, a finite ordered list of messages, such that this list is empty for all but finitely many $\pp$ (so each queue only contains a finite number of messages).
We write $\queueempty$ for the empty queue (which maps every participant to the empty list), write $\consfront{\pp}{m}{\queue{\sigma}}$ for adding a message $m$ to the front of $\pp$'s list, and write $\consback{\queue{\sigma}}{\pp}{m}$ for adding $m$ to the back of $\pp$'s list.
Thus, if $m$ and $m'$ have distinct labels or payloads, the queues $\consfront{\pp}{m}{\consfront{\pq}{m'}{\queue{\sigma}}}$
and $\consfront{\pq}{m'}{\consfront{\pp}{m}{\queue{\sigma}}}$ are equal exactly when $\pp \neq \pq$; the ordering of messages between different participants does not matter.
A \lambdasmp{} \emph{configuration} $\mathcal{C} = \config{\queue{\rho}}{t}{\queue{\sigma}}$ consists of a (receive) queue $\queue{\rho}$, closed computation $t$, and a (send) queue $\queue{\sigma}$.
The queue $\queue{\rho}$ contains messages yet to be consumed by $t$, while the queue $\queue{\sigma}$ contains messages that have been produced by $t$.

Our operational semantics is defined at the bottom of \cref{fig:operationalsemantics}; it consists of a reduction relation $\reduceaction{\config{\queue{\rho}}{t}{\queue{\sigma}}}{\config{\queue{\rho}'}{u}{\queue{\sigma}'}}{\alpha}$ for configurations.
This essentially reduces $t$ synchronously as above, except that all messages are buffered into one of the queues.
A configuration of the form $\config{\queueempty}{\mathcal{R}[\returnterm{v}]}{\queueempty}$, where $\mathcal{R}$ has no non-recursive $\syntaxkeyword{let}$s, does not reduce; we consider this to be a completed execution, with result $v$.
We define a partial function $\Result$ from configurations to values, with $\Result\config{\queueempty}{\mathcal{R}[\returnterm{v}]}{\queueempty} = v$ and $\Result{}$ undefined otherwise.

\begin{example}[modified from an example of \cite{GPPSY2023}]\label{ex:gppsy-example}
  The following computation $t$ receives the outcome of a computation from a participant $\pp$, then tells $\pq$ whether to continue with some other computation, or stop.
  The result of $t$ is a boolean, indicating whether it sent a $\ml{stop}$ message to $\pq$.
  \begin{gather*}
    \begin{array}{r@{}r@{~}l}\\
      t = \longrecvterm{\pp}{&
        \mltype{\mathtt{success}}{x : \inttype}.&
        \bindterm{y}{\lessterm{0}{x}}{\\&&
        \begin{array}{@{}r@{}l}
        \ifterm{y}
        {&\sendterm{\mathtt{cont}}{x}{\pq}{\returnterm{\falseterm}}\\}
        {&\sendterm{\mathtt{stop}}{\trueterm}{\pq}{\returnterm{\trueterm}}}
        \end{array}}
        \\
        &\mltype{\mathtt{error}}{x : \booltype}.&
        \sendterm{\mathtt{stop}}{\falseterm}{\pq}{\returnterm{\trueterm}}
      }
    \end{array}
  \end{gather*}
  For instance, if $\pp$'s computation fails, then $t$ will send $\msg{\mathtt{stop}}{\falseterm}$ to $\pq$.
  We have computation reductions, and hence configuration reductions, as follows (omitting the intermediate configurations).
  \begin{gather*}
      \reduceaction
      {t}
      {
        \reduceaction
        {\sendterm{\mathtt{stop}}{\falseterm}{\pq}{\returnterm{\trueterm}}}
        {\returnterm{\trueterm}}
        {\!\!\sendaction{\pq}{\msg{\mathtt{stop}}{\falseterm}}\!\!}
      }
    {\!\!\recvaction{\pp}{\msg{\mathtt{error}}{\trueterm}}\!\!}
    \\
    \reduceaction
    {\config{\queueempty}{t}{\queueempty}}
    {
      \reduceaction
      {}
      {
        \reduceaction
        {}
        {
          \reduceaction
          {}
          {~\config{\queueempty}{\returnterm{\trueterm}}{\queueempty}}
          {\sendaction{\pq}{\msg{\mathtt{stop}}{\falseterm}}}
        }
        {\tauaction}
      }
      {\tauaction}
    }
    {\recvaction{\pp}{\msg{\mathtt{error}}{\trueterm}}}
  \end{gather*}
  The above reductions do not require asynchrony.
  To demonstrate asynchronous message-passing, consider a computation $u$ that decides which message to send to $\pq$, without first waiting for the message from $\pp$.
  \begin{gather*}
    \begin{array}{@{}r@{}l}
      u = \ifterm{y}
      {&\sendterm{\mathtt{cont}}{0}{\pq}{u'(\falseterm)}\\}
      {&\sendterm{\mathtt{stop}}{\falseterm}{\pq}{u'(\trueterm)}}
    \end{array}
    \\~\text{where}~u'(v) = \recvterm{\pp}{\ell}{x : \basetype}{\returnterm{v}}{\mltype{\ell}{\basetype} \in \{\mltype{\mathtt{success}}{\inttype}, \mltype{\mathtt{error}}{\booltype}\}}
  \end{gather*}
  Since we buffer messages into queues, we have the following reductions, in which a message is received from $\pp$ first -- just like in the reduction of $\config{\queueempty}{t}{\queueempty}$ above.
  Here 
  $\queue{\rho} = \consfront{\pp}{\msg{\mathtt{error}}{\trueterm}}{\queueempty}$
  and $\queue{\sigma} = \consfront{\pq}{\msg{\mathtt{stop}}{\falseterm}}{\queueempty}$.
  \begin{align*}
    \reduceaction
    {\config{\queueempty}{\bindterm{y}{\returnterm{\falseterm}}{u}}{\queueempty}}
    {&~
      \reduceaction
      {\config{\queue{\rho}}{\bindterm{y}{\returnterm{\falseterm}}{u}}{\queueempty}\\}
      {
        \reduceaction
        {}
        {
        \reduceaction
        {}
        {&~
          \reduceaction
          {\config{\queue{\rho}}{u'(\trueterm)}{\queue{\sigma}}\\}
          {
            \reduceaction
            {}
            {&~\config{\queueempty}{\returnterm{\trueterm}}{\queueempty}}
            {\!\!\scalebox{0.7}{$\sendaction{\pq}{\msg{\mathtt{stop}}{\falseterm}}$}}
          }
          {\tauaction}
        }
        {\tauaction}
        }
        {\tauaction}
      }
      {\tauaction}
    }
    {\recvaction{\pp}{\msg{\mathtt{error}}{\trueterm}}}
  \end{align*}
\end{example}

\begin{example}[Global state]
  \label{global-state-computations}
  Our second example shows that we can emulate other computational
  effects via message-passing, in a manner reminiscent of 
  \emph{effect handlers}~\cite{plotkin2009handlers}.
  Specifically, we focus on mutable state.
  For concreteness, we assume the state consists of a single integer.

  Consider two participants $\role{s}$ (for state) and $\role{c}$ (for client).
  Tracking the state is the role of $\role{s}$, which waits for a message from $\role{c}$: a $\ml{get}$ message indicates it should send the value of the state back to $\role{c}$, as the payload of a $\ml{st}$ message; a $\ml{put}$ message indicates it should update the current value of the state; and a $\ml{done}$ message indicates that the interaction is finished, the state is no longer needed.
  A possible implementation of $\role{s}$ would be the following computation $t_{\role{s}}$, where $x$ is the current value of the state (initially $0$).
  \[
    \begin{array}{l}
    t_{\role{s}} =
    \letrecterm{f}{x}{
      \longrecvterm{\role{c}}{
      \\\mspace{60mu}
              \mltype{\ml{get}}{z : \unittype}.\,\sendterm{\ml{st}}{x}{\role{c}}{\applyterm{f}{x}},
      \\\mspace{60mu}
              \mltype{\ml{put}}{y : \inttype}.\,\applyterm{f}{y},
      \\\mspace{60mu}
              \mltype{\ml{done}}{z : \unittype}.\,\returnterm{\unitterm}
      \\\mspace{40mu}
      }
    }{\applyterm{f}{0}}
    \end{array}
    \mspace{-14mu}
    \begin{array}{r@{}l}
      t_{\role{c}, 1} =~
      &\sendterm{\ml{get}}{\unitterm}{\role{s}}{\\
      &\longrecvterm{\role{s}}{\mltype{\ml{st}}{x : \inttype}.\\
      &\sendterm{\ml{put}}{0}{\role{s}}{\returnterm{x}}}}
      \\[1ex]
      t_{\role{c}, 2} =~
      &\sendterm{\ml{get}}{\unitterm}{\role{s}}{\\
      &\sendterm{\ml{put}}{0}{\role{s}}{\\
      &\longrecvterm{\role{s}}{\mltype{\ml{st}}{x : \inttype}.\returnterm{x}}}}
    \end{array}
  \]
  On the right above, we also give two partial implementations of the client; each gets the value of the state, and then sets the state to $0$.
  One waits for the $\ml{st}$ message before sending $\ml{put}$; the other eagerly sends the $\ml{put}$ message, which is valid because of asynchrony.
  The latter can be thought of an \emph{optimisation} of the former.
  Neither sends a $\ml{done}$ message; one example of a complete implementation of $\role{c}$ is $\bindterm{x}{t_{\role{c}, 1}}{\sendterm{\ml{done}}{\unitterm}{\role{s}}{\returnterm{x}}}$.
\end{example}

\section{Asynchronous multiparty session types}\label{sec:session-types}
Our type system for \lambdasmp{} is based around (multiparty) session types.
A session type $\sty{T}$ describes a \emph{protocol} that must be followed by a given participant in a distributed system.
They are generated as follows, where $X$ ranges over \emph{session type variables}.
\[
  \sty{T}, \sty{U} \Coloneqq
  \ltypeend
  ~|~
  \ltypesend{\pp}{\ell_i}{\basetype_i}{\sty{T}_i}{i \in I}
  ~|~
  \ltyperecv{\pp}{\ell_i}{\basetype_i}{\sty{T}_i}{i \in I}
  ~|~
  \ltypevar{X}
  ~|~
  \ltyperec{\ltypevar{X}}{\sty{T}}
\]
$\ltypeend$ denotes the end of the protocol, it requires that there are no further interactions with the other participants.
An \emph{internal choice}
$\ltypesend{\pp}{\ell_i}{\basetype_i}{\sty{T}_i}{i \in I}$
means send some message $\msg{\ell_i}{v}$, where $v : \basetype_i$, to participant $\pp$, and continue according to $\sty{T}_i$.
Sending a message $\pp$ with a label not in $\{\ell_i \mid i \in I\}$, or with the wrong payload type, is not permitted.
We require the labels $\ell_i$ to be distinct from each other, and that $I$ is non-empty.
An \emph{external choice} $\ltyperecv{\pp}{\ell_i}{\basetype_i}{\sty{T}_i}{i \in I}$ means receive a message $\msg{\ell_i}{v}$ from participant $\pp$, and then continue according to $\sty{T}_i$. The participant $\pp$ chooses the message, but is required to ensure that the label is one of the labels $\ell_i$, and that the payload has the corresponding type $\basetype_i$.
Similar syntactic constraints apply, in particular distinctness of message labels.
Finally, $\ltyperec{\ltypevar{X}}{\sty{T}}$ is a recursive protocol, binding the type variable $\ltypevar{X}$; the session type $\ltyperec{\ltypevar{X}}{\sty{T}}$ is equivalent to $\subst{\sty{T}}{\ltypevar{X} \mapsto \ltyperec{\ltypevar{X}}{\sty{T}}}$.
Recursion is required to be \emph{guarded}, in the sense that every occurrence of $\ltypevar{X}$ in $\sty{T}$ is inside an internal or external choice.

The (single-step) \emph{unfolding} $\unfold{\sty{T}}$ of a session type $\sty{T}$ is defined as follows.
The unfolding is an equivalent session type to $\sty{T}$, but due to guardedness,
$\unfold{\sty{T}}$ is always either $\ltypeend$, or an internal or external choice, or a type variable.
\[
  \unfold{\ltyperec{\ltypevar{X}}{\sty{T}}} = \subst{\unfold{\sty{T}}}{\ltypevar{X} \mapsto \ltyperec{\ltypevar{X}}{\sty{T}}}
  \qquad
  \unfold{\sty{T}} = \sty{T}~~\text{if}~\sty{T}~\text{is not a recursive type}
\]
If $\Theta$ is a set containing all of the free type variables of a session type $\sty{T}$, we will refer to $\sty{T}$ as a session type \emph{over $\Theta$}.
A session type is \emph{closed} when it has no free type variables.

\begin{example}\label{ex:gppsy-example-types}
  Recall the computations $t$ and $u$ from \cref{ex:gppsy-example}, where $t$ first receives a message from $\pp$, but $u$ first sends a message to $\pq$.
  The computations $t$ and $u$ follow the protocols described by $\sty{T}$ and $\sty{U}$ respectively,
  and our type system in \cref{sec:graded-calculus} assigns these session types to the computations.
  \begin{gather*}
    \sty{T} =
    \ltyperecvcases{\pp}{
      \mltype{\ml{success}}{\inttype}.\sty{T}'
      \\
      \mltype{\ml{error}}{\booltype}.\sty{T}'
    }
    ~\text{where}~
    \sty{T'} = \ltypesendcases{\pq}{\mltype{\ml{cont}}{\inttype}.\, \ltypeend \\ \mltype{\ml{stop}}{\booltype}.\, \ltypeend}
    \\
    \sty{U} =
    \ltypesendcases{\pq}{
      \mltype{\ml{cont}}{\inttype}.\sty{U}'
      \\
      \mltype{\ml{stop}}{\booltype}.\sty{U}'
    }
    ~\text{where}~
    \sty{U'} = \ltyperecvcases{\pp}{\mltype{\ml{success}}{\inttype}.\, \ltypeend \\ \mltype{\ml{error}}{\booltype}.\, \ltypeend}
  \end{gather*}
  The computation $u$ is also a valid implementation of $\sty{T}$; indeed, under the definition of \emph{subtyping} below (\cref{def:subtyping}), we have $\sty{U} \subtype \sty{T}$ (but not $\sty{T} \subtype \sty{U}$).

  Both of $\sty{T}$ and $\sty{U}$ permit sending a message to $\pq$, without necessarily waiting for a message from $\pp$; in the case of $\sty{T}$ this relies on asynchrony.
  Similarly, they both require the implementation to accept a message from $\pp$.
  On the other hand, an implementation of $\sty{U}$ is required to send a message to $\pq$, without waiting for one from $\pp$, while an implementation of $\sty{T}$ is permitted to wait.
\end{example}

\subsection{Asynchronous multiparty session subtyping}
Subtyping is a crucial aspect of session type systems.
The intuition is that $\sty{T} \subtype \sty{U}$ holds whenever each implementation of $\sty{T}$ can be used as an implementation of $\sty{U}$, without causing any safety or liveness issues.
Ghilezan et al.~\cite{GPPSY2023} provide a definition of $\sty{T} \subtype \sty{U}$, for closed session types, that is precise for asynchronous message-passing, in the sense that it exactly captures this intuition.
By this metric, it is the best possible notion of subtyping, and so we adopt it here.
However, their definition is inconvenient for several reasons: the definition relies on session \emph{trees} (which are infinite data structures, unlike session \emph{types}), and they do not provide a definition of subtyping for non-closed session types (which we rely on to type \cref{global-state-computations} above).
We therefore provide a more convenient reformulation of asynchronous session typing.
We prove that our definition of subtyping is equivalent to \cite{GPPSY2023}, for closed session types, in the appendix (\cref{thm:gppsy-equivalence}).
For closed session types, our definition of subtyping is equivalent to that of \cite{GPPSY2023}.

As \Cref{ex:gppsy-example-types} demonstrates, in the presence of asynchrony, determining whether an implementation is permitted or required to send or receive a message is non-trivial.
We define relations and predicates on session types that capture exactly when these are the case; these are key to our definition of subtyping.

For sending, the relation $\reduceinternal{\sty{U}}{\sty{U'}}{\pp}{\ell}{\basetype}$ means an implementation of $\sty{U}$ is permitted to send a message $\msg{\ell}{v}$ with $v : \basetype$, to $\pp$, and then follow $\sty{U}'$.
(But it does not necessarily have to send such a message.)
The predicate $\Sends{\pp}(\sty{T})$ means an implementation of $\sty{T}$ is required to send a message to $\pp$, and it is not permitted to wait for a message before doing so.
These are both defined inductively, the rules are at the top of \cref{fig:inductive}.

The two base cases \RuleName{$\reduceinternalsymbol$-base} and \RuleName{Sends-base} are self-explanatory.
The rule \RuleName{$\reduceinternalsymbol$-$\oplus$} encodes the fact that asynchrony does not impose any ordering between messages sent to different participants; since $\pp \neq \pq$, the two messages will be placed into different parts of the send queue, and can be removed from the queue in any order.
Thus, if we know that the implementation is permitted to send a message to $\pp$ in \emph{some} branch of the internal choice on $\pq$, then it is permitted to send a message to $\pp$.
The implementation gets to choose which branches of the internal choice on $\pq$ wishes to keep, because it is an \emph{internal} choice.
\RuleName{Sends-$\oplus$} similarly enables us to reorder messages in session types.
For instance, a participant of type $\ltypesendone{\pq}{\ell'}{\basetype'}{\ltypesendone{\pp}{\ell}{\basetype}{\ltypeend}}$ may send a message to $\pp$ first (the session types $\ltypesendone{\pq}{\ell'}{\basetype'}{\ltypesendone{\pp}{\ell}{\basetype}{\ltypeend}}$ and $\ltypesendone{\pp}{\ell}{\basetype}{\ltypesendone{\pq}{\ell'}{\basetype'}{\ltypeend}}$ are asynchronous subtypes of each other).
For \RuleName{$\reduceinternalsymbol$-$\&$}, even though the implementation is required to accept a message from $\pq$, that does not prevent it from eagerly sending a message to $\pp$, even if $\pp = \pq$. This is the case because sending a message will not block; after sending the message to $\pp$, the implementation will immediately be ready to accept a message from $\pq$.
However, note that we cannot discard branches of the external choice: the implementation does not get to choose to disallow certain messages from $\pq$ just because it is sending a message to $\pp$.
There is no analogous rule for $\Sends{\pp}$, because $\Sends{\pp}$ forbids waiting for a message.
Finally, the two recursion rules \RuleName{$\reduceinternalsymbol$-rec} and \RuleName{Sends-rec} ensure that every session type $\sty{T}$ is treated as equivalent to its unfolding $\unfold{\sty{T}}$.

For receiving, the relation $\reduceexternal{\sty{T}}{\sty{T'}}{\pp}{\ell}{\basetype}$ means an implementation of $\sty{T}$ is \emph{required} to accept a message $\msg{\ell}{v}$ with $v : \basetype$, from $\pp$, if $\pp$ chooses to send one; after doing so, the implementation is required to follow $\sty{T}'$.
The predicate $\Recvs{\pp}(\sty{T})$ means an implementation of $\sty{T}$ is permitted to wait for a message from $\pp$, in particular, it is not required to send any messages before such a message from $\pp$ arrives.
These are again defined inductively, the rules (bottom half of \cref{fig:inductive}) are exactly the duals of the sending rules (i.e.\ we obtain them by swapping sending and receiving).

\begin{figure}[t]
\begin{minipage}[t]{0.65\textwidth}
\begin{gather*}
  \shortintertext{\fbox{$\reduceinternal{\sty{U}}{\sty{U}'}{\pp}{\ell}{\basetype}$}}
  \\[-2.5ex]
  \begin{array}{l}
  \RuleName{$\reduceinternalsymbol$-base}
  ~
  \begin{prooftree}
    \hypo{}
    \infer1
    {\reduceinternal{\ltypesend{\pp}{\ell_i}{\basetype_i}{\sty{U}_i}{i \in I}}{\sty{U}_i}{\pp}{\ell_i}{\basetype_i}}
  \end{prooftree}
  \end{array}
  \\
  \begin{array}{l}
  \RuleName{$\reduceinternalsymbol$-$\oplus$}
  ~
  \begin{prooftree}
    \hypo{\pp \neq \pq}
    \hypo{J \subseteq I}
    \hypo{{\reduceinternal{\sty{U}_i}{\sty{U}'_i}{\pp}{\ell}{\basetype}}~\text{for all}~i \in J}
    \infer3
    {\reduceinternal{\ltypesend{\pq}{\ell'_i}{\basetype'_i}{\sty{U}_i}{i \in I}}{\ltypesend{\pq}{\ell'_i}{\basetype'_i}{\sty{U}'_{i}}{i \in J}}{\pp}{\ell}{\basetype}}
  \end{prooftree}
  \end{array}
  \\[0ex]
  \begin{array}{l}
  \RuleName{$\reduceinternalsymbol$-$\&$}
  ~
  \begin{prooftree}
    \hypo{{\reduceinternal{\sty{U}_i}{\sty{U}'_i}{\pp}{\ell}{\basetype}}~\text{for all}~i \in I}
    \infer1
    {\reduceinternal{\ltyperecv{\pq}{\ell'_i}{\basetype'_i}{\sty{U}_i}{i \in I}}{\ltyperecv{\pq}{\ell'_i}{\basetype'_i}{\sty{U}'_{i}}{i \in I}}{\pp}{\ell}{\basetype}}
  \end{prooftree}
  \end{array}
  \\
  \begin{array}{l}
  \RuleName{$\reduceinternalsymbol$-rec}
  ~
  \begin{prooftree}
    \hypo{\reduceinternal{\subst{\sty{U}}{\ltypevar{X} \mapsto \ltyperec{\ltypevar{X}}{\sty{U}}}}{\sty{U}'}{\pp}{\ell}{\basetype}}
    \infer1
    {\reduceinternal{\ltyperec{\ltypevar{X}}{\sty{U}}}{\sty{U}'}{\pp}{\ell}{\basetype}}
  \end{prooftree}
  \end{array}
\end{gather*}
\vspace{-5ex}
\end{minipage}%
\begin{minipage}[t]{0.35\textwidth}
\begin{gather*}
  \shortintertext{\fbox{$\Sends{\pp}(\sty{T})$}}
  \begin{array}{l}
  \RuleName{Sends-base}
  \\[1ex]
  ~\begin{prooftree}
    \hypo{}
    \infer1
    {\Sends{\pp}(\ltypesend{\pp}{\ell_i}{\basetype_i}{\sty{T}_i}{i \in I})}
  \end{prooftree}
  \end{array}
  \\
  \begin{array}{l}
  \RuleName{Sends-$\oplus$}
  \\[1ex]
  ~\begin{prooftree}
    \hypo{\Sends{\pp}(\sty{T}_j)~~\text{for all}~j \in J}
    \infer1
    {\Sends{\pp}(\ltypesend{\pq}{\ell_j}{\basetype_i}{\sty{T}_j}{j \in J})}
  \end{prooftree}
  \end{array}
  \\
  \begin{array}{l}
  \RuleName{Sends-rec}
  \\[1ex]
  ~\begin{prooftree}
    \hypo{\Sends{\pp}(\sty{T})}
    \infer1
    {\Sends{\pp}(\ltyperec{\ltypevar{X}}{\sty{T}})}
  \end{prooftree}
  \end{array}
\end{gather*}
\end{minipage}

\begin{minipage}[t]{0.65\textwidth}
\begin{gather*}
  \shortintertext{\fbox{$\reduceexternal{\sty{T}}{\sty{T}'}{\pp}{\ell}{\basetype}$}}
  \\[-2.5ex]
  \begin{array}{l}
  \RuleName{$\reduceexternalsymbol$-base}
  ~
  \begin{prooftree}
    \hypo{}
    \infer1
    {\reduceexternal{\ltyperecv{\pp}{\ell_i}{\basetype_i}{\sty{T}_i}{i \in I}}{\sty{T}_i}{\pp}{\ell_i}{\basetype_i}}
  \end{prooftree}
  \end{array}
  \\
  \begin{array}{l}
  \RuleName{$\reduceexternalsymbol$-$\&$}
  ~
  \begin{prooftree}
    \hypo{\pp \neq \pq}
    \hypo{J \subseteq I}
    \hypo{{\reduceexternal{\sty{T}_i}{\sty{T}'_i}{\pp}{\ell}{\basetype}}~\text{for all}~i \in J}
    \infer3
    {\reduceexternal{\ltyperecv{\pq}{\ell'_i}{\basetype'_i}{\sty{T}_i}{i \in I}}{\ltyperecv{\pq}{\ell'_i}{\basetype'_i}{\sty{T}'_{i}}{i \in J}}{\pp}{\ell}{\basetype}}
  \end{prooftree}
  \end{array}
  \\[0ex]
  \begin{array}{l}
  \RuleName{$\reduceexternalsymbol$-$\oplus$}
  ~
  \begin{prooftree}
    \hypo{{\reduceexternal{\sty{T}_i}{\sty{T}'_i}{\pp}{\ell}{\basetype}}~\text{for all}~i \in I}
    \infer1
    {\reduceexternal{\ltypesend{\pq}{\ell'_i}{\basetype'_i}{\sty{T}_i}{i \in I}}{\ltypesend{\pq}{\ell'_i}{\basetype'_i}{\sty{T}'_{i}}{i \in I}}{\pp}{\ell}{\basetype}}
  \end{prooftree}
  \end{array}
  \\
  \begin{array}{l}
  \RuleName{$\reduceexternalsymbol$-rec}
  ~
  \begin{prooftree}
    \hypo{\reduceexternal{\subst{\sty{T}}{\ltypevar{X} \mapsto \ltyperec{\ltypevar{X}}{\sty{T}}}}{\sty{T}'}{\pp}{\ell}{\basetype}}
    \infer1
    {\reduceexternal{\ltyperec{\ltypevar{X}}{\sty{T}}}{\sty{T}'}{\pp}{\ell}{\basetype}}
  \end{prooftree}
  \end{array}
\end{gather*}
\end{minipage}%
\begin{minipage}[t]{0.35\textwidth}
\begin{gather*}
  \shortintertext{\fbox{$\Recvs{\pp}(\sty{U})$}}
  \begin{array}{l}
  \RuleName{Recvs-base}
  \\[1ex]
  ~\begin{prooftree}
    \hypo{}
    \infer1
    {\Recvs{\pp}(\ltyperecv{\pp}{\ell_i}{\basetype_i}{\sty{U}_i}{i \in I})}
  \end{prooftree}
  \end{array}
  \\
  \begin{array}{l}
  \RuleName{Recvs-$\&$}
  \\[1ex]
  ~\begin{prooftree}
    \hypo{\Recvs{\pp}(\sty{U}_j)~~\text{for all}~j \in J}
    \infer1
    {\Recvs{\pp}(\ltyperecv{\pq}{\ell_j}{\basetype_i}{\sty{U}_j}{j \in J})}
  \end{prooftree}
  \end{array}
  \\
  \begin{array}{l}
  \RuleName{Recvs-rec}
  \\[1ex]
  ~\begin{prooftree}
    \hypo{\Recvs{\pp}(\sty{U})}
    \infer1
    {\Recvs{\pp}(\ltyperec{\ltypevar{X}}{\sty{U}})}
  \end{prooftree}
  \end{array}
\end{gather*}
\end{minipage}

\caption{Four inductively defined relations and predicates on multiparty session types}
\label{fig:inductive}
\end{figure}

These relations and predicates provide the bulk of our subtyping definition.
\begin{definition}\label{def:subtyping}
    Let $\Theta$ be a set of session type variables.
    \emph{Asynchronous subtyping} is the largest binary relation $\subtype_{\Theta}$ between session types over $\Theta$, such that the following hold when $\sty{T} \subtype_{\Theta} \sty{U}$.
    (When $\Theta$ is empty, we write just $\sty{T} \subtype \sty{U}$.)
    \begin{enumerate}
        \item If $\unfold{\sty{T}} = \ltypesend{\pp}{\ell_i}{\basetype_i}{\sty{T}_i}{i \in I}$, then for every $i \in I$, there is some $\sty{U}_i$ such that $\reduceinternal{\sty{U}}{\sty{U}_i}{\pp}{\ell_i}{\basetype_i}$ and $\sty{T}_i \subtype_{\Theta} \sty{U}_i$.
        \item If $\unfold{\sty{T}} = \ltyperecv{\pp}{\ell_i}{\basetype_i}{\sty{T}_i}{i \in I}$, then $\Recvs{\pp}(\sty{U})$.
        \item If $\unfold{\sty{U}} = \ltypesend{\pp}{\ell_i}{\basetype_i}{\sty{U}_i}{i \in I}$, then $\Sends{\pp}(\sty{T})$.
        \item If $\unfold{\sty{U}} = \ltyperecv{\pp}{\ell_i}{\basetype_i}{\sty{U}_i}{i \in I}$, then for every $i \in I$, there is some $\sty{T}_i$ such that $\reduceexternal{\sty{T}}{\sty{T}_i}{\pp}{\ell_i}{\basetype_i}$ and $\sty{T}_i \subtype_{\Theta} \sty{U}_i$.
        \item For every $\ltypevar{X} \in \Theta$, we have $\unfold{\sty{T}} = \ltypevar{X}$ if and only if $\unfold{\sty{U}} = \ltypevar{X}$.
    \end{enumerate}
\end{definition}
This is a coinductive definition, and as is usual for coinductive definitions, we can construct $\subtype_{\Theta}$ as the union over all relations satisfying the above conditions; see \cref{union-pre-simulation} for details.

\begin{example}
  Consider the session types of \cref{ex:gppsy-example-types}.
  We do not have $\sty{T} \subtype \sty{U}$; (2) and (3) do not hold, because $\Recvs{\pp}(\sty{U})$ and $\Sends{\pq}(\sty{T})$ are both false.
  However, we do have $\sty{U} \subtype \sty{T}$.
  This is because we have $\sty{T}' \subtype \sty{T}'$ and $\sty{U}' \subtype \sty{U'}$ (subtyping is reflexive by \cref{admissible-subtyping} below).
  To satisfy (1), it is therefore enough to note that $\reduceinternal{\sty{T}}{\sty{U}'}{\pq}{\ml{cont}}{\inttype}$ and $\reduceinternal{\sty{T}}{\sty{U}'}{\pq}{\ml{stop}}{\booltype}$.
  To satisfy (4), it enough to note that $\reduceexternal{\sty{U}}{\sty{T}'}{\pp}{\ml{success}}{\inttype}$ and $\reduceexternal{\sty{U}}{\sty{T}'}{\pp}{\ml{error}}{\booltype}$.
  (2), (3) and (5) are trivial.
\end{example}

\begin{example}\label{ex:weird-example}
  Asynchronous subtyping requires us to take great care with infinite protocols; the following example demonstrates one of the subtleties.
  Consider the following closed session types, where $\pp \neq \pq$.
  \begin{gather*}
    \sty{U} = \ltyperec{\ltypevar{X}}{\ltyperecvcases{\pq}{\mltype{\ell_1}{\basetype_1}.\,\ltyperecvone{\pp}{\ell}{\basetype}{\ltypeend} \\ \mltype{\ell_2}{\basetype_2}.\,\ltypevar{X}}}
    \quad
    \sty{U'} = \ltyperecvone{\pp}{\ell}{\basetype}{\ltyperec{\ltypevar{X}}{\ltyperecvcases{\pq}{\mltype{\ell_1}{\basetype_1}.\,\ltypeend \\ \mltype{\ell_2}{\basetype_2}.\,\ltypevar{X}}}}
  \end{gather*}
  We have neither $\sty{U} \subtype \sty{U}'$ nor $\sty{U}' \subtype \sty{U}$; we first explain informally why these instances of subtyping should not hold.
  Suppose that $\pq$ only sends messages with label $\ell_2$; this is permitted by both $\sty{U}$ and $\sty{U}'$.
  In this case, $\sty{U}' \subtype \sty{U}$ would lead to a failure of liveness: a participant implementing $\sty{U}'$ is permitted to wait for a message from $\pp$, but according to $\sty{U}$, no such message will ever come.
  The converse $\sty{U} \subtype \sty{U}'$ would lead to a different liveness failure: $\sty{U}'$ permits $\pp$ to send a message, but a participant implementing $\sty{U}$ will not consume that message; it will keep waiting for more messages from $\pq$ instead.

  Proving that $\sty{U}' \subtype \sty{U}$ does not hold is easy: it would imply $\Recvs{\pp}(\sty{U})$, which is false.
  To see why $\sty{U} \subtype \sty{U}'$ does not hold, consider the following closed session types, where $k$ ranges over natural numbers.
  \begin{gather*}
    \sty{T}_0 = \ltyperecvone{\pq}{\ell_1}{\basetype_1}{\ltypeend}
    \qquad\mspace{-9mu}
    \sty{T}_{k + 1} = \ltyperecvcases{\pq}{\mltype{\ell_1}{\basetype_1}.\,\ltypeend \\ \mltype{\ell_2}{\basetype_2}.\,\sty{T}_{k}}
    \mspace{-31mu}
    \qquad
    \sty{T}_\infty = \ltyperec{\ltypevar{X}}{\ltyperecvcases{\pq}{\mltype{\ell_1}{\basetype_1}.\,\ltypeend \\ \mltype{\ell_2}{\basetype_2}.\,\ltypevar{X}}}
    \mspace{-25mu}
  \end{gather*}
  We have $\reduceexternal{\sty{T}_{k + 1}}{\sty{T}_k}{\pq}{\ell_2}{\basetype_2}$ and $\reduceexternal{\sty{T}_{\infty}}{\sty{T}_{\infty}}{\pq}{\ell_2}{\basetype_2}$;
  indeed $\sty{T}_{\infty} \subtype \sty{T}_k$ for all $k$.
  However, there is no $k$ such that $\sty{T}_k \subtype \sty{T}_\infty$; this would imply, by an inductive argument, that $\sty{T}_0 \subtype \sty{T}_\infty$, which is false because there is no $\sty{T}'$ such that $\reduceexternal{\sty{T}_0}{\sty{T}'}{\pq}{\ell_2}{\basetype_2}$.
  We have $\reduceexternal{\sty{U}}{\sty{T}_k}{\pp}{\ell}{\basetype}$ for every $k$, and in fact $\reduceexternal{\sty{U}}{\sty{T}'}{\pp}{\ell}{\basetype}$ implies $\sty{T}'$ is a supertype of some $\sty{T}_k$.
  In particular, we do not have $\reduceexternal{\sty{U}}{\sty{T}_\infty}{\pp}{\ell}{\basetype}$; informally this is because, while an implementation of $\sty{U}$ is required to receive a message from $\pp$, it is not then required to implement $\sty{T}_\infty$.
  We therefore do not have $\sty{U} \subtype \sty{U}'$: this subtyping would imply $\reduceexternal{\sty{U}}{\sty{T}'}{\pp}{\ell}{\basetype}$ for some $\sty{T}' \subtype \sty{T}_\infty$, which cannot exist by the above and transitivity of $\subtype$ (\cref{admissible-subtyping} below).
\end{example}

Subtyping is reflexive, transitive, a congruence, and respects substitution:
\begin{restatable}{lemma}{admissiblesubtyping}\label{admissible-subtyping}
    Each of the following rules is admissible (if the premises hold, then so does the conclusion):
    \begin{gather*}\SwapAboveDisplaySkip
      \begin{prooftree}
        \hypo{\mathclap{\phantom{\sty{S} \subtype_{\Theta} \sty{T}}}}
        \hypo{\mathclap{\phantom{\sty{T} \subtype_{\Theta} \sty{U}}}}
        \infer2{\sty{T} \subtype_{\Theta} \sty{T}}
      \end{prooftree}
      \quad
      \begin{prooftree}
        \hypo{\sty{S} \subtype_{\Theta} \sty{T}}
        \hypo{\sty{T} \subtype_{\Theta} \sty{U}}
        \infer2{\sty{S} \subtype_{\Theta} \sty{U}}
      \end{prooftree}
      \\[1ex]
      \begin{prooftree}
        \hypo{J \subseteq I}
        \hypo{\sty{T}_i \subtype_{\Theta} \sty{U}_i~\text{for all}~i \in J}
        \infer2{(\ltypesend{\pp}{\ell_i\mspace{-2mu}}{\mspace{-2mu}\basetype_i}{\sty{T}_i}{i \in J})\mspace{-2mu} \subtype_{\Theta} \mspace{-2mu}(\ltypesend{\pp\mspace{-2mu}}{\mspace{-2mu}\ell_i}{\basetype_i}{\sty{U}_i}{i \in I})}
      \end{prooftree}
      \quad\!
      \begin{prooftree}
        \hypo{J \subseteq I}
        \hypo{\sty{T}_i \subtype_{\Theta} \sty{U}_i~\text{for all}~i \in J}
        \infer2{(\ltyperecv{\pp\mspace{-2mu}}{\mspace{-2mu}\ell_i}{\basetype_i}{\sty{T}_i}{i \in I})\mspace{-2mu} \subtype_{\Theta} \mspace{-2mu}(\ltyperecv{\pp\mspace{-2mu}}{\mspace{-2mu}\ell_i}{\basetype_i}{\sty{U}_i}{i \in J})}
      \end{prooftree}
      \\[1ex]
      \begin{prooftree}
        \hypo{\ltypevar{X} \not\in \Theta}
        \hypo{\sty{T} \subtype_{\Theta, \ltypevar{X}} \sty{U}}
        \infer2{\phantom{\mathclap{\subst{\sty{T}}{\ltypevar{X_1} \mapsto \sty{T}_1, \dots, \ltypevar{X_n} \mapsto \sty{T}_n} \subtype_{\Theta} {\subst{\sty{U}}{\ltypevar{X_1} \mapsto \sty{U}_1, \dots, \ltypevar{X_n} \mapsto \sty{U}_n}}}}\ltyperec{\ltypevar{X}}{\sty{T}} \subtype_{\Theta} \ltyperec{\ltypevar{X}}{\sty{U}}}
      \end{prooftree}
      \quad
      \begin{prooftree}
        \hypo{\mathclap{\phantom{}}\sty{T} \subtype_{\ltypevar{X}_1, \dots, \ltypevar{X}_n} \sty{U}}
        \hypo{\sty{T}_1 \subtype_{\Theta} \sty{U}_1}
        \hypo{\!\!\cdots\!\!}
        \hypo{\sty{T}_n \subtype_{\Theta} \sty{U}_n}
        \infer4{\subst{\sty{T}}{\ltypevar{X_1} \mapsto \sty{T}_1, \dots, \ltypevar{X_n} \mapsto \sty{T}_n} \subtype_{\Theta} {\subst{\sty{U}}{\ltypevar{X_1} \mapsto \sty{U}_1, \dots, \ltypevar{X_n} \mapsto \sty{U}_n}}}
      \end{prooftree}
    \end{gather*}
\end{restatable}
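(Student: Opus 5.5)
All five rules are proved by coinduction. Since $\subtype_{\Theta}$ is the largest relation satisfying conditions (1)--(5) of \cref{def:subtyping}, it suffices in each case to exhibit a relation $\mathcal{R}$ containing the desired conclusion and to check that $\mathcal{R}$ satisfies (1)--(5) with targets back in $\mathcal{R}$; then $\mathcal{R} \subseteq {\subtype_{\Theta}}$. Before doing so I will establish some routine lemmas about the inductive relations of \cref{fig:inductive}, each by induction on derivations.

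\textbf{Preliminary lemmas.}
\begin{itemize}
\item[(a)] Each of $\reduceinternalsymbol$, $\reduceexternalsymbol$, $\Sends{\pp}$ and $\Recvs{\pp}$ is invariant under replacing a type by its unfolding, in both directions, using the rec rules.
\item[(b)] Each is stable under substitution of session types for free variables. The derivations never inspect variables, so substitution goes through rule by rule.
\item[(c)] Each interacts with subtyping monotonically. For example, if $\sty{T} \subtype \sty{U}$ and $\reduceexternal{\sty{T}}{\sty{T}'}{\pp}{\ell}{\basetype}$, this must be related to what $\sty{U}$ offers.
\end{itemize}

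\textbf{Reflexivity.} Take the identity relation. For (1), an internal choice in $\unfold{\sty{T}}$ yields $\reduceinternal{\sty{T}}{\sty{T}_i}{\pp}{\ell_i}{\basetype_i}$ via the base and rec rules. Conditions (2)--(5) follow similarly.

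\textbf{Congruence rules for choice.} Take $\subtype_{\Theta}$ together with the pair in the conclusion. Checking (1)--(5) on the new pair uses only the base rules and the hypotheses $\sty{T}_i \subtype_{\Theta} \sty{U}_i$. The two rules are contravariant/covariant in the label sets in the expected way.

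\textbf{Recursion congruence.} Take the relation of pairs $(\subst{\sty{T}}{\ltypevar{X} \mapsto \ltyperec{\ltypevar{X}}{\sty{T}}}, \subst{\sty{U}}{\ltypevar{X} \mapsto \ltyperec{\ltypevar{X}}{\sty{U}}})$ with $\sty{T} \subtype_{\Theta,\ltypevar{X}} \sty{U}$, together with the rec pairs themselves. This is essentially a special case of the substitution rule with $\sty{T}_1 = \ltyperec{\ltypevar{X}}{\sty{T}}$ and $\sty{U}_1 = \ltyperec{\ltypevar{X}}{\sty{U}}$, done simultaneously.

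\textbf{Substitution.} Let $\mathcal{R}$ consist of the pairs $(\sty{T}[\vec{\sty{T}}], \sty{U}[\vec{\sty{U}}])$ with $\sty{T} \subtype_{\vec X} \sty{U}$ and $\sty{T}_k \subtype_{\Theta} \sty{U}_k$, united with $\subtype_{\Theta}$. Split on condition (5):
\begin{itemize}
\item If $\unfold{\sty{T}} = \ltypevar{X}_k$, then $\unfold{\sty{U}} = \ltypevar{X}_k$. The substituted pair unfolds to $(\sty{T}_k, \sty{U}_k)$, and (a) transfers everything from $\sty{T}_k \subtype_{\Theta} \sty{U}_k$.
\item Otherwise, the head constructors are preserved by substitution. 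Each witness $\reduceinternal{\sty{U}}{\sty{U}'}{\pp}{\ell}{\basetype}$ gives $\reduceinternal{\sty{U}[\vec{\sty{U}}]}{\sty{U}'[\vec{\sty{U}}]}{\pp}{\ell}{\basetype}$ by (b), so the target stays in $\mathcal{R}$.
\end{itemize}
The delicate point is that $\sty{U}$ might be $\ltypevar{X}$-headed only deep inside, beneath a choice. This does not matter, because $\reduceinternalsymbol$ and the other relations are inductive and never pass through a bare variable (there are no rules for variables). Hence $\Sends{\pp}$ and $\Recvs{\pp}$ of a substituted type could hold through $\sty{T}_k$ while failing for $\sty{T}$; fortunately condition (3) only needs the direction from $\sty{T}$ to $\sty{T}[\vec{\sty{T}}]$, which (b) gives.

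\textbf{Transitivity (main obstacle).} Take $\mathcal{R} = {\subtype_{\Theta}} \circ {\subtype_{\Theta}}$. Conditions (2), (3) and (5) need the transfer lemmas
\[
\Recvs{\pp}(\sty{T}) \wedge \sty{T} \subtype \sty{U} \Rightarrow \Recvs{\pp}(\sty{U}),
\qquad
\Sends{\pp}(\sty{U}) \wedge \sty{T} \subtype \sty{U} \Rightarrow \Sends{\pp}(\sty{T}),
\]
each proved by induction on the $\Recvs{\pp}$ (respectively $\Sends{\pp}$) derivation.

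Condition (1) is the hard part. From $\sty{S} \subtype \sty{T}$ we get $\reduceinternal{\sty{T}}{\sty{T}'}{\pp}{\ell}{\basetype}$ with $\sty{S}' \subtype \sty{T}'$, and we must push this step through $\sty{T} \subtype \sty{U}$. The key lemma is:
\[
\text{if } \sty{T} \subtype \sty{U} \text{ and } \reduceinternal{\sty{T}}{\sty{T}'}{\pp}{\ell}{\basetype}, \text{ then } \reduceinternal{\sty{U}}{\sty{U}'}{\pp}{\ell}{\basetype} \text{ for some } \sty{U}' \text{ with } \sty{T}' \subtype \sty{U}'.
\]
This is proved by induction on the $\reduceinternalsymbol$ derivation. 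In the $\oplus$ and $\&$ cases, we apply the induction hypothesis to the branches. We then recombine the resulting $\sty{U}'$ branches by showing that $\sty{U}$'s outgoing action on $\pq$ (or incoming one) commutes with the $\pp$-step. This requires a commutation lemma: if $\reduceinternal{\sty{U}}{\sty{U}_1}{\pq}{\ell'}{\basetype'}$ and $\reduceinternal{\sty{U}_1}{\sty{U}_2}{\pp}{\ell}{\basetype}$ with $\pp \neq \pq$, then the two steps can be swapped, up to subtyping.

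Condition (4) is dual. I expect this commutation/permutation lemma, and the bookkeeping of the "up to $\subtype$" targets, to be the main difficulty. It is the reason for taking the candidate relation to be the composite, possibly closed further under a finite chain of subtypings, rather than a bare relation.
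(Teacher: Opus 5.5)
Your reflexivity argument (the identity relation is already a pre-simulation), your choice congruences (checking the five conditions on $\subtype_\Theta$ plus the one new pair), and your substitution argument are sound. The first two are more direct than the paper, which gets reflexivity by induction on types and the congruences from \cref{subtype-congruences}.

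For the $\mu$-rule, the relation as you first wrote it is not closed under continuations: continuations keep $\ltyperec{\ltypevar{X}}{\sty{T}}$ as the substituted type but change the body. The version that works is the one your second sentence suggests. Fix $\sty{T}\subtype_{\Theta,\ltypevar{X}}\sty{U}$ and take all pairs $(\subst{\sty{T}'}{\ltypevar{X}\mapsto\ltyperec{\ltypevar{X}}{\sty{T}}},\subst{\sty{U}'}{\ltypevar{X}\mapsto\ltyperec{\ltypevar{X}}{\sty{U}}})$ with $\sty{T}'\subtype_{\Theta,\ltypevar{X}}\sty{U}'$, using guardedness when $\unfold{\sty{T}'}=\ltypevar{X}$.

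For transitivity, your architecture is exactly the paper's \cref{subtype-lifting}: $\subtype_\Theta\circ\subtype_\Theta$, plus a lifting lemma for $\reduceinternalsymbol$/$\reduceexternalsymbol$ and transfer lemmas for $\Sends{\pp}$/$\Recvs{\pp}$. No further closure under chains is needed.

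The gap is that these lemmas do not follow from the inductions you indicate. Take the transfer $\Recvs{\pp}(\sty{T})\wedge\sty{T}\subtype\sty{U}\Rightarrow\Recvs{\pp}(\sty{U})$ by induction on $\Recvs{\pp}(\sty{T})$. In the \RuleName{Recvs-$\&$} step, $\sty{T}$ is an external choice on some $\pq\neq\pp$, and your hypothesis concerns supertypes of its branches. But $\sty{T}\subtype\sty{U}$ only yields $\Recvs{\pq}(\sty{U})$. For distinct $\pp,\pq,\pr$ we have
\[
  \ltyperecvone{\pq}{\ell}{\basetype}{\ltyperecvone{\pr}{\ell'}{\basetype'}{\ltyperecvone{\pp}{\ell''}{\basetype''}{\ltypeend}}}
  \subtype
  \ltyperecvone{\pr}{\ell'}{\basetype'}{\ltyperecvone{\pq}{\ell}{\basetype}{\ltyperecvone{\pp}{\ell''}{\basetype''}{\ltypeend}}}
\]
and the left branch is a subtype of no subterm of the right-hand side. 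The conclusion is true here, but your induction hypothesis cannot be applied. The supertype you need appears only after commuting the $\pq$-input in front of the $\pr$-input, and you must then carry $\Recvs{\pp}$ back across that commutation.

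The missing idea is the paper's relation $\prectypesymbol$, which exhibits a choice on $\pq$ inside $\sty{U}$ modulo choices on other participants. It is used together with \cref{subtype-prectype}: if $\unfold{\sty{T}}=\ltyperecv{\pq}{\ell_i}{\basetype_i}{\sty{T}_i}{i\in I}$ and $\sty{T}\subtype\sty{U}$, then $\prectype{\sty{U}}{\ltyperecv{\pq}{\ell_i}{\basetype_i}{\sty{U}_i}{i\in J}}$ with $J\subseteq I$ and $\sty{T}_i\subtype\sty{U}_i$. It also needs the swapping lemmas for $\prectypesymbol$ in \cref{inductive-predicates-swapping}. The paper then proves the transfer for $\prectypesymbol$ and converts back via \cref{recvs-prectype}.

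The $\&$ case of your lifting lemma hits the same obstacle. The $\sty{U}_i$ to which you ``apply the induction hypothesis to the branches'' come only from \cref{subtype-prectype}. The commutation needed there is of a $\pp$-send past $\prectypesymbol$, not of two sends.

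Recombination then needs \cref{subtype-congruences} with an arbitrary right-hand side. For example, $\ltypesend{\pq}{\ell_i}{\basetype_i}{\sty{T}'_i}{i\in J}\subtype\sty{U}''$ whenever $\reduceinternal{\sty{U}''}{\sty{U}'_i}{\pq}{\ell_i}{\basetype_i}$ with $\sty{T}'_i\subtype\sty{U}'_i$, even though $\sty{U}''$ need not be a choice on $\pq$. Checking condition (4) there requires an induction on the heights of the $\reduceinternalsymbol$ derivations, and your same-participant congruences do not provide it.

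Finally, the example in your (c) runs the wrong way for inputs. Lifting transfers $\reduceexternalsymbol$ from $\sty{U}$ to $\sty{T}$, since $\sty{T}$ may accept labels that $\sty{U}$ does not.
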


\subsection{Sequencing}
Unlike in previous works on session types, we have a notion of \emph{sequencing} of computations, namely $\bindterm{x}{t}{u}$.
To assign a type to a sequence, we use a \emph{multiplication} operation $\sty{T} \effmul \sty{T'}$ on session types.
This essentially replaces $\ltypeend$ with $\sty{T}'$ in $\sty{T}$ (once $t$ is done, we run $u$).
The definition is by recursion on $\sty{T}$:
\[
  \begin{array}{r@{~}c@{~}l}
  \ltypeend \effmul \sty{T}'
  &=&
  \sty{T}'
  \\
  \ltypevar{X} \effmul \sty{T}'
  &=&
  \ltypevar{X}
  \\[0.5ex]
  (\ltyperec{\ltypevar{X}}{\sty{T}}) \effmul \sty{T}'
  &=&
  \ltyperec{\ltypevar{X}}{(\sty{T} \effmul \sty{T}')}
  ~~\mathrlap{\text{if}~\ltypevar{X}~\text{not free in}~\sty{T'}}
  \end{array}
  \quad
  \begin{array}{r@{~}c@{~}l}
  (\ltypesend{\pp}{\ell_i}{\basetype_i}{\sty{T}_i}{i \in I}) \effmul \sty{T}'
  &=&
  \ltypesend{\pp}{\ell_i}{\basetype_i}{(\sty{T}_i \effmul \sty{T}')}{i \in I}
  \\
  (\ltyperecv{\pp}{\ell_i}{\basetype_i}{\sty{T}_i}{i \in I}) \effmul \sty{T}'
  &=&
  \ltyperecv{\pp}{\ell_i}{\basetype_i}{(\sty{T}_i \effmul \sty{T}')}{i \in I}
  \\[0.5ex]
  \phantom{\mathclap{(\ltyperec{\ltypevar{X}}{\sty{T}}) \effmul \sty{T}'}}
  \end{array}
\]
The set of closed session types forms a preordered monoid: the preorder is the asynchronous subtyping relation $\sty{T} \effleq \sty{U}$; the monoid operation is our multiplication $\sty{T} \effmul \sty{U}$; and the unit of the monoid is $\ltypeend$.
Multiplication $(\effmul)$ is associative, and also monotone (if $\sty{T} \effleq \sty{U}$ and $\sty{T}' \effleq \sty{U}'$ then $\sty{T} \effmul \sty{T}' \effleq \sty{U} \effmul \sty{U}'$).

In general, following Katsumata~\cite{katsumata2014parametric}, a preordered monoid of \emph{grades} is the basic structure required to give a graded type system; in particular, the multiplication operation is required to give a graded typing rule for sequencing.
The grades in this paper are session types, and thus the fact that we can organize session types into a preordered monoid is crucial.
We use the multiplication operation in the typing rule \RuleName{Let} of the following section.

\section{Session-type-graded type system for $\lambdasmp{}$}\label{sec:graded-calculus}

We now come to our type system for $\lambdasmp{}$.
The goal is to assign, to each configuration, a type $\basetype$ and a session
type $\sty{T}$, but to do so in such a way that we can prove safety and liveness properties.
This is a \emph{graded} type system in the terminology of for instance~\cite{orchard2019quantitative}; the \emph{grades} here are the session types.
Indeed, the typing rules for sequencing, returning and for subtyping are standard from graded type systems.
This section demonstrates that we can view session types as an instance of grading.

We first define a typing judgement for values. This has the form $\valuetypedg{\Gamma}{v}{\basetype}$, where 
the typing context $\Gamma$ assigns types $\basetype'$ to variables $x$.
The rules for value typing are at the top of \cref{typing-rules}.

\begin{figure}[t]
  \vspace{-1.5ex}
  \begin{gather*}
    \shortintertext{Value typing $\boxed{\valuetypedg{\Gamma}{v}{\basetype}}$}
    \begin{prooftree}
      \hypo{(x : \basetype) \in \Gamma}
      \infer1{\valuetypedg{\Gamma}{x}{\basetype}}
    \end{prooftree}
    \quad
    \begin{prooftree}
      \hypo{\phantom{(x : \basetype) \in \Gamma}}
      \infer1{\valuetypedg{\Gamma}{\unitterm}{\unittype}}
    \end{prooftree}
    \quad
    \begin{prooftree}
      \hypo{\phantom{(x : \basetype) \in \Gamma}}
      \infer1{\valuetypedg{\Gamma}{n}{\inttype}}
    \end{prooftree}
    \quad
    \begin{prooftree}
      \hypo{\phantom{(x : \basetype) \in \Gamma}}
      \infer1{\valuetypedg{\Gamma}{\trueterm}{\booltype}}
    \end{prooftree}
    \quad
    \begin{prooftree}
      \hypo{\phantom{(x : \basetype) \in \Gamma}}
      \infer1{\valuetypedg{\Gamma}{\falseterm}{\booltype}}
    \end{prooftree}
  \end{gather*}
  \vspace{-2ex}
  \begin{gather*}
    \shortintertext{Computation typing $\boxed{\comptypedg{\Theta}{\Psi}{\Gamma}{t}{\basetype}{\sty{T}}}$}
\TRULE{$\subtype$}    
    \frac
      {
        \comptypedg{\Theta}{\Psi}{\Gamma}{t}{\basetype}{\sty{T}}
        \quad
        \sty{T} \subtype_{\Theta} \sty{U}
      }
      {\comptypedg{\Theta}{\Psi}{\Gamma}{t}{\basetype}{\sty{U}}}
    \quad
\TRULE{Ret}        
    \frac
      {\valuetypedg{\Gamma}{v}{\basetype}}
      {\comptypedg{\Theta}{\Psi}{\Gamma}{\returnterm{v}}{\basetype}{\effunit}}
  \\[1ex]
\TRULE{Let}        
    \frac
      {
        \comptypedg{\Theta}{\Psi}{\Gamma}{t}{\basetype}{\sty{T}}
        \quad
        \comptypedg{\Theta}{\Psi}{\Gamma, x : \basetype}{u}{\basetype'}{\sty{T}'}
      }
      {\comptypedg{\Theta}{\Psi}{\Gamma}{\bindterm{x}{t}{u}}{\basetype'}{\sty{T} \effmul \sty{T}'}}
    \quad
\TRULE{$+$}        
\frac
      {~
        \valuetypedg{\Gamma}{v}{\inttype}
        \quad
        \valuetypedg{\Gamma}{w}{\inttype}
      }
      {\comptypedg{\Theta}{\Psi}{\Gamma}{\addterm{v}{w}}{\inttype}{\effunit}}
  \\[0.5ex]
\TRULE{$<$}            
    \frac
      {
        \valuetypedg{\Gamma}{v}{\inttype}
        \quad
        \valuetypedg{\Gamma}{w}{\inttype}
      }
      {\comptypedg{\Theta}{\Psi}{\Gamma}{\lessterm{v}{w}}{\booltype}{\effunit}}
    ~
\TRULE{If}                
    \frac
      {
        \valuetypedg{\Gamma}{v}{\booltype}
        \quad\!
        \comptypedg{\Theta}{\Psi}{\Gamma}{t_i}{\basetype}{\sty{T}}
        ~\text{for all}~i \in \{1, 2\}
      }
      {\comptypedg{\Theta}{\Psi}{\Gamma}{\ifterm{v}{t_1}{t_2}}{\basetype}{\sty{T}}}
  \\[0.8ex]
\TRULE{Send}                        
    \frac
      {
        \valuetypedg{\Gamma}{v}{\basetype}
        \quad
        \comptypedg{\Theta}{\Psi}{\Gamma}{t}{\basetype'}{\sty{T}}
      }
      {\comptypedg{\Theta}{\Psi}{\Gamma}{\sendterm{\ell}{v}{\pp}{t}}{\basetype'}{(\ltypesendone{\pp}{\ell}{\basetype}{\sty{T}})}}
  \\[0.8ex]
\TRULE{Recv}                        
    \frac
      {
        \comptypedg{\Theta}{\Psi}{\Gamma, x_i : \basetype_i}{t_i}{\basetype'}{\sty{T}_i}
        ~\text{for all}~i \in I
      }
      {\comptypedg{\Theta}{\Psi}{\Gamma}{\recvterm{\pp}{\ell_i}{x_i : \basetype_i}{t_i}{i \in I}}{\basetype'}{(\ltyperecv{\pp}{\ell_i}{\basetype_i}{\sty{T}_i}{i \in I})}}
  \\[0.8ex]
  \TRULE{LetRec}
    \frac
      {
        \begin{array}{@{}l@{}}
        \comptypedg{\Theta, X}{\Psi, f : (\basetype_1, \dots, \basetype_n) \gto{\ltypevar{X}} \basetype'}{\Gamma, x_1 : \basetype_1, \dots, x_n : \basetype_n}{t}{\basetype'}{\sty{T}}
        \\
        \comptypedg{\Theta}{\Psi, f : (\basetype_1, \dots, \basetype_n) \gto{\ltyperec{\ltypevar{X}}{\sty{T}}} \basetype'}{\Gamma}{u}{\basetype''}{\sty{T}'}
        \end{array}
      }
      {\comptypedg{\Theta}{\Psi}{\Gamma}{\letrecterm{f}{(x_1, \dots, x_n)}{t}{u}}{\basetype''}{\sty{T}'}}
  \\[0.5ex]
  \TRULE{App}
    \frac
      {
        (f : (\basetype_1, \dots, \basetype_n) \gto{\sty{T}} \basetype') \in \Psi
        \quad
        \valuetypedg{\Gamma}{v_1}{\basetype_1}
        ~\cdots~
        \valuetypedg{\Gamma}{v_n}{\basetype_n}
      }
      {\comptypedg{\Theta}{\Psi}{\Gamma}{f(v_1, \dots, v_n)}{\basetype'}{\sty{T}}}
  \end{gather*}
  \vspace{-3ex}
  \begin{gather*}
    \shortintertext{Configuration typing $\boxed{\conftypedg{\config{\queue{\rho}}{t}{\queue{\sigma}}}{\basetype}{\sty{T}}}$}
    \begin{array}{l}
    \TRULE{CBase}
    \\[1ex]\displaystyle
    \frac
      {\comptypedg{\emptycontext}{\emptycontext}{\emptycontext}{t}{\basetype}{\sty{T}}}
      {\conftypedg{\config{\queueempty}{t}{\queueempty}}{\basetype}{\sty{T}}}
    \end{array}
    \quad
    \begin{array}{l}
    \TRULE{CSend}
    \\[-1.2ex]\displaystyle
    \frac
      {
        \begin{array}{l}
        v : \basetype'
        \quad
        \reduceinternal{\sty{U}}{\sty{T}}{\pp}{\ell}{\basetype'}
        \\
        \conftypedg{\config{\queue{\rho}}{t}{\queue{\sigma}}}{\basetype}{\sty{T}}
        \end{array}
      }
      {\conftypedg{\config{\queue{\rho}}{t}{\consback{\queue{\sigma}}{\pp}{\msg{\ell}{v}}}}{\basetype}{\sty{U}}}
    \end{array}
    \quad
    \begin{array}{l}
    \TRULE{CRecv}
    \\[-1.2ex]\displaystyle
    \frac
      {
        \begin{array}{l}
        v : \basetype'
        \quad
        \reduceexternal{\sty{T}}{\sty{U}}{\pp}{\ell}{\basetype'}
        \\
        \conftypedg{\config{\queue{\rho}}{t}{\queue{\sigma}}}{\basetype}{\sty{T}}
        \end{array}
      }
      {\conftypedg{\config{\consfront{\pp}{\msg{\ell}{v}}{\queue{\rho}}}{t}{\queue{\sigma}}}{\basetype}{\sty{U}}}
      \end{array}
  \end{gather*}
  \vspace{-2.9ex}
  \caption{Typing of values, computations, and configurations in $\lambdasmp{}$\label{typing-rules}}
\end{figure}

We then define a typing judgement for computations, of the form $\comptypedg{\Theta}{\Psi}{\Gamma}{t}{\basetype}{\sty{T}}$.
The session type $\sty{T}$ describes the behaviour of the computation $t$, in terms of the protocol it follows.
Here $\Theta$ is a finite set of type variables; this contains all of the free type variables in the other components of the judgement.
The \emph{function typing context} $\Psi$ tracks the recursive function definitions that are in scope.
It maps function names $f$ to triples of the form $(\basetype_1, \dots, \basetype_n) \gto{\sty{U}} \basetype'$, where $(\basetype_1, \dots, \basetype_n)$ is a possibly empty list of (argument) types, $\sty{U}$ is a session type over $\Theta$, and $\basetype'$ is a (result) type.
The session type $\sty{U}$ describes the protocol followed by an application $\applyterm{f}{(v_1, \dots, v_n)}$.
Annotating function types with a grade in this way is standard from the grading literature, the annotation is sometimes called the \emph{latent effect} of the function.
The symbol $\fatsemi$ separates the different components of the judgement, it has no meaning by itself.

The computation typing judgement is defined inductively, by the rules in the middle of \cref{typing-rules}.
We include a session subtyping rule \RuleName{$\subtype$}, using the above definition of subtyping; this is useful for typing conditionals, because \RuleName{If} requires the two branches to have the same type.
The \RuleName{Send} and \RuleName{Recv} rules record the send/receive in the session type assigned to the computation.
The \RuleName{LetRec} rule requires the body of the recursive function to have type $\sty{T}$ under the assumption that a recursive call will have type $\ltypevar{X}$; it then assigns the recursive session type $\ltyperec{\ltypevar{X}}{\sty{T}}$ to the recursive function.

Finally, the typing judgement for configurations has the form $\conftypedg{\config{\queue{\rho}}{t}{\queue{\sigma}}}{\basetype}{\sty{T}}$, and is defined inductively by the rules at the bottom of \cref{typing-rules}.
There are no contexts because $t$ is a closed computation; the rules ensure that $\sty{T}$ is a closed session type.
The \RuleName{CBase} rule uses our computation typing judgement with empty contexts (we write $\emptycontext$ for an empty context).
The intuition for \RuleName{CSend} is that the configuration in the conclusion is sending a message to $\pp$; we therefore need to ensure that the type $\sty{U}$ permits it to do so, and we do this using our relation $\reduceinternalsymbol$.
For \RuleName{CRecv}, the conclusion is only well-typed if the configuration in the assumption is able to receive a message from $\pp$; we ensure this is the case using $\reduceexternalsymbol$.
We do not explicitly include a subtyping rule for configurations, but such a rule is admissible: if $\conftypedg{\config{\queue{\rho}}{t}{\queue{\sigma}}}{\basetype}{\sty{T}}$ and $\sty{T} \subtype \sty{U}$, then $\conftypedg{\config{\queue{\rho}}{t}{\queue{\sigma}}}{\basetype}{\sty{U}}$.

\begin{example}\label{ex:gppsy-example-graded}
  Recall the computations and session types from \cref{ex:gppsy-example} and \cref{ex:gppsy-example-types}.
  We have $\conftypedg{\config{\queueempty}{t}{\queueempty}}{\booltype}{\sty{T}}$, and $\comptypedg{\emptycontext}{\emptycontext}{y : \booltype}{u}{\booltype}{\sty{U}}$.
  Both of these use subtyping to type the conditionals.
  Since $\sty{U} \subtype \sty{T}$, we therefore also have $\comptypedg{\emptycontext}{\emptycontext}{y : \booltype}{u}{\booltype}{\sty{T}}$, so for instance we can assign to $\config{\queueempty}{\bindterm{y}{\falseterm}{u}}{\queueempty}$ the same type as $\config{\queueempty}{t}{\queueempty}$.
\end{example}

\begin{example}\label{ex:global-state-example-graded}
  Recall our global state example, from \cref{global-state-computations}.
  Consider the following session types, where $\ltypevar{X}$ is a type variable.
  \[
    \sty{T}_\role{s} = \ltyperecvcases{\role{c}}{
      \mltype{\ml{get}}{\unittype}.\,\ltypesendone{\role{c}}{\ml{st}}{\inttype}{\ltypevar{X}}
      \\
      \mltype{\ml{put}}{\inttype}.\,\ltypevar{X}
      \\
      \mltype{\ml{done}}{\unittype}.\,\ltypeend
    }
    \quad
    \sty{T}_\role{c} = \ltypesendcases{\role{s}}{
      {}
      \mltype{\ml{get}}{\unittype}.\,\ltyperecvone{\role{s}}{\ml{st}}{\inttype}{\ltypevar{X}}
      \\
      \mltype{\ml{put}}{\inttype}.\,\ltypevar{X}
      \\
      \mltype{\ml{done}}{\unittype}.\,\ltypeend
    }
  \]
  We have 
  $\conftypedg{\mathcal{C}_{\role{s}}}{\unittype}{\ltyperec{\ltypevar{X}}{\sty{T}_\role{s}}}$,
  and
  $\conftypedg{\mathcal{C}_{\role{c}, 1}}{\inttype}{\ltyperec{\ltypevar{X}}{\sty{T}_\role{c}}}$
  and $\conftypedg{\mathcal{C}_{\role{c}, 2}}{\inttype}{\ltyperec{\ltypevar{X}}{\sty{T}_\role{c}}}$,
  where
  \begin{gather*}
  \mathcal{C}_{\role{s}} = \config{\queueempty}{t_{\role{s}}}{\queueempty}
  \\
  \mathcal{C}_{\role{c}, i} = \config{\queueempty}{\bindterm{x}{t_{\role{c}, i}}{\sendterm{\ml{done}}{\unitterm}{\role{s}}{\returnterm{x}}}}{\queueempty}
  ~~(i \in \{1, 2\})
  \end{gather*}
  The derivation for $\mathcal{C}_{\role{s}}$ involves a derivation of
  \[\comptypedg{\ltypevar{X}}{f : \inttype \gto{\ltypevar{X}} \unittype}{x: \inttype}{\longrecvterm{\role{c}}{\dots}}{\unittype}{\sty{T}_{\role{s}}}\]
  where $\longrecvterm{\role{c}}{\dots}$ is the body of the recursive function in $t_{\role{s}}$.
\end{example}

The subject reduction theorem uses the inductive relations of \cref{fig:inductive}.
\begin{restatable}[Subject reduction]{theorem}{subjred}\label{subject-reduction}
  Assume that $\conftypedg{\config{\rho}{t}{\sigma}}{\basetype}{\sty{T}}$, and that $\reduceaction{\config{\rho}{t}{\sigma}}{\config{\rho'}{t'}{\sigma'}}{\alpha}$.
  If $\alpha = \tauaction$, then $\conftypedg{\config{\rho'}{t'}{\sigma'}}{\basetype}{\sty{T}}$. If $\alpha = \sendaction{\pp}{\msg{\ell}{v}}$ with $v : \basetype'$, then $\conftypedg{\config{\rho'}{t'}{\sigma'}}{\basetype}{\sty{U}}$ for some $\sty{U}$ such that $\reduceinternal{\sty{T}}{\sty{U}}{\pp}{\ell}{\basetype'}$. If $\alpha = \recvaction{\pp}{\msg{\ell}{v}}$ with $v : \basetype'$, then $\conftypedg{\config{\rho'}{t'}{\sigma'}}{\basetype}{\sty{U}}$ for every $\sty{U}$ such that $\reduceexternal{\sty{T}}{\sty{U}}{\pp}{\ell}{\basetype'}$.
\end{restatable}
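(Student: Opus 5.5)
The plan is to split the argument into a computation-level subject reduction lemma and a configuration-level induction on the derivation of $\conftypedg{\config{\rho}{t}{\sigma}}{\basetype}{\sty{T}}$, which is a stack of \RuleName{CSend}/\RuleName{CRecv} rules on top of \RuleName{CBase}.

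\textbf{Step 1: computation level.} I would first prove: if $\comptypedg{\emptycontext}{\emptycontext}{\emptycontext}{t}{\basetype}{\sty{T}}$ and $\reduceaction{t}{t'}{\alpha}$, then
\begin{itemize}
\item for $\alpha=\tauaction$ we have $t'$ typed with $\sty{T}$;
\item for a send $\sendaction{\pp}{\msg{\ell}{v}}$ we have $t'$ typed with some $\sty{U}$ such that $\reduceinternal{\sty{T}}{\sty{U}}{\pp}{\ell}{\basetype'}$;
\item for a receive we have $t'$ typed with every $\sty{U}$ such that $\reduceexternal{\sty{T}}{\sty{U}}{\pp}{\ell}{\basetype'}$.
\end{itemize}
Closed terms under reduction contexts may contain free function names bound by enclosing $\syntaxkeyword{letrec}$s, and free type variables in $\Theta$. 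So I would generalize the statement to judgements $\comptypedg{\Theta}{\Psi}{\emptycontext}{\rctx{R}[\cdot]}{}{}$, keeping track of how the type of a hole propagates through $\bindterm{x}{\hole}{u}$ (grade $\sty{T}\effmul\sty{T}'$) and through $\syntaxkeyword{letrec}$.

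This needs standard auxiliary lemmas:
\begin{itemize}
\item inversion modulo subtyping, i.e.\ a typing of $\sendterm{\ell}{v}{\pp}{t}$ with grade $\sty{T}$ gives $t:\sty{T}_0$ with $\ltypesendone{\pp}{\ell}{\basetype'}{\sty{T}_0}\subtype\sty{T}$;
\item value substitution, for \RuleName{LetR} and \RuleName{Recv};
\item function substitution for \RuleName{Apply}. Here the body typed at $\Theta,\ltypevar{X}$ with $f:\gto{\ltypevar{X}}$ becomes typed at $\subst{\sty{T}}{\ltypevar{X}\mapsto\ltyperec{\ltypevar{X}}{\sty{T}}}$, which is equivalent to $\ltyperec{\ltypevar{X}}{\sty{T}}$ by the rec rules and by the substitution and congruence rules of \cref{admissible-subtyping}.
\end{itemize}

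\textbf{Step 2: compatibility of the transition relations with subtyping and multiplication.} Four facts are needed.
\begin{itemize}
\item If $\sty{T}\subtype\sty{T}'$ and $\reduceinternal{\sty{T}}{\sty{U}}{\pp}{\ell}{\basetype}$, then $\reduceinternal{\sty{T}'}{\sty{U}'}{\pp}{\ell}{\basetype}$ with $\sty{U}\subtype\sty{U}'$. I would prove this by induction on the derivation of $\reduceinternalsymbol$, using clauses (1) and (4) of \cref{def:subtyping} together with $\Recvs{}$ and $\Sends{}$.
\item Dually, if $\sty{T}\subtype\sty{T}'$ and $\reduceexternal{\sty{T}'}{\sty{U}'}{\pp}{\ell}{\basetype}$, then $\reduceexternal{\sty{T}}{\sty{U}}{\pp}{\ell}{\basetype}$ with $\sty{U}\subtype\sty{U}'$.
\item If $\reduceinternal{\sty{T}}{\sty{U}}{\dots}{}{}$, then $\reduceinternal{\sty{T}\effmul\sty{S}}{\sty{U}\effmul\sty{S}}{\dots}{}{}$.
\item $\reduceexternal{\sty{T}\effmul\sty{S}}{\sty{V}}{\dots}{}{}$ decomposes appropriately when the left factor begins the receive.
\end{itemize}
The receive case of Step 1 is delicate because the statement quantifies over \emph{every} $\sty{U}$. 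Given $\reduceexternal{\sty{T}}{\sty{U}}{\pp}{\ell}{\basetype'}$, where $\sty{T}$ is a supertype of $\ltyperecv{\pp}{\ell_i}{\basetype_i}{\sty{T}_i}{i\in I}\effmul\sty{S}$, I must show $\sty{T}_j\effmul\sty{S}\subtype\sty{U}$. That requires a lemma that $\reduceexternalsymbol$ out of a type whose unfolding is a receive from $\pp$ lands in a supertype of the corresponding branch. This is essentially the coinductive content of clause (4), and is reminiscent of \cref{ex:weird-example}.

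\textbf{Step 3: configurations.} I would argue by cases on the configuration rule and by induction on the typing derivation.
\begin{itemize}
\item \RuleName{CInt}: use the $\tau$ case of Step 1, then rebuild the \RuleName{CSend}/\RuleName{CRecv} layers unchanged.
\item \RuleName{CProd}: $t$ emits a send and the message is appended to the front of $\sigma$'s $\pp$-list. We get $t'$ typed with some $\sty{U}$ such that $\reduceinternal{\sty{T}_0}{\sty{U}}{\pp}{\ell}{\basetype'}$. The existing send layers must then be commuted past this new innermost layer, which needs a commutation lemma for $\reduceinternalsymbol$ (diamond-like, including $\pp\neq\pq$ reordering).
\item \RuleName{CCons}: this needs the analogous commutation for $\reduceexternalsymbol$ and for mixed $\reduceinternalsymbol$/$\reduceexternalsymbol$ steps.
\item \RuleName{CSend} and \RuleName{CRecv}: these peel off or add a layer. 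The outermost send layer matches directly. When the removed message sits deeper in the queue, commutation lemmas bring it outermost; this is valid because messages for distinct participants commute and same-participant order is preserved.
\end{itemize}

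\textbf{Main obstacle.} I expect the commutation/confluence lemmas for $\reduceinternalsymbol$ and $\reduceexternalsymbol$ (the diamond property up to $\subtype$) together with the receive case to be the hardest part. I would prove them by induction on derivations, with heavy use of transitivity from \cref{admissible-subtyping}.
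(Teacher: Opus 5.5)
Your plan matches the paper's proof. The paper first proves a computation-level subject-reduction lemma using subtype lifting for $\reduceinternalsymbol$ and $\reduceexternalsymbol$ together with the multiplication lemmas. It then argues by induction on the configuration typing derivation for each configuration rule, using the swapping lemmas to handle \RuleName{CCons} and the operational \RuleName{CSend}.

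\textbf{The receive clause needs a strengthening.} In the paper, the computation-level receive clause also concludes $\Recvs{\pp}(\sty{T})$. This is exactly the ``left factor begins the receive'' side condition your decomposition of $\reduceexternalsymbol$ out of $\sty{T}\effmul\sty{S}$ needs in the $\syntaxkeyword{let}$-context case, so you must carry it through the induction; your Step 1 as stated omits it.

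\textbf{\RuleName{CProd} is simpler than you expect.} No commutation is needed there. The freshly produced message sits at the front of $\sigma(\pp)$, so it becomes a new innermost \RuleName{CSend} layer directly above \RuleName{CBase}, still carrying the type $\sty{T}_0$. All the outer layers are then reused unchanged.
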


\subsection{Typed bisimulation}
To compare a configuration to its interpretation in our denotational semantics below, we use a notion of equivalence between states of transition systems.
Our notion of equivalence is based on \emph{bisimulation}.
However, in the context of session types, the appropriate notion of equivalence is one that is informed by the type.
To see why, consider \cref{ex:gppsy-example-graded}.
The configurations $\config{\queueempty}{t}{\queueempty}$ and $\config{\queueempty}{\bindterm{y}{\falseterm}{u}}{\queueempty}$ do not have the same behaviour; for instance, if $\pp$ sends $\msg{\ml{success}}{1}$, then they send different messages to $\pq$.
However, if we know that $\pp$ cannot send $\ml{success}$, then they are equivalent.
In particular, since $\sty{T} \subtype (\ltyperecvone{\pp}{\ml{error}}{\booltype}{\sty{T}'})$, we can assign the session type $\ltyperecvone{\pp}{\ml{error}}{\booltype}{\sty{T}'}$ to both configurations; by doing so, we are requiring $\pp$ to send an $\ml{error}$ message and not $\ml{success}$.
The aforementioned configurations have the same behaviour when considered as configurations of type $\ltyperecvone{\pp}{\ml{error}}{\booltype}{\sty{T}'}$.
We define a general notion of \emph{typed bisimulation} for \emph{typed transition systems}, to account for the session types.

\begin{definition}
  A \emph{typed transition system} $(S, \reducesymbol, \Result, \hastype{}{})$ with result set $X$, consists of a set $S$ of \emph{states}, a binary relation $\reduceaction{}{}{\alpha}$ on $S$ for each local action $\alpha$, a partial function $\Result$ from $S$ to $X$, and a relation $\hastype{}{}$ between states and closed session types.
  We require that $\hastype{s}{\sty{T}}$ implies $\hastype{s}{\sty{U}}$ whenever $\sty{T} \subtype \sty{U}$.
\end{definition}
We write $\reduceactionmany{s}{s'}{\alpha}$ when there is a finite sequence of reductions ending in action $\alpha$, where all the reductions before the last have action $\tauaction$.
When $\alpha = \tauaction$ we permit the sequence of reductions to be empty ($s = s'$), while for every other $\alpha$ we require there to be at least one reduction.

For each $\basetype$, we obtain a typed transition system with result set $\{v \mid v : \basetype\}$: states are configurations, $\reducesymbol$ and $\Result$ are as defined in \cref{sec:basic-calculus}, and the typing relation $\hastype{\config{\queue{\rho}}{t}{\queue{\sigma}}}{\sty{T}}$ holds when $\conftypedg{\config{\queue{\rho}}{t}{\queue{\sigma}}}{\basetype}{\sty{T}}$.
Our notion of typed bisimulation is as follows.
\begin{definition}\label{def:bisim}
  Let $(S, \reducesymbol, \Result, \hastype{}{})$ and $(S', \reducesymbol, \Result, \hastype{}{})$ be typed transition systems.
  A family of relations $R_{\sty{S}} \subseteq S \times S'$, indexed by closed session types $\sty{S}$, is a \emph{typed bisimulation} when $s R_{\sty{S}} s'$ implies $\hastype{s}{\sty{S}}$, $\hastype{s'}{\sty{S}}$, and also the following.
  \begin{enumerate}
    \item (a) $\reduceaction{s}{t}{\tauaction}$ implies there exists $t'$ such that $\reduceactionmany{s'}{t'}{\tauaction}$ and $t R_{\sty{S}} t'$; and
          (b) $\reduceaction{s'}{t'}{\tauaction}$ implies there exists $t$ such that $\reduceactionmany{s}{t}{\tauaction}$ and $t R_{\sty{S}} t'$.
    \item If~$\unfold{\sty{S}} = \ltypeend$, then (a) $\Result(s) = x$ implies there exists $t'$ such that $\reduceactionmany{s'}{t'}{\tauaction}$ and $\Result(t') = x$; and
          (b) $\Result(s') = x$ implies there exists $t$ such that $\reduceactionmany{s}{t}{\tauaction}$ and $\Result(t) = x$.
    \item If $\Sends{\pp}(\sty{S})$ and $v : \basetype$, then (a) $\reduceaction{s}{t}{\sendaction{\pp}{\msg{\ell}{v}}}$ implies there exist $\sty{T}, t'$ such that $\reduceinternal{\sty{S}}{\sty{T}}{\pp}{\ell}{\basetype}$, $\reduceactionmany{s'}{t'}{\sendaction{\pp}{\msg{\ell}{v}}}$ and $t R_{\sty{T}} t'$; and
          (b) $\reduceaction{s'}{t'}{\sendaction{\pp}{\msg{\ell}{v}}}$ implies there exist $\sty{T}, t$ such that $\reduceinternal{\sty{S}}{\sty{T}}{\pp}{\ell}{\basetype}$, $\reduceactionmany{s}{t}{\sendaction{\pp}{\msg{\ell}{v}}}$ and $t R_{\sty{T}} t'$.
    \item If $\reduceexternal{\sty{S}}{\sty{T}}{\pp}{\ell}{\basetype}$ and $v : \basetype$, then (a) $\reduceaction{s}{\hastype{t}{\sty{T}}}{\recvaction{\pp}{\msg{\ell}{v}}}$ implies there exists $\sty{T}, t'$ such that $\reduceactionmany{s'}{t'}{\recvaction{\pp}{\msg{\ell}{v}}}$ and $t R_{\sty{T}} t'$; and
          (b) $\reduceaction{s'}{{\hastype{t'}{\sty{T}}}}{\recvaction{\pp}{\msg{\ell}{v}}}$ implies there exists $t$ such that $\reduceactionmany{s}{t}{\recvaction{\pp}{\msg{\ell}{v}}}$ and $t R_{\sty{T}} t'$.
  \end{enumerate}
  We write $\bisim$ for the largest typed bisimulation.
\end{definition}
The crucial points to note are that in (3), we only impose requirements on messages sent to $\pp$ when $\Sends{\pp}$ holds, because $\pp$ can only consume a message when that is the case; and that in (4) we only impose requirements on a message received from $\pp$ when the session type requires that message to be accepted.
As is usual for a coinductive definition, $\bisim$ is equal to the union of all typed bisimulations; see \cref{union-bisimulation} for details.
\begin{example}\label{ex:global-state-example-bisim}
  Returning to \cref{ex:gppsy-example-graded}, we have $\config{\queueempty}{t}{\queueempty} \bisim_{\sty{S}} \config{\queueempty}{\bindterm{y}{\falseterm}{u}}{\queueempty}$ for $\sty{S} = \ltyperecvone{\pp}{\ml{error}}{\booltype}{\sty{T}'}$, but not for $\sty{S} = \sty{T}$.
  The difference is that, while $\reduceexternal{\sty{T}}{\sty{T}'}{\pp}{\ml{success}}{\inttype}$ holds, no $\sty{S}'$ satisfies $\reduceexternal{(\ltyperecvone{\pp}{\ml{error}}{\booltype}{\sty{T}'})}{\sty{S}'}{\pp}{\ml{success}}{\inttype}$.

  In the setting of \cref{ex:global-state-example-graded}, we have
  $\config{\queueempty}{t_{\role{c}, 1}}{\queueempty} \bisim_{\sty{S}}\config{\queueempty}{t_{\role{c}, 2}}{\queueempty}$ where
  $\sty{S} = \ltypesendone{\role{s}}{\ml{get}}{\unittype}{\ltyperecvone{\role{s}}{\ml{st}}{\inttype}{\ltypesendone{\role{s}}{\ml{put}}{\inttype}{\ltypeend}}}$,
  even though $t_{\role{c}, 1}$ cannot send a $\ml{put}$ message until it receives a message from $\role{s}$.
  As a consequence, we can view $t_{\role{c}, 2}$ as a sound optimisation of $t_{\role{c}, 1}$.
  Our denotational semantics for $\lambdasmp{}$ provides a sound and complete technique for proving this bisimilarity; see \cref{ex:global-state-example-semantics} below.
  This bisimilarity does not rely in any way on the implementation of $\role{s}$.
\end{example}

\begin{restatable}{lemma}{bisimper}
  For every $\sty{S}$, the relation $\bisim_{\sty{S}}$ is transitive and symmetric.
  Moreover, if $s \bisim_{\sty{S}} s'$, and $\sty{S} \subtype \sty{T}$, then $s \bisim_{\sty{T}} s'$.
\end{restatable}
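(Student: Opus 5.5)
Symmetry is immediate, since each clause of \cref{def:bisim} is stated symmetrically in (a)/(b). If $R$ is a typed bisimulation, so is its pointwise converse $R^{-1}_{\sty{S}}$. Hence $\bisim^{-1}$ is contained in $\bisim$.

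For transitivity, I will first prove a \emph{weak} version of each clause. If $s \bisim_{\sty{S}} s'$ and $\reduceactionmany{s}{t}{\tauaction}$, then there is $t'$ with $\reduceactionmany{s'}{t'}{\tauaction}$ and $t \bisim_{\sty{S}} t'$. The same holds for a weak send step (when $\Sends{\pp}(\sty{S})$) and a weak receive step (when $\reduceexternal{\sty{S}}{\sty{T}}{\pp}{\ell}{\basetype}$), with the index updated accordingly. These follow by induction on the length of the reduction sequence, using clause (1) for the $\tauaction$-prefix and then clause (3) or (4) for the final step. For a send step, the intermediate states are still related at $\sty{S}$, so $\Sends{\pp}(\sty{S})$ is still available. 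With these in hand, the relational composite $(R ; R')_{\sty{S}} = \bisim_{\sty{S}} ; \bisim_{\sty{S}}$ is a typed bisimulation. Typing of the endpoints is inherited from the two factors. Each clause is matched first through the left factor and then, using the weak versions, through the right factor. For visible actions, the index $\sty{T}$ chosen on the left must be reused on the right. For receives this is automatic, because $\sty{T}$ is given. For sends, I will apply the weak clause of the second bisimulation at the same $\sty{T}$ produced by the first; this is legitimate because clause (3b) lets the matching state choose its own $\sty{T}$. If necessary I will strengthen the weak send clause to quantify over the already-chosen $\sty{T}$. By maximality, $\bisim;\bisim \subseteq \bisim$.

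For the upward closure under $\subtype$, I will define $R_{\sty{T}} = \bigcup_{\sty{S} \subtype \sty{T}} \bisim_{\sty{S}}$ and show that it is a typed bisimulation. The typing conditions hold because typed transition systems are closed under subtyping. Clause (1) is inherited directly. For clause (2), I will show that $\sty{S} \subtype \sty{T}$ and $\unfold{\sty{T}} = \ltypeend$ imply $\unfold{\sty{S}} = \ltypeend$. If $\unfold{\sty{S}}$ were an internal choice, condition (1) of \cref{def:subtyping} would require a $\reduceinternalsymbol$-step from $\sty{T}$, and none exists from $\ltypeend$. If it were an external choice, condition (2) would require $\Recvs{\pp}(\sty{T})$, which is false. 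Type variables cannot occur, since the types are closed.

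Clauses (3) and (4) require three auxiliary lemmas about subtyping and the relations of \cref{fig:inductive}, for $\sty{S} \subtype \sty{T}$:
\begin{enumerate}
\item[(i)] $\Sends{\pp}(\sty{T})$ implies $\Sends{\pp}(\sty{S})$;
\item[(ii)] if $\Sends{\pp}(\sty{T})$ and $\reduceinternal{\sty{S}}{\sty{S}'}{\pp}{\ell}{\basetype}$, then $\reduceinternal{\sty{T}}{\sty{T}'}{\pp}{\ell}{\basetype}$ for some $\sty{T}'$ with $\sty{S}' \subtype \sty{T}'$;
\item[(iii)] if $\reduceexternal{\sty{T}}{\sty{T}'}{\pp}{\ell}{\basetype}$, then $\reduceexternal{\sty{S}}{\sty{S}'}{\pp}{\ell}{\basetype}$ for some $\sty{S}'$ with $\sty{S}' \subtype \sty{T}'$.
\end{enumerate}
Given these, clause (3) for $R_{\sty{T}}$ follows from clause (3) of $\bisim_{\sty{S}}$ via (i) and (ii), since $t \bisim_{\sty{S}'} t'$ and $\sty{S}' \subtype \sty{T}'$ give $t \mathrel{R_{\sty{T}'}} t'$. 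Clause (4) follows in the same way via (iii).

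The main obstacle is (ii) and (iii). These lift conditions (1) and (4) of \cref{def:subtyping}, which only speak about the head of an unfolding, to the reordering relations $\reduceinternalsymbol$ and $\reduceexternalsymbol$. I will prove them by induction on the derivation of $\Sends{\pp}(\sty{T})$, respectively of $\reduceexternal{\sty{T}}{\sty{T}'}{\pp}{\ell}{\basetype}$. The base cases use conditions (1) and (4) directly. The $\oplus$- and $\&$-cases peel off the outer action on $\pq \neq \pp$: the matching step on $\sty{S}$ is obtained from conditions (1) and (4), and the induction hypothesis is applied to the continuations. The witness $\sty{S}' \subtype \sty{T}'$ is then produced coinductively. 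I will exhibit a relation containing $\subtype$ together with the pairs of ``prefixed'' types built in the induction, and check that it satisfies \cref{def:subtyping}. This check uses transitivity from \cref{admissible-subtyping} and its congruence rules, which show that $\ltypesendsymbol$/$\ltyperecvsymbol$-prefixes preserve subtyping. Lemma (i) is proved by the same induction, using condition (3) at the base. If the direct induction fails, I will instead go through the equivalence with \cite{GPPSY2023} (\cref{thm:gppsy-equivalence}), where these properties are closer to the definition.
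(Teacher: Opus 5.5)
Your symmetry argument and your upward-closure argument are fine and follow the paper, which obtains your (i)--(iii) directly as instances of \cref{subtype-lifting}. One caveat if you prove (ii) yourself: inducting on $\Sends{\pp}(\sty{T})$ does not give a direct base case, because $\sty{T}$ beginning with a send to $\pp$ says nothing about the head of $\sty{S}$. The paper instead inducts on the derivation of $\reduceinternal{\sty{S}}{\sty{S}'}{\pp}{\ell}{\basetype}$. There the base case is exactly condition (1) of \cref{def:subtyping}, and the extra hypothesis $\Sends{\pp}(\sty{T})$ is not needed. The real problem is the send clause in your transitivity argument.

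In clause (3) of \cref{def:bisim} the index is existential: the bisimulation supplies \emph{some} $\sty{T}$ with $\reduceinternal{\sty{S}}{\sty{T}}{\pp}{\ell}{\basetype}$, and you do not get to choose it. Take $s \bisim_{\sty{S}} s' \bisim_{\sty{S}} s''$ with $s$ sending to $\pp$. The first factor gives $t \bisim_{\sty{T}_1} t'$. Matching the final send of $s'$ in the second factor gives $t' \bisim_{\sty{T}_2} t''$ for some $\sty{T}_2$ that need not equal $\sty{T}_1$. A type typically has several such send-reducts, for example from different choices of $J$ in \RuleName{$\reduceinternalsymbol$-$\oplus$}. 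Your remark that clause (3b) ``lets the matching state choose its own $\sty{T}$'' does not let you force $\sty{T}_2 = \sty{T}_1$. The proposed strengthening to an already-chosen $\sty{T}$ also does not follow from the definition, since nothing lets you move $t' \bisim_{\sty{T}_2} t''$ to an unrelated index $\sty{T}_1$.

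The missing idea is to go up to a common bound. By \cref{reducemerge} there is $\sty{T}''$ with $\reduceinternal{\sty{S}}{\sty{T}''}{\pp}{\ell}{\basetype}$, $\sty{T}_1 \subtype \sty{T}''$ and $\sty{T}_2 \subtype \sty{T}''$. Upward closure of $\bisim$ under $\subtype$ then gives $t \bisim_{\sty{T}''} t' \bisim_{\sty{T}''} t''$, so the pair lies in the composite at an admissible index. Hence you must prove the subtyping-closure part first, since it does not depend on transitivity, and then use it together with this merging lemma inside the transitivity proof. The receive clause is fine as you say, because there the index is given.
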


\section{Computation trees}\label{sec:computation-trees}

A session type describes the protocol that a given participant $\pp$ must follow, in terms of the interactions it is meant to have with other participants in a system.
The purpose of this section is to give some semantic meaning to this.
We do this by introducing \emph{computation trees} as a notion of asynchronous message-passing computation, and discussing how they relate to session trees.
\begin{definition}
  \emph{Computation trees}, over a set $X$ of \emph{results}, are generated coinductively by the following grammar,
  where $x$ ranges over elements of $X$, $m$ ranges over messages, and $M$ ranges over sets of messages.
  \[
    t \Coloneqq
    \treereturn{x}
    ~|~
    \treesend{\role{p}}{m}{t}
    ~|~
    \treerecv{\role{p}}{t_m}{m \in M}
  \]
\end{definition}

Our first task is to show that these support asynchronous message-passing, without the use of queues.
To do this, we make them into a transition system labelled by local actions $\alpha$.
Reduction $\reduceaction{t}{u}{\alpha}$ is defined inductively by the following rules (there are no $\tauaction$ transitions).
The base cases \RuleName{Send} and \RuleName{Recv} are obvious, while the congruence rules make this notion of reduction an asynchronous one.
In particular, to receive a message from $\pp$, we do not need the root of the tree to be a $\treerecvkeyword_{\pp}$.
The congruence rules enable us to reduce occurences of $\treerecvkeyword_{\pp}$ appearing inside deeper in the tree, potentially discarding branches of other $\treerecvkeyword$s (which might not contain $\treerecvkeyword_{\pp}$).
\begin{gather*}
  \RuleName{Send}~
  \begin{prooftree}
    \infer0
    {\reduceaction{\treesend{\pp}{m}{t}}{t}{\sendaction{\pp}{m}}}
  \end{prooftree}
  \quad
  \RuleName{Recv}~
  \begin{prooftree}
    \hypo{m \in M}
    \infer1
    {\reduceaction{\treerecv{\pp}{t_{m'}}{m' \in M}}{t_m}{\recvaction{\pp}{m}}}
  \end{prooftree}
  \\
  \RuleName{SendSend}
  \begin{prooftree}
    \hypo{\pp \neq \pq}
    \hypo{\reduceaction{t}{u}{\sendaction{\pp}{m}}}
    \infer2
    {\reduceaction{\treesend{\pq}{m'}{t}}{\treesend{\pq}{m'}{u}}{\sendaction{\pp}{m}}}
  \end{prooftree}
  ~
  \RuleName{RecvSend}
  \begin{prooftree}
    \hypo{\reduceaction{t}{u}{\recvaction{\pp}{m}}}
    \infer1
    {\reduceaction{\treesend{\pq}{m'}{t}}{\treesend{\pq}{m'}{u}}{\recvaction{\pp}{m}}}
  \end{prooftree}
  \\
  \RuleName{RecvRecv}~
  \begin{prooftree}
    \hypo{\pp \neq \pq}
    \hypo{M' \subseteq M}
    \hypo{\reduceaction{t_{m'}}{u_{m'}}{\recvaction{\pp}{m}}~\text{for all}~m' \in M'}
    \infer3
    {\reduceaction{\treerecv{\pq}{t_{m'}}{m' \in M}}{\treerecv{\pq}{u_{m'}}{m' \in M'}}{\recvaction{\pp}{m}}}
  \end{prooftree}
\end{gather*}

To make this into a typed transition system, we define $\Result(\treereturn{x}) = x$, with $\Result(t)$ undefined otherwise.
The typing relation for computation trees is defined coinductively, in a similar manner to our definition of subtyping.

\begin{definition}\label{def:computation-tree-typing}
We define a typing relation $\hastype{t}{\sty{T}}$ between computation trees $t$ and closed session types $\sty{T}$ coinductively, as the largest relation such that the following hold when $\hastype{t}{\sty{T}}$.
  \begin{enumerate}
    \item If $t = \treesend{\role{p}}{\msg{\ell}{v}}{u}$, with $v : \basetype$, then there is some $\sty{U}$ such that $\reduceinternal{\sty{T}}{\sty{U}}{\pp}{\ell}{\basetype}$ and $\hastype{u}{\sty{U}}$.
    \item If $t = \treerecv{\role{p}}{t_m}{m \in M}$, then $\Recvs{\pp}(\sty{T})$.
    \item If $\unfold{\sty{T}} = \ltypesend{\pp}{\ell_i}{\basetype_i}{\sty{T}_i}{i \in I}$, then there exist $m, u$ such that $\reduceaction{t}{u}{\sendaction{\pp}{m}}$.
    \item If $\unfold{\sty{T}} = \ltyperecv{\pp}{\ell_i}{\basetype_i}{\sty{T}_i}{i \in I}$, then there is some natural number $h$ such that, for every $i \in I$ and $v : \basetype_i$, there is some $\hastype{u}{\sty{T}_i}$ such that $\reduceaction{t}{u}{\recvaction{\pp}{\msg{\ell_i}{v}}}$, where the derivation of the latter has height at most $h$.\footnotemark{}
  \end{enumerate}
\end{definition}
\footnotetext{%
  We need such an $h$ to exist for the same reason that we do not want $\sty{U} \subtype \sty{U}'$ in \cref{ex:weird-example}: if there were no such $h$, then that would mean there is an infinite reduction sequence beginning with $t$, that never involves receiving a message from $\pp$; this would lead to a failure of liveness.
}
(1) and (2) permit $t$ to be a $\treesendkeyword$ or a $\treerecvkeyword$ only when this is permitted by the session type $\sty{T}$.
(3) and (4) require $t$ to send or receive a message, when the session type $\sty{T}$ says it must do so.
We can construct $\hastype{}{}$ concretely as a union of relations; see \cref{union-pre-typing}.

The definition of $\hastype{}{}$ respects subtyping, in the sense that $\hastype{t}{\sty{T}}$ implies $\hastype{t}{\sty{U}}$ when $\sty{T} \subtype \sty{U}$.
Moreover, $t \bisim_{\sty{T}} t$ holds whenever $\hastype{t}{\sty{T}}$, where $\bisim_{\sty{T}}$ is typed bisimilarity between computation trees.
Typed bisimilarity for computation trees is non-trivial, in that $t \bisim_{\sty{T}} t'$ does not generally imply $t = t'$.
One might therefore expect us to need to consider equivalence classes of computation trees instead of just computation trees.
However, we can do better than this: up to typed bisimilarity, every typed computation tree has a normal form.
It is these normal forms we will use in our model; this avoids the need for any quotient.
The normal forms of type $\sty{T}$, with result set $X$, are the elements of the set $\treegmonad(X)_{\sty{T}}$ defined coinductively as follows.
(Concretely, we can construct $\treegmonad(X)$ as a union; see \cref{union-normal-forms} for details.)
\begin{definition}
  We write $\treegmonad(X)$ for the largest family of sets, indexed by closed session types $\sty{T}$, such that $t \in \treegmonad(X)_{\sty{T}}$ implies the following.
  \begin{enumerate}
    \item If $\unfold{\sty{T}} = \ltypeend$, then $t = \treereturn{x}$ for some $x \in X$.
    \item If $\unfold{\sty{T}} = \ltypesend{\pp}{\ell_i}{\basetype_i}{\sty{T}_i}{i \in I}$, then $t = \treesend{\pp}{\msg{\ell_i}{v}}{t'}$ for some $i \in I$, some $v : \basetype_i$, and some $t' \in \treegmonad(X)_{\sty{T}_i}$.
    \item If $\unfold{\sty{T}} = \ltyperecv{\pp}{\ell_i}{\basetype_i}{\sty{T}_i}{i \in I}$, then $t = \treerecv{\pp}{t_m}{m \in M}$ for some family $(t_m)_{m \in M}$, where $M = \{\msg{\ell_i}{v} \mid i \in I \land v : \basetype_i\}$, and $t_{\msg{\ell_i}{v}} \in \treegmonad(X)_{\sty{T}_i}$ for all $(\ell_i, v) \in M$.
  \end{enumerate}
\end{definition}

\begin{restatable}[Normalization of typed computation trees]{lemma}{bisimnormal}\label{bisim-normal}
  Let $\sty{T}$ be a closed session type.
  We have $\hastype{u}{\sty{T}}$ for every $u \in \treegmonad(X)_{\sty{T}}$.
  If $t$ is a computation tree such that $\hastype{t}{\sty{T}}$, then there is a unique $u \in \treegmonad(X)_{\sty{T}}$ such that $t \bisim_{\sty{T}} u$.
\end{restatable}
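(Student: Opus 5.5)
For the first claim, that every $u \in \treegmonad(X)_{\sty{T}}$ satisfies $\hastype{u}{\sty{T}}$, I will use coinduction. I show that the relation $\{(u,\sty{T}) \mid u \in \treegmonad(X)_{\sty{T}}\}$ satisfies the four closure conditions of \cref{def:computation-tree-typing}; since $\hastype{}{}$ is the largest such relation, the claim follows. I case split on $\unfold{\sty{T}}$.

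If $\unfold{\sty{T}} = \ltypeend$, then $u = \treereturn{x}$ and all four conditions are vacuous. If $\unfold{\sty{T}}$ is an internal choice on $\pp$, then $u = \treesend{\pp}{\msg{\ell_i}{v}}{u'}$. Condition (1) holds because $\reduceinternal{\sty{T}}{\sty{T}_i}{\pp}{\ell_i}{\basetype_i}$, using \RuleName{$\reduceinternalsymbol$-base} together with \RuleName{$\reduceinternalsymbol$-rec} to pass through the unfolding. Condition (3) holds by the tree rule \RuleName{Send}. The external-choice case is dual: (2) follows from \RuleName{Recvs-base}/\RuleName{Recvs-rec}, and (4) holds with $h = 1$ via the tree rule \RuleName{Recv}. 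Here I rely on the auxiliary fact that $\reduceinternalsymbol$, $\reduceexternalsymbol$, $\Sends{}$ and $\Recvs{}$ are invariant under unfolding, which follows immediately from the rec rules.

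For existence in the second claim, I construct the normal form $N(t,\sty{T})$ corecursively from any $t$ with $\hastype{t}{\sty{T}}$, again casing on $\unfold{\sty{T}}$.
\begin{itemize}
\item If it is $\ltypeend$, then $t$ cannot be a send or receive: condition (1) would require $\reduceinternalsymbol$ out of $\ltypeend$, and condition (2) would require $\Recvs{}(\ltypeend)$, and neither derivation exists. So $t = \treereturn{x}$, and I take $N = \treereturn{x}$.
\item If it is an internal choice on $\pp$, condition (3) gives a reduction $\reduceaction{t}{t'}{\sendaction{\pp}{\msg{\ell}{v}}}$. I need a lemma: if $\hastype{t}{\sty{T}}$ and $t \xrightarrow{\sendaction{\pp}{\msg{\ell}{v}}} t'$, then there is $\sty{T}'$ with $\reduceinternal{\sty{T}}{\sty{T}'}{\pp}{\ell}{\basetype}$ and $\hastype{t'}{\sty{T}'}$ (a typed subject reduction). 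When $\unfold{\sty{T}}$ starts with $\oplus_\pp$, $\sty{T}'$ is a subtype of some branch $\sty{T}_i$. I then set $N = \treesend{\pp}{\msg{\ell_i}{v}}{N(t',\sty{T}_i)}$.
\item If it is an external choice, condition (4) gives reducts $t_m$ typed at $\sty{T}_i$, and I take $\treerecv{\pp}{N(t_m,\sty{T}_i)}{m\in M}$.
\end{itemize}
To show $t \bisim_{\sty{T}} N(t,\sty{T})$, I prove that the family of pairs $\{(t, N(t,\sty{T}))\}$, closed under the bisimilarity steps and suitably enlarged, is a typed bisimulation. For transitions of $t$ other than the chosen one, I must show that the normal form can match them. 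This requires a \emph{confluence/diamond lemma} for tree reductions: asynchronous reductions for distinct participants commute, and a send to $\pp$ from a tree of type $\sty{T}$ with $\Sends{\pp}(\sty{T})$ is determined up to bisimilarity.

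For uniqueness, I show that if $u, u' \in \treegmonad(X)_{\sty{T}}$ and $u \bisim_{\sty{T}} u'$, then $u = u'$. This is coinductive: the relation $\{(u,u') \mid u\bisim_{\sty{T}}u'\}$ is contained in equality, via the coinductive characterisation of equality of coinductive trees as a bisimulation. By cases on $\unfold{\sty{T}}$:
\begin{itemize}
\item For $\ltypeend$, the results agree by clause (2) of \cref{def:bisim}.
\item For an internal choice on $\pp$, $\Sends{\pp}(\sty{T})$ holds, so the root send of $u$ must be matched by $u'$. Since $u'$ has no $\tau$ steps and its only send to $\pp$ is at its root, the messages coincide and the continuations are bisimilar at the same branch type.
\item For an external choice, every $m \in M$ is required by $\reduceexternalsymbol$, so the continuations are pairwise bisimilar. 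Here I use that $\reduceexternalsymbol$ out of a $\&_\pp$ type yields exactly the branch types, up to subtyping, together with the subtyping-closure part of \cref{bisimper}.
\end{itemize}
Combining this with transitivity and symmetry of $\bisim_{\sty{T}}$ gives uniqueness.

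The main obstacle will be the existence part: proving that the corecursively chosen normal form is bisimilar to $t$. The problem is that $t$ may perform many different asynchronous reductions, reordering sends to other participants and pruning receive branches via \RuleName{RecvRecv}, and each of these must be matched by $N(t,\sty{T})$. I would first prove the commutation (diamond) lemma for tree reductions across distinct participants, together with the typed subject reduction for trees. Then I would take as candidate bisimulation the relation $t \mathrel{R_{\sty{S}}} u$ iff $\hastype{t}{\sty{S}}$, $u \in \treegmonad(X)_{\sty{S}}$, and $u = N(t,\sty{S})$ up to a uniqueness-free equivalence, and verify each clause using the diamond lemma. The bounded height $h$ in (4) is what guarantees that receives can be matched finitely.
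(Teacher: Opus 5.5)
Your first part and your uniqueness argument are the paper's (\cref{treegmonad-hastype} and \cref{bisim-unique}), and your $N(t,\sty{T})$ is the paper's $\nftree{\sty{T}}{t}$. The gap is in the existence step, which you leave as ``suitably enlarged''. There you attach ``determined up to bisimilarity'' to the wrong action.

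Sends are determined exactly. A send to $\pp$ can only pass through sends to other participants (\RuleName{SendSend}), so $t$ has at most one $\sendaction{\pp}{-}$-reduct (\cref{computation-tree-typing-lifting}).

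Receives are not determined. \RuleName{RecvRecv} lets a receive from $\pp$ keep an arbitrary subset $M' \subseteq M$ of the branches of an enclosing receive from $\pq \neq \pp$. So there can be several distinct $t_m$ with $\reduceaction{t}{t_m}{\recvaction{\pp}{m}}$ and $\hastype{t_m}{\sty{T}_i}$, and your $N(t,\sty{T})$ fixes one of them. Suppose $t$ instead steps to another such $t'$, as in clause (4)(a) of \cref{def:bisim}. Then $N(t,\sty{T})$ can only step to $N(t_m,\sty{T}_i)$. To put $(t', N(t_m,\sty{T}_i))$ into your relation, or to see that $N(t',\sty{T}_i)=N(t_m,\sty{T}_i)$, you must know that $t'$ and $t_m$ are interchangeable at $\sty{T}_i$. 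A diamond lemma does not give this, because both reducts arise from the \emph{same} action rather than from two commuting ones.

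The missing ingredient is \cref{reduction-unique-bisim}: if $t_1 \bisim_{\sty{T}} t_2$ and both step by the same $\recvaction{\pp}{\msg{\ell}{v}}$ to $u_1, u_2$ typed at a common $\sty{U}$ with $\reduceexternal{\sty{T}}{\sty{U}}{\pp}{\ell}{\basetype}$, then $u_1 \bisim_{\sty{U}} u_2$. The paper proves this by a separate coinduction: it closes $\bisim$ under common typed receive steps and checks the clauses using \cref{computation-tree-typing-lifting} and \cref{reducemerge}. This lemma is what lets $\nftree{\sty{T}}{t}$ be defined on $\bisim_{\sty{T}}$-classes, independently of the choice of reducts.

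Separately, clauses (3) and (4) of \cref{def:bisim} range over every $\pp$ with $\Sends{\pp}(\sty{T})$ and every receive step $\reduceexternal{\sty{T}}{\sty{U}}{\pp}{\ell}{\basetype}$. They do not concern only the outermost choice of $\unfold{\sty{T}}$; for instance $\unfold{\sty{T}}$ may be an internal choice on $\pq \neq \pp$. So you need the analogues of \cref{reduce-send-nf} and \cref{reduce-recv-nf}. These are proved by induction on the derivation of $\Sends{\pp}(\sty{T})$, respectively of the receive step. They show that $t$ and its normal form perform the same action and again land on a tree and its normal form at a common type.

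Your commutation lemma is the right tool for this. However, it must include swapping a receive past a whole family of receives from another participant, indexed by a possibly infinite $M$ (for example, integer payloads); this is \cref{reduction-swapping}(4). That family-indexed induction is where the uniform height bound $h$ of condition (4) is actually needed.
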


\subsection{Returning, sequencing, and subtyping}\label{sec:model-sequencing}
As noticed by Katsumata~\cite{katsumata2014parametric}, the appropriate structure for interpreting graded computational effects is a \emph{graded monad}~\cite{borceux2005internal,smirnov2008graded,mellies2012parametric,katsumata2014parametric}.
Specifically, when we give our denotational semantics in \cref{sec:semantics} below, we specify an interpretation of each of our typing rules; a graded monad provides the structure needed to interpret the typing rules \RuleName{$\subtype$}, \RuleName{Ret}, and \RuleName{Let}.

To interpret \lambdasmp{}, we therefore make normal forms of computation trees into a graded monad $\treegmonad$.
This is the appropriate graded monad for interpreting asynchronous message-passing viewed as a computational effect.
To do this, we provide three classes of functions involving computation trees.
\begin{itemize}
    \item \RuleName{$\subtype$}: To interpret \emph{subtyping}, we need a function $\treegmonad(X)_{\sty{T} \subtype \sty{U}} \colon \treegmonad(X)_{\sty{T}} \to \treegmonad(X)_{\sty{U}}$ for each set $X$ and pair of closed session types such that $\sty{T} \subtype \sty{U}$.
      We define these functions by $\treegmonad(X)_{\sty{T} \effleq \sty{U}}(t) = u$, where $u$ is the unique $u \in \treegmonad(X)_{\sty{U}}$ such that $t \bisim_{\sty{U}} u$. (This exists by \cref{bisim-normal} and the fact that $\hastype{}{}$ respects subtyping.)
    \item \RuleName{Ret}: To interpret \emph{returning a value}, we need a \emph{unit} function $\return_X \colon X \to \treegmonad(X)_{\effunit}$ for each $X$. We define $\return_X(x) = \treereturn{x}$.
    \item \RuleName{Let}: To interpret \emph{sequencing}, we need a \emph{bind} function $(\gbind{\sty{T}}{\sty{T'}}) \colon \treegmonad(X)_{\sty{T}} \times (X\to \treegmonad(Y)_{\sty{T}'}) \to \treegmonad(Y)_{\sty{T} \effmul \sty{T'}}$ for each $X, Y, \sty{T}, \sty{T}'$.
      We define $t \gbind{\sty{T}}{\sty{T'}} f$ coinductively by inspecting $t$, using the following clauses.
      \[
        (\treereturn{x}) \bind f = f\,x
        \quad
        \begin{array}{rcl}
          (\treesend{\pp}{m}{t}) \bind f &=& \treesend{\pp}{m}{t \bind f}
          \\
          (\treerecv{\pp}{t_m}{m \in M}) \bind f &=& \treerecv{\pp}{t_m \bind f}{m \in M}
        \end{array}
      \]
\end{itemize}

\section{Denotational semantics of $\lambdasmp$}\label{sec:semantics}
We now demonstrate that computation trees, and specifically the graded monad $\treegmonad$ defined in the previous section, form the basis of a model of $\lambdasmp{}$.
The aim is to interpret each $\lambdasmp$ configuration as an equivalent computation tree, where the notion of equivalence is our typed bisimulation.
This equivalence provides our \emph{adequacy} result (\cref{adequacy} below).

We first explain how to interpret $\lambdasmp{}$ values.
We define the interpretation of a ground type $\basetype$ to be the set $\sem{\basetype} = \{v \mid v : \basetype\}$ of constants of that type.
If $\Gamma$ is a typing context, then a \emph{variable environment} $\gamma$ is a function that assigns, to every $(x : \basetype) \in \Gamma$, a constant $\gamma(x) \in \sem{\basetype}$; we write $\sem{\Gamma}$ for the set of such functions.
We interpret each typed value $\valuetypedg{\Gamma}{v}{\basetype}$ as a function $\sem{v} \colon \sem{\Gamma} \to \sem{\basetype}$, as in \cref{fig:denotational-semantics}.

\begin{figure}[t!]
\begin{gather*}
  \text{\fbox{$\sem{v} \colon \sem{\Gamma} \to \sem{\basetype}$ where $\valuetypedg{\Gamma}{v}{\basetype}$}}
  \quad
  \sem{x}(\gamma) = \gamma(x)
  \quad
  \sem{v}(\gamma) = v~\text{if}~v~\text{is a constant}
\end{gather*}
\vspace{-3ex}
\begin{gather*}
  \shortintertext{\fbox{$\sem{t}_\theta \colon \sem{\Phi}_{\theta} \times \sem{\Gamma} \to \treegmonad(\sem{\basetype})_{\subst{\sty{T}}{\theta}}$ where $\comptypedg{\Theta}{\Phi}{\Gamma}{t}{\basetype}{\sty{T}}$}}
\TRULE{$\effleq$}    
    \frac
      {
        t' = \sem{t}_{\theta}(\phi, \gamma)
        \quad
        \sty{T} \subtype_{\Theta} \sty{U}
      }
      {\sem{t}_{\theta}(\phi, \gamma) = \treegmonad(\sem{\basetype})_{\subst{\sty{T}}{\theta} \subtype \subst{\sty{U}}{\theta}}(t')}
    \quad
\TRULE{Ret}        
    \frac
      {v' = \sem{v}(\gamma)}
      {\sem{\returnterm{v}}_\theta(\phi, \gamma) = \treereturn{v'}}
  \\[1ex]
\TRULE{Let}        
    \frac
      {
        t' = \sem{t}_\theta(\phi, \gamma)
        \quad
        f = \lambda x'.\,\sem{u}_\theta(\phi, (\gamma, x \mapsto x'))
      }
      {\sem{\bindterm{x}{t}{u}}_\theta(\phi, \gamma) = (t' \bind f)}
~
\TRULE{$+$}        
    \frac
      {
        v' = \sem{v}(\gamma)
        \quad
        w' = \sem{w}(\gamma)
      }
      {\sem{\addterm{v}{w}}_\theta(\phi, \gamma) = \treereturn{v' {+} w'}}
\\
\TRULE{$<$}            
    \frac
      {
        v' = \sem{v}(\gamma)
        \quad
        w' = \sem{w}(\gamma)
      }
      {\sem{\lessterm{v}{w}}_\theta(\phi, \gamma) = \treereturn{\text{if}~v' {<} w'~\text{then}\,\mathrm{true}\,\text{else}\,\mathrm{false}}}
  \\[1ex]
\TRULE{If}                
    \frac
      {
        v' = \sem{v}(\gamma)
        \quad
        t_1' = \sem{t_1}_\theta(\phi, \gamma)
        \quad
        t_2' = \sem{t_2}_\theta(\phi, \gamma)
      }
      {\sem{\ifterm{v}{t_1}{t_2}}_\theta(\phi, \gamma) = (\text{if}~v'~\text{then}~t'_1~\text{else}~t'_2)}
  \\[1ex]
\TRULE{Send}                        
    \frac
      {
        v' = \sem{v}(\gamma)
        \quad
        t' = \sem{t}_\theta(\phi, \gamma)
      }
      {\sem{\sendterm{\ell}{v}{\pp}{t}}_\theta(\phi, \gamma) = \treesend{\pp}{\msg{\ell}{v'}}{t'}}
  \\[1ex]
\TRULE{Recv}                        
    \frac
      {
        f_i = \lambda x'_i.\,\sem{t_i}_\theta(\phi, (\gamma, x_i \mapsto x'_i))
        ~\text{for all}~i \in I
      }
      {\sem{\recvterm{\pp}{\ell_i}{x_i}{t_i}{i \in I}}_\theta(\phi, \gamma) = \treerecv{\pp}{f_i(v)}{\msg{\ell_i}{v} \in \{\msg{\ell_i}{v} \mid i \in I \land v : \basetype_i\}}}
  \\[1ex]
\TRULE{LetRec}
    \frac
      {
        \begin{array}{@{}l@{}}
        f' = \lambda(x'_1, \dots, x'_n).\,\sem{t}_{\theta, \ltypevar{X} \mapsto (\ltyperec{\ltypevar{X}}{\sty{T}})}((\phi, f \mapsto f'), (\gamma, x_1 \mapsto x'_1, \dots, x_n \mapsto x'_n))
        \\
        g' = \lambda f''.\,\sem{u}_{\theta}((\phi, f \mapsto f''), \gamma)
        \end{array}
      }
      {\sem{\letrecterm{f}{(x_1, \dots, x_n)}{t}{u}}_\theta(\phi, \gamma) = g'(f')}
  \\[1ex]
  \TRULE{App}
    \frac
      {
        v'_1 = \sem{v_1}(\gamma)
        ~\cdots~
        v'_n = \sem{v_n}(\gamma)
      }
      {\sem{f(v_1, \dots, v_n)}_{\theta}(\phi, \gamma) = \phi(f)(v'_1, \dots, v'_n)}
\end{gather*}
\vspace{-3ex}
\begin{gather*}
  \shortintertext{\fbox{$\sem{\config{\queue{\rho}}{t}{\queue{\sigma}}} \in \treegmonad(\basetype)_{\sty{T}}$ where $\conftypedg{\config{\queue{\rho}}{t}{\queue{\sigma}}}{\basetype}{\sty{T}}$}}
  \begin{array}{l}
    \TRULE{CBase}
    \\
    \displaystyle\frac
      {t' = \sem{t}_\emptycontext(\emptycontext, \emptycontext)}
      {\sem{\config{\queueempty}{t}{\queueempty}} = t'}
  \end{array}
    \quad
  \begin{array}{l}
    \TRULE{CSend}
    \\
    \displaystyle\frac
      {
        \begin{array}{l}
        \reduceinternal{\sty{U}}{\sty{T}}{\pp}{\ell}{\basetype'}
        \\[-1ex]
        \reduceaction{\treegmonad(\basetype)_{\sty{U}} \ni u}{\sem{\config{\queue{\rho}}{t}{\queue{\sigma}}}}{\sendaction{\pp}{\msg{\ell}{v}}}
        \end{array}
      }
      {\sem{\config{\queue{\rho}}{t}{\consback{\queue{\sigma}}{\pp}{\msg{\ell}{v}}}} = u}
  \end{array}
    \quad
  \begin{array}{l}
    \TRULE{CRecv}
    \\
    \displaystyle\frac
      {
        \begin{array}{l}
        \reduceexternal{\sty{T}}{\sty{U}}{\pp}{\ell}{\basetype'}
        \\[-1ex]
        \reduceaction{\sem{\config{\queue{\rho}}{t}{\queue{\sigma}}}}{u \in \treegmonad(\basetype)_{\sty{U}}}{\recvaction{\pp}{\msg{\ell}{v}}}
        \end{array}
      }
      {\sem{\config{\consfront{\pp}{\msg{\ell}{v}}{\queue{\rho}}}{t}{\queue{\sigma}}} = u}
  \end{array}
\end{gather*}
\caption{Interpretation of $\lambdasmp$ values, computations and configurations \label{fig:denotational-semantics}}
\end{figure}

For computations, we interpret the judgement $\comptypedg{\Theta}{\Phi}{\Gamma}{t}{\basetype}{\sty{T}}$ as follows.
A \emph{session type environment} $\theta$, for the set $\Theta$ of type variables, is a function from $\Theta$ to closed session types.
Given such a $\theta$, if $\sty{T}$ is a session type over $\Theta$, then we write $\subst{\sty{T}}{\theta}$ for the session type that results from substituting the free type variables according to $\theta$.
For a function context $\Phi$ over $\Theta$, a \emph{function environment} $\phi$ is a function that assigns, to every $(f : (\basetype_1, \dots, \basetype_n) \gto{\sty{T}} \basetype') \in \Phi$, a function $\phi(f) \colon \sem{\basetype_1} \times \cdots \times \sem{\basetype_n} \to \treegmonad(\sem{\basetype'})_{\subst{\sty{T}}{\theta}}$; we write $\sem{\Phi}_\theta$ for the set of such function environments.
For computations, we interpret each derivation of $\comptypedg{\Theta}{\Phi}{\Gamma}{t}{\basetype}{\sty{T}}$ as a function $\sem{t}_\theta \colon \sem{\Phi}_\theta \times \sem{\Gamma} \to \treegmonad(\sem{\basetype})_{\subst{\sty{T}}{\theta}}$.
This is defined in \cref{fig:denotational-semantics}.
The definition is by induction on the \emph{derivation}, not on the computatation $t$, and hence a priori the computation $t$ will have several interpretations -- one for each derivation.
It turns out this is not the case; we show below that any two derivations of $\comptypedg{\Theta}{\Phi}{\Gamma}{t}{\basetype}{\sty{T}}$ necessarily have the same interpretation.
The interpretations of subtyping, $\returntermkeyword$ and sequencing use the graded monad structure.
The interpretations of $\sendtermkeyword$ and $\recvtermkeyword$ are the obvious computation trees.

Finally, for configurations, we interpret every derivation of $\conftypedg{\config{\queue{\rho}}{t}{\queue{\sigma}}}{\basetype}{\sty{T}}$ as a computation tree $\sem{\config{\queue{\rho}}{t}{\queue{\sigma}}} \in \treegmonad(\sem{\basetype})_{\sty{T}}$.
Rule \TRULE{CBase} uses the interpretation of computations.
Rule \TRULE{CSend} interprets a configuration with a message to send as a computation tree $u$ that reduces by sending such a message; since we consider normal forms of computation trees, there is a unique such $u$.
Rule \TRULE{CRecv} interprets a configuration that has received a message by reducing $\sem{\config{\queue{\rho}}{t}{\queue{\sigma}}}$; again, since we are dealing with normal forms, there is a unique such reduct.

Our main result about our denotational semantics is the following theorem, which
establishes that computation trees correctly interpret configurations.
It is this result that we appeal to when using our semantics to reason about $\lambdasmp{}$.

\begin{theorem}[Correctness of the denotational semantics]\label{semantics-correct}
  We have $\config{\queue{\rho}}{t}{\queue{\sigma}} \bisim_{\sty{T}} \sem{\config{\queue{\rho}}{t}{\queue{\sigma}}}$
  for every derivation of $\conftypedg{\config{\queue{\rho}}{t}{\queue{\sigma}}}{\basetype}{\sty{T}}$.
\end{theorem}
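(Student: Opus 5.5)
The plan is to exhibit a typed bisimulation containing every pair $(\config{\queue{\rho}}{t}{\queue{\sigma}}, \sem{\config{\queue{\rho}}{t}{\queue{\sigma}}})$. Since transitivity of $\bisim_{\sty{S}}$ is available, I would work up to bisimilarity on the tree side. Define $R_{\sty{S}}$ to relate a configuration $\mathcal{C}$ to a tree $u$ whenever there is a derivation of $\conftypedg{\mathcal{C}}{\basetype}{\sty{S}}$ with $u = \sem{\mathcal{C}}$.

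Three auxiliary facts come first.
\begin{enumerate}
\item \emph{Coherence.} The interpretation does not depend on the chosen derivation. I would prove this by showing that every derivation of a computation judgement can be put into a canonical form, with subtyping only at the root. This uses that $\treegmonad(X)_{\sty{T}\subtype\sty{U}}$ is functorial and commutes with bind, $\treesendkeyword$ and $\treerecvkeyword$, which follows from uniqueness of normal forms in \cref{bisim-normal}.
\item \emph{Substitution.} There is a substitution lemma for values, and an unfolding lemma for $\syntaxkeyword{letrec}$: the interpretation of $\letrecterm{f}{\vec x}{t}{\rctx{R}[f(\vec v)]}$ equals that of the term with the call replaced by the instantiated body. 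This holds because $\sem{\cdot}$ of $\syntaxkeyword{letrec}$ is a fixed point, and the fixed point exists uniquely by guardedness, as the coinductively defined $f'$.
\item \emph{Refined subject reduction.} Using these two facts I would strengthen \cref{subject-reduction} so that it also tracks denotations. If $\mathcal{C}\xrightarrow{\tauaction}\mathcal{C}'$, then $\sem{\mathcal{C}} = \sem{\mathcal{C}'}$. If $\mathcal{C}\xrightarrow{\sendaction{\pp}{m}}\mathcal{C}'$, then $\sem{\mathcal{C}}\xrightarrow{\sendaction{\pp}{m}}\sem{\mathcal{C}'}$ at the type $\sty{U}$ given by \cref{subject-reduction}. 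The analogous statement holds for receives.
\end{enumerate}
The proof of the third fact goes by cases on the configuration rules and, for \RuleName{CInt}, on computation reduction in a context. For \RuleName{CSend} and \RuleName{CRecv} the claim is essentially the defining equations of the configuration semantics.

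With these in hand, the clauses of \cref{def:bisim} in direction (a), where the configuration moves first, follow immediately. A $\tauaction$-move is matched by the empty sequence on the tree. Send and receive moves are matched by the corresponding tree reductions; the target types agree because of the way $\reduceinternalsymbol$ and $\reduceexternalsymbol$ are used in the \RuleName{CSend} and \RuleName{CRecv} rules. For results, an unfolding to $\ltypeend$ forces the normal form to be $\treereturn{v}$, and the only reachable result is that same $v$. Direction (b), where the tree moves first, is the hard part. Whenever $\unfold{\sty{S}}$ is $\ltypeend$, or $\Sends{\pp}(\sty{S})$ holds, or $\reduceexternal{\sty{S}}{\sty{T}}{\pp}{\ell}{\basetype}$ holds, I must show that the configuration can reach the corresponding action after finitely many $\tauaction$-steps.

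For this I would prove that $\tauaction$-reduction of configurations is strongly normalising. Guardedness means every \RuleName{Apply} step exposes a $\sendtermkeyword$ or $\recvtermkeyword$ before the next call, so a lexicographic measure on the term outside guards should decrease. I would then classify the $\tauaction$-normal configurations of a given type. Such a configuration is either:
\begin{itemize}
\item a returned value with empty queues;
\item one with a nonempty send queue, which can perform \RuleName{CSend};
\item one whose computation is $\rctx{R}[\sendterm{\ell}{v}{\pp}{t}]$;
\item one whose computation is $\rctx{R}[\recvterm{\pp}{\ell_i}{x_i}{t_i}{i}]$, which can always accept via \RuleName{CRecv} any message its type admits.
\end{itemize}
I would then match each tree action against this classification. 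A tree send $\xrightarrow{\sendaction{\pp}{m}}$ under $\Sends{\pp}(\sty{S})$ is matched by first normalising, then sending whatever is at the head of $\pp$'s queue. This must coincide with $m$, since normal forms of type $\sty{S}$ with $\Sends{\pp}$ determine their first message to $\pp$ uniquely. Receives are matched directly by \RuleName{CRecv}, which is always enabled for messages the type admits. The tree's receive reduct is again the denotation of the extended configuration, by the \RuleName{CRecv} clause.

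The main obstacle I expect is the send case of direction (b) when the message to $\pp$ lies deep inside the tree, beneath sends to other participants (rule \RuleName{SendSend}) or beneath receives (rule \RuleName{RecvSend}). The configuration may then have to perform sends to other participants before it can send to $\pp$. I would handle this by induction on the derivation of $\Sends{\pp}(\sty{S})$. At each step I would use that \RuleName{CProd} buffers intermediate sends into $\queue{\sigma}$ via $\tauaction$-steps, so that the message to $\pp$ eventually surfaces at the back of $\pp$'s queue without any observable action. If this becomes too delicate, the fallback is to relate configurations to trees only up to $\bisim$, and use normalisation (\cref{bisim-normal}) to absorb the reordering.
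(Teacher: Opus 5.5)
Your overall architecture matches the paper's. You relate each typed configuration to its denotation, you get direction (a) from a denotation-tracking subject reduction (the paper's \cref{reduce-semantic-configuration}), and you put the work into direction (b). But the step you build direction (b) on is false: $\tauaction$-reduction of configurations is neither strongly nor weakly normalising. By \RuleName{CProd}, every send performed by the computation is a $\tauaction$-step of the configuration, so a guarded recursive function that only sends runs silently forever. For instance, $t = \letrecterm{f}{(x)}{\sendterm{\ell}{x}{\pq}{f(x)}}{\sendterm{\ell'}{\unitterm}{\pp}{f(\unitterm)}}$ is well typed at $\ltypesendone{\pp}{\ell'}{\unittype}{\ltyperec{\ltypevar{X}}{\ltypesendone{\pq}{\ell}{\unittype}{\ltypevar{X}}}}$. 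Every configuration reachable from $\config{\queueempty}{t}{\queueempty}$ has a further $\tauaction$-step (alternately \RuleName{CProd} and \RuleName{Apply}), so no $\tauaction$-normal form is ever reached. Your measure on ``the term outside guards'' does not decrease, because \RuleName{CProd} consumes the guard and exposes the call $f(x)$. For the same reason your classification is inconsistent: a configuration whose computation is a send in a reduction context is never $\tauaction$-normal. So ``first normalise, then send the head of $\pp$'s queue'' cannot be carried out, even though the theorem does hold for this example (after one step the message to $\pp$ is already buffered).

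What is missing is a type-directed progress property for well-typed computations. At a type whose unfolding is $\ltypeend$, the configuration must reach a result. When $\Sends{\pp}$ holds of its type, it must reach the required send to $\pp$ after finitely many internal steps, sends to other participants and consumptions from the receive queue, without ever blocking on a receive. Your fallback, induction on $\Sends{\pp}(\sty{S})$, does not supply this, since $\sty{S}$ is the configuration's type and does not change along $\tauaction$-steps. You would have to do four things:
\begin{itemize}
\item follow the computation's own evolving type;
\item show that $\Sends{\pp}$ persists, with decreasing derivation height, across sends to $\pq \neq \pp$;
\item prove that purely internal reduction terminates (this is where guardedness really enters);
\item exclude blocking on receives.
\end{itemize}
The paper packages this into the predicate ``progresses'', proved by induction on typing derivations in \cref{computation-progress}. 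There the \RuleName{Let} case goes by induction on $\Sends{\pp}(\sty{T}\cdot\sty{T}')$ and the \RuleName{LetRec} case uses guardedness. It pairs this with \cref{semantic-configuration-receives-immediately}: a computation blocked on a receive from $\pp$, at a type that no longer permits waiting, finds a $\pp$-message in its receive queue. \Cref{progress-happens} combines the two, and your $\ltypeend$ case also needs that last lemma.

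Separately, tree receive reducts are not unique. \RuleName{RecvRecv} may prune branches of an enclosing receive on another participant, so a typed reduct of the denotation need not equal the denotation of the extended configuration. Your relation, defined by equality, is therefore not closed under clause (4)(b). You must close it under $\bisim$ via \cref{reduction-unique-bisim}, as the paper does.
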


A corollary is that the denotational semantics is sound and complete with respect to typed bisimilarity; this is \emph{adequacy} of the denotational semantics.
\begin{corollary}[Adequacy]\label{adequacy}
  Let $\conftypedg{\config{\queue{\rho}}{t}{\queue{\sigma}}}{\basetype}{\sty{T}}$ and $\conftypedg{\config{\queue{\rho'}}{t'}{\queue{\sigma'}}}{\basetype}{\sty{T}}$ be two configurations of the same type.
  We have $\config{\queue{\rho}}{t}{\queue{\sigma}} \bisim_{\sty{T}} \config{\queue{\rho'}}{t'}{\queue{\sigma'}}$
  if and only if $\sem{\config{\queue{\rho}}{t}{\queue{\sigma}}} = \sem{\config{\queue{\rho'}}{t'}{\queue{\sigma'}}}$.
\end{corollary}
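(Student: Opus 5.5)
The plan is to derive adequacy directly from \cref{semantics-correct}, together with the lemma that $\bisim_{\sty{T}}$ is symmetric and transitive, and the uniqueness part of \cref{bisim-normal}. Write $\mathcal{C}$ and $\mathcal{C}'$ for the two configurations. By \cref{semantics-correct}, $\mathcal{C} \bisim_{\sty{T}} \sem{\mathcal{C}}$ and $\mathcal{C}' \bisim_{\sty{T}} \sem{\mathcal{C}'}$. Both $\sem{\mathcal{C}}$ and $\sem{\mathcal{C}'}$ lie in $\treegmonad(\sem{\basetype})_{\sty{T}}$.

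Here bisimilarity relates states of two different typed transition systems (configurations and computation trees). I would therefore work with the disjoint union of the two systems, with the obvious reductions, results and typing. The largest typed bisimulation restricted to each component is then the bisimilarity used in the statement, and the lemma on symmetry and transitivity applies to this combined system; its proof uses only the abstract definition. Checking this restriction claim is a routine but necessary point.

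\emph{(If.)} Suppose $\sem{\mathcal{C}} = \sem{\mathcal{C}'}$. Then $\mathcal{C} \bisim_{\sty{T}} \sem{\mathcal{C}} = \sem{\mathcal{C}'} \bisim_{\sty{T}} \mathcal{C}'$, using symmetry on the second instance of \cref{semantics-correct}. Transitivity then gives $\mathcal{C} \bisim_{\sty{T}} \mathcal{C}'$.

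\emph{(Only if.)} Suppose $\mathcal{C} \bisim_{\sty{T}} \mathcal{C}'$. By symmetry and transitivity we get $\sem{\mathcal{C}} \bisim_{\sty{T}} \mathcal{C} \bisim_{\sty{T}} \mathcal{C}' \bisim_{\sty{T}} \sem{\mathcal{C}'}$, so $\sem{\mathcal{C}} \bisim_{\sty{T}} \sem{\mathcal{C}'}$ as computation trees. Now take $t = \sem{\mathcal{C}}$. By the first part of \cref{bisim-normal}, $\hastype{t}{\sty{T}}$. Also $t \bisim_{\sty{T}} t$, since typed trees are self-bisimilar, as noted before \cref{bisim-normal}; alternatively this follows from symmetry and transitivity. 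So $\sem{\mathcal{C}}$ and $\sem{\mathcal{C}'}$ are both normal forms in $\treegmonad(\sem{\basetype})_{\sty{T}}$ that are bisimilar to $t$. The uniqueness clause of \cref{bisim-normal} then forces $\sem{\mathcal{C}} = \sem{\mathcal{C}'}$.

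The only real obstacle is the heterogeneous use of $\bisim$, i.e.\ justifying transitivity across the two transition systems. If the paper's lemma is stated only for a single system, the disjoint-union construction above handles this. Everything else is immediate from the cited results.
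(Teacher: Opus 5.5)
Your proposal is correct and is essentially the paper's own argument. You combine \cref{semantics-correct} with symmetry and transitivity of $\bisim_{\sty{T}}$ to reduce the claim to bisimilarity of the normal forms $\sem{\mathcal{C}}, \sem{\mathcal{C}'} \in \treegmonad(\sem{\basetype})_{\sty{T}}$, then use the uniqueness part of \cref{bisim-normal} to turn that bisimilarity into equality; your disjoint-union construction simply makes explicit the heterogeneous use of transitivity that the paper's one-line proof leaves implicit.
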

\begin{proof}
  By \cref{semantics-correct} and transitivity of $\bisim_{\sty{T}}$, we have
  $\config{\queue{\rho}}{t}{\queue{\sigma}} \bisim_{\sty{T}} \config{\queue{\rho'}}{t'}{\queue{\sigma'}}$
  iff
  $\sem{\config{\queue{\rho}}{t}{\queue{\sigma}}} \bisim_{\sty{T}} \sem{\config{\queue{\rho'}}{t'}{\queue{\sigma'}}}$.
  The latter bisimulation is an equality by \cref{bisim-normal}.
\end{proof}
Adequacy also establishes that the interpretation of a configuration does not depend on the choice of type derivation.

\begin{example}\label{ex:global-state-example-semantics}
  We return to our global state example (\cref{global-state-computations}), and specifically to the bisimilarity $\config{\queueempty}{t_{\role{c}, 1}}{\queueempty} \bisim_{\sty{S}}\config{\queueempty}{t_{\role{c}, 2}}{\queueempty}$ claimed in \cref{ex:global-state-example-bisim}.
  Due to adequacy, this bisimilarity is \emph{equivalent} to the two configurations having the same interpretation in our denotational semantics.
  Indeed they do have the same interpretation; their interpretation is the following element of $\treegmonad(\sem{\inttype})_{\sty{S}}$.
  \[
    \treesend{\role{s}}{\msg{\ml{get}}{\star}}{\treerecv{\role{s}}{\treesend{\role{s}}{\msg{\ml{put}}{0}}{\treereturn{n}}}{\msg{\ml{st}}{n}}}
  \]
\end{example}
\section{Deadlock-freedom and liveness}\label{sec:session-calculus}
A \emph{session} is a collection of participants running in parallel.
Deadlock-freedom and liveness are properties of sessions.
In this section, we define a notion of $\lambdasmp{}$ session, and then prove our desired safety and liveness properties.
We do this as an application of our denotational semantics; we use the computation-tree interpretation of $\lambdasmp{}$ to prove these properties.

\begin{definition}
  A \emph{session} $\mathcal{M}$ is a finite list $(\namepart{\pr_1}{\mathcal{C}_1}, \namepart{\pr_2}{\mathcal{C}_2}, \dots, \namepart{\pr_n}{\mathcal{C}_n})$, where $\pr_1, \dots, \pr_n$ are distinct participant names, and each $\mathcal{C}_i$ is a configuration involving (at most) the participants in $\{\pr_1, \dots, \pr_n\} \setminus \{\pr_i\}$.
\end{definition}

We use our asynchronous operational semantics of \lambdasmp{} to define a notion of reduction $\reduceaction{\mathcal{M}}{\mathcal{M}'}{\beta}$ for sessions.
This notion of reduction in particular includes a rule for one participant sending a message to another.
A \emph{global action} $\beta$ is either $\tauaction_{\pp}$ (denoting an internal action made by participant $\pp$), or a triple $(\commaction{\pp}{\pq}{m})$ with $\pp \neq \pq$ (denoting that $\pp$ sends message $m$ to $\pq$).
Reduction $\reduceaction{\mathcal{M}}{\mathcal{M'}}{\beta}$ of sessions is defined, using reduction of configurations, by two rules:
\begin{gather*}
  \begin{prooftree}
    \hypo{\mathcal{C}_i = \mathcal{D}_i~\text{for all}~i \neq j}
    \hypo{\reduceaction{\mathcal{C}_j}{\mathcal{D}_j}{\tauaction}}
    \infer2{\reduceaction{(\namepart{\pr_1}{\mathcal{C}_1}, \namepart{\pr_2}{\mathcal{C}_2}, \dots, \namepart{\pr_n}{\mathcal{C}_n})}{(\namepart{\pr_1}{\mathcal{D}_1}, \namepart{\pr_2}{\mathcal{D}_2}, \dots, \namepart{\pr_n}{\mathcal{D}_n})}{\tauaction_{\pr_j}}}
  \end{prooftree}
  \\
  \begin{prooftree}
    \hypo{\mathcal{C}_i = \mathcal{D}_i~\text{for all}~i \not\in\{j, k\}}
    \hypo{\reduceaction{\mathcal{C}_j}{\mathcal{D}_j}{\sendaction{\pr_k}{m}}}
    \hypo{\reduceaction{\mathcal{C}_k}{\mathcal{D}_k}{\recvaction{\pr_j}{m}}}
    \infer3{\reduceaction{(\namepart{\pr_1}{\mathcal{C}_1}, \namepart{\pr_2}{\mathcal{C}_2}, \dots, \namepart{\pr_n}{\mathcal{C}_n})}{(\namepart{\pr_1}{\mathcal{D}_1}, \namepart{\pr_2}{\mathcal{D}_2}, \dots, \namepart{\pr_n}{\mathcal{D}_n})}{\commaction{\pr_j}{\pr_k}{m}}}
  \end{prooftree}
\end{gather*}

Our type system for \lambdasmp{} assigns session types $\sty{T}_i$ to the individual configurations $\mathcal{C}_i$ of a session.
By itself, this does not ensure that the $\sty{T}_i$ are in any way compatible with each other.
We ensure the latter by following the \emph{top-down} approach for MPST~\cite{HYC2016,VeryGentle,SynchronousSubtyping}, in which there is a \emph{global protocol}, and each participant $\pp$ is required to follow the local \emph{projection} of that protocol onto $\pp$.
Global protocols are described by \emph{global types}, which are generated inductively by the following grammar.
\[
  \gty{G}~\Coloneq~
  \gtypeend
  ~|~\gtypecomm{\pp}{\pq}{\ell_i}{\basetype_i}{\gty{G}_i}{i \in I}
  ~|~\gtypevar{X}
  ~|~\gtyperec{\gtypevar{X}}{\gty{G}}
\]
The global type $\gtypeend$ denotes that no further communication between participants will happen. $\gtypecomm{\pp}{\pq}{\ell_i}{\basetype_i}{\gty{G}_i}{i \in I}$ denotes that $\pp$ sends a single message to $\pq$; that message will have the form $\msg{\ell_i}{v}$ with $v : \basetype_i$ for some $i \in I$, and then the protocol will continue as $\sty{G}_i$.
As for internal and external choices, we require $I$ to be non-empty and finite, and we require the labels $\ell_i$ to be distinct from each other.
We also require $\pp \neq \pq$.
A \emph{recursive protocol} $\gtyperec{\gtypevar{X}}{\gty{G}}$ binds the type variable $\gtypevar{X}$; we require that every occurence of $\gtypevar{X}$ is under some communication $\pp \to \pq$.
Just as for local types, we define a notion of single-step unfolding for global types:
\[
  \unfold{\gtyperec{\gtypevar{X}}{\gty{G}}} = \subst{\unfold{\gty{G}}}{\gtypevar{X} \mapsto \gtyperec{\gtypevar{X}}{\gty{G}}}
  \qquad
  \unfold{\gty{G}} = \gty{G}
  ~\text{if}~\gty{G}~\text{is not a recursive type}
\]
A global type $\gty{G}$ is \emph{closed} when it has no free type variables.

In the top-down approach, we determine each participant's (local) session type by \emph{projecting} it from the global type $\gty{G}$.
Projection is a partial function that maps a global type $\gty{G}$ and participant $\pr$ to a session type $\project{G}{\pr}$.
The definition is by recursion on $\gty{G}$, and is given in \cref{fig:projection}.
In the case of a communication not involving $\pr$, we \emph{merge} the projections of the branches.
Merging is a partial binary operation $\sty{T} \bmergeltype \sty{T}'$ on session types.
We use \emph{full merging}, as defined in \cite{Scalas2019}.
The case split in the projection from a recursive global type ensures guardedness.
\begin{figure}[t]
\begin{gather*}
  \begin{array}{l}
  \project{\gtypeend}{\pr}
  =
  \ltypeend
  \qquad
  \project{(\gtypecomm{\pp}{\pq}{\ell_i}{\basetype_i}{\gty{G}_i}{i \in I})}{\pr}
  =
  \begin{cases}
    \ltypesend{\pq}{\ell_i}{\basetype_i}{(\project{\gty{G}_i}{\pr})}{i \in I}
    &\text{if}~
    \pp = \pr
    \\
    \ltyperecv{\pp}{\ell_i}{\basetype_i}{(\project{\gty{G}_i}{\pr})}{i \in I}
    &\text{if}~
    \pq = \pr
    \\
    \mergeltype{(\project{\gty{G}_i}{\pr})}{i \in I}
    &\text{if}~\pr \not\in \{\pp, \pq\}
  \end{cases}
  \end{array}
  \\
  \project{\gtypevar{X}}{\pr}
  =
  \ltypevar{X}
  \qquad
  \project{(\gtyperec{\gtypevar{X}}{\gty{G}})}{\pr}
  =
  \begin{cases}
  \ltypeend
  &\!\!\text{if}~\project{\gty{G}}{\pr} = \ltypevar{X}
  \\
  \ltypevar{X'}
  &\!\!\text{if}~\project{\gty{G}}{\pr} = \ltypevar{X'} \neq \ltypevar{X}
  \\
  \ltyperec{\ltypevar{X}}{(\project{\gty{G}}{\pr})}
  &
  \!\!\text{otherwise}
  \end{cases}
\end{gather*}

\begin{gather*}
  \ltypeend \bmergeltype \ltypeend = \ltypeend
  \quad
  \begin{array}{@{}l@{}}
    (\ltypesend{p}{\ell_i}{\basetype_i}{\sty{T}_i}{i \in I})
    \bmergeltype
    (\ltypesend{p}{\ell_i}{\basetype_i}{\sty{T}'_i}{i \in I})
  \end{array}
  =
  \begin{array}{@{}l@{}}
    (\ltypesend{p}{\ell_i}{\basetype_i}{\sty{T}_i \bmergeltype \sty{T}'_i}{i \in I})
  \end{array}
  \\
  \begin{array}{@{}l@{}}
    (\ltyperecv{p}{\ell_i}{\basetype_i}{\sty{T}_i}{i \in I})
    \\~\bmergeltype~
    (\ltyperecv{p}{\ell_i}{\basetype_i}{\sty{T}'_i}{i \in J})
  \end{array}
  =
  \begin{array}{@{}l@{}}
    (\ltyperecv{p}{\ell_i}{\basetype_i}{\sty{T}_i \bmergeltype \sty{T}'_i}{i \in I
    \cap J})
    \\\&~
    (\ltyperecv{p}{\ell_i}{\basetype_i}{\sty{T}_i}{i \in I \setminus J})
    ~\&~
    (\ltyperecv{p}{\ell_i}{\basetype_i}{\sty{T}'_i}{i \in J \setminus I})
  \end{array}
  \\
  \begin{array}{c}
    \ltypevar{X} \bmergeltype \ltypevar{X} = \ltypevar{X}
    \quad
    (\ltyperec{\ltypevar{X}}{\sty{T}})
    \bmergeltype
    (\ltyperec{\ltypevar{X}}{\sty{T}'} )
    =
    \ltyperec{\ltypevar{X}}{(\sty{T} \bmergeltype \sty{T}')} 
  \end{array}
\end{gather*}
\caption{Definitions of \emph{projection} and of \emph{full merging} of multiparty session types\label{fig:projection}}
\end{figure}

\begin{example}
  In the context of \cref{ex:gppsy-example}, the computation $t$ implements a participant $\pr$ as part of a global protocol described by the following global type $\gty{G}$; we can associate to $t$ the session type $\project{\gty{G}}{\pr}$.
  \begin{gather*}
    \scalebox{0.94}{$
    ~\gty{G} = \gtypecommcases{\pp}{\pr}{
      \mltype{\ml{success}}{\inttype}.\gtypecommcases{\pr\!}{\!\pq}{\mltype{\ml{cont}}{\inttype}.\, \gtypeend \\ \mltype{\ml{stop}}{\booltype}.\, \gtypeend}
      \\
      \mltype{\ml{error}}{\booltype}.\,\gtypecommone{\pr\!}{\!\pq}{\ml{stop}}{\booltype}{\gtypeend}
    }
    \mspace{-22mu}
      \project{\gty{G}\!}{\!\pp} = \ltypesendcases{\pr}{
      \mltype{\ml{success}}{\inttype}.\ltypeend
      \\
      \mltype{\ml{error}}{\booltype}.\,\ltypeend
      }\mspace{-12mu}
    $}
    \\
    \scalebox{0.9}{$
      \project{\gty{G}}{\pq} = \ltyperecvcases{\pr}{\mltype{\ml{cont}}{\inttype}.\, \ltypeend \\ \mltype{\ml{stop}}{\booltype}.\, \ltypeend}
      \mspace{-2mu}
      \project{\gty{G}}{\pr} =
      \ltyperecvcases{\pp}{
      \mltype{\ml{success}}{\inttype}.\ltypesendcases{\pq}{\mltype{\ml{cont}}{\inttype}.\, \ltypeend \\ \mltype{\ml{stop}}{\booltype}.\, \ltypeend}
      \\
      \mltype{\ml{error}}{\booltype}.\ltypesendone{\pq}{\ml{stop}}{\booltype}{\ltypeend}}
    $}
  \end{gather*}
\end{example}

\begin{definition}
  A session $(\namepart{\pr_1}{\mathcal{C}_1}, \namepart{\pr_2}{\mathcal{C}_2}, \dots, \namepart{\pr_n}{\mathcal{C}_n})$ \emph{has type $\gty{G}$} if (1) $\gty{G}$ is closed and contains only the participants in $\{\pr_1, \dots, \pr_n\}$, (2) the projections $\project{\gty{G}}{\pr_i}$ are all defined, and (3) $\conftypedg{\mathcal{C}_i}{\basetype_i}{\project{\gty{G}}{\pr_i}}$ for each $i$.
  When this is the case, we say that the session is \emph{well-typed}.
\end{definition}

\begin{example}
  We define a global type $\gty{G}$ for our global state example.
  The projections are the session types from \cref{ex:global-state-example-graded}.
  \[
  \scalebox{0.95}{$
    \gty{G} = \gtyperec{\gtypevar{X}}{\gtypecommcases{\role{c}}{\role{s}}{
      \mltype{\ml{get}}{\unittype}.\,\gtypecommone{\role{s}}{\role{c}}{\ml{st}}{\inttype}{\gtypevar{X}}
      \\
      \mltype{\ml{put}}{\inttype}.\,\gtypevar{X}
      \\
      \mltype{\ml{done}}{\unittype}.\,\gtypeend
    }}$}
    \qquad
    \begin{array}{r@{~}c@{~}l}
      \project{\gty{G}}{\role{s}} &=& \ltyperec{\ltypevar{X}}{\sty{T}_{\role{s}}}
      \\
      \project{\gty{G}}{\role{c}} &=& \ltyperec{\ltypevar{X}}{\sty{T}_{\role{c}}}
    \end{array}
  \]
  The configurations of \cref{ex:global-state-example-graded} form a session $\mathcal{M} = (\namepart{\role{s}}{\mathcal{C}_{\role{s}}}, \namepart{\role{c}}{\mathcal{C}_{\role{c},2}})$ of type $\gty{G}$.
\end{example}

Subject reduction for configurations provides a similar theorem for sessions.
\begin{restatable}[Subject reduction for sessions]{theorem}{sessionsubjred}\label{session-subject}
  If $\mathcal{M}$ is well-typed and $\reduceaction{\mathcal{M}}{\mathcal{M}'}{\beta}$, then $\mathcal{M}'$ is also well-typed.
\end{restatable}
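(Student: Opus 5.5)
The proof keeps the ordinary subject reduction theorem (\cref{subject-reduction}) for individual configurations as the engine. On top of it I would add a lemma on global types saying that a local send derivable from a projection can be mirrored by a reduction of the global type. I use the admissible subtyping rule for configurations throughout: it suffices to exhibit a closed $\gty{G}'$ whose projections are defined, involve only $\pr_1,\dots,\pr_n$, and are supertypes of the types that \cref{subject-reduction} gives each configuration.

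\textbf{Case $\tauaction_{\pr_j}$.} Only $\mathcal{C}_j$ changes, via $\reduceaction{\mathcal{C}_j}{\mathcal{D}_j}{\tauaction}$. \cref{subject-reduction} gives $\conftypedg{\mathcal{D}_j}{\basetype_j}{\project{\gty{G}}{\pr_j}}$, so I take $\gty{G}' = \gty{G}$.

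\textbf{Case $\commaction{\pp}{\pq}{\msg{\ell}{v}}$ with $v : \basetype$.} Here $\pp = \pr_j$ and $\pq = \pr_k$. By \cref{subject-reduction}, $\conftypedg{\mathcal{D}_j}{\basetype_j}{\sty{U}}$ for some $\sty{U}$ with $\reduceinternal{\project{\gty{G}}{\pp}}{\sty{U}}{\pq}{\ell}{\basetype}$. Also $\mathcal{D}_k$ has every $\sty{U}'$ with $\reduceexternal{\project{\gty{G}}{\pq}}{\sty{U}'}{\pp}{\ell}{\basetype}$. The key lemma I would prove is the following. If $\gty{G}$ is closed with all projections defined and $\reduceinternal{\project{\gty{G}}{\pp}}{\sty{U}}{\pq}{\ell}{\basetype}$, then there is a closed $\gty{G}'$ with all projections defined such that:
\begin{itemize}
\item $\sty{U} \subtype \project{\gty{G}'}{\pp}$;
\item $\reduceexternal{\project{\gty{G}}{\pq}}{\sty{U}'}{\pp}{\ell}{\basetype}$ for some $\sty{U}' \subtype \project{\gty{G}'}{\pq}$;
\item $\project{\gty{G}}{\pr} \subtype \project{\gty{G}'}{\pr}$ for every $\pr \notin \{\pp,\pq\}$.
\end{itemize}
Informally, $\gty{G}'$ deletes the occurrence(s) of the communication $\pp \to \pq : \ell$ that the send consumes. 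Where the local derivation pruned an internal choice of $\pp$ to $J \subseteq I$, $\gty{G}'$ prunes the corresponding global branches in the same way. The lemma finishes the case: $\mathcal{D}_j$ and each third party are retyped by subsumption, and $\mathcal{D}_k$ is typed with $\sty{U}'$ and then subsumed.

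\textbf{Proof of the key lemma.} I would argue by induction on the derivation of $\reduceinternalsymbol$, with an inner analysis of $\gty{G}$ to line up projection with the local rules. Each local rule corresponds to a shape of $\unfold{\gty{G}}$:
\begin{itemize}
\item \RuleName{$\reduceinternalsymbol$-base} corresponds to $\unfold{\gty{G}} = \pp \to \pq$, possibly beneath communications not involving $\pp$ that merging has flattened. The receiver obtains its continuation via \RuleName{$\reduceexternalsymbol$-base} or \RuleName{$\reduceexternalsymbol$-$\&$}.
\item \RuleName{$\reduceinternalsymbol$-$\oplus$} corresponds to $\pp \to \pq'$ with $\pq' \neq \pq$.
\item \RuleName{$\reduceinternalsymbol$-$\&$} corresponds to $\pq' \to \pp$.
\item \RuleName{$\reduceinternalsymbol$-rec} corresponds to unfolding $\gtyperec{\gtypevar{X}}{\gty{G}}$. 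This needs the standard fact that projection commutes with substitution and unfolding, up to $\subtype$.
\end{itemize}
Two auxiliary facts are needed. First, pruning branches of an internal choice only removes branches of the corresponding external choices in other projections, and this yields a supertype by the congruence rules of \cref{admissible-subtyping}. Second, projection and full merging are monotone with respect to $\subtype$.

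\textbf{Expected main obstacle.} Full merging, together with the fact that the consumed send may sit arbitrarily deep in $\gty{G}$. The communications guarding it may involve third parties or $\pq$ itself, and their projections were merged. So I must show that the receiver's projection really performs a $\reduceexternalsymbol$ step for $\pp$ through those layers, possibly via the \RuleName{$\reduceexternalsymbol$-$\oplus$} and \RuleName{$\reduceexternalsymbol$-$\&$} rules, and that merging commutes with the removal of the communication. I would first try to prove a merge lemma: if $\reduceinternal{\sty{T}\bmergeltype\sty{T}'}{\sty{U}}{\pq}{\ell}{\basetype}$, then both $\sty{T}$ and $\sty{T}'$ have corresponding reducts whose merge is defined and is $\subtype$-above $\sty{U}$, and dually for $\reduceexternalsymbol$. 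Recursion may need a coinductive formulation, since $\gty{G}'$ might have to be presented as a recursive type; there I would define the candidate relation and verify \cref{def:subtyping} directly.
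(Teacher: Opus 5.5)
Your overall architecture is the paper's. Both proofs combine subject reduction for configurations with a global-type lemma (the paper's \cref{global-reduction}). That lemma is proved by induction on the $\reduceinternalsymbol$ derivation, with an inner case analysis of $\gty{G}$ and auxiliary merge lemmas. Subsumption for configurations (\cref{configuration-subsumption}) then retypes $\mathcal{D}_j$, $\mathcal{D}_k$ and the bystanders, and the $\tauaction$ case is identical.

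The gap is in the invariant of your key lemma: stated only up to $\subtype$, it cannot be carried through that induction. Suppose the outermost communication of $\gty{G}$ does not involve some participant $\pr$ (sender, receiver, or bystander). Then you must rebuild $\gty{G}'$ from the branchwise $\gty{G}'_{i'}$ supplied by the inductive hypothesis. So $\project{\gty{G}'}{\pr} = \mergeltype{(\project{\gty{G}'_{i'}}{\pr})}{i' \in I'}$, and you have to show this merge exists. Full merging is partial and purely syntactic, whereas asynchronous subtyping reorders actions. For example, $\ltypesendone{\role{a}}{\ell}{\basetype}{\ltyperecvone{\role{b}}{\ell'}{\basetype'}{\ltypeend}}$ is a subtype both of itself and of $\ltyperecvone{\role{b}}{\ell'}{\basetype'}{\ltypesendone{\role{a}}{\ell}{\basetype}{\ltypeend}}$, yet these two types have no merge. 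An inductive hypothesis that only bounds the branch projections from below in $\subtype$ therefore gives no control over definedness. Your ``monotonicity of merging with respect to $\subtype$'' cannot supply it either: at best it holds conditionally on both merges existing, which is exactly what is at stake.

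The paper keeps the invariant syntactic instead:
\begin{itemize}
\item $\project{\gty{G}'}{\pp}$ is exactly the sender's reduct $\sty{U}$;
\item the receiver satisfies $\reduceexternal{(\project{\gty{G}}{\pq})}{(\project{\gty{G}'}{\pq})}{\pp}{\ell}{\basetype}$ on the nose;
\item every bystander satisfies $(\project{\gty{G}}{\pr}) \extgeq (\project{\gty{G}'}{\pr})$, i.e.\ its new projection is a sub-merge of the old one.
\end{itemize}
With these, definedness of the rebuilt merges follows from \cref{merge-merges} and \cref{merge-reduceexternals}. Your backward merge lemma for the sender should then produce reducts whose merge is exactly $\sty{U}$. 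This is possible because merging preserves shape and only merges internal choices with identical label sets. The relation $\extgeq$ is converted into $\subtype$ only at the very end, by \cref{extgeq-subtype}. Your informal description of $\gty{G}'$ (delete the consumed communication and prune the same branches) and your first auxiliary fact are precisely this construction. What is missing is to make them, rather than the weaker $\subtype$ conclusions, the statement you prove by induction.
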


We now come to our desired safety and liveness properties, which hold for all well-typed sessions.
The proofs of these use our denotational semantics.
\emph{Deadlock-freedom} means that reduction cannot get stuck; either the session terminates (with every configuration returning a result), or some communication will happen.
As stated this theorem only applies to the initial session $\mathcal{M}_0$, but it follows from \cref{session-subject} that deadlock-freedom also applies to any reduct of $\mathcal{M}_0$.
\begin{theorem}[Deadlock-freedom]
  If $\mathcal{M}_0$ is well-typed, then there is a reduction sequence
  $
        \reduceaction{\mathcal{M}_0}{\reduceaction{\cdots}{\mathcal{M}_c}{\beta_c}}{\beta_1}
  $, with $c \geq 0$, such that either (1) $c \geq 1$ and $\beta_c$ has the form $\commaction{\pp}{\pq}{m}$,
  or (2) $\mathcal{M}_c$ has the form
  \[
    (\namepart{\pr_1}{\config{\queue{\queueempty}}{\rctx{R}_1[\returnterm{v_1}]}{\queue{\queueempty}}}, \dots, \namepart{\pr_n}{\config{\queue{\queueempty}}{\rctx{R}_n[\returnterm{v_n}]}{\queue{\queueempty}}})
  \]
\end{theorem}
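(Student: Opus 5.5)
We pass to the denotational semantics, find a step that must happen there, and pull it back to the operational level using \cref{semantics-correct}.

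\textbf{Setup.} Let $\mathcal{M}_0 = (\namepart{\pr_1}{\mathcal{C}_1}, \dots, \namepart{\pr_n}{\mathcal{C}_n})$ have type $\gty{G}$, and write $\sty{T}_i = \project{\gty{G}}{\pr_i}$. By \cref{semantics-correct} we have $\mathcal{C}_i \bisim_{\sty{T}_i} \sem{\mathcal{C}_i}$, where $\sem{\mathcal{C}_i} \in \treegmonad(\sem{\basetype_i})_{\sty{T}_i}$ is a normal form. We then split on $\unfold{\gty{G}}$.

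\textbf{Case $\unfold{\gty{G}} = \gtypeend$.} Here every $\unfold{\sty{T}_i} = \ltypeend$, so each $\sem{\mathcal{C}_i} = \treereturn{x_i}$. Clause (2)(b) of \cref{def:bisim} gives $\reduceactionmany{\mathcal{C}_i}{\mathcal{D}_i}{\tauaction}$ with $\Result(\mathcal{D}_i) = x_i$. By the definition of $\Result$, each $\mathcal{D}_i$ has the form $\config{\queueempty}{\rctx{R}_i[\returnterm{v_i}]}{\queueempty}$. Interleaving these $\tauaction$-sequences as $\tauaction_{\pr_i}$ session steps yields alternative (2).

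\textbf{Case $\unfold{\gty{G}} = \gtypecomm{\pp}{\pq}{\ell_i}{\basetype_i}{\gty{G}_i}{i \in I}$.} Projection gives $\unfold{\project{\gty{G}}{\pp}} = \ltypesend{\pq}{\ell_i}{\basetype_i}{\cdots}{i \in I}$ and $\unfold{\project{\gty{G}}{\pq}} = \ltyperecv{\pp}{\ell_i}{\basetype_i}{\cdots}{i \in I}$. Both facts need a small lemma: projection commutes with unfolding up to the guardedness side conditions. With this:
\begin{itemize}
\item The normal form $\sem{\mathcal{C}_{\pp}}$ is $\treesend{\pq}{\msg{\ell_j}{v}}{u}$ for some $j$ and $v : \basetype_j$. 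Since $\Sends{\pq}$ holds for the type, clause (3)(b) of \cref{def:bisim} gives a $\tauaction$-sequence ending in $\reduceaction{\cdot}{\mathcal{D}_{\pp}}{\sendaction{\pq}{\msg{\ell_j}{v}}}$ on the operational side.
\item The normal form $\sem{\mathcal{C}_{\pq}}$ is a $\treerecvkeyword_{\pp}$ whose branches cover all $\msg{\ell_i}{v'}$. Base rules give $\reduceexternal{\project{\gty{G}}{\pq}}{\cdot}{\pp}{\ell_j}{\basetype_j}$, and the tree reduces on $\recvaction{\pp}{\msg{\ell_j}{v}}$. Clause (4)(b) then gives $\reduceactionmany{\mathcal{C}_{\pq}}{\mathcal{D}_{\pq}}{\recvaction{\pp}{\msg{\ell_j}{v}}}$.
\end{itemize}
We perform the internal $\tauaction$ prefixes of both sides first, as $\tauaction_{\pp}$ and $\tauaction_{\pq}$ steps. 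We then fire the final send and receive together with the communication rule, giving $\beta_c = \commaction{\pp}{\pq}{\msg{\ell_j}{v}}$ with $c \geq 1$, which is alternative (1).

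\textbf{Main obstacle.} The difficulty is the mismatch between the session and its configurations. The configurations are arbitrary asynchronous configurations with queues, so clause (3) of \cref{def:bisim} is available only when $\Sends{\pq}$ holds of the \emph{configuration's} type, and the type of $\mathcal{C}_{\pp}$ is exactly $\project{\gty{G}}{\pp}$. We must therefore check that the send and receive actions are relative to $\pq$ and $\pp$ with $\pp \neq \pq$. The recursive case of projection is also delicate: when $\project{\gty{G}}{\pr} = \ltypeend$ for a participant absent from a recursion, we need $\unfold{\gty{G}}$ to be a communication and not $\gtypeend$ or a variable. This holds because $\gty{G}$ is closed and guarded, so its unfolding is either $\gtypeend$ or a communication; I would prove this as a preliminary lemma by induction on the unfolding depth.
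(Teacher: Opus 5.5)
Your proposal is correct and is essentially the paper's own proof: split on whether the unfolding of $\gty{G}$ is $\gtypeend$ or a communication, read off the shape of the normal-form denotations $\sem{\mathcal{C}_i}$ from the projected session types, and pull the return, send and receive back to the configurations via \cref{semantics-correct}. You make explicit details the paper leaves implicit: which clauses of \cref{def:bisim} are used, how the $\tauaction$-prefixes are interleaved into session steps, and the unfolding/projection fact, which the paper records as \cref{unfold-project}.
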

\begin{proof}
  Let $\gty{G}$ be the type of $\mathcal{M}_0 = (\namepart{\pr_1}{\mathcal{C}_1}, \dots, \namepart{\pr_n}{\mathcal{C}_n})$.
  If $\unfold{\gty{G}} = \gtypeend$, then for every $i$ we have $\unfold{\project{\gty{G}}{\pr_i}} = \ltypeend$, so the computation tree $\sem{\mathcal{C}_i}$ has the form $\treereturn{x_i}$, and by \cref{semantics-correct} there is a reduction $\reduceactionmany{\mathcal{C}_i}{\config{\queue{\queueempty}}{\rctx{R}_i[\returnterm{v_i}]}{\queue{\queueempty}}}{\tauaction}$.
  Otherwise, $\unfold{\gty{G}}$ has the form $\gtypecomm{\pr_j}{\pr_k}{\ell_i}{\basetype_i}{\gty{G}_i}{i \in I}$.
  In this case, $\unfold{\project{\gty{G}}{\pr_j}}$ has the form $\ltypesend{\pr_k}{\ell_i}{\basetype_i}{\sty{T}_i}{i \in I}$ and $\unfold{\project{\gty{G}}{\pr_k}}$ has the form $\ltyperecv{\pr_j}{\ell_i}{\basetype_i}{\sty{U}_i}{i \in I}$.
  It follows from \cref{semantics-correct} that there are reductions $\reduceactionmany{\mathcal{C}_j}{\mathcal{D}_j}{\sendaction{\pr_k}{m}}$ and $\reduceactionmany{\mathcal{C}_k}{\mathcal{D}_k}{\recvaction{\pr_j}{m}}$, so that we can take $\beta_c = \commaction{\pr_j}{\pr_k}{m}$.
\end{proof}

Finally, liveness means that (1) if a configuration is waiting to receive a message, then it will eventually do so; and (2) if a configuration sends a message, then that message will eventually be consumed.
This assumes \emph{fair} scheduling (the scheduler does not starve any participant).
Our formulation of liveness is analogous to that of \cite{GPPSY2023}, and is stronger than the liveness considered in \cite{Scalas2019} (cf.\ \cite[Footnote~4]{GPPSY2023}).
\begin{restatable}[Liveness]{theorem}{liveness}
  Let
  $
    \reduceaction{\mathcal{M}_0}{\reduceaction{\mathcal{M}_1}{\cdots}{\beta_2}}{\beta_1}
  $
  be a (possibly infinite) reduction sequence, with
  $\mathcal{M}_i = (\namepart{\pr_1}{\mathcal{C}_{i1}}, \dots, \namepart{\pr_n}{\mathcal{C}_{in}})$, and
  $\mathcal{C}_{ij} = \config{\rho_{ij}}{t_{ij}}{\sigma_{ij}}$.
  Assume that the reduction sequence is \emph{fair}, meaning for every $i$ such that there exists a reduction $\reduceaction{\mathcal{M}_i}{\mathcal{M}'}{\beta'}$, there is some $i' > i$ such that $\beta_{i'} = \beta'$.
  If $\mathcal{M}_0$ is well-typed, then we have the following.
  \begin{enumerate}
    \item If $t_{ij}$ has the form $\rctx{R}[\recvterm{\pp}{\ell_k}{x_k}{u_k}{k \in K}]$, and $\rho_{ij}$ has no messages from $\pp$, then there is some $i' > i$ and $m \in M$ such that $\beta_{i'} = (\commaction{\pp}{\pr_j}{m})$.
    \item If $\beta_i = (\commaction{\pp}{\pr_j}{m})$, then there is some $i' > i$ such that $t_{i'j}$ has the form $\rctx{R}[\recvterm{\pp}{\ell_k}{x_k}{u_k}{k \in K}]$.
  \end{enumerate}
\end{restatable}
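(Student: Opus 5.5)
The plan is to transfer the problem to computation trees via \cref{semantics-correct}, and to reason there about the global type. By \cref{session-subject}, every $\mathcal{M}_i$ is well-typed, say with global type $\gty{G}_i$. By \cref{semantics-correct}, each configuration satisfies $\mathcal{C}_{ij} \bisim_{\project{\gty{G}_i}{\pr_j}} \sem{\mathcal{C}_{ij}}$, and $\sem{\mathcal{C}_{ij}}$ is a normal form in $\treegmonad(\sem{\basetype_j})_{\project{\gty{G}_i}{\pr_j}}$. I will also use an auxiliary fact: along a communication $\commaction{\pp}{\pq}{m}$, the global type can be chosen to evolve by a global reduction step $\gty{G}_i \to \gty{G}_{i+1}$. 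This step consumes, possibly under reordering, a communication $\pp \to \pq$ with label $\ell$. Internal steps leave the global type unchanged. I expect to establish this while re-examining the proof of \cref{session-subject}, using the fact that the configuration typings follow $\reduceinternalsymbol$ and $\reduceexternalsymbol$ (\cref{subject-reduction}).

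For (1), suppose $t_{ij}$ is blocked on a receive from $\pp$ and $\rho_{ij}$ contains no message from $\pp$. By the typing rules, $\Recvs{\pp}$ must hold of the type $\project{\gty{G}_i}{\pr_j}$ after accounting for $\sigma_{ij}$. I would argue that a $\pp \to \pr_j$ communication occurs in every maximal path of $\gty{G}_i$ before any further action of $\pr_j$ on other channels that is a receive; this follows from inverting projection and full merging. I then proceed by induction on the depth of that communication in $\unfold{\gty{G}_i}$. Here the bounded-height clause (4) of \cref{def:computation-tree-typing} is what ensures this depth is finite even through recursion. At each step, as in the deadlock-freedom proof, the head communication $\pr_a \to \pr_b$ of $\unfold{\gty{G}_i}$ is enabled: bisimilarity with the trees gives weak send and receive transitions of the two configurations. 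Fairness then guarantees that some such enabled step, or an equivalent reordered one, is eventually taken. Each taken communication strictly decreases the depth of the pending $\pp \to \pr_j$ communication. When the depth reaches zero, the enabled step is exactly $\commaction{\pp}{\pr_j}{m}$, and fairness forces some $\beta_{i'}$ to be it.

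For (2), after $\beta_i = \commaction{\pp}{\pr_j}{m}$, the message sits in $\rho_{(i+1)j}$, and by \cref{subject-reduction} it is typed by \RuleName{CRecv} through $\reduceexternalsymbol$. Unwinding $\reduceexternalsymbol$ shows that the type of the inner computation $t$ contains a matching external choice on $\pp$, with finitely many steps before it. I would show that the tree $\sem{\mathcal{C}}$ reaches $\treerecvkeyword_{\pp}$ after finitely many moves, again by induction on the height bound from clause (4). The bisimulation transports this back to $t_{i'j}$ reaching a $\recvtermkeyword$ from $\pp$. The intermediate moves are either internal $\tauaction$ steps, which fairness schedules, or communications with other participants, which are enabled by the argument of part (1). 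Guardedness of recursion, both in terms and in types, ensures that a $\tauaction$-divergence cannot postpone this.

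The main obstacle is the fairness argument in (1). Fairness only promises that each specific enabled action is eventually taken, while the scheduler may interleave other communications that change the global type. I would handle this with a well-founded measure: the depth of the pending communication in the unfolded global type, taken over the finitely many branches, which is justified by clause (4). I then need to show that every step either preserves this measure or decreases it. I also need to show that the action that decreases it stays enabled until it is taken, by persistence of enabled asynchronous sends and receives under unrelated steps. Since fairness is stated per action $\beta'$, this persistence is what lets it apply.
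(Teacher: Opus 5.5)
Your overall plan matches the paper's: \cref{semantics-correct} makes the head communication of the current global type weakly enabled, fairness makes it happen, and one inducts on a finite distance to the needed communication. Two steps fail as written, however.

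\textbf{Part (1): the send queue is not drained.} You skip the paper's first move, which is to drain $\sigma_{ij}$. Since $t_{ij}$ is blocked, $\pr_j$ enqueues nothing new. Queued messages can be transferred without any action by the receiver's computation, because the reduction rule \RuleName{CRecv} always applies. So by fairness the send queue empties. Only then does \cref{comp-recvs}, which needs an empty send queue, give $\Recvs{\pp}(\project{\gty{G}}{\pr_j})$.

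Without this step, your claim that $\pp \to \pr_j$ lies on every maximal path of $\gty{G}_i$ is false, and the max-over-branches depth is undefined. For example, take two participants. Let $\gty{G}_i$ have head $\pr_j \to \pp$, with a branch $\ell_1$ continuing as $\pp \to \pr_j$ and a branch $\ell_2$ continuing as $\gtypeend$. Let $\sigma_{ij}$ hold $\msg{\ell_1}{v}$ for $\pp$, and let $t_{ij}$ be blocked on $\recvtermkeyword_{\pp}$. This is well typed via \RuleName{CSend}, yet the $\ell_2$ path never reaches $\pp \to \pr_j$.

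Also, finiteness of that depth is a property of $\gty{G}$, not of trees. It comes from the finite derivation of $\Recvs{\pp}(\project{\gty{G}}{\pr_j})$ together with full merging: a branch that returns to a recursion variable before the $\pp$-receive forces the merged projection to be that variable, contradicting $\Recvs{\pp}$. That derivation is exactly what the paper inducts on. Clause (4) of \cref{def:computation-tree-typing} constrains trees and bounds nothing in $\gty{G}$. Finally, by \RuleName{Recvs-$\&$}, $\pr_j$ may receive from other participants before $\pp \to \pr_j$; what is excluded is sending.

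\textbf{Part (2): bisimilarity does not give the syntactic form.} The step where the bisimulation carries a $\treerecvkeyword_{\pp}$ node of the tree back to $t_{i'j}$ being a receive from $\pp$ does not work. A configuration's receive transition is the reduction rule \RuleName{CRecv}, which only enqueues and is always enabled. Consuming $m$ is a $\tauaction$-step (\RuleName{CCons}). So matching the tree's receive places no constraint on the syntactic shape of the computation. Moreover, for the configuration already holding $m$, the \RuleName{CRecv} clause of the interpretation means $\sem{\mathcal{C}}$ has already consumed $m$.

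What forces consumption is typing. By \cref{subject-reduction}, the configuration stays typed via \RuleName{CRecv}, with a finite $\reduceexternalsymbol$-derivation for the type of its inner part. You then need a computation-level progress argument (the paper's \cref{computation-progress}) together with induction on that derivation, as the paper does, to show that the computation itself reaches the receive. The intermediate communications are supplied by part (1).

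Lastly, with the paper's definition of fairness, enabledness at a single index already suffices, so the persistence argument in your last paragraph is not needed.
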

The idea behind the proof of liveness is that, if the global type $\gty{G}$ contains a communication $\pp \to \pq$ between two participants of the session, then at some point during the execution a $\pp \to \pq$ transition becomes available. We prove the latter by appealing to \cref{semantics-correct}, using the fact that such a transition becomes available in the model.
Fairness implies that the transition will be taken at some point.
For both cases of liveness, the required communication appears at some finite depth in $\gty{G}$, and thus happens somewhere along the reduction sequence.

\section{Related work}\label{sec:related-work}
\paragraph{Semantics of session types}
Originating from \cite{Honda1993,Honda1998},  
strong foundations of session types have been developed based on 
linear logic, exploiting a Curry-Howard correspondence between 
linear logic and typed session $\pi$-calculi \cite{Toninho13,CairesP10,DBLP:conf/icfp/Wadler12}. 
The most 
effective and advanced semantics of these use
\emph{logical relations} to tackle
various programming language features including 
parametricity \cite{CairesPPT13} and
higher-order functions \cite{Toninho13}.
Among them,
\cite{vandenheuvel_et_al:LIPIcs.ECOOP.2024.40},
which is built on \cite{balzer2023logicalrelationssessiontypedconcurrency}, 
enables tracking of cyclic dependencies of channels 
via classical logic session types, and applies this 
to information flow analysis. The work
in \cite{yao2024semanticlogicalrelationstimed}  
develops logical relations based on 
instuitionistic linear logic enriched with temporal predicates.
The logical relations works
are limited to \emph{binary} session types;
we do multiparty. 
Jacobs et al.~\cite{JacobsBK22a} introduce 
MPGV, a functional MPST language with multiparty
session types, based on the linear logic perspective. They use
separation logic to
define configuration invariants 
to maintain the acyclic nature of the communication topology and 
to establish subject reduction.
They support \emph{session delegation}, which we
leave to future work, but they do not
develop a (denotational) semantics for their calculus.
Castellani et al.~\cite{CastellaniDG23b,CastellaniDG24}
interpret asynchronous and synchronous multiparty sessions 
as \emph{flow event structures}.
They do not have asynchronous subtyping; our work is the first denotational semantics for MPST with asynchronous subtyping.
They also do not consider standard programming constructs 
such as sequencing, unlike us.
Moreover, their work focuses on interpreting \emph{sessions} and global protocols.
As such, they do not interpret (local) computations in the manner that we do, so their semantics cannot be used to reason about individual participants in isolation.
In a separate line of work,
Castellan and Yoshida~\cite{CY2019} use
a connection 
between linear logics and game semantics
to describe a fully abstract game semantics 
for binary session typed processes. They leave the extension to 
asynchrony and multiparty open. 
The aforementioned works do not study MPST from the perspective of computational effects.

\paragraph{Effects and session types}
Orchard and Yoshida~\cite{OY2016} show that one can encode a binary session-typed
$\pi$-calculus in a graded variant of PCF, session types being encoded
as grades, and vice-versa.
This work is orthogonal to ours.
Their graded PCF does not have message-passing, and they do not consider message-passing as a computational effect.
Thus they do not show how to track message-passing in an effectful calculus, nor do they provide a denotational semantics for session types.
Instead their motivation is about encodability results, and they do not study 
safety, deadlock-freedom and liveness properties.
Their work also considers only \emph{binary} session types (not multiparty), and does not consider asynchrony.
There are few other works that approach message-passing from the
perspective of computational effects.
Sanada~\cite{sanada2023category} uses message-passing to exemplify \emph{category-graded effect handlers}, but does not give a denotational semantics.
Marshall and Orchard~\cite{MARSHALL2024105234} take the linear logic
perspective on 
synchronous binary session types, and use grades to track linearity.
They observe that communication primitives are effects, but
do not use grades to track them directly,
and leave deadlock-freedom open (cf.\ \cite[Section~11]{MARSHALL2024105234}).

\paragraph{Asynchronous subtyping}
Recent work on asynchronous subtyping includes the undecidablity result of~\cite{bravetti2017undecidability},
works focused on taming this undecidablity~\cite{bravetti2018,BocchiK0025}, and mechanization~\cite{EY2024}.
Our formulation of asynchronous subtyping, by characterising when the protocol requires or permits a message to be sent or received, is entirely different to the formulations appearing in these works, and yet is equivalent to the sound and complete subtyping of \cite{GPPSY2023}.
Compared to \cite{GPPSY2023}, our formulation avoids any use of session \emph{trees}.
Session trees are infinite structures, and thus cause some difficulties when it comes to implementing subtyping~\cite{EY2024}; our reformulation shows that we do not need to involve session trees.
Moreover, session-tree unfolding is defined only for closed session types, and hence \cite{GPPSY2023} define subtyping only for closed session types.
By avoiding session trees, we are able to provide the first extension of this subtyping to non-closed session types.

\paragraph{Typed bisimulations}
Kouzapas et al.~study a bisimulation method which characterises
a typed contextual equality in a higher-order binary session $\pi$-calculus
\cite{KPY2017}, and applied it to measure the expressiveness of higher-order session processes \cite{kouzapas2019relative}. 
Kouzapas and Yoshida~\cite{kouzapas2014globally}
propose a typed bisimulation,
controlled by declared global types,
for a synchronous multiparty session $\pi$-calculus.
We use typed bisimulations to justify the correctness of
our denotational semantics.

\paragraph{Models of computational effects}
Kavvos~\cite{kavvos2025adequacy}, improving on some earlier work by Plotkin and Power~\cite{plotkin2001adequacy}, gives a general adequacy result for computational effects.
Our adequacy result (\cref{adequacy}) may follow from a graded variant of Kavvos's result, but there is none in the literature yet.
There are various monadic models of shared-state concurrency~\cite{benton2016effect,dvir2024denotational,rivas2024concurrent,dvir2025two}, as opposed to message-passing;
these bear little resemblance to our model.

\section{Conclusions}
This work is the first to provide a formal mathematical model for reasoning about asynchronous message-passing computation.
We show that every multiparty session type $\sty{T}$ can be interpreted a set of \emph{computation trees}, and thus that computation trees provide a basis for reasoning about asynchrony, even without message queues.
Computation trees provide an adequate denotational semantics for a simple call-by-value programing language with message-passing as a computational effect, namely \lambdasmp{}.
Since it is based on well-studied tools for studying computational effects, we expect that we can add more programming features to \lambdasmp{} without much difficulty.
We also hope our computational-effects perspective on session types will enable the application of more of the vast computational effects literature to session types.
Asynchronous session subtyping is a particular focus of our work.
It is known to be difficult to reason about, but we have found our reformulation to be helpful with such reasoning.

\begin{credits}
\subsubsection{\ackname} We thank Jessica Richards, and the anonymous reviewers, for helpful comments. This work was supported by EPSRC EP/T006544/2, EP/T014709/2, EP/Z533749/1, ARIA and Horizon EU TaRDIS 101093006 (UKRI number 10066667).

\end{credits}

\bibliographystyle{plain}
\bibliography{references}

@inproceedings{VeryGentle,
  author = {Yoshida, Nobuko and Gheri, Lorenzo},
  booktitle = {Distributed Computing and Internet Technology},
  editor = {Hung, Dang Van and D{\textasciiacute}Souza, Meenakshi},
  isbn = {978-3-030-36987-3},
  pages = {73--93},
  publisher = {Springer International Publishing},
  title = {A Very Gentle Introduction to Multiparty Session Types},
  year = {2020}
}

@article{SynchronousSubtyping,
  author = {Silvia Ghilezan and Svetlana Jak{\v s}i{\'c} and Jovanka Pantovi{\'c} and Alceste Scalas and Nobuko Yoshida},
  doi = {https://doi.org/10.1016/j.jlamp.2018.12.002},
  issn = {2352-2208},
  journal = {Journal of Logical and Algebraic Methods in Programming},
  pages = {127-173},
  title = {Precise subtyping for synchronous multiparty sessions},
  volume = {104},
  year = {2019},
}

@inproceedings{OY2016,
  author = "Dominic Orchard and Nobuko Yoshida",
  title = "Effects as sessions, sessions as effects",
  booktitle = "POPL 2016",
  publisher = "ACM",
  year = "2016"
}

@misc{balzer2023logicalrelationssessiontypedconcurrency,
  title={Logical Relations for Session-Typed Concurrency}, 
  author={Stephanie Balzer and Farzaneh Derakhshan and Robert Harper and Yue Yao},
  year={2023},
  eprint={2309.00192},
  archivePrefix={arXiv},
  primaryClass={cs.PL},
}

@inproceedings{CairesP10,
  author    = {Lu\'{\i}s Caires and
               Frank Pfenning},
  title     = {Session Types as Intuitionistic Linear Propositions},
  booktitle = {Proceedings of CONCUR 2010},
  year      = {2010},
  pages     = {222-236},
  publisher = {Springer},
  series    = {LNCS},
  volume    = {6269},
}

@inproceedings{DBLP:conf/icfp/Wadler12,
  author    = {Philip Wadler},
  title     = {Propositions as sessions},
  booktitle = {Proceedings of {ICFP} 2012},
  pages     = {273--286},
  year      = {2012},
  publisher = {ACM},
  doi       = {10.1145/2364527.2364568},
}

@inproceedings{Honda1993,
  title = {Types for Dyadic Interaction},
  booktitle = {{{CONCUR}}'93},
  author = {Honda, Kohei},
  editor = {Best, Eike},
  year = {1993},
  pages = {509--523},
  publisher = {Springer},
  doi = {10.1007/3-540-57208-2_35},
  isbn = {978-3-540-47968-0},
  langid = {english},
}

@inproceedings{Honda1998,
  title = {Language {{Primitives}} and {{Type Discipline}} for {{Structured Communication-Based Programming}}},
  booktitle = {Programming {{Languages}} and {{Systems}} - {{ESOP}}'98, 7th {{European Symposium}} on {{Programming}}, {{Held}} as {{Part}} of the {{European Joint Conferences}} on the {{Theory}} and {{Practice}} of {{Software}}, {{ETAPS}}'98, {{Lisbon}}, {{Portugal}}, {{March}} 28 - {{April}} 4, 1998, {{Proceedings}}},
  author = {Honda, Kohei and Vasconcelos, Vasco Thudichum and Kubo, Makoto},
  editor = {Hankin, Chris},
  year = {1998},
  series = {Lecture {{Notes}} in {{Computer Science}}},
  volume = {1381},
  pages = {122--138},
  publisher = {Springer},
  doi = {10.1007/BFB0053567},
}

@article{HYC2016,
  author    = {Kohei Honda and
               Nobuko Yoshida and
               Marco Carbone},
  title     = {Multiparty Asynchronous Session Types},
  journal   = {Journal of the {ACM}},
  volume    = {63},
  number    = {1},
  pages     = {9:1--9:67},
  year      = {2016},
  doi       = {10.1145/2827695},
}

@article{Scalas2019,
  title = {Less Is More: Multiparty Session Types Revisited},
  shorttitle = {Less Is More},
  author = {Scalas, Alceste and Yoshida, Nobuko},
  year = {2019},
  month = jan,
  journal = {Proceedings of the ACM on Programming Languages},
  volume = {3},
  number = {POPL},
  pages = {1--29},
  issn = {2475-1421},
  doi = {10.1145/3290343},
}

@InProceedings{Toninho13,
author="Toninho, Bernardo
and Caires, Luis
and Pfenning, Frank",
editor="Felleisen, Matthias
and Gardner, Philippa",
title="Higher-Order Processes, Functions, and Sessions: A Monadic Integration",
booktitle="Programming Languages and Systems",
year="2013",
publisher="Springer Berlin Heidelberg",
pages="350--369",
isbn="978-3-642-37036-6"
}

@article{GPPSY2023,
author = {Ghilezan, Silvia and Pantovi\'{c}, Jovanka and Proki\'{c}, Ivan and Scalas, Alceste and Yoshida, Nobuko},
title = {Precise Subtyping for Asynchronous Multiparty Sessions},
year = {2023},
issue_date = {April 2023},
publisher = {Association for Computing Machinery},
volume = {24},
number = {2},
issn = {1529-3785},
doi = {10.1145/3568422},
journal = {ACM Trans. Comput. Logic},
month = {nov},
articleno = {14},
numpages = {73},
}

@article{MARSHALL2024105234,
title = {Non-linear communication via graded modal session types},
journal = {Information and Computation},
volume = {301},
pages = {105234},
year = {2024},
issn = {0890-5401},
doi = {https://doi.org/10.1016/j.ic.2024.105234},
author = {Danielle Marshall and Dominic Orchard},
}

@article{CY2019,
author = {Castellan, Simon and Yoshida, Nobuko},
title = {Two sides of the same coin: session types and game semantics: a synchronous side and an asynchronous side},
year = {2019},
issue_date = {January 2019},
publisher = {Association for Computing Machinery},
volume = {3},
number = {POPL},
doi = {10.1145/3290340},
journal = {Proc. ACM Program. Lang.},
month = jan,
articleno = {27},
numpages = {29},
}

@InProceedings{vandenheuvel_et_al:LIPIcs.ECOOP.2024.40,
  author =  {van den Heuvel, Bas and Derakhshan, Farzaneh and Balzer, Stephanie},
  title =  {{Information Flow Control in Cyclic Process Networks}},
  booktitle =  {38th European Conference on Object-Oriented Programming (ECOOP 2024)},
  pages =  {40:1--40:30},
  series =  {Leibniz International Proceedings in Informatics (LIPIcs)},
  ISBN =  {978-3-95977-341-6},
  ISSN =  {1868-8969},
  year =  {2024},
  volume =  {313},
  editor =  {Aldrich, Jonathan and Salvaneschi, Guido},
  publisher =  {Schloss Dagstuhl -- Leibniz-Zentrum f{\"u}r Informatik},
  doi =    {10.4230/LIPIcs.ECOOP.2024.40},
}

@article{yao2024semanticlogicalrelationstimed,
author = {Yao, Yue and Iraci, Grant and Chuang, Cheng-En and Balzer, Stephanie and Ziarek, Lukasz},
title = {Semantic Logical Relations for Timed Message-Passing Protocols},
year = {2025},
issue_date = {January 2025},
publisher = {Association for Computing Machinery},
volume = {9},
number = {POPL},
doi = {10.1145/3704895},
journal = {Proc. ACM Program. Lang.},
month = jan,
articleno = {59},
numpages = {32},
}

@inproceedings{CairesPPT13,
  author    = {Lu\'{\i}s Caires and
               Jorge A. P{\'e}rez and
               Frank Pfenning and
               Bernardo Toninho},
  title     = {Behavioral Polymorphism and Parametricity in Session-Based
               Communication},
  booktitle = {ESOP},
  year      = {2013},
  pages     = {330-349},
  publisher = {Springer},
  series    = {LNCS},
  volume    = {7792},
}

@article{JacobsBK22a,
  author    = {Jules Jacobs and
               Stephanie Balzer and
               Robbert Krebbers},
  title     = {Multiparty {GV:} functional multiparty session types with certified
               deadlock freedom},
  journal   = {Proc. {ACM} Program. Lang.},
  volume    = {6},
  number    = {{ICFP}},
  pages     = {466--495},
  year      = {2022},
  doi       = {10.1145/3547638},
}

@article{borceux2005internal,
  title={Internal object actions},
  author={Borceux, Francis and Janelidze, George and Kelly, G Max},
  journal={Comment.\ Math.\ Univ.\ Carolin.},
  volume={46},
  number={2},
  pages={235--255},
  year={2005},
}

@article{smirnov2008graded,
  title={Graded Monads and Rings of Polynomials},
  author={Smirnov, A.L.},
  journal = {J. Math. Sci.},
  volume={151},
  number={3},
  pages={3032--3051},
  year={2008},
  publisher={Springer},
  doi = {10.1007/s10958-008-9013-7}
}

@misc{mellies2012parametric,
  title={Parametric Monads and Enriched Adjunctions},
  author={Melli{\`e}s, Paul-Andr{\'e}},
  howpublished={Manuscript},
  year={2012},
}

@incollection{katsumata2014parametric,
  title={Parametric Effect Monads and Semantics of Effect Systems},
  author={Katsumata, {Shin-ya}},
  booktitle={Proc.\ of 41st Ann.\ ACM SIGPLAN-SIGACT Symp.\ on Principles of Programming Languages, POPL~'14, San Diego, CA, USA, January 20-21, 2014},
  pages={633--645},
  publisher ={ACM},
  year={2014},
  doi ={10.1145/2535838.2535846}
}

@article{orchard2019quantitative,
author = {Orchard, Dominic and Liepelt, Vilem-Benjamin and Eades III, Harley},
title = {Quantitative program reasoning with graded modal types},
year = {2019},
publisher = {ACM},
volume = {3},
number = {ICFP},
doi = {10.1145/3341714},
journal = {Proc. ACM Program. Lang.},
month = jul,
articleno = {110},
numpages = {30},
}

@article{levy2003modelling,
  title = {Modelling environments in call-by-value programming languages},
  journal = {Information and Computation},
  volume = {185},
  number = {2},
  pages = {182--210},
  year = {2003},
  doi = {https://doi.org/10.1016/S0890-5401(03)00088-9},
  author = {Paul Blain Levy and John Power and Hayo Thielecke},
}

@article{sanada2023category,
  title={Category-Graded Algebraic Theories and Effect Handlers},
  author={Sanada, Takahiro},
  journal={Electronic Notes in Theoretical Informatics and Computer Science},
  volume={1},
  year={2023}
}

@inproceedings{dvir2024denotational,
  title={A Denotational Approach to Release/Acquire Concurrency},
  author={Dvir, Yotam and Kammar, Ohad and Lahav, Ori},
  booktitle={European Symposium on Programming},
  pages={121--149},
  year={2024},
  organization={Springer}
}

@inproceedings{rivas2024concurrent,
  title={Concurrent monads for shared state},
  author={Rivas, Exequiel and Uustalu, Tarmo},
  booktitle={Proceedings of the 26th International Symposium on Principles and Practice of Declarative Programming},
  pages={1--13},
  year={2024}
}

@inproceedings{benton2016effect,
  title={Effect-dependent transformations for concurrent programs},
  author={Benton, Nick and Hofmann, Martin and Nigam, Vivek},
  booktitle={Proceedings of the 18th International Symposium on Principles and Practice of Declarative Programming},
  pages={188--201},
  year={2016}
}

@article{CastellaniDG24,
  author       = {Ilaria Castellani and
                  Mariangiola Dezani{-}Ciancaglini and
                  Paola Giannini},
  title        = {Global Types and Event Structure Semantics for Asynchronous Multiparty
                  Sessions},
  journal      = {Fundam. Informaticae},
  volume       = {192},
  number       = {1},
  pages        = {1--75},
  year         = {2024},
  doi          = {10.3233/FI-242188},
}

@article{CastellaniDG23b,
  author       = {Ilaria Castellani and
                  Mariangiola Dezani{-}Ciancaglini and
                  Paola Giannini},
  title        = {Event structure semantics for multiparty sessions},
  journal      = {J. Log. Algebraic Methods Program.},
  volume       = {131},
  pages        = {100844},
  year         = {2023},
  doi          = {10.1016/J.JLAMP.2022.100844},
}

@inproceedings{kammar2012algebraic,
  title={Algebraic foundations for effect-dependent optimisations},
  author={Kammar, Ohad and Plotkin, Gordon D},
  booktitle={Proceedings of the 39th annual ACM SIGPLAN-SIGACT symposium on Principles of programming languages},
  pages={349--360},
  year={2012}
}

@inproceedings{plotkin2009handlers,
  title={Handlers of algebraic effects},
  author={Plotkin, Gordon and Pretnar, Matija},
  booktitle={European Symposium on Programming},
  pages={80--94},
  year={2009},
  organization={Springer}
}

@inproceedings{dvir2025two,
  title={Two-sorted algebraic decompositions of Brookes’s shared-state denotational semantics},
  author={Dvir, Yotam and Kammar, Ohad and Lahav, Ori and Plotkin, Gordon},
  booktitle={Foundations of Software Science and Computation Structures: 28th International Conference, FoSSaCS 2025, Held as Part of the International Joint Conferences on Theory and Practice of Software, ETAPS 2025, Hamilton, ON, Canada, May 3--8, 2025, Proceedings},
  volume={15691},
  pages={377},
  year={2025},
  organization={Springer}
}

@article{bravetti2017undecidability,
  title={Undecidability of asynchronous session subtyping},
  author={Bravetti, Mario and Carbone, Marco and Zavattaro, Gianluigi},
  journal={Information and Computation},
  volume={256},
  pages={300--320},
  year={2017},
  publisher={Elsevier}
}

@article{bravetti2018,
title = {On the boundary between decidability and undecidability of asynchronous session subtyping},
journal = {Theoretical Computer Science},
volume = {722},
pages = {19-51},
year = {2018},
issn = {0304-3975},
doi = {https://doi.org/10.1016/j.tcs.2018.02.010},
author = {Mario Bravetti and Marco Carbone and Gianluigi Zavattaro},
}

@inproceedings{BocchiK0025,
  author       = {Laura Bocchi and
                  Andy King and
                  Maurizio Murgia and
                  Simon Thompson},
  editor       = {Patricia Bouyer and
                  Jaco van de Pol},
  title        = {Abstract Subtyping for Asynchronous Multiparty Sessions},
  booktitle    = {36th International Conference on Concurrency Theory, {CONCUR} 2025,
                  August 26-29, 2025, Aarhus, Denmark},
  series       = {LIPIcs},
  volume       = {348},
  pages        = {10:1--10:19},
  publisher    = {Schloss Dagstuhl - Leibniz-Zentrum f{\"{u}}r Informatik},
  year         = {2025},
  doi          = {10.4230/LIPICS.CONCUR.2025.10},
}

@inproceedings{EY2024,
  author = {Burak Ekici and Nobuko Yoshida},
  title = {{Completeness of Asynchronous Session Tree Subtyping in Coq}},
  booktitle = {15th International Conference on Interactive Theorem Proving, 2024, Tbilisi, Georgia},
  series = {LIPIcs},
  volume = {309},
  pages = {6:1--6:20},
  publisher = {Schloss Dagstuhl - Leibniz-Zentrum f{"{u}}r Informatik},
  year = 2024
}

@article{DBLP:journals/pacmpl/Castro-PerezY20,
  author       = {David Castro{-}Perez and
                  Nobuko Yoshida},
  title        = {{CAMP:} cost-aware multiparty session protocols},
  journal      = {Proc. {ACM} Program. Lang.},
  volume       = {4},
  number       = {{OOPSLA}},
  pages        = {155:1--155:30},
  year         = {2020},
  doi          = {10.1145/3428223},
}

@inproceedings{DBLP:conf/ppopp/CutnerYV22,
  author       = {Zak Cutner and
                  Nobuko Yoshida and
                  Martin Vassor},
  editor       = {Jaejin Lee and
                  Kunal Agrawal and
                  Michael F. Spear},
  title        = {Deadlock-free asynchronous message reordering in {R}ust with multiparty
                  session types},
  booktitle    = {PPoPP '22: 27th {ACM} {SIGPLAN} Symposium on Principles and Practice
                  of Parallel Programming, Seoul, Republic of Korea, April 2 - 6, 2022},
  pages        = {246--261},
  publisher    = {{ACM}},
  year         = {2022},
  doi          = {10.1145/3503221.3508404},
}

@article{KPY2017,
  author = {Dimitrios Kouzapas and Jorge A. Pérez and Nobuko Yoshida},
  title = {{Characteristic Bisimulation for Higher-Order Session Processes}},
  journal = {Acta Informatica},
  volume = {54},
  issue = {3},
  pages = {271--341},
  publisher = {Springer},
  doi = "10.1007/s00236-016-0289-7",
  year = 2017
}

@article{kouzapas2014globally,
  title={Globally governed session semantics},
  author={Kouzapas, Dimitrios and Yoshida, Nobuko},
  journal={Logical Methods in Computer Science},
  volume={10},
  year={2014},
  publisher={Episciences. org}
}

@article{kouzapas2019relative,
  title={On the relative expressiveness of higher-order session processes},
  author={Kouzapas, Dimitrios and P{\'e}rez, Jorge A and Yoshida, Nobuko},
  journal={Information and Computation},
  volume={268},
  pages={104433},
  year={2019},
  publisher={Elsevier}
}

@article{kavvos2025adequacy,
  title={Adequacy for Algebraic Effects Revisited},
  author={Kavvos, GA},
  journal={Proceedings of the ACM on Programming Languages},
  volume={9},
  number={OOPSLA1},
  pages={927--955},
  year={2025},
  publisher={ACM}
}

@inproceedings{plotkin2001adequacy,
  title={Adequacy for algebraic effects},
  author={Plotkin, Gordon and Power, John},
  booktitle={International Conference on Foundations of Software Science and Computation Structures},
  pages={1--24},
  year={2001},
  organization={Springer}
}

\appendix
\section{Detailed proofs}
\subsection{Proofs for \cref{sec:session-types}}

Before giving the proofs for this section, we slightly elaborate on our definition of subtyping.
\begin{definition}\label{asynchronous-type-pre-simulation}
    Let $\Theta$ be a set of session type variables.
    An \emph{asynchronous type pre-simulation}\footnotemark{} is a relation $R$ between session types over $\Theta$, such that the following conditions hold when $\sty{T} \,R\, \sty{U}$.
    \footnotetext{We call this a \emph{pre-simulation} rather than a simulation, because the premises of (1) -- (4) only look at the syntactically outermost choice in the session types. A \emph{simulation} would require, for instance, that $\Recvs{\pp}{(\sty{T})}$ implies $\Recvs{\pp}{(\sty{U})}$ (and indeed subtyping does satisfy this stronger condition -- see \cref{subtype-lifting}).}
    \begin{enumerate}
        \item If $\unfold{\sty{T}} = \ltypesend{\pp}{\ell_i}{\basetype_i}{\sty{T}_i}{i \in I}$, then for every $i \in I$, there is some $\sty{U}_i$ such that $\reduceinternal{\sty{U}}{\sty{U}_i}{\pp}{\ell_i}{\basetype_i}$ and $\sty{T}_i \,R\, \sty{U}_i$.
        \item If $\unfold{\sty{T}} = \ltyperecv{\pp}{\ell_i}{\basetype_i}{\sty{T}_i}{i \in I}$, then $\Recvs{\pp}(\sty{U})$.
        \item If $\unfold{\sty{U}} = \ltypesend{\pp}{\ell_i}{\basetype_i}{\sty{U}_i}{i \in I}$, then $\Sends{\pp}(\sty{T})$.
        \item If $\unfold{\sty{U}} = \ltyperecv{\pp}{\ell_i}{\basetype_i}{\sty{U}_i}{i \in I}$, then for every $i \in I$, there is some $\sty{T}_i$ such that $\reduceexternal{\sty{T}}{\sty{T}_i}{\pp}{\ell_i}{\basetype_i}$ and $\sty{T}_i \,R\, \sty{U}_i$.
        \item For every $\ltypevar{X} \in \Theta$, we have $\unfold{\sty{T}} = \ltypevar{X}$ if and only if $\unfold{\sty{U}} = \ltypevar{X}$.
    \end{enumerate}
\end{definition}
Asynchronous subtyping $\subtype_\Theta$ is, by definition, the largest asynchronous type pre-simulation, in the sense that it is an asynchronous type pre-simulation, and every other such asynchronous type simulation is contained in $\subtype_\Theta$.
Thus, to show that $\sty{T} \subtype_\Theta \sty{U}$ holds for a specific $\sty{T}$ and $\sty{U}$, it suffices to exhibit an asynchronous type pre-simulation $R$ such that $\sty{T} \,R\, \sty{U}$ holds.
Concretely, we can construct $\subtype_\Theta$ as
\[
  \subtype_\Theta = \bigcup \{R \mid R~\text{is an asynchronous type pre-simulation}\}
\]
This union is an asynchronous type pre-simulation by the following lemma, and is clearly the largest such.
\begin{lemma}\label{union-pre-simulation}
  Fix a set $\Theta$ of session type variables.
  If $\mathcal{R}$ is a set of asynchronous type pre-simulations, then the union $\bigcup \mathcal{R}$ is also an asynchronous type pre-simulation.
  Hence $\subtype_\Theta$ is the union
  \[
    \bigcup \{R \mid R~\text{is an asynchronous type pre-simulation}\}
  \]
\end{lemma}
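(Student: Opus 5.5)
The plan is a direct clause-by-clause check of \cref{asynchronous-type-pre-simulation} against the union. The key observation is that each clause has the form ``if $\sty{T} \,R\, \sty{U}$ and some condition holds, then a certain conclusion holds''. That conclusion either does not mention $R$ at all, or mentions $R$ only positively, in the form ``there exists $\sty{T}_i$ with $\sty{T}_i \,R\, \sty{U}_i$''. Such properties are preserved by enlarging $R$, and hence by taking unions.

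First, I will let $S = \bigcup \mathcal{R}$ and suppose $\sty{T} \,S\, \sty{U}$. Then there is some $R \in \mathcal{R}$ with $\sty{T} \,R\, \sty{U}$, and $R$ is an asynchronous type pre-simulation.

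\begin{itemize}
  \item Clauses (2), (3) and (5) only involve $\unfold{}$, $\Recvs{\pp}$ and $\Sends{\pp}$ of $\sty{T}$ and $\sty{U}$. They therefore transfer verbatim from $R$ to $S$.
  \item For clause (1), suppose $\unfold{\sty{T}} = \ltypesend{\pp}{\ell_i}{\basetype_i}{\sty{T}_i}{i \in I}$. Since $R$ is a pre-simulation, for each $i \in I$ there is some $\sty{U}_i$ with $\reduceinternal{\sty{U}}{\sty{U}_i}{\pp}{\ell_i}{\basetype_i}$ and $\sty{T}_i \,R\, \sty{U}_i$. Because $R \subseteq S$, we also have $\sty{T}_i \,S\, \sty{U}_i$.
  \item Clause (4) is handled symmetrically, using $\reduceexternalsymbol$ in place of $\reduceinternalsymbol$.
\end{itemize}

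Thus $S$ is an asynchronous type pre-simulation. The empty family is harmless here: it gives the empty relation, which satisfies all clauses vacuously.

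For the ``hence'' part, I will take $\mathcal{R}$ to be the set of all asynchronous type pre-simulations over $\Theta$. This is a set, since each member is a subset of the set of pairs of session types over $\Theta$. By the first part, its union $S$ is a pre-simulation. Every pre-simulation is contained in $S$ by construction, so $S$ is the largest pre-simulation, which by \cref{def:subtyping} is $\subtype_\Theta$.

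There is no real obstacle. The only point needing care is noting that no clause refers to $R$ negatively (for instance ``not $\sty{T}' \,R\, \sty{U}'$''), so monotonicity in $R$ holds.
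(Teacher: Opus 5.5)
Your proof is correct and follows the same route as the paper, which just asserts that the union satisfies each of the five conditions and is therefore the largest pre-simulation. Your clause-by-clause check, observing that $R$ occurs only positively (in clauses (1) and (4)) and so membership transfers along $R \subseteq \bigcup \mathcal{R}$, is exactly the verification the paper leaves as ``easy''.
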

\begin{proof}
  It is easy to verify that $\bigcup \mathcal{R}$ satisfies each of the five conditions of \cref{asynchronous-type-pre-simulation}.
  Clearly taking the union over all asynchronous type pre-simulations gives the largest such, namely $\subtype_\Theta$.
\end{proof}

\begin{lemma}\label{inductive-predicates-substitution}
  Let $\pp$ be a participant, and let $\sty{V}_1, \dots, \sty{V}_n$ be session types.
  Also let $\sty{U}$ be a session type, and define $\sty{T} = \subst{\sty{U}}{\ltypevar{X_1} \mapsto \sty{V}_1, \dots, \ltypevar{X_n} \mapsto \sty{V}_n}$.
  \begin{enumerate}
    \item If $\reduceinternal{\sty{U}}{\sty{U}'}{\pp}{\ell}{\basetype}$, then $\reduceinternal{\sty{T}}{\sty{T}'}{\pp}{\ell}{\basetype}$, where $\sty{T}' = \subst{\sty{U}'}{\ltypevar{X_1} \mapsto \sty{V}_1, \dots, \ltypevar{X_n} \mapsto \sty{V}_n}$.
    \item If $\Recvs{\pp}(\sty{U})$, then $\Recvs{\pp}(\sty{T})$.
    \item If $\reduceexternal{\sty{U}}{\sty{U}'}{\pp}{\ell}{\basetype}$, then $\reduceexternal{\sty{T}}{\sty{T}'}{\pp}{\ell}{\basetype}$, where $\sty{T}' = \subst{\sty{U}'}{\ltypevar{X_1} \mapsto \sty{V}_1, \dots, \ltypevar{X_n} \mapsto \sty{V}_n}$.
    \item If $\Sends{\pp}(\sty{U})$, then $\Sends{\pp}(\sty{T})$.
  \end{enumerate}
\end{lemma}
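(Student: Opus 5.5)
We prove all four parts by induction on the derivation of the hypothesis ($\reduceinternal{\sty{U}}{\sty{U}'}{\pp}{\ell}{\basetype}$, $\Recvs{\pp}(\sty{U})$, $\reduceexternal{\sty{U}}{\sty{U}'}{\pp}{\ell}{\basetype}$, or $\Sends{\pp}(\sty{U})$), using the rules of \cref{fig:inductive}. Write $\theta$ for the substitution $\ltypevar{X_1} \mapsto \sty{V}_1, \dots, \ltypevar{X_n} \mapsto \sty{V}_n$. We may assume, by $\alpha$-renaming bound type variables, that no bound variable of $\sty{U}$ is among the $\ltypevar{X_i}$ or free in any $\sty{V}_j$. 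Then substitution commutes with every type constructor: $\subst{(\ltypesend{\pq}{\ell_i}{\basetype_i}{\sty{U}_i}{i \in I})}{\theta} = \ltypesend{\pq}{\ell_i}{\basetype_i}{\subst{\sty{U}_i}{\theta}}{i \in I}$, similarly for $\&$, and $\subst{(\ltyperec{\ltypevar{Y}}{\sty{W}})}{\theta} = \ltyperec{\ltypevar{Y}}{\subst{\sty{W}}{\theta}}$.

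Note that no rule has a type variable as its subject. Hence the case $\sty{U} = \ltypevar{X_i}$, in which the substitution would change the head of the type, never occurs. This is why the lemma holds with no assumptions on the $\sty{V}_i$.

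For (1), the cases run as follows.
\begin{itemize}
\item \RuleName{$\reduceinternalsymbol$-base}: the substituted type is again an internal choice on $\pp$. The same rule gives $\reduceinternal{\sty{T}}{\subst{\sty{U}_i}{\theta}}{\pp}{\ell_i}{\basetype_i}$.
\item \RuleName{$\reduceinternalsymbol$-$\oplus$}: apply the induction hypothesis to each premise $i \in J$, then reapply the rule with the same $J$ and the side condition $\pp \neq \pq$. The resulting type is $\ltypesend{\pq}{\ell'_i}{\basetype'_i}{\subst{\sty{U}'_i}{\theta}}{i \in J}$, which is $\subst{(\ltypesend{\pq}{\ell'_i}{\basetype'_i}{\sty{U}'_i}{i \in J})}{\theta}$, as required.
\item \RuleName{$\reduceinternalsymbol$-$\&$}: identical to the previous case, over all $i \in I$.
\item \RuleName{$\reduceinternalsymbol$-rec}: here $\sty{U} = \ltyperec{\ltypevar{Y}}{\sty{W}}$, and the premise concerns $\subst{\sty{W}}{\ltypevar{Y} \mapsto \ltyperec{\ltypevar{Y}}{\sty{W}}}$. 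The induction hypothesis gives a reduction from $\subst{\subst{\sty{W}}{\ltypevar{Y} \mapsto \ltyperec{\ltypevar{Y}}{\sty{W}}}}{\theta}$. By the standard substitution lemma, and since $\ltypevar{Y}$ is fresh for $\theta$, this equals $\subst{(\subst{\sty{W}}{\theta})}{\ltypevar{Y} \mapsto \ltyperec{\ltypevar{Y}}{\subst{\sty{W}}{\theta}}}$. Applying \RuleName{$\reduceinternalsymbol$-rec} then yields the claim for $\ltyperec{\ltypevar{Y}}{\subst{\sty{W}}{\theta}} = \sty{T}$, with the same target $\subst{\sty{U}'}{\theta}$.
\end{itemize}

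Part (3) is the exact dual of (1) and follows by the same case analysis. Parts (2) and (4) are simpler, since there is no target type to track. The base, choice and rec cases go through directly; in the rec case, the premise $\Recvs{\pp}(\sty{W})$ (respectively $\Sends{\pp}(\sty{W})$) concerns the body itself, so we only need that substitution commutes with $\mu$.

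The only real obstacle is the recursion case of (1) and (3), namely the substitution-composition identity. It requires care with variable capture and with guardedness being preserved. I would state this identity explicitly as a standard syntactic lemma, proved by induction on $\sty{W}$ under the freshness convention, and then invoke it.
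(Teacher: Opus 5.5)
Your proposal is correct and takes the same approach as the paper, whose proof is just the remark that each part is a trivial induction. You do the induction on the derivation; the only non-immediate step is the recursion case of (1) and (3), and you correctly discharge it with the standard substitution-composition identity under the freshness convention, while the rec cases of (2) and (4) need only that substitution commutes with $\mu$.
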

\begin{proof}
  Each of the proofs is a trivial induction.
\end{proof}

\begin{lemma}\label{inductive-predicates-substitution-backwards}
  Let $\pp$ be a participant, and let $\sty{V}_1, \dots, \sty{V}_n$ be session types.
  Let $\sty{U}$ be a session type with free type variables $\ltypevar{X_1}, \dots, \ltypevar{X_n}$, and define $\sty{U}' = \subst{\sty{U}}{\ltypevar{X_1} \mapsto \sty{V}_1, \dots, \ltypevar{X_n} \mapsto \sty{V}_n}$.
  \begin{enumerate}
    \item If $\Recvs{\pp}(\sty{U}')$, then either (a) $\Recvs{\pp}(\sty{V}_k)$ for some $k$, or (b) $\Recvs{\pp}(\sty{U})$.
    \item If $\Sends{\pp}(\sty{U}')$, then either (a) $\Sends{\pp}(\sty{V}_k)$ for some $k$, or (b) $\Sends{\pp}(\sty{U})$.
  \end{enumerate}
\end{lemma}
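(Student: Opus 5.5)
We proceed by induction on the derivation of $\Recvs{\pp}(\sty{U}')$ (respectively $\Sends{\pp}(\sty{U}')$), generalising over $\sty{U}$. We then do a case analysis on the syntactic shape of $\sty{U}$. The two parts are exact duals, obtained by swapping internal and external choice, so we describe only part (1).

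The first case is when $\sty{U}$ is a type variable $\ltypevar{X_k}$. Then $\sty{U}' = \sty{V}_k$, and (a) holds immediately. If $\sty{U}$ is a variable not among the $\ltypevar{X_i}$, or is $\ltypeend$, then $\sty{U}' = \sty{U}$ is not a choice or a recursive type. No rule concludes $\Recvs{\pp}$ for such a type, so this case is vacuous.

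Otherwise the outermost constructor of $\sty{U}'$ coincides with that of $\sty{U}$, and we split on the last rule used.
\begin{itemize}
\item \RuleName{Recvs-base}: we have $\sty{U}' = \ltyperecv{\pp}{\ell_i}{\basetype_i}{\sty{U}'_i}{i \in I}$, so $\sty{U}$ is also an external choice on $\pp$, and (b) follows by \RuleName{Recvs-base}.
\item \RuleName{Recvs-$\&$}: $\sty{U}$ is an external choice on some $\pq$ whose branches $\sty{U}_j$ satisfy $\sty{U}'_j = \subst{\sty{U}_j}{\ltypevar{X_1} \mapsto \sty{V}_1, \dots}$. The induction hypothesis gives, for each branch $j$, either $\Recvs{\pp}(\sty{V}_k)$ for some $k$, or $\Recvs{\pp}(\sty{U}_j)$. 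If any branch yields (a), we conclude (a). Otherwise every branch yields (b), and \RuleName{Recvs-$\&$} gives $\Recvs{\pp}(\sty{U})$.
\item \RuleName{Recvs-rec}: $\sty{U} = \ltyperec{\ltypevar{Y}}{\sty{W}}$, where we choose $\ltypevar{Y}$ fresh for the $\sty{V}_i$ and distinct from the $\ltypevar{X_i}$ by $\alpha$-renaming. Then $\sty{U}' = \ltyperec{\ltypevar{Y}}{\sty{W}'}$ with $\sty{W}' = \subst{\sty{W}}{\ltypevar{X_i} \mapsto \sty{V}_i}$, and the premise is $\Recvs{\pp}(\sty{W}')$.
\end{itemize}

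The main obstacle is the recursion case. The problem is that $\sty{W}$ has the free variable $\ltypevar{Y}$ in addition to the $\ltypevar{X_i}$, which does not fit the hypothesis "free type variables $\ltypevar{X_1}, \dots, \ltypevar{X_n}$". I would handle it by strengthening the statement so that $\sty{U}$ may have additional free variables left unsubstituted. The proof goes through unchanged, since an unsubstituted variable falls into the vacuous case. Equivalently, we can substitute $\ltypevar{Y} \mapsto \ltypevar{Y}$ and note that $\Recvs{\pp}(\ltypevar{Y})$ is false.

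With this strengthening, the induction hypothesis applied to $\sty{W}$ gives either $\Recvs{\pp}(\sty{V}_k)$, which is (a), or $\Recvs{\pp}(\sty{W})$. In the latter case \RuleName{Recvs-rec} yields $\Recvs{\pp}(\sty{U})$, which is (b). Note that the \RuleName{Recvs-rec} rule has premise $\Recvs{\pp}(\sty{T})$ on the body, not on the unfolding, so no unfolding or substitution by the recursive type is needed here. This is what makes the induction well-founded.
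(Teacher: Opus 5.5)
Your proof is correct and is essentially the paper's argument. The paper inducts on the structure of $\sty{U}$ rather than on the derivation of $\Recvs{\pp}(\sty{U}')$, but the $\Recvs{\pp}$ rules are syntax-directed, so the two give the same case analysis. In the recursive case the paper does exactly what you propose: it extends the substitution with $\ltypevar{X}_{n+1} \mapsto \ltypevar{X}_{n+1}$ and uses that $\Recvs{\pp}$ never holds of a bare type variable.
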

\begin{proof}
  We give the proof of (1); the proof of (2) is similar.

  We proceed by induction on the structure of $\sty{U}$.
  \begin{itemize}
    \item If $\sty{U}$ is a type variable, then we have (a) trivially.
    \item If $\sty{U} = \ltypesend{\pq}{\ell_j}{\basetype_j}{\sty{U}_j}{j \in J}$, then we do not have $\Recvs{\pp}(\sty{U}')$; this is a contradiction.
    \item If $\sty{U} = \ltyperecv{\pq}{\ell_j}{\basetype_j}{\sty{U}_j}{j \in J}$, then we have $\Recvs{\pp}(\subst{\sty{U}_j}{\ltypevar{X_1} \mapsto \sty{V}_1, \dots, \ltypevar{X_n} \mapsto \sty{V}_n})$ for every $j$. The inductive hypothesis tells that either (a) holds, or that $\Recvs{\pp}(\sty{U}_j)$ holds for every $j$, in which case (b) follows.
    \item If $\sty{U} = \ltyperec{\ltypevar{X}_{n + 1}}{\sty{U}''}$, then we have $\Recvs{\pp}(\subst{\sty{U}''}{\ltypevar{X_1} \mapsto \sty{V}_1, \dots, \ltypevar{X_n} \mapsto \sty{V}_n, \ltypevar{X_{n+1}} \mapsto \sty{V}_{n + 1}})$, where we define $\sty{V}_{n + 1} = \ltypevar{X}_{n + 1}$. Since $\Recvs{\pp}(\sty{V}_{n + 1})$ does not hold, the inductive hypothesis tells us that either (a) holds, or that $\Recvs{\pp}(\sty{U}'')$ holds, in which case (b) holds.
  \end{itemize}
\end{proof}

\begin{lemma}\label{inductive-predicates-unfold}\
  \begin{enumerate}
    \item $\reduceinternal{\sty{U}}{\sty{U}'}{\pp}{\ell}{\basetype}$ if and only if $\reduceinternal{\unfold{\sty{U}}}{\sty{U}'}{\pp}{\ell}{\basetype}$.
    \item $\Recvs{\pp}(\sty{U})$ if and only if $\Recvs{\pp}(\unfold{\sty{U}})$.
    \item $\reduceexternal{\sty{U}}{\sty{U}'}{\pp}{\ell}{\basetype}$ if and only if $\reduceexternal{\unfold{\sty{U}}}{\sty{U}'}{\pp}{\ell}{\basetype}$.
    \item $\Sends{\pp}(\sty{U})$ if and only if $\Sends{\pp}(\unfold{\sty{U}})$.
  \end{enumerate}
\end{lemma}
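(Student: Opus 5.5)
The plan is to prove each of the four equivalences by induction on the number of unfolding steps needed to reach $\unfold{\sty{U}}$. Concretely, I would induct on the number of leading $\ltyperec{}{}$ binders of $\sty{U}$. This number is finite because recursion is guarded, and $\unfold{}$ peels these binders off one by one. If $\sty{U}$ is not a recursive type, then $\unfold{\sty{U}} = \sty{U}$ and there is nothing to prove. So the only interesting case is $\sty{U} = \ltyperec{\ltypevar{X}}{\sty{V}}$, where $\unfold{\sty{U}} = \subst{\unfold{\sty{V}}}{\ltypevar{X} \mapsto \sty{U}}$.

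For the recursive case, the rules \RuleName{$\reduceinternalsymbol$-rec}, \RuleName{Recvs-rec}, \RuleName{$\reduceexternalsymbol$-rec} and \RuleName{Sends-rec} are the key ingredient. For (1) and (3), the only rule whose conclusion has a recursive type on the left is the rec rule. So $\reduceinternal{\sty{U}}{\sty{U}'}{\pp}{\ell}{\basetype}$ holds iff $\reduceinternal{\subst{\sty{V}}{\ltypevar{X} \mapsto \sty{U}}}{\sty{U}'}{\pp}{\ell}{\basetype}$ holds (by inversion one way, and by the rule the other way). The type $\subst{\sty{V}}{\ltypevar{X} \mapsto \sty{U}}$ has one fewer leading binder than $\sty{U}$ when the binders of $\sty{V}$ are counted. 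Moreover, its unfolding coincides with $\unfold{\sty{U}}$: by guardedness, $\ltypevar{X}$ cannot occur at the head of $\sty{V}$, so substitution commutes with stripping the leading binders of $\sty{V}$. This is a small auxiliary fact, $\unfold{\subst{\sty{V}}{\ltypevar{X}\mapsto\sty{W}}} = \subst{\unfold{\sty{V}}}{\ltypevar{X}\mapsto\sty{W}}$ for $\sty{V}$ guarded in $\ltypevar{X}$, proved by induction on leading binders. With it, the induction hypothesis closes cases (1) and (3).

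Cases (2) and (4) are where I expect the main obstacle. The rule \RuleName{Recvs-rec} (respectively \RuleName{Sends-rec}) relates $\Recvs{\pp}(\ltyperec{\ltypevar{X}}{\sty{V}})$ to $\Recvs{\pp}(\sty{V})$, not to the substituted body. I would bridge the gap using \cref{inductive-predicates-substitution} and \cref{inductive-predicates-substitution-backwards}. In the forward direction, $\Recvs{\pp}(\sty{V})$ gives $\Recvs{\pp}(\subst{\sty{V}}{\ltypevar{X}\mapsto\sty{U}})$ by \cref{inductive-predicates-substitution}(2), and then the induction hypothesis gives $\Recvs{\pp}(\unfold{\sty{U}})$. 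In the backward direction, from $\Recvs{\pp}(\unfold{\sty{U}})$ the induction hypothesis gives $\Recvs{\pp}(\subst{\sty{V}}{\ltypevar{X}\mapsto\sty{U}})$. \Cref{inductive-predicates-substitution-backwards}(1), applied with the single substituted variable $\ltypevar{X}$, then yields either $\Recvs{\pp}(\sty{V})$, from which \RuleName{Recvs-rec} concludes, or $\Recvs{\pp}(\sty{U})$ itself, which is the goal directly. The same argument works for $\Sends{\pp}$.

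The one point to check carefully is the use of \cref{inductive-predicates-substitution-backwards} when $\sty{V}$ has free variables other than $\ltypevar{X}$. I would handle this by substituting those variables by themselves, which is harmless because the predicates never hold of a bare variable.
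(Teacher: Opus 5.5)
Your argument is correct, but it is organised differently from the paper's. The paper inducts on the structure of $\sty{U}$. For $\sty{U} = \ltyperec{\ltypevar{X}}{\sty{U}'}$, it applies the induction hypothesis to the body $\sty{U}'$ \emph{before} substitution, obtaining $\Recvs{\pp}(\sty{U}) \iff \Recvs{\pp}(\sty{U}') \iff \Recvs{\pp}(\unfold{\sty{U}'})$. It then crosses the substitution by applying \cref{inductive-predicates-substitution} and \cref{inductive-predicates-substitution-backwards} to $\unfold{\sty{U}'}$, using the defining equation $\unfold{\sty{U}} = \subst{\unfold{\sty{U}'}}{\ltypevar{X} \mapsto \sty{U}}$ directly.

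You instead apply the substitution lemmas to the body $\sty{V}$ itself, and the induction hypothesis to $\subst{\sty{V}}{\ltypevar{X}\mapsto\sty{U}}$, which is not a structural subterm. That is why you need the leading-binder measure and the commutation fact $\unfold{\subst{\sty{V}}{\ltypevar{X}\mapsto\sty{W}}} = \subst{\unfold{\sty{V}}}{\ltypevar{X}\mapsto\sty{W}}$. You justify both correctly via guardedness.

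For (2) and (4), the paper's route is slightly more economical, since it needs neither ingredient. In exchange, your measure and commutation fact give a genuine proof of (1) and (3), which the paper calls immediate from the definition. The rules \RuleName{$\reduceinternalsymbol$-rec} and \RuleName{$\reduceexternalsymbol$-rec} relate $\sty{U}$ to $\subst{\sty{V}}{\ltypevar{X}\mapsto\sty{U}}$, not to $\subst{\unfold{\sty{V}}}{\ltypevar{X}\mapsto\sty{U}}$. So when $\sty{V}$ itself begins with binders, some argument of your kind is needed there, or alternatively a structural induction generalised over substitutions.

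A minor point: the number of leading binders is finite simply because session types are finite terms. Guardedness is what makes your measure decrease, by ensuring that substituting $\sty{U}$ for $\ltypevar{X}$ does not place $\sty{U}$'s binders at the head.
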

\begin{proof}
  We give the proofs of (1) and (2); the proofs of (3) and (4) are similar.
 
  (1) is immediate from the definition of $\reduceinternalsymbol$.

  For (2), we proceed by induction on the structure of $\sty{U}$.
  In both inductions the only non-trivial case is when $\sty{U}$ is a recursive type $\ltyperec{\ltypevar{X}}{\sty{U}'}$, so that $\unfold{\sty{U}} = \subst{\unfold{\sty{U}'}}{\ltypevar{X} \mapsto \sty{U}}$.
  In this case, we have that $\Recvs{\pp}(\sty{U})$ is equivalent to $\Recvs{\pp}(\sty{U}')$ by definition, and this is equivalent to $\Recvs{\pp}(\unfold{\sty{U}'})$ by the inductive hypothesis. The latter implies $\Recvs{\pp}(\subst{\unfold{\sty{U}'}}{\ltypevar{X} \mapsto \sty{U}})$ by \cref{inductive-predicates-substitution}. In the other direction, by \cref{inductive-predicates-substitution-backwards}, $\Recvs{\pp}(\subst{\unfold{\sty{U}'}}{\ltypevar{X} \mapsto \sty{U}})$ implies that either $\Recvs{\pp}(\sty{U})$ or $\Recvs{\pp}(\unfold{\sty{U}'})$, both of which are equivalent to the desired result.
\end{proof}

\begin{lemma}\label{subtype-congruences}\
    \begin{enumerate}
        \item If $\unfold{\sty{U}} = \ltyperecv{\pp}{\ell_i}{\basetype_i}{\sty{U}_i}{i \in I}$, then $\sty{T} \subtype_\Theta \sty{U}$ holds iff, for every $i \in I$, there is some $\sty{T}_i$ such that $\reduceexternal{\sty{T}}{\sty{T}_i}{\pp}{\ell_i}{\basetype_i}$ and $\sty{T}_i \subtype_\Theta \sty{U}_i$.
        \item If $\unfold{\sty{T}} = \ltypesend{\pp}{\ell_i}{\basetype_i}{\sty{T}_i}{i \in I}$, then $\sty{T} \subtype_\Theta \sty{U}$ holds iff, for every $i \in I$, there is some $\sty{U}_i$ such that $\reduceinternal{\sty{U}}{\sty{U}_i}{\pp}{\ell_i}{\basetype_i}$ and $\sty{T}_i \subtype_\Theta \sty{U}_i$.
    \end{enumerate}
\end{lemma}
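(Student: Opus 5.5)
We prove (1); (2) is the exact dual, obtained by swapping the roles of sending and receiving and of $\sty{T}$ and $\sty{U}$. The forward direction is immediate: if $\sty{T} \subtype_\Theta \sty{U}$ and $\unfold{\sty{U}}$ is the given external choice, then condition (4) of \cref{def:subtyping} holds because $\subtype_\Theta$ is itself an asynchronous type pre-simulation. That condition is precisely the right-hand side.

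For the converse, assume that for every $i \in I$ we have $\reduceexternal{\sty{T}}{\sty{T}_i}{\pp}{\ell_i}{\basetype_i}$ with $\sty{T}_i \subtype_\Theta \sty{U}_i$. We show that the relation $R = {\subtype_\Theta} \cup \{(\sty{T}, \sty{U})\}$ is an asynchronous type pre-simulation. By \cref{union-pre-simulation} and maximality this gives $\sty{T} \subtype_\Theta \sty{U}$. Pairs already in $\subtype_\Theta$ satisfy all five conditions, since their witnesses lie in $\subtype_\Theta \subseteq R$. It therefore remains to check the five conditions for the new pair $(\sty{T}, \sty{U})$, using $\unfold{\sty{U}} = \ltyperecv{\pp}{\ell_i}{\basetype_i}{\sty{U}_i}{i \in I}$.

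Conditions (3) and (5) are vacuous for $\sty{U}$: its unfolding is neither an internal choice nor a type variable. Condition (4) is exactly our hypothesis, with witnesses in $\subtype_\Theta \subseteq R$. For (5) we additionally need that $\unfold{\sty{T}}$ is not a type variable. This holds because $\reduceexternal{\sty{T}}{\sty{T}_1}{\pp}{\ell}{\basetype}$ is derivable for some branch, and by \cref{inductive-predicates-unfold} the same is true of $\unfold{\sty{T}}$; inspecting the rules of $\reduceexternalsymbol$ shows no rule concludes with a variable. Likewise, $\unfold{\sty{T}}$ cannot be $\ltypeend$, since no $\reduceexternalsymbol$ rule applies to it. Hence $\unfold{\sty{T}}$ is an internal or external choice.

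Condition (2): suppose $\unfold{\sty{T}} = \ltyperecv{\pq}{\ell'_j}{\basetype'_j}{\sty{T}'_j}{j\in J}$; we must show $\Recvs{\pq}(\sty{U})$. If $\pq = \pp$ this follows from \RuleName{Recvs-base} and \cref{inductive-predicates-unfold}. If $\pq \neq \pp$, the derivation of $\reduceexternal{\unfold{\sty{T}}}{\sty{T}_i}{\pp}{\ell_i}{\basetype_i}$ must end with \RuleName{$\reduceexternalsymbol$-$\&$}. So $\sty{T}_i$ is again an external choice on $\pq$ (with branches $J' \subseteq J$), and $\sty{T}_i \subtype_\Theta \sty{U}_i$ yields $\Recvs{\pq}(\sty{U}_i)$ by condition (2). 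This holds for every $i$, so \RuleName{Recvs-$\&$} gives $\Recvs{\pq}(\unfold{\sty{U}})$, hence $\Recvs{\pq}(\sty{U})$.

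Condition (1) is the main obstacle. Suppose $\unfold{\sty{T}} = \ltypesend{\pq}{\ell'_j}{\basetype'_j}{\sty{T}'_j}{j \in J}$; then each derivation ends with \RuleName{$\reduceexternalsymbol$-$\oplus$}, so $\sty{T}_i = \ltypesend{\pq}{\ell'_j}{\basetype'_j}{\sty{T}'_{ij}}{j \in J}$ with $\reduceexternal{\sty{T}'_j}{\sty{T}'_{ij}}{\pp}{\ell_i}{\basetype_i}$. Fix $j$. From $\sty{T}_i \subtype_\Theta \sty{U}_i$, condition (1) gives $\reduceinternal{\sty{U}_i}{\sty{U}_{ij}}{\pq}{\ell'_j}{\basetype'_j}$ with $\sty{T}'_{ij} \subtype_\Theta \sty{U}_{ij}$. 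We then set $\sty{U}'_j = \ltyperecv{\pp}{\ell_i}{\basetype_i}{\sty{U}_{ij}}{i\in I}$. Applying \RuleName{$\reduceinternalsymbol$-$\&$} (through the unfolding, by \cref{inductive-predicates-unfold}) gives $\reduceinternal{\sty{U}}{\sty{U}'_j}{\pq}{\ell'_j}{\basetype'_j}$. It remains to show $\sty{T}'_j \subtype_\Theta \sty{U}'_j$. This is an instance of the same converse direction applied to a smaller situation, so rather than closing $R$ with a single pair, we take $R$ to be the closure of $\subtype_\Theta$ under all pairs $(\sty{T}, \sty{U})$ whose right-hand side unfolds to an external choice and which satisfy the hypothesis of (1). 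With this enlarged $R$, the argument above applies uniformly to every added pair and shows $R$ is a pre-simulation, with $\sty{T}'_j \mathrel{R} \sty{U}'_j$ supplying the required witness. The delicate point is that every case analysis on the last rule of the $\reduceexternalsymbol$ derivations is valid up to unfolding; \cref{inductive-predicates-unfold} handles this.
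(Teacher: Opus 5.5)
Your proof is correct. The case analysis on $\unfold{\sty{T}}$ and the key construction coincide with the paper's. When $\unfold{\sty{T}}$ is an internal choice on $\pq$, both you and the paper take $\sty{U}'_j = \ltyperecv{\pp}{\ell_i}{\basetype_i}{\sty{U}_{ij}}{i \in I}$, built from the witnesses of $\sty{T}_i \subtype_\Theta \sty{U}_i$. Both are then left needing $\sty{T}'_j \subtype_\Theta \sty{U}'_j$.

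The difference is in how that residual obligation is discharged.

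\emph{The paper's route.} It fixes a uniform bound $h$ on the heights of the derivations of $\reduceexternal{\sty{T}}{\sty{T}_i}{\pp}{\ell_i}{\basetype_i}$ (using finiteness of $I$) and inducts on $h$. It splits on the syntactic shape of $\sty{T}$: an external choice on $\pp$ is the base case, and an external choice on another participant, an internal choice, and $\sty{T} = \ltyperec{\ltypevar{X}}{\sty{T}'}$ are the step cases. The claim $\sty{T}'_j \subtype_\Theta \sty{U}'_j$ is then the inductive hypothesis, and each step only checks the five conditions for a single new pair against $\subtype_\Theta$ itself.

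\emph{Your route.} You observe that the set $S$ of all pairs satisfying the right-hand side of (1) reproduces itself: the new pair $(\sty{T}'_j, \sty{U}'_j)$ again has its hypothesis witnessed inside $\subtype_\Theta$. Hence ${\subtype_\Theta} \cup S$ is a pre-simulation outright. This is a plain union; despite your word ``closure'', no iteration is needed, precisely because the witnesses demanded by the hypothesis lie in $\subtype_\Theta$ rather than in the candidate relation.

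\emph{What each buys.} Your coinductive argument needs no uniform height bound and no separate recursive case, since you always inspect $\unfold{\sty{T}}$ and transport derivations with \cref{inductive-predicates-unfold}. The paper's induction keeps each step a purely local one-pair check and never has to reason about an auxiliary family of pairs, at the cost of the extra bookkeeping on derivation heights and the explicit $\mu$ case.
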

\begin{proof}
  We give the proof of (1); the proof of (2) is similar.

  The only if direction is trivial.
  For the if direction, note that item (3) in the definition of subtyping holds trivially, while (4) is by assumption.
  To establish the remaining cases, since $I$ is finite, there is some natural number $h$ such that for every $i$, the derivation of $\reduceexternal{\sty{T}}{\sty{T}_i}{\pp}{\ell_i}{\basetype_i}$ has height at most $h$.
  We can thus proceed by induction on $h$.
  If $h = 0$, then $\sty{U}$ has the form $\ltyperecv{\pp}{\ell'_j}{\basetype'_j}{\sty{U}'_j}{j \in I}$.
  We prove that $\sty{T} \subtype_\Theta \sty{U}$ by showing the remaining cases in the definition of subtyping individually: (1) and (5) are trivial, and for (2), we have $\Recvs{\pp}(\sty{U})$ by \cref{inductive-predicates-unfold}, because $\Recvs{\pp}(\unfold{U})$.
  If $h > 0$, then we have the following cases:
  \begin{itemize}
    \item If $\sty{T} = \ltyperecv{\pq}{\ell'_j}{\basetype'_j}{\sty{T}'_j}{j \in J}$ with $\pp \neq \pq$, then (1) and (5) are trivial. For (2), $\sty{T}_i$ is necessarily an external choice on $q$, so that $\Recvs{\pq}(\sty{U}_i)$ for each $i$; this implies $\Recvs{\pq}(\unfold{\sty{U}})$ and hence $\Recvs{\pq}(\sty{U})$ by \cref{inductive-predicates-unfold}.
    \item If $\sty{T} = \ltypesend{\pq}{\ell'_j}{\basetype'_j}{\sty{T}'_j}{j \in J}$, then we have $\sty{T}_i = \ltypesend{\pq}{\ell'_j}{\basetype'_j}{\sty{T}'_{ij}}{j \in J}$, with $\reduceexternal{\sty{T}'_j}{\sty{T}'_{ij}}{\pp}{\ell_i}{\basetype_i}$.
      The only remaining non-trivial case of subtyping is (1), for which we need to show that there exist $\sty{U}'_j$ such that $\reduceinternal{\sty{U}}{\sty{U}'_j}{\pq}{\ell'_j}{\basetype'_j}$. and $\sty{T}'_j \subtype_\Theta \sty{U}'_j$.
      We take $\sty{U}'_j = \ltyperecv{\pp}{\ell_i}{\basetype_i}{\sty{U}'_{ij}}{i \in I}$, where $\reduceinternal{\sty{U}_i}{\sty{U}'_{ij}}{\pq}{\ell'_j}{\basetype'_j}$ and $\sty{T}'_{ij} \subtype_\Theta \sty{U}'_{ij}$.
      The latter exist because $\sty{T}_i \subtype_\Theta \sty{U}_i$, we have $\reduceinternal{\unfold{\sty{U}}}{\sty{U}'_j}{\pq}{\ell'_j}{\basetype'_j}$ and hence $\reduceinternal{\sty{U}}{\sty{U}'_j}{\pq}{\ell'_j}{\basetype'_j}$ by \cref{inductive-predicates-unfold}.
      Finally, $\sty{T}'_{j} \subtype_\Theta \sty{U}'_{j}$ holds by the inductive hypothesis.
    \item If $\sty{T} = \ltyperec{\ltypevar{X}}{\sty{T}'}$, then from the inductive hypothesis we obtain $\unfold{\sty{T}} \subtype_\Theta \sty{U}$, which implies $\sty{T} \subtype_\Theta \sty{U}$ trivially.\qed
  \end{itemize}
\end{proof}

We define relations $\prectype{\sty{U}}{\ltyperecv{\pp}{\ell_i}{\basetype_i}{\sty{U}_i}{i \in I}}$ and $\prectype{\sty{T}}{\ltypesend{\pp}{\ell_i}{\basetype_i}{\sty{T}_i}{i \in I}}$ inductively.
\begin{gather*}
  \begin{prooftree}
    \infer0[\RuleName{$\prectypesymbol\&$-base}]
    {\prectype{\ltyperecv{\pp}{\ell_i}{\basetype_i}{\sty{U}_i}{i \in I}}{\ltyperecv{\pp}{\ell_i}{\basetype_i}{\sty{U}_i}{i \in I}}}
  \end{prooftree}
  \\[2ex]
  \begin{prooftree}
    \hypo{\pp \neq \pq}
    \hypo{\prectype{\sty{U}_j}{\ltyperecv{\pp}{\ell_i}{\basetype_i}{\sty{U}_{i, j}}{i \in I_j}}~\text{for all}~j \in J}
    \hypo{I = \bigcup_{j \in J} I_j}
    \hypo{J_i = \{j \in J \mid i \in I_j\}}
    \infer4[\RuleName{$\prectypesymbol\&$-ind}]
    {\prectype{\ltyperecv{\pq}{\ell'_j}{\basetype'_j}{\sty{U}_j}{j \in J}}{\ltyperecv{\pp}{\ell_i}{\basetype_i}{\ltyperecv{\pq}{\ell'_j}{\basetype'_j}{\sty{U}_{i, j}}{j \in J_i}}{i \in I}}}
  \end{prooftree}
  \\[2ex]
  \begin{prooftree}
    \hypo{\prectype{\subst{\sty{U}}{\ltypevar{X} \mapsto \ltyperec{\ltypevar{X}}{\sty{U}}}}{\ltyperecv{\pp}{\ell_i}{\basetype_i}{\sty{U}_i}{i \in I}}}
    \infer1[\RuleName{$\prectypesymbol\&$-rec}]
    {\prectype{(\ltyperec{\ltypevar{X}}{\sty{U}})}{\ltyperecv{\pp}{\ell_i}{\basetype_i}{\sty{U}_i}{i \in I}}}
  \end{prooftree}
  \\[2ex]
  \begin{prooftree}
    \infer0[\RuleName{$\prectypesymbol\oplus$-base}]
    {\prectype{\ltypesend{\pp}{\ell_i}{\basetype_i}{\sty{U}_i}{i \in I}}{\ltypesend{\pp}{\ell_i}{\basetype_i}{\sty{U}_i}{i \in I}}}
  \end{prooftree}
  \\[2ex]
  \begin{prooftree}
    \hypo{\pp \neq \pq}
    \hypo{\prectype{\sty{U}_j}{\ltypesend{\pp}{\ell_i}{\basetype_i}{\sty{U}_{i, j}}{i \in I_j}}~\text{for all}~j \in J}
    \hypo{I = \bigcup_{j \in J} I_j}
    \hypo{J_i = \{j \in J \mid i \in I_j\}}
    \infer4[\RuleName{$\prectypesymbol\oplus$-ind}]
    {\prectype{\ltypesend{\pq}{\ell'_j}{\basetype'_j}{\sty{U}_j}{j \in J}}{\ltypesend{\pp}{\ell_i}{\basetype_i}{\ltypesend{\pq}{\ell'_j}{\basetype'_j}{\sty{U}_{i, j}}{j \in J_i}}{i \in I}}}
  \end{prooftree}
  \\[2ex]
  \begin{prooftree}
    \hypo{\prectype{\subst{\sty{U}}{\ltypevar{X} \mapsto \ltyperec{\ltypevar{X}}{\sty{U}}}}{\ltypesend{\pp}{\ell_i}{\basetype_i}{\sty{U}_i}{i \in I}}}
    \infer1[\RuleName{$\prectypesymbol\oplus$-rec}]
    {\prectype{(\ltyperec{\ltypevar{X}}{\sty{U}})}{\ltypesend{\pp}{\ell_i}{\basetype_i}{\sty{U}_i}{i \in I}}}
  \end{prooftree}
\end{gather*}

\begin{lemma}\label{prectype-substitution}
  Let $\pp$ be a participant, and let $\sty{V}_1, \dots, \sty{V}_n$ be session types.
  Also let $\sty{U}$ be a session type, and define $\sty{T} = \subst{\sty{U}}{\ltypevar{X_1} \mapsto \sty{V}_1, \dots, \ltypevar{X_n} \mapsto \sty{V}_n}$.
  \begin{enumerate}
    \item If $\prectype{\sty{U}}{\ltyperecv{\pp}{\ell_i}{\basetype_i}{\sty{U}_i}{i \in I}}$, then $\prectype{\sty{T}}{\ltyperecv{\pp}{\ell_i}{\basetype_i}{\sty{T}_i}{i \in I}}$, where $\sty{T}_i = \subst{\sty{U}_i}{\ltypevar{X_1} \mapsto \sty{V}_1, \dots, \ltypevar{X_n} \mapsto \sty{V}_n}$.
    \item If $\prectype{\sty{U}}{\ltypesend{\pp}{\ell_i}{\basetype_i}{\sty{U}_i}{i \in I}}$, then $\prectype{\sty{T}}{\ltypesend{\pp}{\ell_i}{\basetype_i}{\sty{T}_i}{i \in I}}$, where $\sty{T}_i = \subst{\sty{U}_i}{\ltypevar{X_1} \mapsto \sty{V}_1, \dots, \ltypevar{X_n} \mapsto \sty{V}_n}$.
  \end{enumerate}
\end{lemma}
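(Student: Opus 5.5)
Both parts go by induction on the derivation of the hypothesis, in the style of \cref{inductive-predicates-substitution}. We give the argument for (1); (2) is the same with external and internal choice swapped. Write $\theta$ for the substitution $\ltypevar{X_1} \mapsto \sty{V}_1, \dots, \ltypevar{X_n} \mapsto \sty{V}_n$. We may assume, up to $\alpha$-renaming of bound variables, that no bound variable of $\sty{U}$ is among the $\ltypevar{X_k}$ or free in any $\sty{V}_k$. Then substitution commutes with the type constructors. We use the substitution lemma in the form
\[\subst{\subst{\sty{W}}{\ltypevar{X} \mapsto \ltyperec{\ltypevar{X}}{\sty{W}}}}{\theta} = \subst{(\subst{\sty{W}}{\theta})}{\ltypevar{X} \mapsto \ltyperec{\ltypevar{X}}{(\subst{\sty{W}}{\theta})}}.\]

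\emph{Base case} \RuleName{$\prectypesymbol\&$-base}. Here $\sty{U} = \ltyperecv{\pp}{\ell_i}{\basetype_i}{\sty{U}_i}{i \in I}$. Then $\sty{T} = \ltyperecv{\pp}{\ell_i}{\basetype_i}{\sty{T}_i}{i \in I}$ with $\sty{T}_i = \subst{\sty{U}_i}{\theta}$, and the same base rule gives the conclusion.

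\emph{Case} \RuleName{$\prectypesymbol\&$-ind}. Here $\sty{U} = \ltyperecv{\pq}{\ell'_j}{\basetype'_j}{\sty{U}_j}{j \in J}$ with $\pp \neq \pq$, and the target is $\ltyperecv{\pp}{\ell_i}{\basetype_i}{\ltyperecv{\pq}{\ell'_j}{\basetype'_j}{\sty{U}_{i,j}}{j \in J_i}}{i \in I}$. The premises are $\prectype{\sty{U}_j}{\ltyperecv{\pp}{\ell_i}{\basetype_i}{\sty{U}_{i,j}}{i \in I_j}}$ for each $j \in J$. The inductive hypothesis gives
\[\prectype{\subst{\sty{U}_j}{\theta}}{\ltyperecv{\pp}{\ell_i}{\basetype_i}{\subst{\sty{U}_{i,j}}{\theta}}{i \in I_j}}.\]
The index sets $I_j$, $I$ and $J_i$ are unchanged by substitution, so the side conditions $I = \bigcup_j I_j$ and $J_i = \{j \in J \mid i \in I_j\}$ still hold. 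Applying \RuleName{$\prectypesymbol\&$-ind} gives a conclusion for $\sty{T} = \ltyperecv{\pq}{\ell'_j}{\basetype'_j}{\subst{\sty{U}_j}{\theta}}{j \in J}$ whose target is
\[\ltyperecv{\pp}{\ell_i}{\basetype_i}{\ltyperecv{\pq}{\ell'_j}{\basetype'_j}{\subst{\sty{U}_{i,j}}{\theta}}{j \in J_i}}{i \in I}.\]
This is exactly the $\theta$-substitution of the original target, as required.

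\emph{Case} \RuleName{$\prectypesymbol\&$-rec}. Here $\sty{U} = \ltyperec{\ltypevar{X}}{\sty{W}}$, and the premise concerns $\subst{\sty{W}}{\ltypevar{X} \mapsto \ltyperec{\ltypevar{X}}{\sty{W}}}$. This premise is not structurally smaller than $\sty{U}$, which is why we induct on the derivation rather than on $\sty{U}$. The inductive hypothesis applied to the premise concludes about $\subst{\subst{\sty{W}}{\ltypevar{X} \mapsto \ltyperec{\ltypevar{X}}{\sty{W}}}}{\theta}$. By the substitution lemma above this is $\subst{(\subst{\sty{W}}{\theta})}{\ltypevar{X} \mapsto \ltyperec{\ltypevar{X}}{(\subst{\sty{W}}{\theta})}}$. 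Applying \RuleName{$\prectypesymbol\&$-rec} then yields the claim for $\sty{T} = \ltyperec{\ltypevar{X}}{(\subst{\sty{W}}{\theta})}$.

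\textbf{Main obstacle.} The only real point is the recursive case. It requires choosing $\ltypevar{X}$ fresh with respect to $\theta$, both so that $\subst{(\ltyperec{\ltypevar{X}}{\sty{W}})}{\theta} = \ltyperec{\ltypevar{X}}{(\subst{\sty{W}}{\theta})}$ holds and so that the substitution lemma applies. Everything else is routine bookkeeping of index sets.
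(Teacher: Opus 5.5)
Your proof is correct and takes the same approach as the paper, which dismisses this lemma as a trivial induction on the derivation of the hypothesis, just as for \cref{inductive-predicates-substitution}. Your handling of the recursive case, using the freshness convention and the standard substitution-composition lemma, supplies exactly the detail the paper leaves implicit.
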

\begin{proof}
  Each of the proofs is a trivial induction.
\end{proof}

\begin{lemma}\label{recvs-prectype}\
  \begin{enumerate}
    \item We have $\Recvs{\pp}{(\sty{U})}$ iff there exist $\{(\ell_i, \basetype_i, \sty{U}_i)\}_{i \in I}$ such that $\prectype{\sty{U}}{\ltyperecv{\pp}{\ell_i}{\basetype_i}{\sty{U}_i}{i \in I}}$.
    \item We have $\Sends{\pp}{(\sty{T})}$ iff there exist $\{(\ell_i, \basetype_i, \sty{T}_i)\}_{i \in I}$ such that $\prectype{\sty{T}}{\ltypesend{\pp}{\ell_i}{\basetype_i}{\sty{T}_i}{i \in I}}$.
  \end{enumerate}
\end{lemma}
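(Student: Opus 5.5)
We prove (1); (2) is the exact dual, obtained by swapping internal and external choices and using the $\oplus$-rules of $\prectypesymbol$ and of $\Sends{\pp}$. Both directions are inductions on derivations, and the two inductive systems are set up so that their rules correspond one-to-one.

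For the ``only if'' direction we induct on the derivation of $\Recvs{\pp}(\sty{U})$. In the case \RuleName{Recvs-base}, $\sty{U}$ is itself an external choice on $\pp$, and \RuleName{$\prectypesymbol\&$-base} gives $\prectype{\sty{U}}{\sty{U}}$. In the case \RuleName{Recvs-rec}, $\sty{U} = \ltyperec{\ltypevar{X}}{\sty{U}'}$ with $\Recvs{\pp}(\sty{U}')$. The rule \RuleName{$\prectypesymbol\&$-rec} needs a derivation for the unfolded body $\subst{\sty{U}'}{\ltypevar{X} \mapsto \sty{U}}$, not for $\sty{U}'$. So we apply the inductive hypothesis to $\sty{U}'$ to get $\prectype{\sty{U}'}{\ltyperecv{\pp}{\ell_i}{\basetype_i}{\sty{U}'_i}{i \in I}}$. \Cref{prectype-substitution} then transports this along the substitution $\ltypevar{X} \mapsto \sty{U}$, and \RuleName{$\prectypesymbol\&$-rec} finishes the case.

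The remaining case is \RuleName{Recvs-$\&$}, where $\sty{U} = \ltyperecv{\pq}{\ell'_j}{\basetype'_j}{\sty{U}_j}{j \in J}$ and $\Recvs{\pp}(\sty{U}_j)$ holds for all $j$. If $\pq = \pp$, we use \RuleName{$\prectypesymbol\&$-base} directly. Otherwise the inductive hypothesis gives $\prectype{\sty{U}_j}{\ltyperecv{\pp}{\ell_i}{\basetype_i}{\sty{U}_{i,j}}{i \in I_j}}$ for each $j$. We then set $I = \bigcup_j I_j$ and $J_i = \{j \mid i \in I_j\}$ and apply \RuleName{$\prectypesymbol\&$-ind}.

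The main subtlety is that a label $\ell_i$ can occur in several $I_j$, and the rule requires it to carry the same payload type $\basetype_i$ everywhere. We handle this by indexing by pairs: let $I$ be the set of pairs $(\ell, \basetype)$ that occur in some $I_j$. Even then, one label might occur with different types in different $j$, which would break distinctness of labels in the resulting external choice. If the paper's conventions do not rule this out, we strengthen the inductive claim to give freedom in choosing labels, or note that the rule implicitly identifies $\ell_i$ across the $I_j$. I expect this bookkeeping step to be the only real obstacle; everything else is routine.

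For the ``if'' direction we induct on the derivation of $\prectype{\sty{U}}{\dots}$. The base rule yields \RuleName{Recvs-base}. The ind rule yields $\Recvs{\pp}(\sty{U}_j)$ for each $j$ by the inductive hypothesis, and hence \RuleName{Recvs-$\&$}. For the rec rule, the inductive hypothesis gives $\Recvs{\pp}(\subst{\sty{U}}{\ltypevar{X} \mapsto \ltyperec{\ltypevar{X}}{\sty{U}}})$. \Cref{inductive-predicates-substitution-backwards} then says that either $\Recvs{\pp}(\sty{U})$ holds, or $\Recvs{\pp}(\ltyperec{\ltypevar{X}}{\sty{U}})$ holds. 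In the first case \RuleName{Recvs-rec} gives the conclusion, and in the second we already have it.
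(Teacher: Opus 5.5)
Your proof follows the paper's. The forward direction is an induction on $\Recvs{\pp}$, using \cref{prectype-substitution} in the \RuleName{Recvs-rec} case. There you correctly substitute $\ltyperec{\ltypevar{X}}{\sty{U}'}$ into the body, which is what \RuleName{$\prectypesymbol\&$-rec} actually requires; the paper writes $\unfold{\cdot}$ at that point. You also rightly note that \RuleName{Recvs-$\&$} has no side condition $\pq \neq \pp$, so that subcase is handled by \RuleName{$\prectypesymbol\&$-base}. The converse is an induction on $\prectypesymbol$. In its recursive case you cite \cref{inductive-predicates-substitution-backwards} directly, where the paper cites \cref{inductive-predicates-unfold}. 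The latter is proved from the former, so the difference is cosmetic, and your version avoids identifying the one-step substitution with the unfolding.

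The obstacle you flag in \RuleName{Recvs-$\&$} is genuine, and the paper's proof passes over it by calling that case trivial. Your first remedy, freedom in choosing labels, is not available. Every type is the subject of at most one $\prectypesymbol\&$ rule, and \RuleName{$\prectypesymbol\&$-base} copies the choice verbatim. So the labels and payload types on the right are forced by the $\pp$-choices occurring in $\sty{U}$; there is nothing to choose. Identifying indices by label works exactly when the payload types agree.

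When the payload types disagree, the statement itself fails under the distinct-labels convention. Take $\pq \neq \pp$, $\ell'_1 \neq \ell'_2$, $\sty{U}_1 = \ltyperecvone{\pp}{\ell}{\inttype}{\ltypeend}$ and $\sty{U}_2 = \ltyperecvone{\pp}{\ell}{\booltype}{\ltypeend}$. The type $\ltyperecv{\pq}{\ell'_j}{\unittype}{\sty{U}_j}{j \in \{1,2\}}$ satisfies $\Recvs{\pp}$. Yet the only applicable rule, \RuleName{$\prectypesymbol\&$-ind}, yields an external choice on $\pp$ in which $\ell$ occurs twice.

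So you must commit to one reading rather than leave it open. Either let the right-hand side of $\prectypesymbol$ be a family over an abstract index set, such as your pairs $(\ell, \basetype)$, without requiring distinct labels. Or adopt the convention that each label always carries the same payload type, and index by labels. Under either reading your induction goes through as written.
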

\begin{proof}
  We give the proof of (1); the proof of (2) is similar.

  The if direction is an easy induction on the derivation of $\prectypesymbol$, using \cref{inductive-predicates-unfold} for the recursive case.
  For the only if direction, we proceed by induction on $\Recvs{\pp}{(\sty{U})}$.
  Uhe only non-trivial case is \RuleName{Recvs-rec}. In this case, if $\Recvs{\pp}{(\sty{U})}$ holds, then we have $\prectype{\sty{U}}{\ltypesend{\pp}{\ell_i}{\basetype_i}{\sty{U}_i}{i \in I}}$.
  By \cref{prectype-substitution}, we therefore have $\prectype{\unfold{\ltyperec{\ltypevar{X}}{\sty{U}}}}{\ltypesend{\pp}{\ell_i}{\basetype_i}{\sty{U}_i}{i \in I}}$, and hence $\prectype{\ltyperec{\ltypevar{X}}{\sty{U}}}{\ltypesend{\pp}{\ell_i}{\basetype_i}{\sty{U}_i}{i \in I}}$ as required.
\end{proof}

\begin{lemma}\label{subtype-prectype}
    Assume that $\sty{T} \subtype_{\Theta} \sty{U}$.
    \begin{enumerate}
        \item If $\unfold{\sty{T}} = \ltyperecv{\pp}{\ell_i}{\basetype_i}{\sty{T}_i}{i \in I}$, then there exist $J \subseteq I$ and $\{\sty{U}_i\}_{i \in J}$, such that $\prectype{\sty{U}}{\ltyperecv{\pp}{\ell_i}{\basetype_i}{\sty{U}_i}{i \in J}}$, and $\sty{T}_i \subtype_{\Theta} \sty{U}_i$ for each $i \in J$.
        \item If $\unfold{\sty{U}} = \ltypesend{\pp}{\ell_i}{\basetype_i}{\sty{U}_i}{i \in I}$, then there exist $J \subseteq I$ and $\{\sty{T}_i\}_{i \in J}$, such that $\prectype{\sty{T}}{\ltypesend{\pp}{\ell_i}{\basetype_i}{\sty{T}_i}{i \in J}}$, and $\sty{T}_i \subtype_{\Theta} \sty{U}_i$ for each $i \in J$.
    \end{enumerate}
\end{lemma}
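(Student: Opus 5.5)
We prove (1); (2) is the exact dual, swapping the roles of sending and receiving. Assume $\sty{T} \subtype_{\Theta} \sty{U}$ and $\unfold{\sty{T}} = \ltyperecv{\pp}{\ell_i}{\basetype_i}{\sty{T}_i}{i \in I}$. Condition (2) of \cref{def:subtyping} gives $\Recvs{\pp}(\sty{U})$. By \cref{recvs-prectype}, this yields some $\prectype{\sty{U}}{\ltyperecv{\pp}{\ell'_k}{\basetype'_k}{\sty{U}'_k}{k \in K}}$. That witness alone records nothing about the $\sty{T}_i$, so instead we induct on the derivation of $\Recvs{\pp}(\sty{U})$, or equivalently on the height of the $\prectypesymbol$ derivation. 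We keep $\sty{T}$ general throughout, subject only to $\sty{T} \subtype_\Theta \sty{U}$ and $\unfold{\sty{T}}$ being an external choice on $\pp$.

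The base case is $\sty{U} = \ltyperecv{\pp}{\ell'_k}{\basetype'_k}{\sty{U}'_k}{k \in K}$. Condition (4) of subtyping gives, for each $k$, some $\sty{T}'_k$ with $\reduceexternal{\sty{T}}{\sty{T}'_k}{\pp}{\ell'_k}{\basetype'_k}$ and $\sty{T}'_k \subtype_\Theta \sty{U}'_k$. Since $\unfold{\sty{T}}$ is a choice on $\pp$, \cref{inductive-predicates-unfold} shows that the $\reduceexternalsymbol$ derivation must be the base rule. Hence each $\ell'_k$ equals some $\ell_i$ with $\basetype'_k = \basetype_i$ and $\sty{T}'_k = \sty{T}_i$. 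Take $J = \{i \mid \ell_i = \ell'_k \text{ for some } k\}$ and set $\sty{U}_i = \sty{U}'_k$; the base rule \RuleName{$\prectypesymbol\&$-base} gives the required $\prectypesymbol$ fact.

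The recursive case $\sty{U} = \ltyperec{\ltypevar{X}}{\sty{U}''}$ reduces to the unfolding. We have $\sty{T} \subtype_\Theta \subst{\sty{U}''}{\ltypevar{X} \mapsto \sty{U}}$, since subtyping depends on $\sty{U}$ only through $\unfold{}$ and the relations of \cref{fig:inductive}, which are invariant under unfolding by \cref{inductive-predicates-unfold}. Using the inductive hypothesis on the smaller derivation (it is smaller by \cref{recvs-prectype}'s correspondence), we then conclude with \RuleName{$\prectypesymbol\&$-rec}.

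The main case is $\sty{U} = \ltyperecv{\pq}{\ell'_j}{\basetype'_j}{\sty{V}_j}{j \in J'}$ with $\pq \neq \pp$ and $\Recvs{\pp}(\sty{V}_j)$ for all $j$. Condition (4) gives $\reduceexternal{\sty{T}}{\sty{T}^j}{\pq}{\ell'_j}{\basetype'_j}$ with $\sty{T}^j \subtype_\Theta \sty{V}_j$. Because $\pq \neq \pp$ and $\unfold{\sty{T}}$ is a choice on $\pp$, that derivation is by \RuleName{$\reduceexternalsymbol$-$\&$}. It therefore keeps a subset $I^j \subseteq I$, with $\unfold{\sty{T}^j} = \ltyperecv{\pp}{\ell_i}{\basetype_i}{\sty{T}_i^j}{i \in I^j}$ and $\reduceexternal{\sty{T}_i}{\sty{T}^j_i}{\pq}{\ell'_j}{\basetype'_j}$. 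The inductive hypothesis applied to $\sty{T}^j \subtype_\Theta \sty{V}_j$ gives $I_j \subseteq I^j$ and $\prectype{\sty{V}_j}{\ltyperecv{\pp}{\ell_i}{\basetype_i}{\sty{V}_{i,j}}{i \in I_j}}$ with $\sty{T}^j_i \subtype_\Theta \sty{V}_{i,j}$. Rule \RuleName{$\prectypesymbol\&$-ind} then yields $\prectype{\sty{U}}{\ltyperecv{\pp}{\ell_i}{\basetype_i}{\sty{U}_i}{i \in J}}$, where $J = \bigcup_j I_j \subseteq I$ and $\sty{U}_i = \ltyperecv{\pq}{\ell'_j}{\basetype'_j}{\sty{V}_{i,j}}{j \in J_i}$.

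It remains to show $\sty{T}_i \subtype_\Theta \sty{U}_i$. Since $\unfold{\sty{U}_i} = \sty{U}_i$ is an external choice on $\pq$, \cref{subtype-congruences}(1) reduces this to exhibiting, for each $j \in J_i$, a reduct $\reduceexternal{\sty{T}_i}{\sty{T}^j_i}{\pq}{\ell'_j}{\basetype'_j}$ with $\sty{T}^j_i \subtype_\Theta \sty{V}_{i,j}$, and we have exactly these. The expected obstacle is this bookkeeping: the $\reduceexternalsymbol$-$\&$ rule may discard branches, which is why we must pass to $J \subseteq I$. We must also check that $J_i \neq \emptyset$ for every $i \in J$ (true by definition of $J$) and that the induction measure decreases, which holds because the $\prectypesymbol$ derivations for the $\sty{V}_j$ are strict subderivations.
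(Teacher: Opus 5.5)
Your proof is correct and is essentially the paper's own argument: you derive $\Recvs{\pp}(\sty{U})$, pass to a $\prectypesymbol$ derivation via \cref{recvs-prectype}, and induct on it with the same three cases. These are the base case, the \RuleName{$\prectypesymbol\&$-ind} case (where \RuleName{$\reduceexternalsymbol$-$\&$} shrinks $I$ and \cref{subtype-congruences}(1) closes the argument), and the recursive case. One wording point to tighten: the induction must be on the $\prectypesymbol$ derivation, not on the $\Recvs{\pp}$ derivation, because \RuleName{Recvs-rec} has the open body $\sty{U}''$ as its premise rather than the unfolding $\subst{\sty{U}''}{\ltypevar{X} \mapsto \sty{U}}$ that your recursive case needs.
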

\begin{proof}
  We give the proof of (1); the proof of (2) is similar.
  We have $\Recvs{\pp}(\sty{U})$ because $\sty{T} \subtype_{\Theta} \sty{U}$, and hence $\prectype{\sty{U}}{\ltyperecv{\pp}{\ell'_k}{\basetype'_k}{\sty{U}'_k}{k \in K}}$ by \cref{recvs-prectype}.
  We proceed by induction on the latter.
  \begin{itemize}
    \item If the proof is by rule \RuleName{$\prectypesymbol\&$-base}, then $\sty{U} = \ltyperecv{\pp}{\ell'_k}{\basetype'_k}{\sty{U}'_k}{k \in K}$.
      By subtyping, we have for every $i \in I'$ that $\reduceexternal{\sty{T}}{\sty{T}'_k}{\pp}{\ell'_k}{\basetype'_k}$ for some $\sty{T}'_k \subtype_\Theta \sty{U}'_k$.
      This implies $\reduceexternal{\unfold{\sty{T}}}{\sty{T}'}{\pp}{\ell'_k}{\basetype'_k}$ by \cref{inductive-predicates-unfold}.
      The latter can only be by rule \RuleName{$\reduceexternalsymbol$-base}, so for every $k$, there is some $i$ such that $(\ell_i, \basetype_i, \sty{T}_i) = (\ell'_k, \basetype'_k, \sty{T}'_k)$.
      Hence we can write $\sty{U}$ as $\ltyperecv{\pp}{\ell_i}{\basetype_i}{\sty{U}_i}{i \in J}$ where $J \subseteq I$ and $\sty{T}_i \subtype_\Theta \sty{U}_i$.
    \item If the proof is by rule \RuleName{$\prectypesymbol\&$-ind}, then $\sty{U}$ has the form $\ltyperecv{\pq}{\ell'_k}{\basetype'_k}{\sty{U}'_k}{k \in K}$.
      By subtyping, we have for every $k \in K$ that $\reduceexternal{\sty{T}}{\sty{T}'_k}{\pq}{\ell'_k}{\basetype'_k}$ for some $\sty{T}'_k \subtype_\Theta \sty{U}'_k$.
      \Cref{inductive-predicates-unfold} then implies that $\reduceexternal{\unfold{\sty{T}}}{\sty{T}'_k}{\pq}{\ell'_k}{\basetype'_k}$, which means that $\sty{T}'_k = \ltyperecv{\pp}{\ell_i}{\basetype_i}{\sty{T}'_{ki}}{i \in I_k}$, where $I_k \subseteq I$ and $\reduceexternal{\sty{T}_i}{\sty{T}_{ki}}{\pq}{\ell'_k}{\basetype'_k}$.
      By the inductive hypothesis, there exist $J_k \subseteq I_k$ and $\{\sty{U}_{ki}\}_{i \in J_k}$, such that $\prectype{\sty{U}'_k}{\ltyperecv{\pp}{\ell_i}{\basetype_i}{\sty{U}_{ki}}{i \in J_k}}$, and $\sty{T}_{ki} \subtype_{\Theta} \sty{U}_{ki}$ for each $i \in J_k$.
      We can therefore conclude by taking $J = \bigcup_{k \in K} J_k$ and $\sty{U}_i = \ltyperecv{\pq}{\ell'_k}{\basetype'_k}{\sty{U}_{ki}}{k \in K_i}$, where $K_i = \{k \in K \mid i \in J_k\}$; we have $\sty{T}_{i} \subtype_{\Theta} \sty{U}_i$ by \cref{subtype-congruences}.
    \item If the proof is by \RuleName{$\prectypesymbol\&$-rec}, then the result is immediate from the inductive hypothesis.
  \end{itemize}
\end{proof}

\begin{lemma}\label{inductive-predicates-swapping}
  \begin{enumerate}
    \item If $\pp \neq \pq$, $I$ is a non-empty finite set, and $\reduceinternal{\sty{U}}{\reduceinternal{\sty{U}_i}{\sty{U}'_i}{\pq}{\ell'}{\basetype'}}{\pp}{\ell_i}{\basetype_i}$ for each $i \in I$, then there is some $\sty{U}'$ such that $\reduceinternal{\sty{U}}{\reduceinternal{\sty{U}'}{\sty{U}'_i}{\pp}{\ell_i}{\basetype_i}}{\pq}{\ell'}{\basetype'}$ for each $i \in I$.
    \item If $I$ is a non-empty finite set, then there exist $\sty{U}_i$ such that $\prectype{\sty{U}}{\ltyperecv{\pp}{\ell_i}{\basetype_i}{\sty{U}_i}{i \in I}}$ and $\reduceinternal{\sty{U}_i}{\sty{U}'_i}{\pq}{\ell'}{\basetype'}$ for all $i \in I$, exactly when there exists some $\sty{U}'$ such that $\reduceinternal{\sty{U}}{\prectype{\sty{U}'}{\ltyperecv{\pp}{\ell_i}{\basetype_i}{\sty{U}'_i}{i \in I}}}{\pq}{\ell'}{\basetype'}$.
    \item If $\pp \neq \pq$, $\prectype{\sty{U}}{\ltyperecv{\pp}{\ell_i}{\basetype_i}{\sty{U}_i}{i \in I}}$ and $\prectype{\sty{U}_i}{\ltyperecv{\pq}{\ell'_j}{\basetype'_j}{\sty{U}_{ij}}{j \in J_i}}$, then there exist $\sty{U}'_j$ such that $\prectype{\sty{U}}{\ltyperecv{\pq}{\ell'_j}{\basetype'_j}{\sty{U}'_j}{j \in J}}$ and $\prectype{\sty{U}'_j}{\ltyperecv{\pp}{\ell_i}{\basetype_i}{\sty{U}_{ij}}{i \in I_j}}$, where $I_j = \{i \in I \mid j \in J_i\}$ and $J = \cup_{i \in I}J_i$.
    \item If $\reduceinternal{\sty{T}}{\sty{U}}{\pp}{\ell}{\basetype}$ and $\reduceexternal{\sty{T}}{\sty{T}'}{\pq}{\ell'}{\basetype'}$, then there is some $\sty{U}'$ such that $\reduceinternal{\sty{T}'}{\sty{U}'}{\pp}{\ell}{\basetype}$ and $\reduceexternal{\sty{U}}{\sty{U}'}{\pq}{\ell'}{\basetype'}$.
    \item If $\pp \neq \pq$, $I$ is a non-empty finite set, and $\reduceexternal{\sty{U}}{\reduceexternal{\sty{U}_i}{\sty{U}'_i}{\pq}{\ell'}{\basetype'}}{\pp}{\ell_i}{\basetype_i}$ for each $i \in I$, then there is some $\sty{U}'$ such that $\reduceexternal{\sty{U}}{\reduceexternal{\sty{U}'}{\sty{U}'_i}{\pp}{\ell_i}{\basetype_i}}{\pq}{\ell'}{\basetype'}$ for each $i \in I$.
    \item If $I$ is a non-empty finite set, then there exist $\sty{U}_i$ such that $\prectype{\sty{U}}{\ltypesend{\pp}{\ell_i}{\basetype_i}{\sty{U}_i}{i \in I}}$ and $\reduceexternal{\sty{U}_i}{\sty{U}'_i}{\pq}{\ell'}{\basetype'}$ for all $i \in I$, exactly when there exists some $\sty{U}'$ such that $\reduceexternal{\sty{U}}{\prectype{\sty{U}'}{\ltypesend{\pp}{\ell_i}{\basetype_i}{\sty{U}'_i}{i \in I}}}{\pq}{\ell'}{\basetype'}$.
    \item If $\pp \neq \pq$, $\prectype{\sty{U}}{\ltypesend{\pp}{\ell_i}{\basetype_i}{\sty{U}_i}{i \in I}}$ and $\prectype{\sty{U}_i}{\ltypesend{\pq}{\ell'_j}{\basetype'_j}{\sty{U}_{ij}}{j \in J_i}}$, then there exist $\sty{U}'_j$ such that $\prectype{\sty{U}}{\ltypesend{\pq}{\ell'_j}{\basetype'_j}{\sty{U}'_j}{j \in J}}$ and $\prectype{\sty{U}'_j}{\ltypesend{\pp}{\ell_i}{\basetype_i}{\sty{U}_{ij}}{i \in I_j}}$, where $I_j = \{i \in I \mid j \in J_i\}$ and $J = \cup_{i \in I}J_i$.
  \end{enumerate}
\end{lemma}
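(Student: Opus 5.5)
We prove each of the seven items by induction on the derivation of the first (outermost) relation in the hypothesis, that is, the derivation of $\reduceinternalsymbol$, $\reduceexternalsymbol$ or $\prectypesymbol$ applied to $\sty{U}$ (or to $\sty{T}$ in (4)). Items (5)--(7) are the exact duals of (1)--(3), obtained by swapping $\oplus$ with $\&$ and $\reduceinternalsymbol$ with $\reduceexternalsymbol$, so we only treat (1)--(4). Throughout, the recursive cases \RuleName{$\reduceinternalsymbol$-rec}, \RuleName{$\reduceexternalsymbol$-rec} and \RuleName{$\prectypesymbol\&$-rec} are dispatched immediately. The inductive hypothesis applies to the unfolded premise, and we reapply the rec rule; \cref{inductive-predicates-unfold} lets us move between $\sty{U}$ and $\unfold{\sty{U}}$ freely.

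\textbf{Item (1).} All the derivations $\reduceinternal{\sty{U}}{\sty{U}_i}{\pp}{\ell_i}{\basetype_i}$ have the same outer shape, since that shape is determined by the syntax of $\sty{U}$. We therefore induct on the maximum height of these finitely many derivations.
\begin{itemize}
\item If $\sty{U}$ is an internal choice on $\pp$, all derivations end in the base rule. Then $\sty{U}_i$ is the $\ell_i$-branch, and we obtain $\reduceinternal{\sty{U}_i}{\sty{U}'_i}{\pq}{\ell'}{\basetype'}$. We take $\sty{U}'$ to be the choice on $\pp$ restricted to the branches $i \in I$, with continuations $\sty{U}'_i$. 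This $\sty{U}'$ is reached from $\sty{U}$ by \RuleName{$\reduceinternalsymbol$-$\oplus$}, using $J = I$ and $\pp \neq \pq$.
\item If $\sty{U}$ is an internal choice on some $\pr \neq \pp$, each step uses \RuleName{$\reduceinternalsymbol$-$\oplus$} with some $J_i$. We then do a case split on whether $\pr = \pq$. When $\pr = \pq$, the second step may be a base step, and we must choose a common branch, which needs care since different $i$ may keep different $J_i$. Otherwise we apply the inductive hypothesis branchwise.
\item If $\sty{U}$ is an external choice, we apply the inductive hypothesis in every branch and reassemble with \RuleName{$\reduceinternalsymbol$-$\&$}.
\end{itemize}
The case $\pr = \pq$ is where I expect the bookkeeping to be hardest. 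The branches chosen must be intersected, and nonemptiness of $I$ is needed so that some branch actually witnesses the $\pq$-send.

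\textbf{Item (2).} Both directions go by induction on the $\prectypesymbol$ derivation (respectively, the $\reduceinternalsymbol$ derivation).
\begin{itemize}
\item Base case: $\reduceinternalsymbol$ on an external choice uses \RuleName{$\reduceinternalsymbol$-$\&$} componentwise, and the result is again an external choice on $\pp$ by base $\prectypesymbol$.
\item \RuleName{$\prectypesymbol\&$-ind} case: an external choice on $\pr$ commutes with the send by \RuleName{$\reduceinternalsymbol$-$\&$}.
\item Internal-choice case: this appears only in the converse direction, where $\sty{U}$ is an internal choice on $\pq$ or on another participant. The $\prectypesymbol$ structure then lives inside the chosen branches.
\end{itemize}

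\textbf{Item (3).} This is a nested induction. The outer induction is on $\prectype{\sty{U}}{\cdots}$. At the base, $\sty{U}$ is itself the $\pp$-choice; we use \RuleName{$\prectypesymbol\&$-ind} with index sets $J_i$, and the formulas $I_j$ and $J = \bigcup J_i$ are exactly those of that rule. In the ind case on $\pr$, we apply the inductive hypothesis per branch and regroup. This is set-theoretic bookkeeping of index sets, and it is the same pattern as the proof of \cref{subtype-prectype}.

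\textbf{Item (4).} We do a simultaneous induction on the two derivations for $\sty{T}$, with cases on the syntax of $\sty{T}$.
\begin{itemize}
\item If $\sty{T}$ is an external choice on $\pq$ used at base by $\reduceexternalsymbol$, then $\sty{U}$ was produced by \RuleName{$\reduceinternalsymbol$-$\&$}. Hence $\sty{T}'$ is a branch, and it already reduces by $\reduceinternalsymbol$ to the corresponding branch of $\sty{U}$. We take $\sty{U}'$ to be that branch; it is reached by $\reduceexternalsymbol$-base.
\item The symmetric case, where $\sty{T}$ is an internal choice on $\pp$, is dual.
\item In the mixed cases, either both relations descend through the same choice or one descends while the other is at base. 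We then apply the inductive hypothesis in the relevant branches. The subset $J$ selected by $\oplus$ or $\&$ must be carried along consistently, and this is possible because the subsets are chosen independently for the two relations.
\end{itemize}
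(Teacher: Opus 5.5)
Your treatment of (1), (3), (4) and of the left-to-right direction of (2) matches the paper's proof: induction on the (maximum) height of the outer derivation, a case split on the shape of the type, and (5)--(7) by duality. The gap is the right-to-left direction of (2), exactly in the case you dismiss with ``the $\prectypesymbol$ structure then lives inside the chosen branches''.

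Suppose $\reduceinternal{\sty{U}}{\sty{U}'}{\pq}{\ell'}{\basetype'}$ is derived by \RuleName{$\reduceinternalsymbol$-base}. Then $\sty{U}$ is an internal choice on $\pq$ and $\sty{U}'$ is one of its branches. No $\prectypesymbol$ judgement can be rebuilt for $\sty{U}$, because none of \RuleName{$\prectypesymbol\&$-base}, \RuleName{$\prectypesymbol\&$-ind}, \RuleName{$\prectypesymbol\&$-rec} has an internal choice on the left of its conclusion. Equivalently, by \cref{recvs-prectype}, $\Recvs{\pp}$ never holds of an internal choice.

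This is not missing bookkeeping: the converse is false. Take $\sty{U} = \ltypesendone{\pq}{\ell'}{\basetype'}{\ltyperecvone{\pp}{\ell}{\basetype}{\ltypeend}}$, $I$ a singleton, and $\sty{U}'_1 = \ltypeend$. Then $\sty{U}' = \ltyperecvone{\pp}{\ell}{\basetype}{\ltypeend}$ witnesses the right-hand side by the two base rules. However, $\prectype{\sty{U}}{\ltyperecvone{\pp}{\ell}{\basetype}{\sty{U}_1}}$ holds for no $\sty{U}_1$. Dually, $\ltyperecvone{\pq}{\ell'}{\basetype'}{\ltypesendone{\pp}{\ell}{\basetype}{\ltypeend}}$ refutes the converse of (6).

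So (2) and (6) can only be proved left to right. The paper's ``the reverse direction is a similar induction on $\prectypesymbol$'' has the same defect. As far as I can see, only the forward direction is used later (e.g.\ in the \RuleName{$\reduceinternalsymbol$-$\&$} case of \cref{subtype-lifting}), so dropping the converse costs nothing.

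There are two smaller points. First, in (1), when $\sty{U}$ is an internal choice on $\pq$, nothing needs to be intersected. Each step $\reduceinternal{\sty{U}_i}{\sty{U}'_i}{\pq}{\ell'}{\basetype'}$ must be by \RuleName{$\reduceinternalsymbol$-base}, since \RuleName{$\reduceinternalsymbol$-$\oplus$} needs the chooser to differ from the recipient. Because labels in a choice are distinct, this step selects the same branch of $\sty{U}$, the one labelled $\ell'$, for every $i$. That branch therefore lies in every $J_i$, and it is the required $\sty{U}'$; non-emptiness of $I$ only guarantees that it exists.

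Second, in (3), consider the \RuleName{$\prectypesymbol\&$-ind} case where the outer choice is on $\pq$ itself. There the inner derivations are by \RuleName{$\prectypesymbol\&$-base}. You take the $\sty{U}'_j$ to be the branches of $\sty{U}$ directly and conclude by \RuleName{$\prectypesymbol\&$-base}, instead of applying the inductive hypothesis.
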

\begin{proof}
  For (1), since $I$ is finite, there is some natural number $h$ such that for every $i$, the derivation of $\reduceinternal{\sty{U}}{\sty{U}_i}{\pp}{\ell_i}{\basetype_i}$ has height at most $h$.
  We proceed by induction on $h$.
  If $h = 0$, then $\sty{U} = \ltypesend{\pp}{\ell_i}{\basetype_i}{\sty{U}_i}{i \in J}$ with $I \subseteq J$.
  We take $\sty{U}' = \ltypesend{\pp}{\ell_i}{\basetype_i}{\sty{U}'_i}{i \in J}$.
  If $h > 0$, then we have the following cases.
  \begin{itemize}
    \item If $\sty{U} = \ltypesend{\pq}{\ell'_j}{\basetype'_j}{\sty{U}'_j}{j \in J}$, then since $I$ is non-empty, we have $(\ell', \basetype') = (\ell'_j, \basetype'_j)$ for some $j$; we can then take $\sty{U}' = \sty{U}'_j$.
    \item If $\sty{U} = \ltypesend{\pr}{\ell''_j}{\basetype''_j}{\sty{U}''_j}{j \in J}$, with $\pr \not\in \{\pp,\pq\}$, then for every $i \in I$ we have $\sty{U}_i = \ltypesend{\pr}{\ell''_j}{\basetype''_j}{\sty{U}_{ij}}{j \in J_i}$ and $\sty{U}'_i = \ltypesend{\pr}{\ell''_j}{\basetype''_j}{\sty{U}'_{ij}}{j \in J'_i}$, where $J'_i \subseteq J_i \subseteq J$, and $\reduceinternal{\sty{U}''_j}{\reduceinternal{\sty{U}_{ij}}{\sty{U}'_{ij}}{\pq}{\ell'}{\basetype'}}{\pp}{\ell_i}{\basetype_i}$ for each $j \in J'_i$.
      For every $j \in \bigcup_{i \in I}{J'_i}$, the inductive hypothesis provides us with $\sty{U}'_j$ such that $\reduceinternal{\sty{U}''_j}{\reduceinternal{\sty{U}'_j}{\sty{U}'_{ij}}{\pp}{\ell_i}{\basetype_i}}{\pq}{\ell'}{\basetype'}$ for all $i \in I$ such that $j \in J'_i$.
      We define $\sty{U}' = \ltypesend{\pr}{\ell''_j}{\basetype''_j}{\sty{U}''_j}{j \in \bigcup_{i \in I} J'_i}$.
    \item If $\sty{U}$ is an external choice, then the result follows from the inductive hypothesis.
    \item If $\sty{U}$ is a recursive type, then the result follows from the inductive hypothesis using the rules for recursive types.
  \end{itemize}

  For (2), we first give the forwards direction, which is by induction on the derivation of $\prectype{\sty{U}}{\ltyperecv{\pp}{\ell_i}{\basetype_i}{\sty{U}_i}{i \in I}}$.
  For \RuleName{$\prectypesymbol\&$-base}, we take $\sty{U}' = \ltyperecv{\pp}{\ell_i}{\basetype_i}{\sty{U}'_i}{i \in I}$.
  For \RuleName{$\prectypesymbol\&$-ind}, we have $\sty{U} = \ltyperecv{\pr}{\ell''_j}{\basetype''_j}{\sty{U}''_j}{j \in J}$ and $\sty{U}_i = \ltyperecv{\pr}{\ell''_j}{\basetype''_j}{\sty{U}''_{ij}}{j \in J_i}$, with $\pp \neq \pq$, $I = \bigcup_{j \in J} I_j$, $J_i = \{j \in J \mid i \in I_j\}$, and $\prectype{\sty{U''}_j}{\ltyperecv{\pp}{\ell_i}{\basetype_i}{\sty{U}''_{ij}}{i \in I_j}}$.
  Since $\reduceinternal{\sty{U}_i}{\sty{U}'_i}{\pq}{\ell'}{\basetype'}$, for every $j \in J_i$, we have $\reduceinternal{\sty{U}_{ij}}{\sty{U}'_{ij}}{\pq}{\ell'}{\basetype'}$ for some $\sty{U}'_{ij}$.
  The inductive hypothesis provides us with $\sty{U}'_j$ such that $\reduceinternal{\sty{U}''_j}{\prectype{\sty{U}'_j}{\ltyperecv{\pp}{\ell_i}{\basetype_i}{\sty{U}'_{ij}}{i \in I_j}}}{\pq}{\ell'}{\basetype'}$,
  so we conclude by defining $\sty{U}' = \ltyperecv{\pr}{\ell''_j}{\basetype''_j}{\sty{U}'_j}{j \in J}$.
  Finally, for \RuleName{$\prectypesymbol\&$-ind}, the result follows from the inductive hypothesis using the rules for recursive types.
  The reverse direction is a similar induction on $\prectypesymbol$.

  For (3), the proof is by induction on $\prectype{\sty{U}}{\ltyperecv{\pp}{\ell_i}{\basetype_i}{\sty{U}_i}{i \in I}}$, with both cases being similar to those in the proof of (2).

  For (4), we proceed by induction on $\reduceinternal{\sty{T}}{\sty{U}}{\pp}{\ell}{\basetype}$.
  If the derivation is by \RuleName{$\reduceinternalsymbol$-base}, then the derivation of $\reduceexternal{\sty{T}}{\sty{T}'}{\pq}{\ell'}{\basetype'}$ is by \RuleName{$\reduceexternalsymbol$-$\oplus$}, and the result follows immediately from the assumptions of this rule.
  If the derivation is by \RuleName{$\reduceinternalsymbol$-$\oplus$}, then the derivation of $\reduceexternal{\sty{T}}{\sty{T}'}{\pq}{\ell'}{\basetype'}$ is again by \RuleName{$\reduceexternalsymbol$-$\oplus$}.
  The result is then immediate from the inductive hypothesis, by using \RuleName{$\reduceinternalsymbol$-$\oplus$} and \RuleName{$\reduceexternalsymbol$-$\oplus$} again.
  Finally, if the derivation is by \RuleName{$\reduceinternalsymbol$-$\&$}, then the derivation of $\reduceexternal{\sty{T}}{\sty{T}'}{\pq}{\ell'}{\basetype'}$ is by either \RuleName{$\reduceexternalsymbol$-base} or \RuleName{$\reduceexternalsymbol$-$\&$}. In the former case, the result follows immediately, while in the latter case, it follows from the inductive hypothesis, using rules \RuleName{$\reduceinternalsymbol$-$\&$} and \RuleName{$\reduceexternalsymbol$-$\&$} again.

  The proofs of (5), (6) and (7) are similar to the proofs of (1), (2) and (3), respectively.
\end{proof}

\begin{lemma}\label{subtype-lifting}
  Assume that $\sty{T} \subtype_{\Theta} \sty{U}$.
  \begin{enumerate}
    \item If $\reduceinternal{\sty{T}}{\sty{T}'}{\pp}{\ell}{\basetype}$, then there is some $\sty{U'}$ such that $\reduceinternal{\sty{U}}{\sty{U'}}{\pp}{\ell}{\basetype}$ and $\sty{T'} \subtype_{\Theta} \sty{U'}$.
    \item If $\prectype{\sty{T}}{\ltyperecv{\pp}{\ell_i}{\basetype_i}{\sty{T}_i}{i \in I}}$, then there exist $\sty{U}_i$ such that $\prectype{\sty{U}}{\ltyperecv{\pp}{\ell_i}{\basetype_i}{\sty{U}_i}{i \in I}}$ and $\sty{T}_i \subtype_{\Theta} \sty{U}_i$.
    \item If $\Recvs{\pp}{(\sty{T})}$ then $\Recvs{\pp}{(\sty{U})}$.
    \item If $\reduceexternal{\sty{U}}{\sty{U}'}{\pp}{\ell}{\basetype}$, then there is some $\sty{T'}$ such that $\reduceexternal{\sty{T}}{\sty{T'}}{\pp}{\ell}{\basetype}$ and $\sty{T'} \subtype_{\Theta} \sty{U'}$.
    \item If $\prectype{\sty{U}}{\ltypesend{\pp}{\ell_i}{\basetype_i}{\sty{U}_i}{i \in I}}$, then there exist $\sty{T}_i$ such that $\prectype{\sty{T}}{\ltypesend{\pp}{\ell_i}{\basetype_i}{\sty{T}_i}{i \in I}}$ and $\sty{T}_i \subtype_{\Theta} \sty{U}_i$.
    \item If $\Sends{\pp}{(\sty{U})}$ then $\Sends{\pp}{(\sty{T})}$.
  \end{enumerate}
\end{lemma}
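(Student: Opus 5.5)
The six items come in three dual pairs. Items (1)--(3) concern sending and receiving on the $\reduceinternalsymbol$ side, and items (4)--(6) are their exact duals, obtained by swapping the roles of $\sty{T}$ and $\sty{U}$ and of sending and receiving. I would therefore prove (1)--(3) in detail and obtain (4)--(6) by the symmetric argument.

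For (1), I would proceed by induction on the derivation of $\reduceinternal{\sty{T}}{\sty{T}'}{\pp}{\ell}{\basetype}$.
\begin{itemize}
\item \emph{Recursive case.} By \cref{inductive-predicates-unfold} we may replace $\sty{T}$ by $\unfold{\sty{T}}$, since $\unfold{\sty{T}} \subtype_\Theta \sty{U}$ holds whenever $\sty{T} \subtype_\Theta \sty{U}$ (the definition only inspects unfoldings).
\item \emph{Base case.} Here $\unfold{\sty{T}}$ is an internal choice on $\pp$. Condition (1) of \cref{def:subtyping} gives the required $\sty{U}'$ directly.
\item \emph{Case \RuleName{$\reduceinternalsymbol$-$\oplus$}.} Here $\unfold{\sty{T}} = \ltypesend{\pq}{\ell'_j}{\basetype'_j}{\sty{T}_j}{j \in J'}$ with $\pp \neq \pq$. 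We have $\reduceinternal{\sty{T}_j}{\sty{T}'_j}{\pp}{\ell}{\basetype}$ for $j \in J \subseteq J'$, and $\sty{T}'$ is the restricted choice on $\pq$. By condition (1) of subtyping we get $\reduceinternal{\sty{U}}{\sty{U}_j}{\pq}{\ell'_j}{\basetype'_j}$ with $\sty{T}_j \subtype_\Theta \sty{U}_j$. The inductive hypothesis then gives $\reduceinternal{\sty{U}_j}{\sty{U}'_j}{\pp}{\ell}{\basetype}$ with $\sty{T}'_j \subtype_\Theta \sty{U}'_j$. Now \cref{inductive-predicates-swapping}(1) swaps the two sends, provided the sets are non-empty, which follows from the non-emptiness requirement on $J$. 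This yields $\sty{U}'$ with $\reduceinternal{\sty{U}}{\sty{U}'}{\pp}{\ell}{\basetype}$ and $\reduceinternal{\sty{U}'}{\sty{U}'_j}{\pq}{\ell'_j}{\basetype'_j}$. It remains to show $\sty{T}' \subtype_\Theta \sty{U}'$, which is exactly the right-hand side of \cref{subtype-congruences}(2).
\item \emph{Case \RuleName{$\reduceinternalsymbol$-$\&$}.} Here $\unfold{\sty{T}}$ is an external choice on $\pq$ whose branches all step. Condition (2) of subtyping gives $\Recvs{\pq}(\sty{U})$. \Cref{subtype-prectype}(1) gives $\prectype{\sty{U}}{\ltyperecv{\pq}{\ell'_i}{\basetype'_i}{\sty{U}_i}{i \in J}}$ with $\sty{T}_i \subtype_\Theta \sty{U}_i$. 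The inductive hypothesis gives $\reduceinternal{\sty{U}_i}{\sty{U}'_i}{\pp}{\ell}{\basetype}$, and \cref{inductive-predicates-swapping}(2) produces $\sty{U}'$ with $\reduceinternal{\sty{U}}{\sty{U}'}{\pp}{\ell}{\basetype}$ and $\prectype{\sty{U}'}{\ltyperecv{\pq}{\ell'_i}{\basetype'_i}{\sty{U}'_i}{i \in J}}$.
\end{itemize}

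The obstacle is concluding $\sty{T}' \subtype_\Theta \sty{U}'$ in the \RuleName{$\reduceinternalsymbol$-$\&$} case. The tool for this is a generalisation of \cref{subtype-congruences}(1) in which the hypothesis on $\unfold{\sty{U}}$ is replaced by $\prectype{\sty{U}}{\ltyperecv{\pq}{\ell_i}{\basetype_i}{\sty{U}_i}{i \in J}}$. I expect to prove this generalisation by induction on $\prectypesymbol$. Since $\unfold{\sty{T}'}$ is an external choice on $\pq$ with branches $\sty{T}'_i$, this suffices once we also check the remaining clauses of the definition of subtyping: $\Recvs{\pq}(\sty{U}')$ holds via \cref{recvs-prectype}, and the case where $\unfold{\sty{U}'}$ is a choice on some other participant is handled by commuting with \cref{inductive-predicates-swapping}(3). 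If this direct route gets tangled, I would instead define the candidate relation $\subtype_\Theta \cup \{(\sty{T}',\sty{U}')\mid \dots\}$ and show it is a pre-simulation, using \cref{union-pre-simulation}.

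For (2), I would proceed by induction on $\prectype{\sty{T}}{\cdots}$.
\begin{itemize}
\item \emph{Base case.} This is \cref{subtype-prectype}(1), with the index set equal to $I$ because we need the full family. To get the full family, I would use condition (4) applied in reverse via \cref{inductive-predicates-swapping}(3).
\item \emph{Case \RuleName{$\prectypesymbol\&$-ind}.} Here $\sty{T}$ is an external choice on $\pq \neq \pp$. \Cref{subtype-prectype}(1) gives $\prectype{\sty{U}}{\ltyperecv{\pq}{\cdots}{}{}{}}$ with related branches. The inductive hypothesis applied to each branch, followed by \cref{inductive-predicates-swapping}(3) to commute the $\pq$ and $\pp$ layers, yields the result. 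Relating the regrouped branches uses \cref{subtype-congruences}(1).
\end{itemize}

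Item (3) then follows from (2) and \cref{recvs-prectype}(1).
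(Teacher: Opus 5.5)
Your treatment of (1) follows the paper's proof: induction on the $\reduceinternalsymbol$-derivation, with \cref{inductive-predicates-swapping}(1) for \RuleName{$\reduceinternalsymbol$-$\oplus$}, and \cref{subtype-prectype}, the induction hypothesis and \cref{inductive-predicates-swapping}(2) for \RuleName{$\reduceinternalsymbol$-$\&$}. You are also right that the last step of the $\&$ case needs a $\prectypesymbol$-version of \cref{subtype-congruences}(1), which the paper leaves implicit. That version follows by induction on $\prectypesymbol$, keeping the left-hand type a syntactic external choice on $\pq$ whose label set contains that of the target:
\begin{itemize}
\item the base case is \cref{subtype-congruences}(1) itself;
\item when $\sty{U}'$ is an external choice on some $\pr \neq \pq$, apply \cref{subtype-congruences}(1) to each $\sty{T}'_i \subtype_\Theta \sty{U}'_i$ to get $\pr$-reducts, reassemble them under $\pq$ with \RuleName{$\reduceexternalsymbol$-$\&$}, use the induction hypothesis, and close with the ``if'' direction of \cref{subtype-congruences}(1).
\end{itemize}
You need not check the clauses of \cref{def:subtyping} directly, and \cref{inductive-predicates-swapping}(3) plays no role here.

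The genuine gap is in (2). In the base case you plan to upgrade the $J \subseteq I$ delivered by \cref{subtype-prectype}(1) to the full index set $I$. This cannot succeed, because (2) as printed is false. Take $\sty{T} = \ltyperecv{\pp}{\ell_i}{\unittype}{\ltypeend}{i \in \{1,2\}}$ and $\sty{U} = \ltyperecvone{\pp}{\ell_1}{\unittype}{\ltypeend}$. Then $\sty{T} \subtype \sty{U}$: clause (2) holds because $\Recvs{\pp}(\sty{U})$, and clause (4) because $\reduceexternal{\sty{T}}{\ltypeend}{\pp}{\ell_1}{\unittype}$. Also $\prectype{\sty{T}}{\sty{T}}$ with $I = \{1,2\}$. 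But the only derivable judgement of the form $\prectype{\sty{U}}{\cdots}$ is $\prectype{\sty{U}}{\sty{U}}$, which has no $\ell_2$-branch. Dually, $\ltypesendone{\pp}{\ell_1}{\unittype}{\ltypeend} \subtype \ltypesend{\pp}{\ell_i}{\unittype}{\ltypeend}{i \in \{1,2\}}$ refutes (5).

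What is true, and what the paper's proof actually establishes by citing \cref{subtype-prectype}, is a weaker form of (2): $\prectype{\sty{U}}{\ltyperecv{\pp}{\ell_i}{\basetype_i}{\sty{U}_i}{i \in J}}$ for some non-empty $J \subseteq I$, with $\sty{T}_i \subtype_\Theta \sty{U}_i$ for $i \in J$. This still yields (3) via \cref{recvs-prectype}(1).

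Your inductive case must then be redone with subsets. The induction hypothesis only gives $I'_k \subseteq I_k$, and \cref{inductive-predicates-swapping}(3) regroups over $\bigcup_k I'_k \subseteq I$. Each regrouped $\sty{U}_i$ is then only $\prectypesymbol$-related to an external choice on $\pq$, whose label set is a subset of that of the corresponding $\sty{T}_i$. So \cref{subtype-congruences}(1), which requires $\unfold{\sty{U}_i}$ itself to be a choice on $\pq$, is not enough to relate the regrouped branches. You need the same $\prectypesymbol$-generalised congruence as in (1).
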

\begin{proof}
  We give the proofs of (1), (2), and (3); the proofs of (4), (5) and (6) are similar.

  The proof of (1) is by induction on $\reduceinternal{\sty{T}}{\sty{T}'}{\pp}{\ell}{\basetype}$.
  The base case is an immediate consequence of $\sty{T} \subtype_{\Theta} \sty{U}$.
  For \RuleName{$\reduceinternalsymbol$-$\oplus$}, the result follows from the inductive hypothesis and \cref{inductive-predicates-swapping}(1).
  For \RuleName{$\reduceinternalsymbol$-$\&$}, the result follows from \cref{subtype-prectype}, the inductive hypothesis and \cref{inductive-predicates-swapping}(2).
  For \RuleName{$\reduceinternalsymbol$-Rec}, the result follows from the inductive hypothesis, because $\sty{T} \subtype_{\Theta} \sty{U}$ implies $\unfold{\sty{T}} \subtype_{\Theta} \sty{U}$.

  The proof of (2) is by induction on $\prectype{\sty{T}}{\ltyperecv{\pp}{\ell_i}{\basetype_i}{\sty{T}_i}{i \in I}}$.
  The base case is an immediate consequence of \cref{subtype-prectype}, while for \RuleName{$\prectypesymbol\&$-ind}, the result follows from \cref{subtype-prectype}, the inductive hypothesis, and \cref{inductive-predicates-swapping}(3).
  For \RuleName{$\prectypesymbol$-Rec}, the result follows from the inductive hypothesis, because $\sty{T} \subtype_{\Theta} \sty{U}$ implies $\unfold{\sty{T}} \subtype_{\Theta} \sty{U}$.

  Finally (3) follows from (2), using \cref{recvs-prectype}.
\end{proof}

\admissiblesubtyping*
\begin{proof}
  The congruence rules for internal and external choices are special cases of \cref{subtype-congruences}.

  To prove the congruence rule for $\mu$, we consider the largest relation $R$ such that $\sty{T} R \sty{U}$ implies that either (1) $\sty{T} \subtype_{\Theta} \sty{U}$ or (2) that $\unfold{\sty{T}} = \subst{\sty{T'}}{\ltypevar{X} \mapsto \sty{T''}}$ and $\unfold{\sty{U}} = \subst{\sty{U'}}{\ltypevar{X} \mapsto \sty{U''}}$
  where $\sty{T'}$ is not a type variable, $\sty{T'} \subtype_{\Theta, \ltypevar{X}} \sty{U}'$, and $\sty{T''} R \sty{U''}$.
  Clearly $(\ltyperec{\ltypevar{X}}{\sty{T}}) R (\ltyperec{\ltypevar{X}}{\sty{U}})$ if $\sty{T} \subtype_{\Theta, X} \sty{U}$, so to conclude it is enough to prove that $R$ is an asynchronous type pre-simulation.
  All of the five conditions are easy to verify using \cref{inductive-predicates-substitution}.

  Reflexivity of subtyping is an easy induction on the session type, using the congruence rules along with $\ltypeend \subtype_{\Theta} \ltypeend$ and $\ltypevar{X} \subtype_{\Theta} \ltypevar{X}$.
  Transitivity is an immediate consequence of \cref{subtype-lifting}.
  For substitution, consider $\sty{T} \subtype_{\ltypevar{X}_1, \dots, \ltypevar{X}_n} \sty{U}$.
  We have that $\unfold{\sty{T}}$ is a type variable iff $\unfold{\sty{U}}$ is the same type variable, in which case the result is one of the assumptions.
  If neither is a type variable, then we have to verify conditions (1) to (4) for the substitutions, but these follow immediately from \cref{inductive-predicates-substitution}.
\end{proof}

\begin{lemma}\label{reduce-multiplication}
  \begin{enumerate}
    \item If $\reduceinternal{\sty{T}}{\sty{U}}{\pp}{\ell}{\basetype}$ then $\reduceinternal{\sty{T}\cdot\sty{T}'}{\sty{U}\cdot\sty{T}'}{\pp}{\ell}{\basetype}$.
    \item If $\Sends{\pp}(\sty{T})$, then $\Sends{\pp}(\sty{T}\cdot \sty{T'})$.
    \item If $\reduceexternal{\sty{T}}{\sty{U}}{\pp}{\ell}{\basetype}$ then $\reduceexternal{\sty{T}\cdot\sty{T}'}{\sty{U}\cdot\sty{T}'}{\pp}{\ell}{\basetype}$.
    \item If $\Recvs{\pp}(\sty{T})$, then $\Recvs{\pp}(\sty{T}\cdot \sty{T'})$.
  \end{enumerate}
\end{lemma}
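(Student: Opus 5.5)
Each of the four parts is proved by a straightforward induction on the derivation of the hypothesis (the inductive relation $\reduceinternalsymbol$, the predicate $\Sends{\pp}$, the relation $\reduceexternalsymbol$, or the predicate $\Recvs{\pp}$). In each case we push multiplication through the constructors using its defining clauses. I give the plan for (1) and (2); (3) and (4) are exactly dual, obtained by swapping the roles of internal and external choice.

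For (1), the base case \RuleName{$\reduceinternalsymbol$-base} has $\sty{T} = \ltypesend{\pp}{\ell_i}{\basetype_i}{\sty{T}_i}{i \in I}$ and $\sty{U} = \sty{T}_i$. Then $\sty{T}\cdot\sty{T}' = \ltypesend{\pp}{\ell_i}{\basetype_i}{(\sty{T}_i\cdot\sty{T}')}{i \in I}$, and \RuleName{$\reduceinternalsymbol$-base} gives the reduction to $\sty{T}_i\cdot\sty{T}'$.

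For \RuleName{$\reduceinternalsymbol$-$\oplus$}, the inductive hypothesis gives $\reduceinternal{\sty{U}_i\cdot\sty{T}'}{\sty{U}'_i\cdot\sty{T}'}{\pp}{\ell}{\basetype}$ for $i \in J$. Applying the rule again yields a reduct $\ltypesend{\pq}{\ell'_i}{\basetype'_i}{(\sty{U}'_i\cdot\sty{T}')}{i \in J}$, which is precisely $(\ltypesend{\pq}{\ell'_i}{\basetype'_i}{\sty{U}'_i}{i \in J})\cdot\sty{T}'$. The case \RuleName{$\reduceinternalsymbol$-$\&$} is identical.

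The only case needing care is \RuleName{$\reduceinternalsymbol$-rec}, where $\sty{T} = \ltyperec{\ltypevar{X}}{\sty{V}}$. We may $\alpha$-rename so that $\ltypevar{X}$ is not free in $\sty{T}'$; then $\sty{T}\cdot\sty{T}' = \ltyperec{\ltypevar{X}}{(\sty{V}\cdot\sty{T}')}$. By \RuleName{$\reduceinternalsymbol$-rec}, it suffices to show $\reduceinternal{\subst{(\sty{V}\cdot\sty{T}')}{\ltypevar{X}\mapsto \ltyperec{\ltypevar{X}}{(\sty{V}\cdot\sty{T}')}}}{\sty{U}\cdot\sty{T}'}{\pp}{\ell}{\basetype}$. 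The inductive hypothesis gives $\reduceinternal{\subst{\sty{V}}{\ltypevar{X}\mapsto\sty{T}}\cdot\sty{T}'}{\sty{U}\cdot\sty{T}'}{\pp}{\ell}{\basetype}$, so what remains is the syntactic identity
\[
\subst{\sty{V}}{\ltypevar{X}\mapsto\sty{T}}\cdot\sty{T}' = \subst{(\sty{V}\cdot\sty{T}')}{\ltypevar{X}\mapsto \sty{T}\cdot\sty{T}'}.
\]
I prove this auxiliary equation by induction on $\sty{V}$, using that $\ltypevar{X}$ is not free in $\sty{T}'$. The variable case $\sty{V} = \ltypevar{X}$ is exactly where the clause $\ltypevar{X}\cdot\sty{T}' = \ltypevar{X}$ makes the two sides agree: both become $\sty{T}\cdot\sty{T}'$. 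The other cases commute directly, including the case $\sty{V}=\ltypeend$, where $\ltypeend\cdot\sty{T}'=\sty{T}'$ is unaffected by the substitution because $\ltypevar{X}$ is not free in $\sty{T}'$.

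This substitution lemma is the main, though mild, obstacle. It is needed verbatim for all four parts and should be stated once.

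For (2), the induction on $\Sends{\pp}$ is simpler. The base case holds since multiplication preserves an outer internal choice on $\pp$. \RuleName{Sends-$\oplus$} follows from the inductive hypothesis and the same rule. \RuleName{Sends-rec} needs no substitution, because the premise is $\Sends{\pp}(\sty{V})$ itself, so $\Sends{\pp}(\sty{V}\cdot\sty{T}')$ and \RuleName{Sends-rec} give the result. Parts (3) and (4) then follow by the dual arguments.
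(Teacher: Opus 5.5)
Your proposal is correct and follows the paper's proof, which just says each part is a trivial induction on the derivation; you have helpfully made explicit the identity $\subst{\sty{V}}{\ltypevar{X}\mapsto\sty{T}}\cdot\sty{T}' = \subst{(\sty{V}\cdot\sty{T}')}{\ltypevar{X}\mapsto\sty{T}\cdot\sty{T}'}$ that the recursion cases rely on. One small slip: that identity is needed only for (1) and (3), since \RuleName{Sends-rec} and \RuleName{Recvs-rec} involve no substitution, as you yourself note for (2).
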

\begin{proof}
  Each of these is a trivial induction.
\end{proof}

\begin{lemma}\label{reduce-multiplication-inverse}
  \begin{enumerate}
    \item If $\reduceinternal{\sty{T}\cdot\sty{T}'}{\sty{U}}{\pp}{\ell}{\basetype}$ and $\Sends{\pp}(\sty{T})$, then $\sty{U} = \sty{U}' \cdot \sty{T}'$, for some $\sty{U}'$ such that $\reduceinternal{\sty{U}}{\sty{U}'}{\pp}{\ell}{\basetype}$.
    \item If $\reduceexternal{\sty{T}\cdot\sty{T}'}{\sty{U}}{\pp}{\ell}{\basetype}$ and $\Recvs{\pp}(\sty{T})$, then $\sty{U} = \sty{U}' \cdot \sty{T}'$, for some $\sty{U}'$ such that $\reduceexternal{\sty{U}}{\sty{U}'}{\pp}{\ell}{\basetype}$.
  \end{enumerate}
\end{lemma}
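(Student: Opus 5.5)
We prove (1); (2) is dual, swapping the roles of internal and external choice and of $\Sends{\pp}$ and $\Recvs{\pp}$. The conclusion should be read as $\reduceinternal{\sty{T}}{\sty{U}'}{\pp}{\ell}{\basetype}$, i.e.\ the reduct of $\sty{T}$ (not of $\sty{U}$).

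\textbf{Choice of induction.} We proceed by induction on the derivation of $\reduceinternal{\sty{T}\cdot\sty{T}'}{\sty{U}}{\pp}{\ell}{\basetype}$, and carry $\Sends{\pp}(\sty{T})$ along as a side hypothesis. Inducting on $\Sends{\pp}(\sty{T})$ instead would break down in the recursive case, as explained below.

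\textbf{Role of $\Sends{\pp}(\sty{T})$.} By inspection of the rules for $\Sends{\pp}$, the type $\sty{T}$ is either an internal choice or a recursive type. It is never $\ltypeend$, a type variable, or an external choice. Hence the outermost constructor of $\sty{T}\cdot\sty{T}'$ is exactly that of $\sty{T}$, so the reduction of $\sty{T}\cdot\sty{T}'$ can always be inverted onto $\sty{T}$. Without this hypothesis the claim fails: if $\sty{T}=\ltypeend$, the reduction would belong entirely to $\sty{T}'$.

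\textbf{Cases on the last rule.}
\begin{itemize}
\item \RuleName{$\reduceinternalsymbol$-base}. Here $\sty{T}=\ltypesend{\pp}{\ell_i}{\basetype_i}{\sty{T}_i}{i\in I}$, so $\sty{T}\cdot\sty{T}'$ is the same choice with continuations $\sty{T}_i\cdot\sty{T}'$. Then $\sty{U}=\sty{T}_i\cdot\sty{T}'$ with $(\ell,\basetype)=(\ell_i,\basetype_i)$. Take $\sty{U}'=\sty{T}_i$; the base rule gives the required reduction of $\sty{T}$.
\item \RuleName{$\reduceinternalsymbol$-$\oplus$}. Here $\sty{T}=\ltypesend{\pq}{\ell'_j}{\basetype'_j}{\sty{T}_j}{j\in J}$ with $\pq\neq\pp$. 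The premise $\Sends{\pp}(\sty{T})$ must then come from \RuleName{Sends-$\oplus$}, so $\Sends{\pp}(\sty{T}_j)$ holds for all $j$. We have $\sty{U}=\ltypesend{\pq}{\ell'_j}{\basetype'_j}{\sty{U}_j}{j\in J'}$ for some $J'\subseteq J$, with $\reduceinternal{\sty{T}_j\cdot\sty{T}'}{\sty{U}_j}{\pp}{\ell}{\basetype}$. The induction hypothesis gives $\sty{U}_j=\sty{U}'_j\cdot\sty{T}'$ with $\reduceinternal{\sty{T}_j}{\sty{U}'_j}{\pp}{\ell}{\basetype}$. Set $\sty{U}'=\ltypesend{\pq}{\ell'_j}{\basetype'_j}{\sty{U}'_j}{j\in J'}$. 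Then $\sty{U}=\sty{U}'\cdot\sty{T}'$ by the definition of multiplication, and \RuleName{$\reduceinternalsymbol$-$\oplus$} gives the reduction of $\sty{T}$.
\item \RuleName{$\reduceinternalsymbol$-$\&$}. This case is impossible, since $\sty{T}$ is not an external choice.
\item \RuleName{$\reduceinternalsymbol$-rec}. Here $\sty{T}=\ltyperec{\ltypevar{X}}{\sty{T}_0}$, where we $\alpha$-rename so that $\ltypevar{X}$ is not free in $\sty{T}'$. Then $\sty{T}\cdot\sty{T}'=\ltyperec{\ltypevar{X}}{(\sty{T}_0\cdot\sty{T}')}$, and the premise is a reduction of $\subst{(\sty{T}_0\cdot\sty{T}')}{\ltypevar{X}\mapsto\ltyperec{\ltypevar{X}}{(\sty{T}_0\cdot\sty{T}')}}$.
\end{itemize}

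\textbf{The recursive case.} We expect this to be the main obstacle. It needs the auxiliary identity
\[
\subst{(\sty{T}_0\cdot\sty{T}')}{\ltypevar{X}\mapsto \sty{S}\cdot\sty{T}'}=\subst{\sty{T}_0}{\ltypevar{X}\mapsto\sty{S}}\cdot\sty{T}'
\quad\text{whenever }\ltypevar{X}\text{ is not free in }\sty{T}'.
\]
We prove it by a routine induction on $\sty{T}_0$. The variable case holds because $\ltypevar{X}\cdot\sty{T}'=\ltypevar{X}$.

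Instantiating $\sty{S}=\sty{T}$, the premise becomes a reduction of $\subst{\sty{T}_0}{\ltypevar{X}\mapsto\sty{T}}\cdot\sty{T}'$, with a strictly shorter derivation.

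We also need $\Sends{\pp}(\subst{\sty{T}_0}{\ltypevar{X}\mapsto\sty{T}})$. From $\Sends{\pp}(\sty{T})$, \RuleName{Sends-rec} gives $\Sends{\pp}(\sty{T}_0)$, and \cref{inductive-predicates-substitution}(4) then gives the substituted form. This is exactly why the induction must be on the reduction derivation: the unfolded type carries no smaller $\Sends{\pp}$-derivation.

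The induction hypothesis now yields $\sty{U}=\sty{U}'\cdot\sty{T}'$ with $\reduceinternal{\subst{\sty{T}_0}{\ltypevar{X}\mapsto\sty{T}}}{\sty{U}'}{\pp}{\ell}{\basetype}$. Applying \RuleName{$\reduceinternalsymbol$-rec} gives $\reduceinternal{\sty{T}}{\sty{U}'}{\pp}{\ell}{\basetype}$, which completes this case.
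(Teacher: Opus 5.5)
Your proof is correct and follows the paper's own argument: induction on the derivation of $\reduceinternal{\sty{T}\cdot\sty{T}'}{\sty{U}}{\pp}{\ell}{\basetype}$, with $\Sends{\pp}(\sty{T})$ ruling out the $\&$ case and forcing \RuleName{Sends-$\oplus$} in the $\oplus$ case. You are also right that the conclusion should read $\reduceinternal{\sty{T}}{\sty{U}'}{\pp}{\ell}{\basetype}$.

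In the recursive case you obtain $\Sends{\pp}$ of the one-step unfolding via \RuleName{Sends-rec} and \cref{inductive-predicates-substitution}, and you state the substitution/multiplication identity explicitly. The paper instead cites \cref{inductive-predicates-unfold}, so your justification is slightly more direct. Your aside that induction on $\Sends{\pp}(\sty{T})$ cannot work is too strong, though. Substitution preserves the height of $\Sends{\pp}$-derivations, so induction on that height (generalising over $\sty{T}$) would also go through.
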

\begin{proof}
  The proof of (1) is by induction on the derivation of $\reduceinternal{\sty{T}\cdot\sty{T}'}{\sty{U}}{\pp}{\ell}{\basetype}$.
  For rule \RuleName{$\reduceinternalsymbol$-base}, $\Sends{\pp}(\sty{T})$ tells us that $\sty{T}$ is not $\ltypeend$, and hence that $\sty{T}$ is an internal choice on $\pp$, and the result follows immediately.
  For rule \RuleName{$\reduceinternalsymbol$-$\oplus$}, $\Sends{\pp}(\sty{T})$ can only be by rule \RuleName{Sends-$\oplus$}, and the result then follows from the inductive hypothesis.
  The derivation cannot be by rule \RuleName{$\reduceinternalsymbol$-$\&$}, because this would contradict $\Sends{\pp}(\sty{T})$.
  Finally, for rule \RuleName{$\reduceinternalsymbol$-rec}, the result follows from the inductive hypothesis, using \cref{inductive-predicates-unfold} to acquire the required instance of $\Sends{\pp}$.

  The proof of (2) is similar.
\end{proof}

\begin{lemma}
  Session types over $\Theta$ form a preordered monoid; the multiplication is $\cdot$, the unit is $\ltypeend$, and the order is $\subtype_\Theta$.
\end{lemma}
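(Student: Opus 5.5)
The preorder part is already done: reflexivity and transitivity of $\subtype_\Theta$ are rules of \cref{admissible-subtyping}. What remains is the two monoid laws and monotonicity of $\effmul$. Throughout I work up to $\alpha$-renaming of bound type variables, so that the side condition in the clause for $(\ltyperec{\ltypevar{X}}{\sty{T}}) \effmul \sty{T}'$ can always be met.

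\emph{Monoid laws.} These are syntactic equalities, and I would prove them by induction on $\sty{T}$. First, $\ltypeend \effmul \sty{T} = \sty{T}$ holds by definition. Second, $\sty{T} \effmul \ltypeend = \sty{T}$ holds because each clause of the definition just rebuilds $\sty{T}$ when $\sty{T}' = \ltypeend$. Third, $(\sty{S} \effmul \sty{T}) \effmul \sty{U} = \sty{S} \effmul (\sty{T} \effmul \sty{U})$ follows clause by clause. For the $\mu$ case I rename $\ltypevar{X}$ so that it is free in neither $\sty{T}$ nor $\sty{U}$. I also need one auxiliary fact, again by induction: $\unfold{\sty{T} \effmul \sty{V}} = \unfold{\sty{T}} \effmul \sty{V}$ whenever $\unfold{\sty{T}} \neq \ltypeend$, and $\unfold{\sty{T}\effmul\sty{V}} = \unfold{\sty{V}}$ when $\unfold{\sty{T}} = \ltypeend$. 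This uses the fact that multiplication commutes with substitution of closed-off variables.

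\emph{Monotonicity.} I would split monotonicity into two one-sided statements and then compose them by transitivity:
\[ \sty{T}\effmul\sty{T}' \subtype_\Theta \sty{U}\effmul\sty{T}' \subtype_\Theta \sty{U}\effmul\sty{U}'. \]

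For right monotonicity, fix a fresh variable $\ltypevar{Y}$ and let $\sty{U}^\circ$ be $\sty{U}$ with every $\ltypeend$ replaced by $\ltypevar{Y}$. Then $\sty{U}\effmul\sty{V} = \subst{\sty{U}^\circ}{\ltypevar{Y}\mapsto \sty{V}}$. By reflexivity, $\sty{U}^\circ \subtype_{\Theta,\ltypevar{Y}} \sty{U}^\circ$. Now apply the substitution rule of \cref{admissible-subtyping}: send each variable of $\Theta$ to itself, using reflexivity for those premises, and send $\ltypevar{Y}$ to $\sty{T}'$ on the left and $\sty{U}'$ on the right. This gives $\sty{U}\effmul\sty{T}' \subtype_\Theta \sty{U}\effmul\sty{U}'$.

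For left monotonicity, I would show that
\[ R = \{(\sty{T}\effmul\sty{V}, \sty{U}\effmul\sty{V}) \mid \sty{T}\subtype_\Theta\sty{U}\} \cup \subtype_\Theta \]
is an asynchronous type pre-simulation. The key preliminary observation is that $\sty{T}\subtype_\Theta\sty{U}$ forces $\unfold{\sty{T}} = \ltypeend$ iff $\unfold{\sty{U}} = \ltypeend$. To see this, note that $\ltypeend$ satisfies no $\Sends{}$, $\Recvs{}$, $\reduceinternalsymbol$ or $\reduceexternalsymbol$, so conditions (1)--(5) rule out every other shape on the opposite side, using \cref{inductive-predicates-unfold}. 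In that end/end case, both sides unfold to $\unfold{\sty{V}}$, and the required successors come from $\sty{V}\subtype_\Theta\sty{V}$, i.e.\ from the $\subtype_\Theta$ component of $R$.

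Otherwise the unfolding fact reduces each condition to the corresponding one for $(\sty{T},\sty{U})$, transported through $\effmul\,\sty{V}$ by \cref{reduce-multiplication}:
\begin{itemize}
\item For (1), the step $\reduceinternal{\sty{U}}{\sty{U}_i}{\pp}{\ell_i}{\basetype_i}$ with $\sty{T}_i\subtype_\Theta\sty{U}_i$ yields $\reduceinternal{\sty{U}\effmul\sty{V}}{\sty{U}_i\effmul\sty{V}}{\pp}{\ell_i}{\basetype_i}$, and the resulting pair lies in $R$.
\item For (2), $\Recvs{\pp}(\sty{U})$ gives $\Recvs{\pp}(\sty{U}\effmul\sty{V})$.
\item Conditions (3) and (4) are handled dually.
\item Condition (5) is immediate from the unfolding fact.
\end{itemize}

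The main obstacle I expect is the unfolding lemma for products. The definition of $\unfold{}$ unfolds through substitution, so one must check that $\subst{(\sty{T}\effmul\sty{V})}{\ltypevar{X}\mapsto \ltyperec{\ltypevar{X}}{(\sty{T}\effmul\sty{V})}} = \subst{\sty{T}}{\ltypevar{X}\mapsto\ltyperec{\ltypevar{X}}{\sty{T}}}\effmul\sty{V}$. This holds when $\ltypevar{X}$ is not free in $\sty{V}$, and with it I expect the remaining verifications to be routine.
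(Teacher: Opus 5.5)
Your proof is correct, and its skeleton matches the paper's. In both, the preorder comes from \cref{admissible-subtyping}, the monoid laws come from induction on the first factor, and monotonicity is split into its two one-sided halves. The difference lies in how those halves are proved.

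For monotonicity in the second factor, the paper inducts on the first factor using the congruence rules of \cref{admissible-subtyping}. It uses the choice congruences with $J = I$, the $\mu$-congruence with the bound variable added to $\Theta$, and the premise $\sty{T}' \subtype_\Theta \sty{U}'$ at each $\ltypeend$ leaf. You instead note that $\sty{U}\effmul\sty{V} = \subst{\sty{U}^\circ}{\ltypevar{Y}\mapsto\sty{V}}$, and obtain the result from reflexivity plus a single use of the substitution rule. This is shorter and handles binders once and for all. The cost is that it relies on the substitution rule rather than only on the congruences.

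For monotonicity in the first factor, you use the same ingredients as the paper: the $\ltypeend$ case split, the identity $\unfold{\sty{T}\effmul\sty{V}} = \unfold{\sty{T}}\effmul\sty{V}$, and \cref{reduce-multiplication}. The paper packages them as the chain $\unfold{\sty{T}\effmul\sty{T}'} = \unfold{\sty{T}}\effmul\sty{T}' \subtype_\Theta \unfold{\sty{U}}\effmul\sty{T}' = \unfold{\sty{U}\effmul\sty{T}'}$, whose middle step is really the claim itself. Your explicit pre-simulation $R$, whose $\subtype_\Theta$ component absorbs the $\ltypeend$ case, turns this into a genuine coinduction. You also correctly isolate the substitution identity that the unfolding lemma needs.
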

\begin{proof}
  Session subtyping is a preorder by \cref{admissible-subtyping}.
  We have $\ltypeend \cdot \sty{T} = \sty{T}$ by definition, while $\sty{T} \cdot \ltypeend = \sty{T}$ and $(\sty{T}\cdot\sty{T'})\cdot\sty{T''} = \sty{T} \cdot (\sty{T'} \cdot \sty{T''})$ hold by a trivial induction on $\sty{T}$.
  We also have $\sty{T} \cdot \sty{T'} \subtype_\Theta \sty{T} \cdot\sty{U'}$ when $\sty{T} \subtype_\Theta \sty{U}$, again by induction on $\sty{T}$, using the congruence rules of \cref{admissible-subtyping}.
  Finally, we have that $\sty{T} \subtype_\Theta \sty{U}$ implies $\sty{T} \cdot \sty{T'} \subtype_\Theta \sty{U} \cdot\sty{T'}$, we consider two cases.
  If $\unfold{\sty{T}} = \ltypeend$, then $\unfold{\sty{U}} = \ltypeend$, and hence $\unfold{\sty{T} \cdot \sty{T'}} = \unfold{\sty{T'}} = \unfold{\sty{U} \cdot\sty{T'}}$. Otherwise, we have
  $\unfold{\sty{T} \cdot \sty{T'}} = 
  \unfold{\sty{T}} \cdot \sty{T'} \subtype_\Theta
  \unfold{\sty{U}} \cdot \sty{T'} = 
  \unfold{\sty{U} \cdot \sty{T'}}$, the middle inequality following from \cref{reduce-multiplication}.
\end{proof}

\subsection{Proofs for \cref{sec:graded-calculus}}

\begin{lemma}[Subtyping for configurations]\label{configuration-subsumption}
  If $\conftypedg{\config{\queue{\rho}}{t}{\queue{\sigma}}}{\basetype}{\sty{T}}$ and $\sty{T} \subtype \sty{U}$, then $\conftypedg{\config{\queue{\rho}}{t}{\queue{\sigma}}}{\basetype}{\sty{U}}$.
\end{lemma}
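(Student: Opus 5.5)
We prove the statement by induction on the derivation of $\conftypedg{\config{\queue{\rho}}{t}{\queue{\sigma}}}{\basetype}{\sty{T}}$, generalising over $\sty{U}$. In each case we push the subtyping step towards the computation, where the typing rule \RuleName{$\subtype$} can absorb it.

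\emph{Case \RuleName{CBase}.} Here the configuration is $\config{\queueempty}{t}{\queueempty}$ and we have $\comptypedg{\emptycontext}{\emptycontext}{\emptycontext}{t}{\basetype}{\sty{T}}$. Since $\sty{T} \subtype \sty{U}$, which is $\subtype_{\emptyset}$, rule \RuleName{$\subtype$} gives $\comptypedg{\emptycontext}{\emptycontext}{\emptycontext}{t}{\basetype}{\sty{U}}$. Applying \RuleName{CBase} again gives the result.

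\emph{Case \RuleName{CSend}.} The configuration is $\config{\queue{\rho}}{t}{\consback{\queue{\sigma}}{\pp}{\msg{\ell}{v}}}$, with $v : \basetype'$ and $\reduceinternal{\sty{T}}{\sty{T}_0}{\pp}{\ell}{\basetype'}$. The premise is $\conftypedg{\config{\queue{\rho}}{t}{\queue{\sigma}}}{\basetype}{\sty{T}_0}$.
\begin{itemize}
  \item By \cref{subtype-lifting}(1) applied to $\sty{T} \subtype \sty{U}$, there is some $\sty{U}_0$ with $\reduceinternal{\sty{U}}{\sty{U}_0}{\pp}{\ell}{\basetype'}$ and $\sty{T}_0 \subtype \sty{U}_0$.
  \item The induction hypothesis on the premise gives $\conftypedg{\config{\queue{\rho}}{t}{\queue{\sigma}}}{\basetype}{\sty{U}_0}$.
  \item \RuleName{CSend} then yields the required typing at $\sty{U}$.
\end{itemize}

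\emph{Case \RuleName{CRecv}.} This is the main subtlety, because the direction is reversed. The configuration is $\config{\consfront{\pp}{\msg{\ell}{v}}{\queue{\rho}}}{t}{\queue{\sigma}}$. The premises are $\conftypedg{\config{\queue{\rho}}{t}{\queue{\sigma}}}{\basetype}{\sty{T}_0}$ and $\reduceexternal{\sty{T}_0}{\sty{T}}{\pp}{\ell}{\basetype'}$. So $\sty{T}$ is the \emph{result} of receiving, and we must find some $\sty{U}_0$ with $\reduceexternal{\sty{U}_0}{\sty{U}}{\pp}{\ell}{\basetype'}$ and $\sty{T}_0 \subtype \sty{U}_0$; the induction hypothesis then finishes the case.

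The natural candidate is the session type $\sty{U}_0 = \ltyperecvone{\pp}{\ell}{\basetype'}{\sty{U}}$. By \RuleName{$\reduceexternalsymbol$-base} it satisfies $\reduceexternal{\sty{U}_0}{\sty{U}}{\pp}{\ell}{\basetype'}$. Since $\unfold{\sty{U}_0}$ is an external choice on $\pp$ with a single branch, \cref{subtype-congruences}(1) reduces $\sty{T}_0 \subtype \sty{U}_0$ to exhibiting some $\sty{T}_1$ with $\reduceexternal{\sty{T}_0}{\sty{T}_1}{\pp}{\ell}{\basetype'}$ and $\sty{T}_1 \subtype \sty{U}$. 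Taking $\sty{T}_1 = \sty{T}$ works, using the given $\sty{T} \subtype \sty{U}$. The induction hypothesis then gives $\conftypedg{\config{\queue{\rho}}{t}{\queue{\sigma}}}{\basetype}{\sty{U}_0}$, and \RuleName{CRecv} concludes.

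The main obstacle is this \RuleName{CRecv} case: we cannot lift $\reduceexternalsymbol$ forwards along subtyping, so we must synthesise a new supertype. The choice of a one-branch external choice makes this work cleanly through \cref{subtype-congruences}. If that lemma were unavailable, the fallback would be to check directly that the relation $\{(\sty{T}_0, \sty{U}_0)\} \cup {\subtype}$ is an asynchronous type pre-simulation; condition (2) there would need \cref{subtype-lifting}(3).
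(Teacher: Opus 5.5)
Your proof is correct, and its overall shape (induction on the configuration typing derivation, generalising over the supertype) is the paper's. The \RuleName{CBase} and \RuleName{CSend} cases are handled exactly as there.

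The difference is in \RuleName{CRecv}. The paper dismisses this case as ``immediate from \cref{subtype-lifting} and the inductive hypothesis'', but no clause of that lemma yields what is needed. Clause (4) only transports an external reduction backwards, from a supertype to a subtype. Here you instead start from $\reduceexternal{\sty{T}_0}{\sty{T}}{\pp}{\ell}{\basetype'}$ and $\sty{T} \subtype \sty{U}$, and must manufacture a supertype of $\sty{T}_0$ that reduces to $\sty{U}$. Your witness $\ltyperecvone{\pp}{\ell}{\basetype'}{\sty{U}}$ (closed because $\sty{U}$ is) is justified by the if-direction of \cref{subtype-congruences}(1) with a one-element index set. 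It is precisely the missing construction, so your write-up is more complete than the paper's sketch. The real work sits in the height induction inside \cref{subtype-congruences}, which handles the cases where $\sty{T}_0$ first sends, or first receives from some $\pq \neq \pp$.

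A minor point about your fallback remark: checking directly that $\{(\sty{T}_0, \sty{U}_0)\} \cup {\subtype}$ is a pre-simulation is circular. Suppose $\unfold{\sty{T}_0}$ is an internal choice on $\pq$ with branches $\ell'_j$, $\basetype'_j$, $\sty{T}'_j$. Condition (1) then asks you to relate each $\sty{T}'_j$ to a type $\ltyperecvone{\pp}{\ell}{\basetype'}{\sty{U}''}$ with $\reduceinternal{\sty{U}}{\sty{U}''}{\pq}{\ell'_j}{\basetype'_j}$; these are the only such reducts of $\sty{U}_0$. That is a new instance of the very claim you are proving. The relation you would actually have to check is ${\subtype}$ together with all pairs $(\sty{S}, \ltyperecvone{\pp}{\ell}{\basetype'}{\sty{V}})$ such that $\reduceexternal{\sty{S}}{\sty{S}'}{\pp}{\ell}{\basetype'}$ and $\sty{S}' \subtype \sty{V}$ for some $\sty{S}'$. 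This does not affect your main argument.
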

\begin{proof}
  By induction on $\conftypedg{\config{\queue{\rho}}{t}{\queue{\sigma}}}{\basetype}{\sty{T}}$.
  The base case is trivial using the subtyping rule for computations, while the two inductive cases are immediate from \cref{subtype-lifting} and the inductive hypothesis.
\end{proof}

\begin{lemma}[Inversion]\label{inversion-lemma}\
  \begin{enumerate}
    \item If $\comptypedg{\Theta}{\Phi}{\Gamma}{\returnterm{v}}{\basetype}{\sty{U}}$ then $\valuetypedg{\Gamma}{v}{\basetype}$ and $\ltypeend \subtype_\Theta h$.
    \item If $\comptypedg{\Theta}{\Phi}{\Gamma}{\bindterm{x}{t}{u}}{\basetype}{\sty{U}}$ then
            there exists $\basetype'$, $\sty{U}'$, $\sty{U}''$ such that
            $\comptypedg{\Theta}{\Phi}{\Gamma}{t}{\basetype'}{\sty{U}'}$, 
            $\comptypedg{\Theta}{\Phi}{\Gamma}{t}{\basetype}{\sty{U}''}$, 
            and $\sty{U}' \cdot \sty{U}'' \subtype_\Theta \sty{U}$.
    \item If $\comptypedg{\Theta}{\Phi}{\Gamma}{\addterm{v}{w}}{\basetype}{\sty{U}}$ then
            $\valuetypedg{\Gamma}{v}{\inttype}$,
            $\valuetypedg{\Gamma}{w}{\inttype}$,
            $\basetype = \inttype$, and $\ltypeend \subtype_\Theta \sty{U}$.
    \item If $\comptypedg{\Theta}{\Phi}{\Gamma}{\lessterm{v}{w}}{\basetype}{\sty{U}}$ then
            $\valuetypedg{\Gamma}{v}{\inttype}$,
            $\valuetypedg{\Gamma}{w}{\inttype}$,
            $\basetype = \booltype$ and $\ltypeend \subtype_\Theta \sty{U}$.
    \item If $\comptypedg{\Theta}{\Phi}{\Gamma}{\ifterm{v}{t_1}{t_2}}{\basetype}{\sty{U}}$,
                    then $\valuetypedg{\Gamma}{v}{\booltype}$,
                    $\comptypedg{\Theta}{\Phi}{\Gamma}{t_1}{\basetype}{\sty{U}}$,
                    $\comptypedg{\Theta}{\Phi}{\Gamma}{t_2}{\basetype}{\sty{U}}$.
    \item If $\comptypedg{\Theta}{\Phi}{\Gamma}{\sendterm{\ell}{v}{\pp}{t}}{\basetype}{\sty{U}}$ then there exists $\basetype', \sty{T}$ such that $\valuetypedg{\Gamma}{v}{\basetype'}$, $\comptypedg{\Theta}{\Phi}{\Gamma}{t}{\basetype}{\sty{T}}$ and $(\ltypesendone{\pp}{\ell}{\basetype'}{\sty{T}}) \subtype_\Theta \sty{U}$.
    \item If $\comptypedg{\Theta}{\Phi}{\Gamma}{\recvterm{\pp}{\ell_i}{x_i}{t_i}{i \in I}}{\basetype}{\sty{U}}$ then there exist $\{\basetype_i\}_{i \in I}$ and $\{\sty{T}_i\}_{i \in I}$ such that $\comptypedg{\Theta}{\Phi}{\Gamma, x_i : \basetype_i}{t_i}{\basetype}{\sty{T}_i}$ for all $i \in I$, and $(\ltyperecv{\pp}{\ell_i}{\basetype_i}{\sty{T}_i}{i \in I}) \subtype_\Theta \sty{U}$.
    \item If $\comptypedg{\Theta}{\Phi}{\Gamma}{\letrecterm{f}{(x_1, \dots, x_n)}{t}{u}}{\basetype}{\sty{U}}$, then there exist $\sty{T}$, $(\basetype_1, \dots, \basetype_n)$, $\basetype'$, and $\ltypevar{X}$ such that
      \begin{gather*}
        \comptypedg{\Theta, \ltypevar{X}}{\Phi, f : (\basetype_1, \dots, \basetype_n) \gto{\sty{\ltypevar{X}}} \basetype'}{\Gamma, x_1 : \basetype_1, \dots, x_n : \basetype_n}{t}{\basetype'}{\sty{T}}
        \\
        \comptypedg{\Theta}{\Phi, f : (\basetype_1, \dots, \basetype_n) \gto{\ltyperec{\ltypevar{X}}{\sty{T}}} \basetype'}{\Gamma}{u}{\basetype}{\sty{U}}
      \end{gather*}
     \item If $\comptypedg{\Theta}{\Phi}{\Gamma}{\applyterm{f}{(v_1, \dots, v_n)}}{\basetype}{\sty{U}}$, then there exist $\sty{T}$, and $\basetype_1, \dots, \basetype_n$ such that
       $\valuetypedg{\Gamma}{v_i}{\basetype_i}$ for all $i$, $(f : (x_1 : \basetype_1, \dots, x_n : \basetype_n) \gto{\sty{T}} \basetype') \in \Phi$, and $\sty{T} \subtype_\Theta \sty{U}$.
  \end{enumerate}
\end{lemma}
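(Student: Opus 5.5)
We prove all nine parts uniformly, by induction on the derivation of the given computation typing judgement. For each syntactic form of $t$ exactly two rules can end the derivation. One is the syntax-directed rule for that form. The other is the subsumption rule \RuleName{$\subtype$}; no other rule has a conclusion whose subject is $\returnterm{v}$, $\bindterm{x}{t}{u}$, and so on.

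\emph{Syntax-directed case.} If the last rule is the syntax-directed one, we read off the premises directly. The required subtyping (e.g.\ $\ltypeend \subtype_\Theta \sty{U}$ in (1), $\sty{U}' \cdot \sty{U}'' \subtype_\Theta \sty{U}$ in (2), or $(\ltypesendone{\pp}{\ell}{\basetype'}{\sty{T}}) \subtype_\Theta \sty{U}$ in (6)) is then an instance of reflexivity, which holds by \cref{admissible-subtyping}.

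\emph{Subsumption case.} If the last rule is \RuleName{$\subtype$}, with premise $\comptypedg{\Theta}{\Phi}{\Gamma}{t}{\basetype}{\sty{T}}$ and $\sty{T} \subtype_\Theta \sty{U}$, we apply the inductive hypothesis to the premise. This yields the same data, together with a subtyping into $\sty{T}$, and we compose it with $\sty{T} \subtype_\Theta \sty{U}$ by transitivity (\cref{admissible-subtyping}). This settles (1)--(4), (6), (7) and (9).

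\emph{Parts (5) and (8).} These state the conclusion at the exact type $\sty{U}$, without a subtyping slack. For (5), the inductive hypothesis gives both branches at type $\sty{T}$, and we reapply \RuleName{$\subtype$} to each branch to lift them to $\sty{U}$. For (8), the inductive hypothesis gives $u$ at type $\sty{T}$ under the extended function context, and we again apply \RuleName{$\subtype$} to $u$. The derivation for the recursive body $t$ is unchanged.

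\emph{Part (2).} As written, the second premise is presumably meant to type $u$ in context $\Gamma, x : \basetype'$ at result type $\basetype$, and I will prove it in that corrected form. The argument is as above: the \RuleName{Let} case gives $\sty{U}' \cdot \sty{U}'' = \sty{U}$ on the nose, and the subsumption case uses transitivity.

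\emph{Main obstacle.} The only point needing care is the type-variable bookkeeping in (8). The variable $\ltypevar{X}$ must be fresh for $\Theta$ so that the context $\Theta, \ltypevar{X}$ is well formed, and the subtyping used to lift $u$ must be at $\subtype_\Theta$, not at $\subtype_{\Theta,\ltypevar{X}}$. Both follow directly from the shape of \RuleName{LetRec}, so no real difficulty arises: the lemma is a routine generation lemma.
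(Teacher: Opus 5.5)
Your proof is correct and matches the paper's argument, which is simply ``an easy induction on the typing derivation''. For each syntactic form, only its syntax-directed rule or subsumption can end the derivation; you handle the former by reflexivity and the latter by transitivity of $\subtype_\Theta$, or by re-applying subsumption in (5) and (8). Your corrected reading of (2), with $u$ typed in $\Gamma, x : \basetype'$, is right. So are your implicit corrections in (1), namely $\ltypeend \subtype_\Theta \sty{U}$ rather than $h$, and in (9), where the result type of $f$ is $\basetype$.
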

\begin{proof}
  Each case is an easy induction on the typing derivation.
\end{proof}

\begin{lemma}\label{subject-reduction-computations}
  Suppose that $\comptypedg{\emptycontext}{\emptycontext}{\emptycontext}{t}{\basetype}{\sty{T}}$.
  \begin{enumerate}
    \item If $\reduceaction{t}{u}{\tauaction}$, then $\comptypedg{\Theta}{\Phi}{\Gamma}{u}{\basetype}{\sty{T}}$.
    \item If $\reduceaction{t}{u}{\sendaction{\pp}{\msg{\ell}{v}}}$ with $v : \basetype$, then $\comptypedg{\Theta}{\Phi}{\Gamma}{u}{\basetype}{\sty{U}}$ for some $\sty{U}$ such that $\reduceinternal{\sty{T}}{\sty{U}}{\pp}{\ell}{\basetype'}$.
    \item If $\reduceaction{t}{u}{\recvaction{\pp}{\msg{\ell}{v}}}$ with $v : \basetype$, then $\Recvs{\pp}(\sty{T})$, and $\comptypedg{\Theta}{\Phi}{\Gamma}{u}{\basetype}{\sty{U}}$ for all $\sty{U}$ such that $\reduceexternal{\sty{T}}{\sty{U}}{\pp}{\ell}{\basetype'}$.
  \end{enumerate}
\end{lemma}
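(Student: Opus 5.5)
The plan is to prove all three parts simultaneously by induction on the derivation of the reduction $\reduceaction{t}{u}{\alpha}$, with an inner induction on the typing derivation to get past uses of \RuleName{$\subtype$}. (The contexts $\Theta,\Phi,\Gamma$ in the conclusions are all empty, since $t$ is closed and typed in empty contexts.) For each reduction rule, I first apply the matching clause of \cref{inversion-lemma}. This yields a derivation of the syntactic form required by the rule, with a type $\sty{T}_0$ satisfying $\sty{T}_0 \subtype \sty{T}$. I then prove the statement for $\sty{T}_0$ and transport it to $\sty{T}$. For $\tauaction$ the transport is just \RuleName{$\subtype$}. For sends I use \cref{subtype-lifting}(1). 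For receives I use \cref{subtype-lifting}(3) to get $\Recvs{\pp}(\sty{T})$ and \cref{subtype-lifting}(4) for the continuations: given $\reduceexternal{\sty{T}}{\sty{U}}{\pp}{\ell}{\basetype'}$, it produces $\sty{T}'$ with $\reduceexternal{\sty{T}_0}{\sty{T}'}{\pp}{\ell}{\basetype'}$ and $\sty{T}'\subtype\sty{U}$, and I finish with \RuleName{$\subtype$}.

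The base cases are as follows.
\begin{itemize}
\item \RuleName{IfT}, \RuleName{IfF}, \RuleName{Sub}, \RuleName{LeT} and \RuleName{LeF} are immediate from inversion, using $\ltypeend\subtype\sty{T}$.
\item \RuleName{LetR} needs a value substitution lemma, proved by a routine induction on typing. It also needs $\ltypeend\cdot\sty{T}''=\sty{T}''$ together with monotonicity of $\cdot$.
\item \RuleName{Send} gives $\ltypesendone{\pp}{\ell}{\basetype'}{\sty{T}_1}\subtype\sty{T}$ from inversion, and \RuleName{$\reduceinternalsymbol$-base} supplies the required $\reduceinternalsymbol$ step.
\item For \RuleName{Recv}, the inverted type is an external choice on $\pp$, so $\Recvs{\pp}$ holds by \RuleName{Recvs-base}. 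The only $\reduceexternalsymbol$ step on label $\ell_j$ from a bare external choice on $\pp$ is \RuleName{$\reduceexternalsymbol$-base}, which is forced by label distinctness. The continuation is then typed by value substitution.
\item For \RuleName{Apply}, I need a function substitution lemma: substituting the body $t$ for calls of $f$ turns $\ltypevar{X}$ into $\ltyperec{\ltypevar{X}}{\sty{T}}$. This follows from the type-substitution rule of \cref{admissible-subtyping}, applied along the typing derivation, together with a lemma saying typing is stable under substitution of closed session types for type variables.
\end{itemize}
The typing of the whole \textsf{letrec} is preserved because the context $\rctx{R}$ is reconstructed around the substituted body. For this I carry out the \RuleName{Apply} case as a small separate induction on $\rctx{R}$.

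The congruence case \RuleName{Cong} is the main obstacle, specifically for \textsf{let}. Here $t=\bindterm{x}{t_1}{t_2}$ with $\reduceaction{t_1}{u_1}{\alpha}$, and inversion gives $\sty{T}_1\cdot\sty{T}_2\subtype\sty{T}$. The three actions are handled as follows.
\begin{itemize}
\item For $\tauaction$, the induction hypothesis applies directly.
\item For sends, the induction hypothesis gives $\reduceinternal{\sty{T}_1}{\sty{U}_1}{\pp}{\ell}{\basetype'}$. Then \cref{reduce-multiplication}(1) gives $\reduceinternal{\sty{T}_1\cdot\sty{T}_2}{\sty{U}_1\cdot\sty{T}_2}{\pp}{\ell}{\basetype'}$, and I finish by lifting.
\item For receives, I must type $\bindterm{x}{u_1}{t_2}$ at every $\sty{U}$ with $\reduceexternal{\sty{T}_1\cdot\sty{T}_2}{\sty{U}}{\pp}{\ell}{\basetype'}$, not only at those of the form $\sty{U}_1\cdot\sty{T}_2$.
\end{itemize}
The receive case is exactly why part (3) includes the extra conclusion $\Recvs{\pp}(\sty{T})$. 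By induction, $\Recvs{\pp}(\sty{T}_1)$ holds, so \cref{reduce-multiplication-inverse}(2) decomposes any such $\sty{U}$ as $\sty{U}'\cdot\sty{T}_2$ with $\reduceexternal{\sty{T}_1}{\sty{U}'}{\pp}{\ell}{\basetype'}$. The induction hypothesis then types $u_1$ at $\sty{U}'$, and \cref{reduce-multiplication}(4) gives $\Recvs{\pp}(\sty{T}_1\cdot\sty{T}_2)$. The \textsf{letrec} congruence case is direct, since the grade of $\letrecterm{f}{\dots}{t}{u}$ is that of $u$. In the receive case there, the induction hypothesis must be applied in a nonempty function context, so the whole statement should be proved in the generalized form with arbitrary $\Psi$ and empty $\Theta,\Gamma$.
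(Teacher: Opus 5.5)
Your overall route matches the paper's proof. You use induction on the reduction and inversion, then transport along $\sty{T}_0 \subtype \sty{T}$ via \cref{subtype-lifting}. For a receive under a \textsf{let}, you use \cref{reduce-multiplication} with \cref{reduce-multiplication-inverse}, so that every target of the external step decomposes as $\sty{U}' \cdot \sty{T}_2$. You are also right that the statement must be generalised to a non-empty function context to get under a \textsf{letrec}; the paper leaves this implicit. In the \RuleName{Cong} case you should make an inner induction on $\rctx{R}$ explicit, as the paper does. The premise of \RuleName{Cong} is the step of the redex itself, not the step of $t_1$, so your induction hypothesis does not directly cover $t_1$.

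\textbf{The gap is the \RuleName{Apply} case.} You rely on the following substitution lemma: a derivation over $\Theta, \ltypevar{X}$ with grade $\sty{T}$ becomes, after substituting a closed $\sty{S}$ for $\ltypevar{X}$, one with grade $\subst{\sty{T}}{\ltypevar{X} \mapsto \sty{S}}$. This lemma is false. \cref{admissible-subtyping} cannot rescue it, because the obstruction is the \RuleName{Let} rule, not subsumption. The substituted premises give $\subst{\sty{T}_1}{\ltypevar{X} \mapsto \sty{S}} \cdot \subst{\sty{T}_2}{\ltypevar{X} \mapsto \sty{S}}$, but you need $\subst{(\sty{T}_1 \cdot \sty{T}_2)}{\ltypevar{X} \mapsto \sty{S}}$. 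These differ because $\ltypevar{X} \cdot \sty{T}_2 = \ltypevar{X}$ while $\sty{S} \cdot \sty{T}_2 \neq \sty{S}$.

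In fact no proof can close this case. Take $\pp \neq \pq$ and
\begin{itemize}
\item $s = \sendterm{\ml{c}}{\unitterm}{\pq}{\returnterm{\unitterm}}$;
\item $r = \recvterm{\pp}{\ell_i}{z : \unittype}{r_i}{i \in \{1,2\}}$, with $\ell_1 = \ml{a}$, $r_1 = f(\unitterm)$, $\ell_2 = \ml{b}$, $r_2 = \returnterm{\unitterm}$;
\item $t = \letrecterm{f}{w}{\bindterm{y}{r}{s}}{f(\unitterm)}$.
\end{itemize}
This $t$ is closed and guarded. It has grade $\ltyperec{\ltypevar{X}}{\sty{T}_0}$, where $\sty{T}_0 = \ltyperecv{\pp}{\ell_i}{\unittype}{\sty{T}_{0,i}}{i \in \{1,2\}}$ with $\sty{T}_{0,1} = \ltypevar{X}$ and $\sty{T}_{0,2} = \sty{C} = \ltypesendone{\pq}{\ml{c}}{\unittype}{\ltypeend}$. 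This typing goes through precisely because $\ltypevar{X} \cdot \sty{C} = \ltypevar{X}$.

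After one \RuleName{Apply} step, the $\ml{a}$-branch of the unfolded body gets grade $(\ltyperec{\ltypevar{X}}{\sty{T}_0}) \cdot \sty{C} = \ltyperec{\ltypevar{X}}{(\sty{T}_0 \cdot \sty{C})}$. That protocol sends $\ml{c}$ twice after $\ml{b}$, so it is not below $\ltyperec{\ltypevar{X}}{\sty{T}_0}$.

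A run of $t$ shows the failure independently of which annotation a derivation picks. The run is: \RuleName{Apply}, receive $\msg{\ml{a}}{\unitterm}$ from $\pp$, \RuleName{Apply}, receive $\msg{\ml{b}}{\unitterm}$, then send $\msg{\ml{c}}{\unitterm}$ to $\pq$ twice, interleaved with \RuleName{LetR} steps. Chaining clauses (1)--(3) of the statement along this run gives residual grades $\ltyperec{\ltypevar{X}}{\sty{T}_0}$, then $\sty{C}$, then (forced) $\ltypeend$. So $\letrecterm{f}{w}{\bindterm{y}{r}{s}}{s}$ would have grade $\ltypeend$. By inversion, that requires an internal choice on $\pq$ to be a subtype of $\ltypeend$, which is impossible.

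So the statement itself fails for recursive calls in non-tail position. The paper's proof has the same hole: it dismisses \RuleName{Apply} as an easy induction on $\rctx{R}$. A correct version needs an extra hypothesis, for example that recursive calls occur only in tail position (never inside the bound computation of a \textsf{let} in the function body). Alternatively, the grading of $\ltypevar{X} \cdot \sty{T}'$ must change. Only then does a substitution lemma of the kind you propose become available.
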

\begin{proof}
  We split the proof into three cases each of which is by induction on $\reduceaction{t}{u}{\alpha}$.
  \begin{itemize}
    \item For $\alpha = \tauaction$ the \RuleName{LetR} case is immediate from the fact that $\ltypeend \subtype_\Theta \sty{T}$ implies $\unfold{\sty{T}} = \ltypeend$, and hence that $\sty{T} \cdot \sty{U} \subtype \ltypeend \cdot \sty{U} = \sty{U}$.
      The \RuleName{IfT}, \RuleName{IfF}, \RuleName{Sub}, \RuleName{LeT}, and \RuleName{LeF} cases are all trivial.
      The \RuleName{Apply} and \RuleName{Cong} cases are both easy inductions on $\rctx{R}$.
    \item For $\alpha = \sendaction{\pp}{\msg{\ell}{v}}$, the \RuleName{Send} case is immediate from the definition of subtyping.
      For the \RuleName{Cong} case, we proceed by induction on $\rctx{R}$. In the base case, and in the case of a recursive function definition, the result is trivial.
      In the case where $\rctx{R}[\hole] = \bindterm{x}{\rctx{R}'[\hole]}{u}$, we obtain $\sty{T}' \cdot \sty{T''} \subtype_\Theta \sty{T}$, and, by the inductive hypothesis, that $\reduceinternal{\sty{T}'}{\sty{U'}}{\pp}{\ell}{\basetype}$, where $\sty{U}'$ is the session type of the reduct of $\rctx{R}'[t]$.
      The result then follows from \cref{subtype-lifting} and \cref{reduce-multiplication}.
    \item For $\alpha = \recvaction{\pp}{\msg{\ell}{v}}$, the \RuleName{Recv} case is immediate from \cref{subtype-lifting}.
      For the \RuleName{Cong} case, we proceed by induction on $\rctx{R}$. In the base case, and in the case of a recursive function definition, the result is trivial.
      In the case where $\rctx{R}[\hole] = \bindterm{x}{\rctx{R}'[\hole]}{u}$, we obtain $\sty{T}' \cdot \sty{T''} \subtype_\Theta \sty{T}$, and, by the inductive hypothesis, that $\Recvs{\pp}(\sty{T}')$, and that for all $\sty{U'}$ such that $\reduceexternal{\sty{T}'}{\sty{U'}}{\pp}{\ell}{\basetype}$, we can assign the session type $\sty{U}'$ to the reduct of $\rctx{R}'[t]$.
      \Cref{reduce-multiplication} and \cref{subtype-lifting} then imply $\Recvs{\pp}(\sty{T})$, and for every $\sty{U}$ such that $\reduceexternal{\sty{T}}{\sty{U}}{\pp}{\ell}{\basetype}$, we obtain $\reduceexternal{\sty{T}'}{\sty{U}'}{\pp}{\ell}{\basetype}$ and $\sty{U}' \cdot \sty{T}'' \subtype \sty{U}$ by \cref{subtype-lifting} and \cref{reduce-multiplication-inverse}.
      The result then follows from \cref{subtype-lifting} and \cref{reduce-multiplication}.
  \end{itemize}
\end{proof}

\begin{lemma}\label{config-recv-type-lemma}
  Assume that $\conftypedg{\config{\queue{\rho}}{t}{\queue{\sigma}}}{\basetype}{\sty{T}}$, that $\reduceexternal{\sty{T}}{\sty{U}}{\pp}{\ell}{\basetype'}$, that $\reduceaction{t}{u}{\recvaction{\pp}{\msg{\ell}{v}}}$ with $v : \basetype'$, and that $\rho$ contains no messages from $\pp$.
  Then $\conftypedg{\config{\queue{\rho}}{u}{\queue{\sigma}}}{\basetype}{\sty{U}}$.
\end{lemma}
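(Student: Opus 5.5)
We prove this by induction on the derivation of $\conftypedg{\config{\queue{\rho}}{t}{\queue{\sigma}}}{\basetype}{\sty{T}}$, generalising over $\sty{T}$ and $\sty{U}$. There are three cases, one for each configuration typing rule. Since $\queue{\rho}$ contains no messages from $\pp$, every \RuleName{CRecv} step in the derivation adds a message from some participant other than $\pp$. The induction will therefore commute the new receive from $\pp$ past these messages, and past all the \RuleName{CSend} steps, using the swapping lemmas.

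\textbf{Case \RuleName{CBase}.} Here $\queue{\rho}$ and $\queue{\sigma}$ are empty and $\comptypedg{\emptycontext}{\emptycontext}{\emptycontext}{t}{\basetype}{\sty{T}}$. By \cref{subject-reduction-computations}(3) we get $\comptypedg{\emptycontext}{\emptycontext}{\emptycontext}{u}{\basetype}{\sty{U}}$ for every $\sty{U}$ with $\reduceexternal{\sty{T}}{\sty{U}}{\pp}{\ell}{\basetype'}$. Applying \RuleName{CBase} then gives the result.

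\textbf{Case \RuleName{CSend}.} The queue is $\queue{\sigma} = \consback{\queue{\sigma'}}{\pq}{\msg{\ell''}{w}}$, with $\reduceinternal{\sty{T}}{\sty{T}_0}{\pq}{\ell''}{\basetype''}$ and $\conftypedg{\config{\queue{\rho}}{t}{\queue{\sigma'}}}{\basetype}{\sty{T}_0}$. Given $\reduceexternal{\sty{T}}{\sty{U}}{\pp}{\ell}{\basetype'}$, \cref{inductive-predicates-swapping}(4) provides some $\sty{U}_0$ with $\reduceexternal{\sty{T}_0}{\sty{U}_0}{\pp}{\ell}{\basetype'}$ and $\reduceinternal{\sty{U}}{\sty{U}_0}{\pq}{\ell''}{\basetype''}$. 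The induction hypothesis gives $\conftypedg{\config{\queue{\rho}}{u}{\queue{\sigma'}}}{\basetype}{\sty{U}_0}$, and \RuleName{CSend} then gives type $\sty{U}$.

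\textbf{Case \RuleName{CRecv}.} The queue is $\queue{\rho} = \consfront{\pq}{\msg{\ell''}{w}}{\queue{\rho'}}$ with $\pq \neq \pp$, since $\queue{\rho}$ has no messages from $\pp$. We have $\reduceexternal{\sty{T}_0}{\sty{T}}{\pq}{\ell''}{\basetype''}$ and $\conftypedg{\config{\queue{\rho'}}{t}{\queue{\sigma}}}{\basetype}{\sty{T}_0}$, and $\queue{\rho'}$ still has no messages from $\pp$. We are given $\reduceexternal{\sty{T}}{\sty{U}}{\pp}{\ell}{\basetype'}$, so $\sty{T}_0$ reduces first by $\pq$ and then by $\pp$. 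We want to swap these to find $\sty{U}_0$ with $\reduceexternal{\sty{T}_0}{\sty{U}_0}{\pp}{\ell}{\basetype'}$ and $\reduceexternal{\sty{U}_0}{\sty{U}}{\pq}{\ell''}{\basetype''}$. With that in hand, the induction hypothesis gives $\conftypedg{\config{\queue{\rho'}}{u}{\queue{\sigma}}}{\basetype}{\sty{U}_0}$, and \RuleName{CRecv} then gives type $\sty{U}$.

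\textbf{Main obstacle.} The hard step is this swap. \cref{inductive-predicates-swapping}(5) goes in the opposite direction, and it is stated for families of continuations rather than exact equality. Moreover, a $\reduceexternalsymbol$ step on $\pq$ can prune branches of receives from $\pp$, so we may only obtain the reversed reduction up to subtyping. I would therefore first try to prove a converse lemma directly, by induction on the derivation of $\reduceexternal{\sty{T}_0}{\sty{T}}{\pq}{\ell''}{\basetype''}$ with a case analysis on \RuleName{$\reduceexternalsymbol$-base}, \RuleName{$\reduceexternalsymbol$-$\&$}, \RuleName{$\reduceexternalsymbol$-$\oplus$} and \RuleName{$\reduceexternalsymbol$-rec}. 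The weaker form I would aim for is: $\reduceexternal{\sty{T}_0}{\sty{U}_0}{\pp}{\ell}{\basetype'}$ and $\reduceexternal{\sty{U}_0}{\sty{U}'}{\pq}{\ell''}{\basetype''}$ with $\sty{U}' \subtype \sty{U}$. This suffices, because the final typing can be weakened to $\sty{U}$ using \cref{configuration-subsumption}.

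If this weaker form is still not directly available, I would strengthen the induction to allow the target type to be any supertype of a $\pp$-reduct. The \RuleName{CSend} case is then handled with \cref{subtype-lifting}.
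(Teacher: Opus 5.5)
Your proposal is correct and matches the paper's proof: induction on the configuration typing derivation, with \RuleName{CBase} handled by \cref{subject-reduction-computations}, \RuleName{CSend} by \cref{inductive-predicates-swapping}(4), and \RuleName{CRecv} by swapping the two external reductions. The swap you flag as the main obstacle is just \cref{inductive-predicates-swapping}(5) with the roles of $\pp$ and $\pq$ interchanged (both are arbitrary distinct participants) and $I$ a singleton. It gives exactly $\sty{T}_0 \to \sty{U}_0 \to \sty{U}$, not merely something up to subtyping, so the fallback through \cref{configuration-subsumption} is not needed.
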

\begin{proof}
  By induction on the derivation of $\conftypedg{\config{\queue{\rho}}{t}{\queue{\sigma}}}{\basetype}{\sty{U}}$.
  \begin{itemize}
    \item If the derivation is by rule \RuleName{CBase}, then the result is immediate from \cref{subject-reduction-computations}.
    \item If the derivation is by rule \RuleName{CSend}, then we have $\queue{\sigma} = \consback{\queue{\sigma}'}{\pp}{\msg{\ell}{v}}$, $\conftypedg{\config{\queue{\rho}}{t}{\queue{\sigma'}}}{\basetype}{\sty{T}'}$, and $\reduceinternal{\sty{T}}{\sty{T}'}{\pp}{\ell}{\basetype}$, where $v : \basetype$. The result therefore follows from the inductive hypothesis and \cref{inductive-predicates-swapping}.
    \item If the derivation is by rule \RuleName{CRecv}, then we have $\queue{\rho} = \consfront{\pq}{\msg{\ell}{v}}{\queue{\rho}'}$, $\pq \neq \pp$, $\conftypedg{\config{\queue{\rho'}}{t}{\queue{\sigma}}}{\basetype}{\sty{T}'}$, and $\reduceexternal{\sty{T}'}{\sty{T}}{\pp}{\ell}{\basetype}$, where $v : \basetype$. The result therefore follows from the inductive hypothesis and \cref{inductive-predicates-swapping}.
  \end{itemize}
\end{proof}

\subjred*
\begin{proof}
  There are five rules by which a configuration can reduce.
  We consider each of them in turn.
  \begin{itemize}
    \item For \RuleName{CInt}, the proof is by induction on $\conftypedg{\config{\rho}{t}{\sigma}}{\basetype}{\sty{T}}$.
      In the base case we use the fact that reduction does not change the type of $t$ (\cref{subject-reduction-computations}). The other cases are all immediate from the inductive hypothesis.
    \item For \RuleName{CProd}, the proof is again by induction on $\conftypedg{\config{\rho}{t}{\sigma}}{\basetype}{\sty{T}}$.
      In the base case, we need to show that $\comptypedg{\emptycontext}{\emptycontext}{\emptycontext}{t}{\basetype}{\sty{T}'}$ implies $\conftypedg{\config{\queueempty}{u}{\consfront{\pp}{m}{\queueempty}}}{\basetype}{\sty{T}'}$.
      This is immediate from \cref{subject-reduction-computations}.
      The inductive cases are then immediate from the inductive hypothesis.
    \item For \RuleName{CRecv}, the proof is again by induction on $\conftypedg{\config{\rho}{t}{\sigma}}{\basetype}{\sty{T}}$.
      The derivation cannot be by \RuleName{CBase}.
      It could be by \RuleName{CRecv}, in which case we have $\config{\rho}{t}{\sigma} = \config{\consfront{\pq}{\msg{\ell}{v}}{\queue{\rho'}}}{t}{\queue{\sigma}}$,
      where $\conftypedg{\config{\queue{\rho'}}{t}{\queue{\sigma}}}{\basetype}{\sty{U}}$ for some $\sty{U}$ such that $\reduceexternal{\sty{U}}{\sty{T}}{\pq}{\ell}{\basetype}$.
      There are two cases to consider.
      If $\pp = \pq$ and $\queue{\rho'}$ contains no messages from $\pp$, then the result follows from \cref{config-recv-type-lemma}.
      Otherwise, the result is immediate from the indcutive hypothesis.
      Finally, the derivation of $\conftypedg{\config{\rho}{t}{\sigma}}{\basetype}{\sty{T}}$ could be by \RuleName{CSend}, in which case the result is immediate from the inductive hypothesis.
    \item For \RuleName{CSend}, we proceed by induction on the derivation of $\conftypedg{\config{\rho}{t}{\sigma}}{\basetype}{\sty{T}}$.
      This derivation cannot be by rule \TRULE{CBase}.
      It could be by \RuleName{CSend} on some participant $\pq$, in which case there are two cases to consider.
      If $\pp = \pq$, then the result is immediate.
      If $\pp \neq \pq$, then the result follows from the inductive hypothesis and \cref{inductive-predicates-swapping}.
      Finally, if it is by \RuleName{CRecv}, the result follows from the inductive hypothesis and \cref{inductive-predicates-swapping}.
    \item For \RuleName{CRecv}, the result is immediate from the \TRULE{CRecv} configuration typing rule.
  \end{itemize}
\end{proof}

\begin{lemma}\label{union-bisimulation}
  Let $(S, \reducesymbol, \Result, \hastype{}{})$ and $(S', \reducesymbol, \Result, \hastype{}{})$ be typed transition systems.
  If $\mathcal{R}$ is a set of typed bisimulations, then the union $(\bigcup \mathcal{R})_{\sty{S}} = \bigcup_{R \in \mathcal{R}} R_{\sty{S}}$ is a typed bisimulation.
  Hence $\bisim_{\sty{S}}$ is the union
  \[
    \bigcup \{R_{\sty{S}} \mid R~\text{is a typed bisimulation}\}
  \]
\end{lemma}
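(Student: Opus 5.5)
The plan is to check directly that the family $(\bigcup \mathcal{R})_{\sty{S}}$ meets every clause of \cref{def:bisim}. Each clause is preserved by unions because it is a \emph{local} condition: it constrains one related pair, and its conclusions are only existential statements about related pairs. Fix a closed session type $\sty{S}$ and suppose $s \,(\bigcup\mathcal{R})_{\sty{S}}\, s'$. By definition of the union there is some $R \in \mathcal{R}$ with $s R_{\sty{S}} s'$. Since $R$ is a typed bisimulation, we get $\hastype{s}{\sty{S}}$ and $\hastype{s'}{\sty{S}}$ at once.

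We then go through clauses (1)--(4), in both directions (a) and (b). Each clause has the form ``if some hypothesis on $\sty{S}$ holds and $s$ (or $s'$) makes a given transition or has a given result, then there exist a type $\sty{T}$ and states $t,t'$ with weak transitions and $t R_{\sty{T}} t'$'' (for clause (2), the conclusion is $\Result(t') = x$ instead). We apply the corresponding clause for $R$ itself. This yields witnesses with $t R_{\sty{T}} t'$, or with $t R_{\sty{S}} t'$ in clause (1). Because $R_{\sty{T}} \subseteq (\bigcup\mathcal{R})_{\sty{T}}$ for every $\sty{T}$, the same witnesses satisfy the clause for the union. The hypotheses on $\sty{S}$, such as $\Sends{\pp}(\sty{S})$, $\unfold{\sty{S}} = \ltypeend$, or $\reduceexternal{\sty{S}}{\sty{T}}{\pp}{\ell}{\basetype}$, refer only to $\sty{S}$ and not to the relation, so nothing further needs checking.

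For the second part, write $B_{\sty{S}} = \bigcup \{R_{\sty{S}} \mid R~\text{is a typed bisimulation}\}$. By the first part, $B$ is a typed bisimulation. It contains every typed bisimulation componentwise, so it is the largest one, and hence it coincides with $\bisim$. This also shows that $\bisim$ exists at all, since the definition in \cref{def:bisim} presupposes a largest typed bisimulation.

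I expect no real obstacle; the argument mirrors \cref{union-pre-simulation}. The only point needing care is that the union is formed index-wise, one component for each session type, so that a clause moving from index $\sty{S}$ to a different index $\sty{T}$ still lands inside the union.
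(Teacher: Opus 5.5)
Your proposal is correct and takes essentially the same route as the paper. The paper states that each clause of the definition of typed bisimulation is easily checked for the index-wise union, and that the union over all typed bisimulations is therefore the largest one; you carry out exactly that check, clause by clause, and your remark that this also shows the largest typed bisimulation exists is a welcome explicit addition rather than a different method.
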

\begin{proof}
  It is easy to verify that $\bigcup \mathcal{R}$ satisfies each of the conditions of \cref{def:bisim}.
  Clearly taking the union over all typed bisimulations gives the largest such, namely $\bisim$.
\end{proof}

\begin{lemma}\label{reducemerge}
  Let $\sty{T}$ be a session type over $\Theta$, and let $\{\sty{U}\}_{k \in K}$ be a family of session types over $\Theta$, indexed by a non-empty finite set $K$.
  If $\reduceinternal{\sty{T}}{\sty{U}_k}{\pp}{\ell}{\basetype}$ for every $k \in K$, then then there is some $\sty{U}'$ such that  $\sty{U}'_k \subtype_\Theta \sty{U}$ for all $k \in K$ and $\reduceinternal{\sty{T}}{\sty{U}'}{\pp}{\ell}{\basetype}$.
\end{lemma}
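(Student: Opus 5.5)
I read the conclusion as the existence of a common \emph{upper} bound: some $\sty{U}'$ with $\reduceinternal{\sty{T}}{\sty{U}'}{\pp}{\ell}{\basetype}$ and $\sty{U}_k \subtype_\Theta \sty{U}'$ for every $k \in K$. The relation $\reduceinternalsymbol$ only lets an implementation \emph{discard} branches of internal choices. By the congruence rule of \cref{admissible-subtyping}, keeping more branches of an internal choice yields a supertype. So the natural $\sty{U}'$ is the result of merging all the given derivations, keeping the union of the retained branches at each internal choice.

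Since $K$ is finite, I induct on $h$, the maximum height of the derivations of $\reduceinternal{\sty{T}}{\sty{U}_k}{\pp}{\ell}{\basetype}$. The shape of $\sty{T}$ fixes which rule ends every one of these derivations. \RuleName{$\reduceinternalsymbol$-base} and \RuleName{$\reduceinternalsymbol$-$\oplus$} apply to internal choices on $\pp$ and on $\pq \neq \pp$ respectively. \RuleName{$\reduceinternalsymbol$-$\&$} applies to external choices, and \RuleName{$\reduceinternalsymbol$-rec} to recursive types; no rule applies to $\ltypeend$ or to a variable. So all derivations end in the same rule and the cases are uniform.

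The cases are as follows.
\begin{enumerate}
  \item \emph{Base case.} $\sty{T} = \ltypesend{\pp}{\ell_i}{\basetype_i}{\sty{T}_i}{i \in I}$. Because labels are distinct, $\ell = \ell_i$ determines a single $i$, so every $\sty{U}_k = \sty{T}_i$. Take $\sty{U}' = \sty{T}_i$ and use reflexivity.
  \item \emph{Internal choice on $\pq \neq \pp$.} $\sty{T} = \ltypesend{\pq}{\ell'_j}{\basetype'_j}{\sty{T}_j}{j \in J}$, and each $\sty{U}_k = \ltypesend{\pq}{\ell'_j}{\basetype'_j}{\sty{U}_{k,j}}{j \in J_k}$ with $J_k \subseteq J$ and $\reduceinternal{\sty{T}_j}{\sty{U}_{k,j}}{\pp}{\ell}{\basetype}$. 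Put $J' = \bigcup_k J_k$, and for each $j \in J'$ put $K_j = \{k \mid j \in J_k\}$, which is non-empty and finite. The inductive hypothesis applied to $\sty{T}_j$ and the family indexed by $K_j$ gives $\sty{U}'_j$ with $\reduceinternal{\sty{T}_j}{\sty{U}'_j}{\pp}{\ell}{\basetype}$ and $\sty{U}_{k,j} \subtype_\Theta \sty{U}'_j$ for $k \in K_j$. Set $\sty{U}' = \ltypesend{\pq}{\ell'_j}{\basetype'_j}{\sty{U}'_j}{j \in J'}$. Then \RuleName{$\reduceinternalsymbol$-$\oplus$} gives the transition. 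The internal-choice congruence of \cref{admissible-subtyping}, applied with $J_k \subseteq J'$, gives $\sty{U}_k \subtype_\Theta \sty{U}'$.
  \item \emph{External choice.} Every branch is kept, so I apply the inductive hypothesis branchwise with the full $K$. Rule \RuleName{$\reduceinternalsymbol$-$\&$} gives the transition, and the external-choice congruence with equal index sets gives the subtyping.
  \item \emph{Recursion.} All derivations pass through the unfolding $\subst{\sty{T}''}{\ltypevar{X} \mapsto \ltyperec{\ltypevar{X}}{\sty{T}''}}$ with smaller height. The claim follows from the inductive hypothesis and \RuleName{$\reduceinternalsymbol$-rec}.
\end{enumerate}

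The main obstacle is the $\oplus$ case. There the different derivations retain different branch sets $J_k$, so the inductive hypothesis has to be applied to the varying sub-families $K_j$. This is why I phrase the induction over arbitrary non-empty finite index families rather than a fixed $K$, with the measure being the maximum height over the family.
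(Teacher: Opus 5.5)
Your proof is correct and follows the paper's argument: the same case split on the last rule, the same union-of-retained-branches construction in the $\oplus$ case, and the same appeal to the congruence rules of \cref{admissible-subtyping}. You also rightly read the garbled conclusion as $\sty{U}_k \subtype_\Theta \sty{U}'$. Your induction on the maximum height over arbitrary non-empty finite families is tighter than the paper's, which inducts on the derivation for a single fixed $k_0$; that derivation gives no inductive hypothesis for branches in $\bigcup_k I_k \setminus I_{k_0}$, whereas your measure covers them. You also treat the \RuleName{$\reduceinternalsymbol$-rec} case, which the paper omits.
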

\begin{proof}
  Let $k_0$ be any element of $K$.
  We proceed by induction on the derivation of $\reduceinternal{\sty{T}}{\sty{U}_{k_0}}{\pp}{\ell}{\basetype}$.
  \begin{itemize}
    \item If the derivation is by rule \RuleName{$\reduceinternalsymbol$-base}, then so is the derivation of every other $\reduceinternal{\sty{T}}{\sty{U}_k}{\pp}{\ell}{\basetype}$. We have $\sty{U}_{k_0} = \sty{U}_{k}$ for all $k \in K$, so we can conclude by reflexivity of $\subtype_{\Theta}$.
    \item If the derivation is by rule \RuleName{$\reduceinternalsymbol$-$\oplus$}, then so is the derivation of every other $\reduceinternal{\sty{T}}{\sty{U}_k}{\pp}{\ell}{\basetype}$.
      We have
      \begin{gather*}
        \sty{T} = \ltypesend{\pq}{\ell_i}{\basetype_i}{\sty{T}_i}{i \in I}
        \qquad
        \sty{U}_k = \ltypesend{\pq}{\ell_i}{\basetype_i}{\sty{U}_{ki}}{i \in I_k}
        \\
        \pp \neq \pq
        \qquad
        I_k \subseteq I
        \qquad
        \reduceinternal{\sty{T}_i}{\sty{U}_{ki}}{\pp}{\ell}{\basetype}~\text{for all}~k \in K, i \in I_k
      \end{gather*}
      Define $I' = \bigcup_{k \in K} I_k$.
      For every $i \in I'$, let $\sty{U}'_i$ be any session type such that $\sty{U}_{ki} \subtype_\Theta \sty{U}'_i$ for all $k$ with $i \in I_k$, and such that $\reduceinternal{\sty{T}_i}{\sty{U}'_i}{\pp}{\ell}{\basetype}$.
      These exist by the inductive hypothesis.
      Define $\sty{U}' = \ltypesend{\pq}{\ell_i}{\basetype_i}{\sty{U}'_i}{i \in I'}$.
      We have $\sty{U}_k \subtype_\Theta \sty{U}'$ by \cref{admissible-subtyping}, and $\reduceinternal{\sty{T}}{\sty{U}'}{\pp}{\ell}{\basetype}$ by \RuleName{$\reduceinternalsymbol$-$\oplus$}.
    \item If the derivation is by rule \RuleName{$\reduceinternalsymbol$-$\&$}, then so is the derivation of every other $\reduceinternal{\sty{T}}{\sty{U}_k}{\pp}{\ell}{\basetype}$.
      We have
      \begin{gather*}
        \sty{T} = \ltyperecv{\pq}{\ell_i}{\basetype_i}{\sty{T}_i}{i \in I}
        \qquad
        \sty{U}_k = \ltyperecv{\pq}{\ell_i}{\basetype_i}{\sty{U}_{ki}}{i \in I}
        \\
        \reduceinternal{\sty{T}_i}{\sty{U}_{ki}}{\pp}{\ell}{\basetype}~\text{for all}~k \in K, i \in I
      \end{gather*}
      For every $i \in I$, let $\sty{U}'_i$ be any session type such that $\sty{U}_{ki} \subtype_\Theta \sty{U}'_i$ for all $k \in K$, and such that $\reduceinternal{\sty{T}_i}{\sty{U}'_i}{\pp}{\ell}{\basetype}$.
      These exist by the inductive hypothesis.
      Define $\sty{U}' = \ltyperecv{\pq}{\ell_i}{\basetype_i}{\sty{U}'_i}{i \in I'}$.
      We have $\sty{U}_k \subtype_\Theta \sty{U}'$ by \cref{admissible-subtyping}, and $\reduceinternal{\sty{T}}{\sty{U}'}{\pp}{\ell}{\basetype}$ by \RuleName{$\reduceinternalsymbol$-$\&$}.
  \end{itemize}
\end{proof}

\bisimper*
\begin{proof}
  We first show closure under subtyping, i.e.\ that if $s \bisim_{\sty{S}} s'$ and $\sty{S} \subtype \sty{S}'$ then $s \bisim_{\sty{S}'} s'$.
  To show $s \bisim_{\sty{S}'} s'$, we consider each of the case of \cref{def:bisim} in turn.
  Case (1) is trivial.
  For (2), the result is immediate from the fact that $\unfold{\sty{S}} = \ltypeend$ implies $\unfold{\sty{S}'} = \ltypeend$.
  For (3), we use the fact that $\Sends{\pp}(S')$ implies $\Sends{\pp}(S)$, and $\reduceinternal{\sty{S}}{\sty{T}}{\pp}{\ell}{\basetype}$ implies there is some $\sty{T}'$ such that $\reduceinternal{\sty{S}'}{\sty{T}'}{\pp}{\ell}{\basetype}$ and $\sty{T} \subtype \sty{T}'$; these are instances of \cref{subtype-lifting}.
  For (4), we use fact that $\reduceexternal{\sty{S}'}{\sty{T}'}{\pp}{\ell}{\basetype}$ implies there is some $\sty{T}$ such that $\reduceexternal{\sty{S}}{\sty{T}}{\pp}{\ell}{\basetype}$ and $\sty{T} \subtype \sty{T}'$; this is also an instance of \cref{subtype-lifting}.

  Symmetry of $\bisim_{\sty{T}}$ is trivial.
  Transitivity is mostly also easy to prove; it is analogous to the usual proof of transitivity of bisimulation; the only difficult case is (3).
  In this case, if $s \bisim_{\sty{T}} s' \bisim_{\sty{T}} s''$, we obtain $\reduceactionmany{s}{t}{\sendaction{\pp}{m}}$, $\reduceactionmany{s'}{t'}{\sendaction{\pp}{m}}$, and $\reduceactionmany{s''}{t''}{\sendaction{\pp}{m}}$, where $t \bisim_{\sty{T}} t' \bisim_{\sty{T}'} t''$, and we have both $\reduceinternal{\sty{S}}{\sty{T}}{\pp}{\ell}{\basetype}$ and $\reduceinternal{\sty{S}}{\sty{T}'}{\pp}{\ell}{\basetype}$.
  It is enough to show that there is some $\sty{T''}$ such that $\reduceinternal{\sty{S}}{\sty{T}''}{\pp}{\ell}{\basetype}$ and $t \bisim_{\sty{T''}} t' \bisim_{\sty{T''}} t''$; this follows from \cref{reducemerge} and closure of $\bisim$ under subtyping.
\end{proof}
\subsection{Proofs for \cref{sec:computation-trees}}

\begin{definition}\label{tree-pre-typing}
A \emph{tree pre-typing relation} is a relation $R$ between computation trees and closed session types, such that the following hold when $t \,R\, \sty{T}$.
  \begin{enumerate}
    \item If $t = \treesend{\role{p}}{\msg{\ell}{v}}{u}$, with $v : \basetype$, then there is some $\sty{U}$ such that $\reduceinternal{\sty{T}}{\sty{U}}{\pp}{\ell}{\basetype}$ and $u \,R\, \sty{U}$.
    \item If $t = \treerecv{\role{p}}{t_m}{m \in M}$, then $\Recvs{\pp}(\sty{T})$.
    \item If $\unfold{\sty{T}} = \ltypesend{\pp}{\ell_i}{\basetype_i}{\sty{T}_i}{i \in I}$, then there exist $m, u$ such that $\reduceaction{t}{u}{\sendaction{\pp}{m}}$.
    \item If $\unfold{\sty{T}} = \ltyperecv{\pp}{\ell_i}{\basetype_i}{\sty{T}_i}{i \in I}$, then there is some natural number $h$ such that, for every $i \in I$ and $v : \basetype_i$, there is some $u \,R\, \sty{T}_i$ such that $\reduceaction{t}{u}{\recvaction{\pp}{\msg{\ell_i}{v}}}$, where the derivation of the latter has height at most $h$.
  \end{enumerate}
\end{definition}

\begin{lemma}\label{union-pre-typing}
  If $\mathcal{R}$ is a set of tree pre-typing relations, then the union $\bigcup \mathcal{R}$ is also a tree pre-typing relation.
  Hence $\hastype{}{}$ is the union
  \[
    \bigcup \{R \mid R~\text{is an tree pre-typing relation}\}
  \]
\end{lemma}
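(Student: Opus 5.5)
The plan is to check directly that $\bigcup \mathcal{R}$ satisfies each of the four conditions of \cref{tree-pre-typing}, exactly as in \cref{union-pre-simulation} and \cref{union-bisimulation}.

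Suppose $t \,(\bigcup \mathcal{R})\, \sty{T}$. Then $t \,R\, \sty{T}$ for some $R \in \mathcal{R}$, and we verify the conditions in turn.
\begin{enumerate}
  \item Condition (1) for $R$ gives some $\sty{U}$ with $\reduceinternal{\sty{T}}{\sty{U}}{\pp}{\ell}{\basetype}$ and $u \,R\, \sty{U}$. Since $R \subseteq \bigcup \mathcal{R}$, we get $u \,(\bigcup \mathcal{R})\, \sty{U}$.
  \item Condition (2) refers only to $\sty{T}$ and not to the relation, so it transfers unchanged.
  \item Condition (3) likewise refers only to $t$ and $\sty{T}$, so it also transfers unchanged.
  \item Condition (4) for $R$ supplies a height bound $h$, together with reducts $u \,R\, \sty{T}_i$ whose reduction derivations have height at most $h$. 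The same $h$ and the same reducts work for $\bigcup \mathcal{R}$, because $R \subseteq \bigcup \mathcal{R}$.
\end{enumerate}
One point needs checking in (4): $h$ must be chosen uniformly in $i$ and $v$, but not uniformly across different members of $\mathcal{R}$. This is fine, because each related pair is witnessed by a single $R$. The union therefore never has to combine bounds from several relations, and no supremum over $\mathcal{R}$ is needed.

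For the second claim, write $\bigcup\{R \mid R~\text{is a tree pre-typing relation}\}$ for the union of all tree pre-typing relations. By the first part, this union is itself a tree pre-typing relation. It contains every tree pre-typing relation, so it is the largest one. Now \cref{def:computation-tree-typing} characterises $\hastype{}{}$ as the largest relation satisfying exactly conditions (1)--(4), that is, as the largest tree pre-typing relation. Hence $\hastype{}{}$ equals this union.

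There is no substantive obstacle. The only thing that needs care is the observation above about the height bound $h$ in condition (4) being chosen per related pair.
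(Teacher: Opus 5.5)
Your proof is correct and follows essentially the same route as the paper. Both arguments check each of the four conditions of \cref{tree-pre-typing} for the union, using that every condition is monotone in the relation. Both then identify the union of all tree pre-typing relations with the typing relation of \cref{def:computation-tree-typing}, as the largest such relation. Your remark that the bound $h$ in condition (4) comes from a single witnessing $R$ is a correct detail that the paper leaves implicit.
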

\begin{proof}
  It is easy to verify that $\bigcup \mathcal{R}$ satisfies each of the conditions of \cref{tree-pre-typing}.
  Clearly taking the union over all tree pre-typing relations gives the largest such, namely $\hastype{}{}$.
\end{proof}

\begin{lemma}\label{reduction-swapping}
  Let $s$ be a computation tree.
  \begin{enumerate}
      \item If $\pp \neq \pq$, and $\reduceaction{s}{\reduceaction{t}{u}{\sendaction{\pq}{m'}}}{\sendaction{\pp}{m}}$, then there is some $t'$ such that $\reduceaction{s}{\reduceaction{t'}{u}{\sendaction{\pp}{m}}}{\sendaction{\pq}{m'}}$.
      \item If $\reduceaction{s}{\reduceaction{t}{u}{\recvaction{\pq}{m'}}}{\sendaction{\pp}{m}}$, then there is some $t'$ such that $\reduceaction{s}{\reduceaction{t'}{u}{\sendaction{\pp}{m}}}{\recvaction{\pq}{m'}}$.
      \item If $\reduceaction{s}{t}{\sendaction{\pp}{m}}$ and $\reduceaction{s}{t'}{\recvaction{\pq}{m'}}$, then there is some $u$ such that $\reduceaction{t}{u}{\recvaction{\pq}{m'}}$ and $\reduceaction{t'}{u}{\sendaction{\pp}{m}}$.
      \item Let $M$ be a non-empty set of messages, and let $h$ be a natural number. If $\pp \neq \pq$, $\reduceaction{s}{t_m}{\recvaction{\pp}{m}}$, with a proof of height at most $h$, and $\reduceaction{t_m}{u_m}{\recvaction{\pq}{m'}}$ for all $m \in M$, then there is some $u$ such that $\reduceaction{t}{u}{\recvaction{\pq}{m'}}$ and $\reduceaction{u}{u_m}{\recvaction{\pp}{m}}$ for all $m \in M$.
  \end{enumerate}
\end{lemma}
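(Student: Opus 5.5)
The plan is to prove all four parts by induction on the derivation of the first reduction out of $s$ (for (4), on the derivations of $s \to t_m$, using the uniform height bound $h$). In every case we case-split on the last rule used, which is determined by the root constructor of $s$. The guiding idea is that the congruence rules \RuleName{SendSend}, \RuleName{RecvSend} and \RuleName{RecvRecv} already commute actions past unrelated roots. So each swap either happens at the root, where the two actions touch disjoint parts of the tree, or is pushed under a congruence and handled by the inductive hypothesis. In (4) the conclusion should read $t_m$ in place of $t$ and $s$ in place of the first $t$, i.e.\ $s \to u$ by $\recvaction{\pq}{m'}$ and $u \to u_m$ by $\recvaction{\pp}{m}$; we prove it in that corrected form.

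For (1), suppose first that $s \to t$ by \RuleName{Send}. Then $s = \treesend{\pp}{m}{t}$, and $t \to u$ sends to $\pq \neq \pp$. Take $t' = \treesend{\pp}{m}{u}$: then $s \to t'$ by \RuleName{SendSend}, and $t' \to u$ by \RuleName{Send}. Suppose instead the rule was \RuleName{SendSend}, so $s = \treesend{\pr}{m''}{s_0}$ with $t = \treesend{\pr}{m''}{t_0}$. Then $t \to u$ is either \RuleName{Send} with $\pr = \pq$, in which case we put $t' = s_0$, or again \RuleName{SendSend}, in which case we apply the inductive hypothesis to $s_0$ and rewrap. If the rule was \RuleName{RecvSend}, the argument is the same. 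Part (2) follows the same pattern. The base case \RuleName{Send} gives $s = \treesend{\pp}{m}{t}$; taking $t' = \treesend{\pp}{m}{u}$ gives $s \to t'$ via \RuleName{RecvSend}. In the congruence cases we again either stop at the root or use the inductive hypothesis. Part (3) is a diamond argument by induction on the send derivation. If the send happens at the root, the receive must be by \RuleName{RecvSend}, and its premise supplies $u$. In the congruence cases we use the inductive hypothesis under the root $\treesendkeyword$.

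Part (4) is the main obstacle, because of the branching in \RuleName{RecvRecv} and the fact that different $m$ may use different derivations. I would induct on $h$, casing on the root of $s$, which is shared by all branches.
\begin{itemize}
\item If $s = \treerecv{\pp}{s_{n}}{n \in N}$, then each derivation of $s \to t_m$ is \RuleName{Recv}, so $t_m = s_m$. Each $s_m$ receives from $\pq$ by assumption. We set $u = \treerecv{\pp}{u_m}{m \in M}$, so $s \to u$ by \RuleName{RecvRecv}, discarding the other branches, and $u \to u_m$ by \RuleName{Recv}.
\item If $s = \treesend{\pr}{m''}{s_0}$, then every derivation is \RuleName{RecvSend}, so $t_m = \treesend{\pr}{m''}{t'_m}$. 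Every reduction $t_m \to u_m$ is also \RuleName{RecvSend}: the alternative would be a \RuleName{Send}-type step, which is not a receive. Hence $u_m = \treesend{\pr}{m''}{u'_m}$, and we apply the inductive hypothesis at height $h-1$.
\item If $s = \treerecv{\pr}{s_n}{n \in N}$ with $\pr \neq \pp$, then each $t_m = \treerecv{\pr}{t_{m,n}}{n \in N_m}$. If $\pr = \pq$, the receive $t_m \to u_m$ happens at the root, so $u_m = t_{m,m'}$, and $s \to s_{m'}$ by \RuleName{Recv}. In that case we take $u = s_{m'}$, and each $s_{m'} \to t_{m,m'} = u_m$ is a receive from $\pp$ by the premises. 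Otherwise $\pr \notin \{\pp,\pq\}$, and the step $t_m \to u_m$ discards further branches. For each $n$ in the union of the surviving index sets, we apply the inductive hypothesis to $s_n$, using the set of those $m$ whose branch keeps $n$. This set is non-empty, and the heights are bounded by $h-1$. We then reassemble using \RuleName{RecvRecv}.
\end{itemize}
In the last case the delicate part is checking that the index set bookkeeping works. In particular, the nonempty sets required by \RuleName{RecvRecv} must line up, and the final step $u \to u_m$ must be able to discard exactly the branches that $t_m \to u_m$ discarded.
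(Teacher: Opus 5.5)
Your proposal is correct and follows the paper's proof. For (1)--(3) you use induction on the derivation of the send step. For (4) you use induction on the height bound $h$ with a case split on the root of $s$, reading the conclusion as $s \to u$ exactly as the paper intends.

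In the case where $s$ has root $\treerecvkeyword_{\pr}$ with $\pr \notin \{\pp,\pq\}$, your bookkeeping is the right one: take the union of the branches that survive both steps, and apply the inductive hypothesis to each $s_n$ with the non-empty set of those $m$ that keep $n$. The paper's write-up instead uses the intersection $\bigcap_{m \in M} M''_m$, which would not in general let $u$ reduce to exactly $u_m$.
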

\begin{proof}
  For (1), the proof is by induction on $\reduceaction{s}{t}{\sendaction{\pp}{m}}$.
  The base case is trivial. For the inductive case, we have $s = \treesend{\pr}{m''}{s'}$ with $\pr \neq \pp$.
  If $\pr = \pq$, then the reduction $\reduceaction{t}{u}{\sendaction{\pq}{m'}}$ is a base case, and the result is immediate.
  If $\pr \neq \pq$, then $\reduceaction{t}{u}{\sendaction{\pq}{m'}}$ is an inductive case, and the result follows from the inductive hypothesis.

  The proof of (2) is similar to the proof of (1).

  For (3), the proof is again by induction on $\reduceaction{s}{t}{\sendaction{\pp}{m}}$.
  The base case is again trivial.
  If the derivation is an inductive case, then so is the derivation of $\reduceaction{s}{t'}{\recvaction{\pq}{m'}}$, and the result follows from the inductive hypothesis.

  For (4), the proof is by well-founded induction on $h$, by inspecting $t$.
  \begin{itemize}
    \item If $s = \treerecv{\pp}{t_m}{m \in M'}$, then we have $M \subseteq M'$, and we can simply take $u = \treerecv{\pp}{u_m}{m \in M}$.
    \item If $s = \treerecv{\pr}{s_{m''}}{m'' \in M''}$, with $\pp \neq \pr$, then we have $\reduceaction{s_{m''}}{t_{m, m''}}{\recvaction{\pp}{m}}$ and $t_m = \treerecv{\pr}{t_{m, m''}}{m'' \in M''_m}$ for some $M''_m \subseteq M$. If $\pr = \pq$, then $m' \in M''$, and we can take $u = s_{m'}$. Otherwise, the reductions $\reduceaction{t_m}{u_m}{\recvaction{\pq}{m'}}$ are all by \RuleName{RecvRecv}, and we can appeal to the inductive hypothesis to obtain $u_{m''} \in M'''$, where $M''' = \bigcap_{m \in M} M''_m$. We can then take $u = \treerecv{\pr}{u_{m''}}{m'' \in M'''}$.
    \item If $s = \treesend{\pr}{m''}{s'}$, then we have $\reduceaction{s'}{u'_m}{\recvaction{\pp}{m}}$ and $\reduceaction{u'_m}{t'_m}{\recvaction{\pq}{m'}}$, where $u_m = \treesend{\pr}{m''}{u'_m}$ and $t_m = \treesend{\pr}{m''}{t'_m}$. We can therefore take $u = \treerecv{\pr}{m''}{u'}$, where $u'$ is obtained from the inductive hypothesis.
  \end{itemize}
\end{proof}

\begin{lemma}\label{computation-tree-typing-lifting}
    Let $t$ be a computation tree such that $\hastype{t}{\sty{T}}$.
    \begin{enumerate}
        \item For each $\pp$, there is at most one triple $(u, \ell, v)$ such that, such that $\reduceaction{t}{u}{\sendaction{\pp}{\msg{\ell}{v}}}$, and this $u$ satisfies $\hastype{s}{\sty{U}}$ for some $\sty{U}$ such that $\reduceinternal{\sty{T}}{\sty{U}}{\pp}{\ell}{\basetype}$, where $v : \basetype$. Moreover, if $\Sends{\pp}(\sty{T})$ holds, then there exists such a $u$.
        \item For each $(\pp, \ell, v)$, and type $\sty{U}$ such that $\reduceexternal{\sty{T}}{\sty{U}}{\pp}{\ell}{\basetype}$, where $v : \basetype$, there is some $\hastype{u}{\sty{U}}$ such that $\reduceaction{t}{u}{\recvaction{\pp}{\msg{\ell}{v}}}$.
    \end{enumerate}
\end{lemma}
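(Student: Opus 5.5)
Part (1) splits into three claims: uniqueness of the send reduct, typing of that reduct, and existence when $\Sends{\pp}(\sty{T})$ holds. Part (2) is an existence claim for receive reducts.

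\emph{Uniqueness in (1).} I would first observe that a derivation of $\reduceaction{t}{u}{\sendaction{\pp}{m}}$ can only use \RuleName{Send} and \RuleName{SendSend}. Receive nodes admit no congruence rule for send actions. Hence such a derivation follows the unique path of $\treesendkeyword$ nodes from the root until the first $\treesendkeyword_{\pp}$. It follows by induction on the derivation that $(u, \ell, v)$ is uniquely determined: $u$ is $t$ with that first $\treesendkeyword_{\pp}$ node deleted.

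\emph{Typing the reduct in (1).} I would prove this by induction on the same derivation, using clause (1) of \cref{def:computation-tree-typing} at each send node.
\begin{itemize}
\item In the \RuleName{Send} case, clause (1) directly gives $\sty{U}$ with $\reduceinternal{\sty{T}}{\sty{U}}{\pp}{\ell}{\basetype}$ and $\hastype{u}{\sty{U}}$.
\item In the \RuleName{SendSend} case we have $t = \treesend{\pq}{m'}{t'}$ with $\pq \neq \pp$. Clause (1) gives $\reduceinternal{\sty{T}}{\sty{T}'}{\pq}{\ell'}{\basetype'}$ with $\hastype{t'}{\sty{T}'}$. The induction hypothesis then gives $\reduceinternal{\sty{T}'}{\sty{U}'}{\pp}{\ell}{\basetype}$ with $\hastype{u'}{\sty{U}'}$.
\end{itemize}
In the \RuleName{SendSend} case I must swap the two $\reduceinternalsymbol$ steps, using \cref{inductive-predicates-swapping}(1) with a singleton $I$. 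This yields $\sty{U}''$ with $\reduceinternal{\sty{T}}{\sty{U}''}{\pp}{\ell}{\basetype}$ and $\reduceinternal{\sty{U}''}{\sty{U}'}{\pq}{\ell'}{\basetype'}$. To conclude $\hastype{\treesend{\pq}{m'}{u'}}{\sty{U}''}$, I would not argue by rule but exhibit a tree pre-typing relation (\cref{tree-pre-typing}). It consists of $\hastype{}{}$ together with all such pairs, closed under this one-step extension, and I would check its four clauses.
\begin{itemize}
\item Clause (1) holds by construction.
\item Clause (2) is vacuous at the root.
\item Clauses (3) and (4) are the real obstacle. They require that whenever $\sty{U}''$ demands a send to, or receipt from, some $\pr$, the tree $\treesend{\pq}{m'}{u'}$ can perform it.
\end{itemize}
For clauses (3) and (4), I would relate $\unfold{\sty{U}''}$ to the demands of $\sty{T}$ and $\sty{U}'$. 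For sends, a demand on $\pr$ in $\sty{U}''$ is either the $\pq$-send, available at the root, or is inherited from $\sty{U}'$; then $u'$ provides it and we lift it via \RuleName{SendSend} or \RuleName{RecvSend}. For receives, the bounded-height witnesses for $u'$ lift with height $+1$. This step needs a small lemma saying that the choice structure of $\unfold{\sty{U}''}$ is obtained from that of $\sty{T}$ by deleting the $\pp$-send. I would prove that lemma by induction on the $\reduceinternalsymbol$ derivation, and I expect it to be the fiddliest part of the whole proof.

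\emph{Existence in (1).} Given $\Sends{\pp}(\sty{T})$, I would use \cref{recvs-prectype}(2) to get $\prectype{\sty{T}}{\ltypesend{\pp}{\ell_i}{\basetype_i}{\sty{T}_i}{i \in I}}$, and induct on that derivation. In the base case, clause (3) of \cref{def:computation-tree-typing} supplies a reduction. In the \RuleName{$\prectypesymbol\oplus$-ind} case, $\unfold{\sty{T}}$ is a $\pq$-send, so clause (3) gives a reduction $t \to t'$ sending to $\pq$. The typing part just proved gives $\hastype{t'}{\sty{T}'}$ for a residual $\sty{T}'$ with $\Sends{\pp}(\sty{T}')$. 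The induction hypothesis then gives a $\pp$-send from $t'$, and \cref{reduction-swapping}(1) moves it in front of the $\pq$-send. The \RuleName{$\prectypesymbol\oplus$-rec} case is immediate.

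\emph{Part (2).} I would proceed by induction on the derivation of $\reduceexternal{\sty{T}}{\sty{U}}{\pp}{\ell}{\basetype}$.
\begin{itemize}
\item In the base case $\unfold{\sty{T}}$ is a $\pp$-receive, and clause (4) of \cref{def:computation-tree-typing} gives the reduct directly.
\item In the \RuleName{$\reduceexternalsymbol$-$\&$} case on $\pq \neq \pp$, clause (4) for $\pq$ gives reducts $u_{m'}$ for every branch message with uniform height bound. The induction hypothesis gives $\pp$-receives from each of them, and \cref{reduction-swapping}(4) recombines these into a single $\pp$-receive from $t$ followed by the $\pq$-receives.
\item In the \RuleName{$\reduceexternalsymbol$-$\oplus$} case, clause (3) together with part (1) gives the $\pq$-send and its typed reduct. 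The induction hypothesis gives a $\pp$-receive after it, and \cref{reduction-swapping}(2) commutes the two.
\end{itemize}
In the $\&$ and $\oplus$ cases, typing of the resulting reduct again needs a pre-typing-relation argument of the kind used in part (1).
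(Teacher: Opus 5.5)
The step that fails is the one you defer to your last sentence. In the \RuleName{$\reduceexternalsymbol$-$\&$} case of part (2), the reduct produced by \cref{reduction-swapping}(4) need not have type $\sty{U}$. No pre-typing relation can show that it does, because part (2) is false as stated. Take distinct $\pp, \pq, \pr$ and
\[
  \sty{S} = \ltyperec{\ltypevar{X}}{\ltypesendcases{\pq}{\mltype{\ml{a}}{\unittype}.\,\ltypevar{X} \\ \mltype{\ml{b}}{\unittype}.\,\ltypesendone{\pr}{\ml{n}}{\unittype}{\ltypeend}}}
  \qquad
  \sty{T} = \ltyperecvone{\pq}{\ml{c}}{\inttype}{\ltyperecvone{\pp}{\ell}{\unittype}{\sty{S}}}
  \qquad
  \sty{U} = \ltyperecvone{\pq}{\ml{c}}{\inttype}{\sty{S}}
\]
so that $\reduceexternal{\sty{T}}{\sty{U}}{\pp}{\ell}{\unittype}$. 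For an integer $k$, let $w_k$ send $\msg{\ml{a}}{\star}$ to $\pq$ exactly $|k|$ times, then $\msg{\ml{b}}{\star}$, then return. Put $w = \treesend{\pr}{\msg{\ml{n}}{\star}}{\treerecv{\pq}{w_k}{\msg{\ml{c}}{k}}}$ and $t = \treerecv{\pp}{w}{\msg{\ell}{\star}}$.

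Then $\hastype{t}{\sty{T}}$ holds:
\begin{itemize}
\item the root is permitted because $\Recvs{\pp}(\sty{T})$;
\item each $\msg{\ml{c}}{k}$ is received by a derivation of height at most $3$ (\RuleName{RecvRecv}, \RuleName{RecvSend}, \RuleName{Recv}), leaving $\treerecv{\pp}{\treesend{\pr}{\msg{\ml{n}}{\star}}{w_k}}{\msg{\ell}{\star}}$;
\item $\treesend{\pr}{\msg{\ml{n}}{\star}}{w_k}$ has type $\sty{S}$ via the residual $\sty{R}_{|k|}$, where $\sty{R}_0 = \ltypesendone{\pq}{\ml{b}}{\unittype}{\ltypeend}$ and $\sty{R}_{j+1} = \ltypesendone{\pq}{\ml{a}}{\unittype}{\sty{R}_j}$ satisfy $\reduceinternal{\sty{S}}{\sty{R}_j}{\pr}{\ml{n}}{\unittype}$.
\end{itemize}
But $w$ is the only $\pp$-reduct of $t$, and $\hastype{w}{\sty{U}}$ fails. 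Clause (1) at the root of $w$ can only be met via \RuleName{$\reduceinternalsymbol$-$\&$}. So it needs a single $\sty{V}$ with $\reduceinternal{\sty{S}}{\sty{V}}{\pr}{\ml{n}}{\unittype}$ and $\hastype{w_k}{\sty{V}}$ for every $k$. Every such $\sty{V}$ comes from a finite derivation and is recursion-free, so it allows only boundedly many consecutive $\ml{a}$-sends.

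The underlying problem is that clause (4) of \cref{def:computation-tree-typing} checks each $\pq$-reduct separately. So typing $t$ never forces a residual that is uniform over the infinitely many payloads of $\inttype$, and \cref{reducemerge} only merges finitely many residuals.

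The paper's own proof skips exactly the same step: it closes the $\&$ case directly with \cref{reduction-swapping}, never typing the recombined reduct. So this is a defect of the statement, which needs an extra hypothesis or a stronger tree typing relation, not an idea you missed relative to the paper.

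The rest of your proposal is sound and tracks the paper closely, and part (1) is fine. Your pre-typing step there is easier than you expect. The relation consisting of $\hastype{}{}$ together with all pairs $(\treesend{\pq}{\msg{\ell'}{v'}}{w}, \sty{V})$ with $\hastype{w}{\sty{V}'}$ and $\reduceinternal{\sty{V}}{\sty{V}'}{\pq}{\ell'}{\basetype'}$ is already a tree pre-typing relation. To see this, invert $\reduceinternal{\unfold{\sty{V}}}{\sty{V}'}{\pq}{\ell'}{\basetype'}$ using \cref{inductive-predicates-unfold}:
\begin{itemize}
\item if $\unfold{\sty{V}}$ is an internal choice on $\pq$, the root serves it;
\item otherwise $\sty{V}'$ is a choice of the same kind on the same participant, with branchwise $\pq$-residuals in the external case, so the required sends and bounded-height receives of $w$ lift by \RuleName{SendSend} and \RuleName{RecvSend}.
\end{itemize}
No lemma relating $\sty{U}''$ to $\sty{T}$ is needed. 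The \RuleName{$\reduceexternalsymbol$-$\oplus$} case of part (2) also goes through, using this relation generalised to trees that reach a typed tree by a single $\pq$-send.
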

\begin{proof}
  For (1), uniqueness of $u$ is a trivial induction on the reduction.
  To prove that a suitable $\sty{U}$ exists, we proceed by induction on the reduction.
  \begin{itemize}
    \item If the reduction is by rule \RuleName{Send}, then this is immediate from the definition of the typing relation.
    \item If the reduction is by rule \RuleName{SendSend}, then we have $t = \treesend{\pq}{\msg{\ell'}{\basetype'}}{t'}$, $\reduceaction{t'}{u'}{\sendaction{\pp}{\msg{\ell}{v}}}$, $u = \treesend{\pq}{\msg{\ell'}{\basetype'}}{u'}$, and $\pp \neq \pq$. By the definition of the typing relation, $t'$ has some type $\sty{T}'$ such that $\reduceinternal{\sty{T}}{\sty{T}'}{\pq}{\ell'}{\basetype'}$, where $v' : \basetype'$. By the inductive hypothesis, $u'$ has some type $\sty{U}'$ such that $\reduceinternal{\sty{T'}}{\sty{U'}}{\pp}{\ell}{\basetype}$. The result therefore follows from \cref{inductive-predicates-swapping}.
  \end{itemize}
  Finally, we show that if $\Sends{\pp}(\sty{T'})$ and $\unfold{\sty{T}} = \subst{T'}{\ltypevar{X}_1 \mapsto \sty{T}_1, \dots}$, then such a $u$ exists.
  We do this by induction on the derivation of $\Sends{\pp}(\sty{T'})$.
  \begin{itemize}
    \item If the derivation is by rule \RuleName{Sends-base}, then the result is immediate from the definition of the typing relation.
    \item If the derivation is by rule \RuleName{Sends-$\oplus$}, then we have $\sty{T'} = \ltypesend{\pq}{\ell_i}{\basetype_i}{\sty{T'}_i}{i \in I}$, with $\pp \neq \pq$.
      By the typing relation, there is some reduction $\reduceaction{t}{t'}{\sendaction{\pq}{\msg{\ell_i}{v}}}$ with $v : \basetype_i$; this $t'$ then has type $\sty{T'}_i$ by the above.
      By the inductive hypothesis, we therefore have some reduction $\reduceaction{t'}{u'}{\sendaction{\pp}{\msg{\ell}{v}}}$; the result then follows from \cref{reduction-swapping}.
    \item If the derivation is by rule \RuleName{Send-rec}, then the result is immediate from the inductive hypothesis (introducing a new type variable).
  \end{itemize}

  For (2), we proceed by induction on the derivation of $\reduceexternal{\sty{T}}{\sty{U}}{\pp}{\ell}{\basetype}$.
  \begin{itemize}
    \item If the derivation is by rule \RuleName{$\reduceexternalsymbol$-base}, then existence of $u$ is immediate from the definition of the typing relation.
    \item If the derivation is by rule \RuleName{$\reduceexternalsymbol$-$\&$}, then we have $\sty{T} = \ltyperecv{\pq}{\ell_i}{\basetype_i}{\sty{T}_i}{i \in I}$, with $\pp \neq \pq$. Define $M = \{(\ell_i, v') \mid i \in I \land v: \basetype_i\}$. For every $m \in M$, there is a reduction $\reduceaction{t}{t'_m}{\recvaction{\pq}{m}}$, where $\hastype{t'_{\msg{\ell_i}{v'}}}{\sty{T}_i}$. By the inductive hypothesis, there are reductions $\reduceaction{t'_m}{u'_m}{\recvaction{\pp}{\msg{\ell}{v}}}$ so the result follows from \cref{reduction-swapping}.
    \item If the derivation is by rule \RuleName{$\reduceexternalsymbol$-$\oplus$}, then we have $\sty{T} = \ltypesend{\pq}{\ell_i}{\basetype_i}{\sty{T}_i}{i \in I}$, with $\pp \neq \pq$. There is a reduction $\reduceaction{t}{t'}{\sendaction{\pq}{m}}$, where $\hastype{t'}{\sty{T}_i}$. By the inductive hypothesis, there is a reduction $\reduceaction{t'}{u'}{\recvaction{\pp}{\msg{\ell}{v}}}$ so the result follows from \cref{reduction-swapping}.
  \end{itemize}
\end{proof}

\begin{lemma}\label{typing-subtyping}
  If $\hastype{t}{\sty{T}}$ and $\sty{T} \subtype \sty{U}$, then $\hastype{t}{\sty{U}}$.
\end{lemma}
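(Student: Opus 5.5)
The proof is coinductive. I would show that the relation
\[
  R = \{(t, \sty{U}) \mid \exists \sty{T}.\ \hastype{t}{\sty{T}} \land \sty{T} \subtype \sty{U}\}
\]
is a tree pre-typing relation in the sense of \cref{tree-pre-typing}. Then \cref{union-pre-typing} gives $R \subseteq {\hastype{}{}}$, which is the claim. Since $\hastype{}{}$ itself is contained in $R$ by reflexivity (\cref{admissible-subtyping}), it suffices to check the four conditions for a pair $t \,R\, \sty{U}$, witnessed by $\hastype{t}{\sty{T}}$ and $\sty{T} \subtype \sty{U}$.

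Conditions (1) and (2) concern the shape of $t$, and each transfers along $\sty{T} \subtype \sty{U}$.
\begin{itemize}
  \item For (1), let $t = \treesend{\pp}{\msg{\ell}{v}}{u}$. Typing gives some $\sty{T}'$ with $\reduceinternal{\sty{T}}{\sty{T}'}{\pp}{\ell}{\basetype}$ and $\hastype{u}{\sty{T}'}$. By \cref{subtype-lifting}(1) there is $\sty{U}'$ with $\reduceinternal{\sty{U}}{\sty{U}'}{\pp}{\ell}{\basetype}$ and $\sty{T}' \subtype \sty{U}'$, so $u \,R\, \sty{U}'$.
  \item For (2), let $t$ be a $\treerecvkeyword_{\pp}$. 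Typing gives $\Recvs{\pp}(\sty{T})$, and \cref{subtype-lifting}(3) gives $\Recvs{\pp}(\sty{U})$.
\end{itemize}

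Conditions (3) and (4) concern the shape of $\unfold{\sty{U}}$.
\begin{itemize}
  \item For (3), let $\unfold{\sty{U}}$ be an internal choice on $\pp$. Clause (3) of \cref{def:subtyping} gives $\Sends{\pp}(\sty{T})$. \Cref{computation-tree-typing-lifting}(1) then yields a reduction $\reduceaction{t}{u}{\sendaction{\pp}{m}}$.
  \item For (4), let $\unfold{\sty{U}} = \ltyperecv{\pp}{\ell_i}{\basetype_i}{\sty{U}_i}{i \in I}$. Clause (4) of subtyping gives, for each $i$, some $\sty{T}_i$ with $\reduceexternal{\sty{T}}{\sty{T}_i}{\pp}{\ell_i}{\basetype_i}$ and $\sty{T}_i \subtype \sty{U}_i$. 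For each $v : \basetype_i$, \cref{computation-tree-typing-lifting}(2) gives $u$ with $\hastype{u}{\sty{T}_i}$ and $\reduceaction{t}{u}{\recvaction{\pp}{\msg{\ell_i}{v}}}$. Hence $u \,R\, \sty{U}_i$.
\end{itemize}

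The main obstacle is the uniform height bound $h$ that (4) requires. The ground types $\inttype$ range over infinitely many values $v$, so finiteness of $I$ alone does not bound the derivation heights. I would handle this by strengthening \cref{computation-tree-typing-lifting}(2): the receive reduction it produces should have height bounded by a number depending only on the derivation of $\reduceexternal{\sty{T}}{\sty{U}}{\pp}{\ell}{\basetype}$, not on $v$. This follows by re-examining that proof by induction on the $\reduceexternalsymbol$-derivation.
\begin{itemize}
  \item In the base case, the height-$h_0$ bound comes from clause (4) of the typing of $t$, and that bound is uniform in $v$ by definition.
  \item In the inductive cases, \cref{reduction-swapping} adds a bounded number of layers, using the bounds already available for the other participant. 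Note that \cref{reduction-swapping}(4) is itself proved by induction on height, so it tracks heights explicitly.
\end{itemize}
Taking the maximum of these bounds over the finitely many $i \in I$ then gives the required $h$.
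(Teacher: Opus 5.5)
Your handling of clauses (1)--(3), and the explicit coinduction through \cref{union-pre-typing}, are fine. They match the paper's proof, which checks the same four clauses using \cref{subtype-lifting}, the clauses of \cref{def:subtyping} and \cref{computation-tree-typing-lifting}, and which says nothing about the bound in clause (4).

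\textbf{The gap: your strengthening of \cref{computation-tree-typing-lifting}(2) is false.}
\begin{itemize}
\item In the base case, the bound is the $h$ given by clause (4) for the particular tree. It therefore depends on the tree, not only on the $\reduceexternalsymbol$-derivation.
\item In the \RuleName{$\reduceexternalsymbol$-$\&$} case, \cref{reduction-swapping}(4) builds a \RuleName{RecvRecv} step. Its premises are the inductive-hypothesis reductions of the subtrees $t'_m$, one for each message $m$ from the intervening participant.
\item If that participant's payload type is $\inttype$, there are infinitely many such subtrees, each with its own bound. The height of the resulting derivation is one more than a supremum that need not be finite.
\end{itemize}
So the problem is not uniformity in $v$: a single receive derivation can already have height $\omega$. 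If all payload types were finite, these sets $M$ would be finite. The constructed derivations would then have finite height by induction, and the maximum over the finitely many pairs $(i,v)$ would give $h$.

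\textbf{With $\inttype$ the step cannot be repaired, because the statement fails under the current definitions.}
Take distinct $\pp,\pq,\pr$, and let $\sty{V} = \ltyperec{\ltypevar{X}}{\ltypesendone{\pr}{c}{\unittype}{\ltypevar{X}}}$, $\sty{T} = \ltyperecvone{\pq}{a}{\inttype}{\ltyperecvone{\pp}{b}{\unittype}{\sty{V}}}$ and $\sty{U} = \ltyperecvone{\pp}{b}{\unittype}{\ltyperecvone{\pq}{a}{\inttype}{\sty{V}}}$. Then $\sty{T} \subtype \sty{U}$ follows directly from \cref{def:subtyping}, using \RuleName{$\reduceexternalsymbol$-$\&$} for clause (4). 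Now define the trees:
\begin{itemize}
\item $w = \treesend{\pr}{\msg{c}{\star}}{w}$;
\item $s_0 = \treerecv{\pp}{w}{m \in \{\msg{b}{\star}\}}$ and $s_{k+1} = \treesend{\pr}{\msg{c}{\star}}{s_k}$;
\item $t = \treerecv{\pq}{s_{|n|}}{\msg{a}{n} \in M}$, with $M = \{\msg{a}{n} \mid n \in \mathbb{Z}\}$.
\end{itemize}
Then $\hastype{t}{\sty{T}}$, with bound $0$ for $t$ and $k$ for $s_k$; the sends to $\pr$ are permitted by \RuleName{$\reduceinternalsymbol$-$\&$}.

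Now suppose $\reduceaction{t}{u}{\recvaction{\pp}{\msg{b}{\star}}}$ and $\hastype{u}{\ltyperecvone{\pq}{a}{\inttype}{\sty{V}}}$. The \RuleName{RecvRecv} step must keep every branch in $M$. The premise for $\msg{a}{n}$ is forced to be $|n|$ applications of \RuleName{RecvSend} above \RuleName{Recv}. No natural number bounds this derivation, so clause (4) of $\hastype{t}{\sty{U}}$ fails.

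The paper's one-line treatment of clause (4) overlooks exactly this point, and the footnote justifying the bound tacitly assumes finite branching. A correct version needs a change to \cref{def:computation-tree-typing} (or a restriction to finite payload types), not just a sharper lemma.
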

\begin{proof}
  To prove $\hastype{t}{\sty{U}}$, points (1) and (2) of the definition of the typing relation hold by \cref{subtype-lifting}.
  For (3), note that if $\unfold{\sty{U}}$ is an internal choice on $\pp$, then $\Sends{\pp}(\sty{T})$ holds by subtyping, and hence $t$ sends a message to $\pp$ by \cref{computation-tree-typing-lifting}.
  For (4), note that if $\unfold{\sty{U}} = \ltyperecv{\pp}{\ell_i}{\basetype_i}{\sty{U}_i}{i \in I}$, then for every $i$ there is some $\sty{T}_i$ such that $\reduceexternal{\sty{T}}{\sty{T}_i}{\pp}{\ell_i}{\basetype_i}$ by subtyping, and hence the required reduction exists by \cref{computation-tree-typing-lifting}.
\end{proof}

\begin{lemma}\label{reduction-unique-bisim}
  Assume that $\reduceaction{t_1}{u_1}{\recvaction{\pp}{m}}$ and that
  and $\reduceaction{t_2}{u_2}{\recvaction{\pp}{m}}$.
  Also assume that $\hastype{u_1}{\sty{U}}$ and $\hastype{u_2}{\sty{U}}$ where $\reduceexternal{\sty{T}}{\sty{U}}{\pp}{\ell}{\basetype}$ and $v : \basetype$.
  If $t_1 \bisim_{\sty{T}} t_2$, then 
  $u_1 \bisim_{\sty{T}} u_2$.
\end{lemma}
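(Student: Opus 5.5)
The plan is to prove $u_1 \bisim_{\sty{U}} u_2$, reading the index in the conclusion as $\sty{U}$ (the type of the reducts) rather than $\sty{T}$. Computation trees have no $\tauaction$ transitions, so every weak transition $\reduceactionmany{}{}{\alpha}$ with $\alpha \neq \tauaction$ is a single strong step.

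\textbf{Step 1: move across the bisimilarity.} Apply clause (4)(a) of \cref{def:bisim} to $t_1 \bisim_{\sty{T}} t_2$, using $\reduceexternal{\sty{T}}{\sty{U}}{\pp}{\ell}{\basetype}$ and $\reduceaction{t_1}{\hastype{u_1}{\sty{U}}}{\recvaction{\pp}{m}}$. This gives some $t_2'$ with $\reduceaction{t_2}{t_2'}{\recvaction{\pp}{m}}$ and $u_1 \bisim_{\sty{U}} t_2'$; in particular $\hastype{t_2'}{\sty{U}}$. By transitivity of $\bisim_{\sty{U}}$ (the lemma after \cref{ex:global-state-example-bisim}), it then suffices to show $t_2' \bisim_{\sty{U}} u_2$.

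This reduces the statement to a \emph{uniqueness up to bisimilarity} claim. If $s$ is a tree with $\hastype{s}{\sty{T}}$, $\reduceexternal{\sty{T}}{\sty{U}}{\pp}{\ell}{\basetype}$, and $\reduceaction{s}{w}{\recvaction{\pp}{m}}$, $\reduceaction{s}{w'}{\recvaction{\pp}{m}}$ with $\hastype{w}{\sty{U}}$ and $\hastype{w'}{\sty{U}}$, then $w \bisim_{\sty{U}} w'$. The two reducts are genuinely not equal in general, because \RuleName{RecvRecv} may retain different subsets $M'$ of the branches of an outer $\treerecvkeyword_{\pq}$.

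\textbf{Step 2: a candidate bisimulation.} Define $R_{\sty{V}}$ to be the set of pairs $(w, w')$ such that $\hastype{w}{\sty{V}}$, $\hastype{w'}{\sty{V}}$, and there exist $s, \sty{S}, \pp, m$ with $\hastype{s}{\sty{S}}$ and $\reduceexternal{\sty{S}}{\sty{V}}{\pp}{\ell}{\basetype}$, where $w$ and $w'$ are both $\recvaction{\pp}{m}$-reducts of $s$. Take the union of this with $\bisim_{\sty{V}}$, then close under relational composition with $\bisim$. This closure is needed so that transitivity can absorb the mismatch produced in later steps. Then check the clauses of \cref{def:bisim}:
\begin{itemize}
\item Clause (1) is vacuous.
\item Clause (2): a reduct of a receive step is never of the form $\treereturn{x}$ unless the receive was a base \RuleName{Recv} step. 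In that case both reducts are the same branch $s_m$, so the pair is in the diagonal.
\item Clause (3): suppose $\Sends{\pr}(\sty{V})$ and $\reduceaction{w}{z}{\sendaction{\pr}{m'}}$. Use \cref{reduction-swapping}(2) to commute this send before the receive from $s$, obtaining $\reduceaction{s}{s'}{\sendaction{\pr}{m'}}$ followed by $\reduceaction{s'}{z}{\recvaction{\pp}{m}}$. By the uniqueness part of \cref{computation-tree-typing-lifting}(1), the same $s'$ arises from $w'$, so $\reduceaction{s'}{z'}{\recvaction{\pp}{m}}$ with $\reduceaction{w'}{z'}{\sendaction{\pr}{m'}}$. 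The types are matched using \cref{inductive-predicates-swapping}(4), which provides $\sty{S}'$ with $\reduceinternal{\sty{S}}{\sty{S}'}{\pr}{\ell'}{\basetype'}$ and $\reduceexternal{\sty{S}'}{\sty{V}'}{\pp}{\ell}{\basetype}$. If the two reducts land at different internal reducts $\sty{V}'$, use \cref{reducemerge} together with closure of $\bisim$ under subtyping to bring them to a common type.
\item Clause (4): for a receive $\recvaction{\pq}{m'}$ from $w$, with $\reduceexternal{\sty{V}}{\sty{V}'}{\pq}{\ell'}{\basetype'}$, first use \cref{computation-tree-typing-lifting}(2) to obtain a matching reduct of $w'$. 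Then, if $\pq \neq \pp$, use \cref{reduction-swapping}(4) and \cref{inductive-predicates-swapping}(5) to exhibit a common ancestor $s'$ of the two new reducts, again lying one receive from $p$ away. This places the resulting pair back in $R$.
\end{itemize}

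\textbf{Main obstacle.} I expect clause (4) to be the hard part, in particular the case $\pq = \pp$ and the bookkeeping of discarded branches. Commuting two receives requires a single height bound $h$ uniformly across all the branches kept by \RuleName{RecvRecv}. That bound should come from clause (4) of \cref{def:computation-tree-typing} applied at $\sty{S}$. When $\pq = \pp$, there is no swap available: the second receive acts directly on the two reducts. Here I would instead argue by the order of per-participant receipt, taking as the common ancestor the $\recvaction{\pp}{m}$-reduct $\reduceaction{s}{\cdot}{\recvaction{\pp}{m}}$ composed with the receive of $m'$, so that the new pair is once more a pair of $\recvaction{\pp}{m'}$-reducts of a single typed tree.
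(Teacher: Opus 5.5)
Your Step 1 is fine, and you are right to read the conclusion as $u_1 \bisim_{\sty{U}} u_2$. Clause (4)(a) applied to $t_1 \bisim_{\sty{T}} t_2$, followed by transitivity, does reduce the lemma to this claim: two typed $\recvaction{\pp}{m}$-reducts $w, w'$ of a single tree $s$ are bisimilar. But that claim is the whole content of the lemma, and Step 2 does not prove it. The decisive clause (4) fails as you argue it.

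You fix $z'$ as an arbitrary typed $\recvaction{\pq}{m'}$-reduct of $w'$, via \cref{computation-tree-typing-lifting}(2). For $\pq \neq \pp$ you then assert some $s'$ with $\reduceaction{s}{s'}{\recvaction{\pq}{m'}}$, $\reduceaction{s'}{z}{\recvaction{\pp}{m}}$ and $\reduceaction{s'}{z'}{\recvaction{\pp}{m}}$. This is false in general. Take $s = \treerecv{\pp}{s_n}{n \in M}$.
\begin{itemize}
\item Then $w = w' = s_m$.
\item Only \RuleName{RecvRecv} applies to a $\recvaction{\pq}{m'}$-step of $s$, so every such reduct has the form $\treerecv{\pp}{s'_n}{n \in M'}$, and its $\recvaction{\pp}{m}$-reduct is unique.
\item Yet $s_m$ can have several distinct typed $\recvaction{\pq}{m'}$-reducts. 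For instance, a branch of an inner $\treerecvkeyword_{\pr}$ with $\pr \notin \{\pp,\pq\}$ that lies outside the type may be kept or dropped.
\end{itemize}

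\cref{reduction-swapping}(4) does not supply the shared ancestor. It concerns a family of reducts indexed by distinct messages; for a single message it yields an ancestor for $z$ alone. For $\pq = \pp$ your description does not identify any tree of which both $z$ and $z'$ are reducts. If you mean $w$, nothing makes $z'$ a reduct of $w$. Closing under composition with $\bisim$ cannot bridge this, because relating a reduct of $w$ to a reduct of $w'$ through $\bisim$ presupposes $w \bisim_{\sty{V}} w'$.

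The missing idea is a way to handle the real source of difficulty. The trees $w$ and $w'$ differ only in which branches of intermediate receive nodes were retained. The next receive may be resolved above such nodes in one branch and below them in another. Two possible fixes:
\begin{itemize}
\item Use a pruning order on trees, with a lemma that two reducts of one tree are prunings of a common reduct, and that reduction commutes with pruning.
\item Follow the paper: let the relation be $\bisim$ closed inductively under pairs of co-reducts along $\reduceexternalsymbol$. Then clause (4) holds by construction, and the work moves to results and sends.
\end{itemize}

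Clause (3) also needs repair. \cref{reduction-swapping}(2) turns a send followed by a receive into a receive followed by a send, which is the opposite of the direction you use. The direction you need is false: $\treerecv{\pp}{s_n}{n \in M}$ cannot send at all, while $s_m$ may. The case can be rescued, but you must argue it. If the send to $\pr$ is not already on the initial chain of $\treesendkeyword$ nodes of $s$, that chain ends in a $\treerecvkeyword_{\pp}$ node, the $\recvaction{\pp}{m}$-reduct is forced, and $w = w'$.

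The typing bookkeeping does not close either. Membership in your $R_{\sty{V}'}$ requires an ancestor whose type reduces by $\reduceexternalsymbol$ to exactly $\sty{V}'$. \cref{inductive-predicates-swapping}(4) gives some external reduct of $\sty{S}'$ that need not be a type of $z$ and $z'$. The common supertype from \cref{reducemerge} is not known to be such a reduct, since $\reduceexternalsymbol$ is not closed upwards in its target. Closure of $\bisim$ under subtyping says nothing about your $R$.
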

\begin{proof}
  Define a relation $R_{\sty{T}}$, for each closed session type $\sty{T}$, inductively as follows.
  \[
    \begin{prooftree}
      \hypo{t_1 \bisim_{\sty{T}} t_2}
      \infer1{(t_1, t_2) \in R}
    \end{prooftree}
    \qquad
    \begin{prooftree}
      \hypo{(t_1, t_2) \in R_{\sty{T}}}
      \hypo{\reduceexternal{\sty{T}}{\sty{U}}{\pp}{\ell}{\basetype}}
      \hypo{v : \basetype}
      \hypo{\reduceaction{t_1}{\hastype{u_1}{\sty{U}}}{\recvaction{\pp}{\msg{\ell}{v}}}}
      \hypo{\reduceaction{t_2}{\hastype{u_2}{\sty{U}}}{\recvaction{\pp}{\msg{\ell}{v}}}}
      \infer5{(u_1, u_2) \in R_{\sty{U}}}
    \end{prooftree}
  \]
  We show that this is a typed bisimilarity.
  Let $i$ and $j$ be the two elements of $\{1, 2\}$, in either order, and assume that $(t_1, t_2) \in R_{\sty{T}}$.
  \begin{itemize}
    \item There are no $\tauaction$-reductions.
    \item If $\Result(t_i) = x$, then we have $\Result(t_j)$, by an induction on the derivation of $(t_1, t_2) \in R_{\sty{T}}$.
    \item If $\reduceaction{t_i}{u_i}{\sendaction{\pq}{m'}}$, then an induction on the derivation of $(t_1, t_2) \in R_{\sty{T}}$ shows that there is also a reduction $\reduceaction{t_j}{u_j}{\sendaction{\pq}{m'}}$.
      By \cref{computation-tree-typing-lifting}, there are session types $\sty{U}_i$ and $\sty{U}_j$ such that $\reduceinternal{\sty{T}}{\sty{U}_k}{\pq}{\ell}{\basetype}$ and $\hastype{u_k}{\sty{U}_k}$ ($k \in \{1, 2\}$). By \cref{reducemerge}, we can thus give $u_i$ and $u_j$ a common type $\sty{U}$ such that $\reduceinternal{\sty{T}}{\sty{U}}{\pq}{\ell}{\basetype}$, and we have $(u_1, u_2) \in R_{\sty{U}}$.
    \item If $\reduceexternal{\sty{T}}{\sty{U}}{\pq}{\ell}{\basetype}$, then for each reduction $\reduceaction{t_i}{u_i}{\recvaction{\pq}{\msg{\ell}{v}}}$, there is a reduction $\reduceaction{t_j}{u_j}{\recvaction{\pq}{\msg{\ell}{v}}}$ such that $\hastype{u_k}{\sty{U}}$ ($k \in \{1, 2\}$), by \cref{computation-tree-typing-lifting}. Thus $(t_k, u_k) \in R_{\sty{U}}$.
  \end{itemize}
\end{proof}

\begin{lemma}\label{bisim-reflexive}
  If $s$ is a computation tree such that $\hastype{s}{\sty{S}}$, then $s \bisim_{\sty{S}} s$.
\end{lemma}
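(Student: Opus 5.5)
We will prove \cref{bisim-reflexive} by exhibiting a typed bisimulation that contains every diagonal pair. Define, for each closed session type $\sty{S}$, the relation $R_{\sty{S}} = \{(s, s) \mid \hastype{s}{\sty{S}}\}$. It then suffices to show that $R$ is a typed bisimulation, since $\bisim$ is the union of all typed bisimulations (\cref{union-bisimulation}). The typing requirements of \cref{def:bisim} hold by construction. Because $R$ is symmetric, only the (a) halves of conditions (1)--(4) need checking.

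Conditions (1) and (2) are immediate. Computation trees have no $\tauaction$-transitions, so (1) is vacuous. For (2), if $\Result(s) = x$, we take the empty reduction sequence and use $s$ itself.

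Condition (3) uses \cref{computation-tree-typing-lifting}(1). Suppose $\Sends{\pp}(\sty{S})$ and $\reduceaction{s}{t}{\sendaction{\pp}{\msg{\ell}{v}}}$ with $v : \basetype$. That lemma supplies some $\sty{T}$ with $\reduceinternal{\sty{S}}{\sty{T}}{\pp}{\ell}{\basetype}$ and $\hastype{t}{\sty{T}}$. Taking $t' = t$, the one-step reduction serves as the required $\reduceactionmany{s}{t}{\sendaction{\pp}{\msg{\ell}{v}}}$, and $t \,R_{\sty{T}}\, t$ holds.

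Condition (4) is the main obstacle. Suppose $\reduceexternal{\sty{S}}{\sty{T}}{\pp}{\ell}{\basetype}$, $v : \basetype$, and $\reduceaction{s}{t}{\recvaction{\pp}{\msg{\ell}{v}}}$ with $\hastype{t}{\sty{T}}$. Taking $t' = t$ gives $t \,R_{\sty{T}}\, t$, provided we read the condition (as its annotations $\hastype{t}{\sty{T}}$ and $\hastype{t'}{\sty{T}}$ indicate) as quantifying only over reducts that carry type $\sty{T}$.

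The difficulty is that receive reductions of trees are nondeterministic. Rule \RuleName{RecvRecv} can discard branches, so $s$ may have several distinct $\recvaction{\pp}{\msg{\ell}{v}}$-reducts, and not all of them need have type $\sty{T}$. If the intended reading of (4) also required matching a reduct that is not known to have type $\sty{T}$, the choice $t' = t$ fails: we would instead have to pick the reduct provided by \cref{computation-tree-typing-lifting}(2) and prove it bisimilar at $\sty{T}$ to the given one. That is exactly the situation handled by \cref{reduction-unique-bisim}, which turns a bisimilar pair of sources into a bisimilar pair of reducts when both reducts are typed by $\sty{U}$.

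My plan is therefore to first adopt the typed reading, under which the diagonal relation works directly. If that reading turns out to be insufficient, I would enlarge $R_{\sty{S}}$ to all pairs $(t_1, t_2)$ with $\hastype{t_1}{\sty{S}}$ and $\hastype{t_2}{\sty{S}}$ that arise as receive-reducts, along the same $\reduceexternalsymbol$ step, of a common typed tree. I would then close this relation under further steps, mirroring the inductive construction in the proof of \cref{reduction-unique-bisim}. The send and result cases for such pairs would follow by an induction on the generating derivation, using \cref{reduction-swapping}(3) and uniqueness of send-reducts from \cref{computation-tree-typing-lifting}(1).
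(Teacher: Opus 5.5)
Your proof is correct and matches the paper's: both show that the diagonal relation $\{(s,s) \mid \hastype{s}{\sty{S}}\}$ satisfies each clause of \cref{def:bisim}, with the send clause the only non-trivial one, and your use of \cref{computation-tree-typing-lifting}(1) there is the right justification (the paper cites the typing definition directly, but that only covers a send at the root). Your fallback for clause (4) is unnecessary, because the typed reading is forced: any typed bisimulation has $t \mathrel{R_{\sty{T}}} t'$ only when $\hastype{t}{\sty{T}}$, so under the untyped reading no enlargement of $R$ could rescue the lemma.
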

\begin{proof}
  We prove each case of \cref{def:bisim} in turn.
  Most of them are trivial, the only slightly non-trivial case is (2), where we need to provide a closed session type $\sty{T}$ such that $\reduceinternal{\sty{S}}{\sty{T}}{\pp}{\ell}{\basetype}$.
  This session type exists by the definition of $\hastype{s}{\sty{S}}$.
\end{proof}

\begin{lemma}\label{treegmonad-hastype}
    If $t \in \treegmonad(X)_{\sty{T}}$, then $\hastype{t}{\sty{T}}$.
\end{lemma}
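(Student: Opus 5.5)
We will show that the relation $R = \{(t, \sty{T}) \mid t \in \treegmonad(X)_{\sty{T}}\}$ is a tree pre-typing relation in the sense of \cref{tree-pre-typing}. The result then follows from \cref{union-pre-typing}, since $\hastype{}{}$ contains every tree pre-typing relation. Fix $t \in \treegmonad(X)_{\sty{T}}$; we check the four conditions by case analysis on $\unfold{\sty{T}}$. Since $\sty{T}$ is closed and guarded, $\unfold{\sty{T}}$ is $\ltypeend$, an internal choice, or an external choice.

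If $\unfold{\sty{T}} = \ltypeend$, then $t = \treereturn{x}$. Condition (1) and (2) are vacuous, since $t$ is neither a $\treesendkeyword$ nor a $\treerecvkeyword$. Conditions (3) and (4) are vacuous because $\unfold{\sty{T}}$ is not a choice.

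If $\unfold{\sty{T}} = \ltypesend{\pp}{\ell_i}{\basetype_i}{\sty{T}_i}{i \in I}$, then $t = \treesend{\pp}{\msg{\ell_i}{v}}{t'}$ with $v : \basetype_i$ and $t' \in \treegmonad(X)_{\sty{T}_i}$.
\begin{itemize}
  \item For (1), rule \RuleName{$\reduceinternalsymbol$-base} gives $\reduceinternal{\unfold{\sty{T}}}{\sty{T}_i}{\pp}{\ell_i}{\basetype_i}$. Then \cref{inductive-predicates-unfold} gives $\reduceinternal{\sty{T}}{\sty{T}_i}{\pp}{\ell_i}{\basetype_i}$, and $t' \,R\, \sty{T}_i$ holds by definition of $R$. (Payload typing is unique, so the $\basetype$ in (1) is $\basetype_i$.)
  \item For (3), rule \RuleName{Send} gives the reduction $\reduceaction{t}{t'}{\sendaction{\pp}{\msg{\ell_i}{v}}}$.
  \item Conditions (2) and (4) are vacuous.
\end{itemize}

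If $\unfold{\sty{T}} = \ltyperecv{\pp}{\ell_i}{\basetype_i}{\sty{T}_i}{i \in I}$, then $t = \treerecv{\pp}{t_m}{m \in M}$ with $M = \{\msg{\ell_i}{v} \mid i \in I \land v : \basetype_i\}$ and $t_{\msg{\ell_i}{v}} \in \treegmonad(X)_{\sty{T}_i}$.
\begin{itemize}
  \item For (2), \RuleName{Recvs-base} gives $\Recvs{\pp}(\unfold{\sty{T}})$, hence $\Recvs{\pp}(\sty{T})$ by \cref{inductive-predicates-unfold}.
  \item For (4), take $h = 0$. For each $i \in I$ and $v : \basetype_i$, rule \RuleName{Recv} applies directly at the root and gives $\reduceaction{t}{t_{\msg{\ell_i}{v}}}{\recvaction{\pp}{\msg{\ell_i}{v}}}$ by a derivation of height $0$. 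The reduct satisfies $t_{\msg{\ell_i}{v}} \,R\, \sty{T}_i$.
  \item Conditions (1) and (3) are vacuous.
\end{itemize}

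The only point needing care is transferring the base rules from $\unfold{\sty{T}}$ to $\sty{T}$ itself, which \cref{inductive-predicates-unfold} handles. The coinduction itself is immediate because normal forms are exactly shaped by the unfolding. I expect no real obstacle.
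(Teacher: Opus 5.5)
Your proof is correct and takes essentially the same approach as the paper: both check the four clauses of \cref{def:computation-tree-typing} directly from the shape that membership in $\treegmonad(X)_{\sty{T}}$ forces on $t$, and both use \cref{inductive-predicates-unfold} to transfer the base rules from $\unfold{\sty{T}}$ to $\sty{T}$. You organise the check by cases on $\unfold{\sty{T}}$ rather than clause by clause, and you make explicit three points the paper leaves implicit: the coinduction via \cref{union-pre-typing}, the uniform height bound $h$ in clause (4), and uniqueness of payload types in clause (1).
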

\begin{proof}
    We prove each case of \cref{def:computation-tree-typing} in turn.
    \begin{enumerate}
        \item For (1), $t = \treesend{\pp}{\msg{\ell}{v}}{u}$ implies $\unfold{\sty{T}}$ has the form $\ltypesend{\pp}{\ell_i}{\basetype_i}{\sty{T}_i}{i \in I}$, with $\ell = \ell_i$ and $v : \basetype_i$, and $u \in \treegmonad(X)_{\sty{T}_i}$ for some $i$. Since $\reduceinternal{\sty{T}}{\sty{T}_i}{\pp}{\ell_i}{\basetype_i}$, we can therefore take $\sty{U} = \sty{T}_i$.
        \item For (2), $t = \treerecv{\pp}{t_m}{m \in M}$ implies $\unfold{\sty{T}}$ is an external choice on $\pp$, so $\Recvs{\pp}(\unfold{\sty{T}})$ trivially, and hence $\Recvs{\pp}(\sty{T})$ by \cref{inductive-predicates-unfold}.
        \item For (3), $\unfold{\sty{T}} = \ltypesend{\pp}{\ell_i}{\basetype_i}{\sty{T}_i}{i \in I}$ implies that $t$ has the form $\treesend{\pp}{m}{u}$, so that $\reduceaction{t}{u}{\sendaction{\pp}{m}}$.
        \item For (4), $\unfold{\sty{T}} = \ltyperecv{\pp}{\ell_i}{\basetype_i}{\sty{T}_i}{i \in I}$ implies that $t = \treerecv{\pp}{u_m}{m \in M}$, where $M = \{(\ell_i, v) \mid i \in I \land v : \basetype_i\}$ and $u_{(\ell_i, v)} \in \treegmonad(X)_{\sty{T}_i}$. We have $\reduceaction{t}{u_m}{\recvaction{\pp}{m}}$.
    \end{enumerate}
\end{proof}

\begin{lemma}\label{bisim-unique}
  If $t, u \in \treegmonad(X)_{\sty{T}}$ and $t \bisim_{\sty{T}} u$, then $t = u$.
\end{lemma}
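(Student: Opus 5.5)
We will prove the statement by coinduction: define the family of relations $E_{\sty{S}}$ on computation trees, indexed by closed session types, by $t \mathrel{E_{\sty{S}}} u$ iff there is some $\sty{T}$ with $t, u \in \treegmonad(X)_{\sty{T}}$ and $t \bisim_{\sty{T}} u$ and $\sty{T} \subtype \sty{S}$... but since equality of trees is itself a coinductive notion (bisimilarity of the tree structure), the cleaner route is to show that the relation $\{(t,u) \mid \exists \sty{T}.\ t,u \in \treegmonad(X)_{\sty{T}},\ t \bisim_{\sty{T}} u\}$ is contained in syntactic tree equality, by checking that it is a tree-bisimulation in the structural sense: related trees have the same root constructor, with the same participant and message, and pairwise related children. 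Coinduction for the coinductively generated trees then gives $t = u$.

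So fix $t, u \in \treegmonad(X)_{\sty{T}}$ with $t \bisim_{\sty{T}} u$ and split on $\unfold{\sty{T}}$ (it is closed, so not a variable). If $\unfold{\sty{T}} = \ltypeend$, both are $\treereturn{x}$ and $\treereturn{y}$. Computation trees have no $\tauaction$-reductions, so $\reduceactionmany{u}{u'}{\tauaction}$ forces $u' = u$, and clause (2) of \cref{def:bisim} gives $x = y$. If $\unfold{\sty{T}} = \ltypesend{\pp}{\ell_i}{\basetype_i}{\sty{T}_i}{i \in I}$, then $t = \treesend{\pp}{\msg{\ell_i}{v}}{t'}$ and $u = \treesend{\pp}{\msg{\ell_j}{w}}{u'}$. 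We have $\Sends{\pp}(\sty{T})$ by \RuleName{Sends-base} and \cref{inductive-predicates-unfold}, so clause (3) applied to the \RuleName{Send} step of $t$ yields a reduction of $u$ sending $\msg{\ell_i}{v}$ to $\pp$. Since $u$ has root $\treesendkeyword_\pp$, the only send-to-$\pp$ reduction of $u$ is by \RuleName{Send} (\RuleName{SendSend} needs a different participant), hence $(\ell_i,v) = (\ell_j,w)$. The same clause gives $t' \bisim_{\sty{T}'} u'$ for some $\sty{T}'$ with $\reduceinternal{\sty{T}}{\sty{T}'}{\pp}{\ell_i}{\basetype_i}$. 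We also have $t',u' \in \treegmonad(X)_{\sty{T}_i}$.

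The main obstacle is reconciling $\sty{T}'$ with $\sty{T}_i$. Using \cref{inductive-predicates-unfold}, $\reduceinternal{\unfold{\sty{T}}}{\sty{T}'}{\pp}{\ell_i}{\basetype_i}$ with $\unfold{\sty{T}}$ an internal choice on $\pp$ can only come from \RuleName{$\reduceinternalsymbol$-base}; rule \RuleName{$\reduceinternalsymbol$-$\oplus$} requires $\pq \neq \pp$. By distinctness of labels this forces $\sty{T}' = \sty{T}_i$, so the children are related at $\sty{T}_i$. If that inversion were to fail, I would instead obtain $\sty{T}_i \subtype \sty{T}'$ and then need to transport bisimilarity downward, which is not available. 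The direct inversion is therefore the step to verify carefully.

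If $\unfold{\sty{T}} = \ltyperecv{\pp}{\ell_i}{\basetype_i}{\sty{T}_i}{i \in I}$, both trees are $\treerecvkeyword_\pp$ over the same index set $M$. For each $\msg{\ell_i}{v} \in M$, we have $\reduceexternal{\sty{T}}{\sty{T}_i}{\pp}{\ell_i}{\basetype_i}$ via \RuleName{$\reduceexternalsymbol$-base} and \cref{inductive-predicates-unfold}. The \RuleName{Recv} steps give $t \to t_m$ and $u \to u_m$, with $t_m, u_m \in \treegmonad(X)_{\sty{T}_i}$, hence $\hastype{t_m}{\sty{T}_i}$ and $\hastype{u_m}{\sty{T}_i}$ by \cref{treegmonad-hastype}. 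Then \cref{reduction-unique-bisim} gives $t_m \bisim_{\sty{T}_i} u_m$; note that it must be read with conclusion at type $\sty{U}$, since its proof establishes $(u_1,u_2) \in R_{\sty{U}}$. Alternatively, we can use clause (4) together with the fact that a $\treerecvkeyword_\pp$-rooted tree receives from $\pp$ only via \RuleName{Recv}. Either way the children are related, which completes the structural coinduction.
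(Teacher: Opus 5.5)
Your proof is correct and follows essentially the paper's argument: structural coinduction on computation trees with a case split on $\unfold{\sty{T}}$, using clause (2) of the bisimulation for $\ltypeend$, clause (3) for internal choice, and clause (4) for external choice. Your clause-(4) alternative is exactly the paper's route, and the variant via \cref{reduction-unique-bisim} (read with conclusion at $\sty{U}$) also works; you additionally spell out the inversion forcing $\sty{T}' = \sty{T}_i$ and why a $\treesendkeyword_{\pp}$- or $\treerecvkeyword_{\pp}$-rooted tree can only act on $\pp$ via the base rule, details the paper leaves implicit.
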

\begin{proof}
  We prove $t = u$, using the coinduction principle for computation trees, by inspecting $\unfold{\sty{T}}$.
  \begin{itemize}
    \item If $\unfold{\sty{T}} = \ltypeend$, then $t$ has the form $\treereturn{x}$, and $u$ has the form $\treereturn{y}$.
      Since $t \bisim_{\sty{T}} u$, we have $x = y$.
    \item If $\unfold{\sty{T}} = \ltypesend{\pp}{\ell_i}{\basetype_i}{\sty{T}_i}{i \in I}$, then $t$ has the form $\treesend{\pp}{\msg{\ell_i}{v}}{t'}$, where $t' \in \treegmonad(X)_{\sty{T}_i}$, and $u$ has the form $\treesend{\pp}{{\msg{\ell_j}{w}}}{u'}$, where $u' \in \treegmonad(X)_{\sty{T}_j}$.
      Since $t \bisim_{\sty{T}} u$, we have $i = j$ and $v = w$, and $t' \bisim_{\sty{T}'} u'$, for some $\sty{T}'$ such that $\reduceinternal{\sty{T}}{\sty{T}'}{\pp}{\ell_i}{\basetype_i}$. We necessarily have $\sty{T}' = \sty{T}_i = \sty{T}_j$, so $t', u' \in \treegmonad(X)_{\sty{T}'}$.
    \item If $\unfold{\sty{T}} = \ltyperecv{\pp}{\ell_i}{\basetype_i}{\sty{T}_i}{i \in I}$, then define $M = \{\msg{\ell_i}{v} \mid i \in I \land v : \basetype_i\}$. We have that $t$ has the form $\treerecv{\pp}{t'_m}{m \in M}$, and $u$ has the form $\treerecv{\pp}{u'_m}{m \in M}$, where $t'_{(\ell_i, v)}, u'_{(\ell_i, v)} \in \treegmonad(X)_{\sty{T}_i}$ for each $(\ell_i, v) \in M$.
      To conclude it is therefore enough to show that $t'_{(\ell_i, v)} \bisim_{\sty{T}_i} u'_{(\ell_i, v)}$.
      This follows from $t \bisim_{\sty{T}} t$ because $\reduceexternal{\sty{T}}{\sty{T}_i}{\pp}{\ell_i}{\basetype_i}$, and $\hastype{t'_{(\ell_i, v)}}{\sty{T}_i}$ by \cref{treegmonad-hastype}.
  \end{itemize}
\end{proof}

We define, for each closed session type $\sty{T}$ and $\bisim_{\sty{T}}$-equivalence class $t$ of computation trees, a computation tree $\nftree{\sty{T}}{t} \in \treegmonad(X)_{\sty{T}}$ as follows
  \[
    \nftree{\sty{T}}{t}
    =
    \begin{cases}
        \treereturn{\Result(t)}
        &\text{if}~\unfold{\sty{T}} = \ltypeend
        \\
        \treesend{\pp}{\msg{\ell_i}{v}}{\nftree{\sty{T}_i}{u}}
        &\text{if}~\unfold{\sty{T}} = \ltypesend{\pp}{\ell_i}{\basetype_i}{\sty{T}_i}{i \in I}
        \land
        \reduceaction{t}{u}{\sendaction{\pp}{\msg{\ell_i}{v}}}
        \\
        \treerecv{\pp}{\nftree{\sty{T}_i}{u_{\msg{\ell_i}{v}}}}{\msg{\ell_i}{v} \in M}
        &\text{if}~\unfold{\sty{T}} = \ltyperecv{\pp}{\ell_i}{\basetype_i}{\sty{T}_i}{i \in I}
        \land
        \forall m\in M.\,\reduceaction{t}{u_m}{\recvaction{\pp}{m}}
    \end{cases}
  \]
where, in the final case, we take $M = \{\msg{\ell_i}{v} \mid i \in I \land v : \basetype_i\}$, and we require $\hastype{u_{\msg{\ell_i}{v}}}{\sty{T}_i}$.
To see that this is well-defined, note that $\Result(t)$ is defined in the first case, $u$ exists and is unique in the second case, and $u_m$ exists and is unique in the third case, by \cref{computation-tree-typing-lifting} and \cref{reduction-unique-bisim}.

\begin{lemma}\label{reduce-send-nf}
  If $\Sends{\pp}(\sty{T})$ and $\hastype{t}{\sty{T}}$, then there exists a $u$ of some type $\sty{U}$, such that
  \[
    \reduceaction{t}{u}{\sendaction{\pp}{\msg{\ell}{v}}}
    \quad
    \reduceaction{\nftree{\sty{T}}{t}}{\nftree{\sty{U}}{u}}{\sendaction{\pp}{\msg{\ell}{v}}}
    \quad
    \reduceinternal{\sty{T}}{\sty{U}}{\pp}{\ell}{\basetype}
  \]
  where $v : \basetype$.
\end{lemma}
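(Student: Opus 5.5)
We prove the statement by induction on the derivation of $\Sends{\pp}(\sty{T})$. This derivation is finite even though $\sty{T}$ may be recursive. Two ingredients are used throughout. First, \cref{computation-tree-typing-lifting}(1) supplies the reduct $u$ of $t$ on the left. It says that $\Sends{\pp}(\sty{T})$ and $\hastype{t}{\sty{T}}$ give a unique triple $(u,\ell,v)$ with $\reduceaction{t}{u}{\sendaction{\pp}{\msg{\ell}{v}}}$, and some $\sty{U}$ with $\reduceinternal{\sty{T}}{\sty{U}}{\pp}{\ell}{\basetype}$ and $\hastype{u}{\sty{U}}$. So the real content is the middle reduction between normal forms. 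Second, we use that $\nftree{\sty{T}}{t}$ depends only on $\unfold{\sty{T}}$, and that \cref{inductive-predicates-unfold} lets us replace $\sty{T}$ by $\unfold{\sty{T}}$ in both $\Sends{\pp}$ and $\reduceinternalsymbol$. Because of this we may assume $\sty{T}$ is not a recursive type. The case \RuleName{Sends-rec} is then absorbed: unfolding takes the $\Sends{\pp}$ derivation to one that is no larger, by the argument in \cref{inductive-predicates-unfold}. If that step is awkward, we induct on the $\prectypesymbol$ characterisation of \cref{recvs-prectype}(2) instead, where the recursive rule carries the unfolding explicitly.

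\emph{Base case.} Suppose $\sty{T} = \ltypesend{\pp}{\ell_i}{\basetype_i}{\sty{T}_i}{i\in I}$. By definition, $\nftree{\sty{T}}{t} = \treesend{\pp}{\msg{\ell_i}{v}}{\nftree{\sty{T}_i}{u}}$, where $\reduceaction{t}{u}{\sendaction{\pp}{\msg{\ell_i}{v}}}$ is the unique such reduction. Take $\sty{U} = \sty{T}_i$. Rule \RuleName{Send} gives the middle reduction, and \RuleName{$\reduceinternalsymbol$-base} gives $\reduceinternal{\sty{T}}{\sty{T}_i}{\pp}{\ell_i}{\basetype_i}$. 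We do need $\hastype{u}{\sty{T}_i}$, so that $\nftree{\sty{T}_i}{u}$ makes sense. This comes from the same $u$ in \cref{computation-tree-typing-lifting}, and from the observation that the only $\reduceinternalsymbol$-derivation from an internal choice on $\pp$ is the base rule.

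\emph{Inductive case \RuleName{Sends-$\oplus$}.} Suppose $\sty{T} = \ltypesend{\pq}{\ell'_j}{\basetype'_j}{\sty{T}_j}{j\in J}$ with $\pq\neq\pp$ and $\Sends{\pp}(\sty{T}_j)$ for all $j$. By definition, $\nftree{\sty{T}}{t} = \treesend{\pq}{\msg{\ell'_j}{w}}{\nftree{\sty{T}_j}{t'}}$, where $\reduceaction{t}{t'}{\sendaction{\pq}{\msg{\ell'_j}{w}}}$ and $\hastype{t'}{\sty{T}_j}$. The induction hypothesis for $\sty{T}_j$ and $t'$ gives $u'$ and $\sty{U}'$ with $\reduceaction{t'}{u'}{\sendaction{\pp}{\msg{\ell}{v}}}$, with $\reduceaction{\nftree{\sty{T}_j}{t'}}{\nftree{\sty{U}'}{u'}}{\sendaction{\pp}{\msg{\ell}{v}}}$, and with $\reduceinternal{\sty{T}_j}{\sty{U}'}{\pp}{\ell}{\basetype}$. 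Now put $\sty{U} = \ltypesendone{\pq}{\ell'_j}{\basetype'_j}{\sty{U}'}$. Then \RuleName{$\reduceinternalsymbol$-$\oplus$} with the singleton $\{j\}\subseteq J$ gives $\reduceinternal{\sty{T}}{\sty{U}}{\pp}{\ell}{\basetype}$.

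Next we swap the two sends in $t$ by \cref{reduction-swapping}(1). This gives $u$ with $\reduceaction{t}{u}{\sendaction{\pp}{\msg{\ell}{v}}}$ and $\reduceaction{u}{u'}{\sendaction{\pq}{\msg{\ell'_j}{w}}}$. Since $\pp\neq\pq$, \RuleName{SendSend} yields $\reduceaction{\nftree{\sty{T}}{t}}{\treesend{\pq}{\msg{\ell'_j}{w}}{\nftree{\sty{U}'}{u'}}}{\sendaction{\pp}{\msg{\ell}{v}}}$. It remains to identify the target with $\nftree{\sty{U}}{u}$. Since $\unfold{\sty{U}}$ is an internal choice on $\pq$, we have $\nftree{\sty{U}}{u} = \treesend{\pq}{m}{\nftree{\sty{U}'}{u''}}$, where $\reduceaction{u}{u''}{\sendaction{\pq}{m}}$ is the unique $\pq$-send of $u$. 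Uniqueness in \cref{computation-tree-typing-lifting}(1) forces $m = \msg{\ell'_j}{w}$ and $u'' = u'$, provided we know $\hastype{u}{\sty{U}}$ and $\hastype{u'}{\sty{U}'}$. The second of these holds by the induction hypothesis.

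The main obstacle is establishing $\hastype{u}{\sty{U}}$ for this particular $\sty{U}$. \Cref{computation-tree-typing-lifting}(1) only provides \emph{some} type $\sty{U}_0$ with $\reduceinternal{\sty{T}}{\sty{U}_0}{\pp}{\ell}{\basetype}$, not necessarily ours. We try two routes. The first is to show directly that the singleton relation $\{(u,\sty{U})\}$, together with $\hastype{}{}$, is a tree pre-typing relation, and then conclude by \cref{union-pre-typing}. Condition (1) uses the $\pq$-send of $u$ to $u'$, which has type $\sty{U}'$. Condition (3) is that same send. Conditions (2) and (4) are vacuous or follow by inspection, since $\unfold{\sty{U}}$ is an internal choice. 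Care is needed with condition (1), because $u$ may begin with sends to participants other than $\pq$. The second route, if the first fails, is to use \cref{reducemerge} to find a common supertype of $\sty{U}_0$ and $\sty{U}$ reachable by $\reduceinternalsymbol$. We then transport along \cref{typing-subtyping}, and use \cref{bisim-unique} to check that normal forms agree up to the subtyping coercion.
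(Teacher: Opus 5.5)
Your argument has the same structure as the paper's proof. You induct on the derivation of $\Sends{\pp}(\sty{T})$, read the base case off the definition of $\nftree{\sty{T}}{t}$, and obtain the \RuleName{Sends-$\oplus$} case from the induction hypothesis on $t'$, \cref{reduction-swapping}(1) and \RuleName{SendSend}. Your handling of \RuleName{Sends-rec} by unfolding is an equivalent variant of the paper's generalisation to substitution instances, since the substitution in \cref{inductive-predicates-unfold} does not enlarge $\Sends{\pp}$-derivations. The paper says only that the $\oplus$ case ``follows from \cref{reduction-swapping}''. You rightly observe that this needs $\hastype{u}{\sty{U}}$ for your particular $\sty{U} = \ltypesendone{\pq}{\ell'_j}{\basetype'_j}{\sty{U}'}$; without it $\nftree{\sty{U}}{u}$ is undefined and the conclusion ``$u$ of type $\sty{U}$'' is missing. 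But you leave exactly this step open, and neither of your routes closes it as stated.

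\textbf{Why the singleton relation fails.} The relation $\{(u,\sty{U})\}$ together with ${\hastype{}{}}$ is not closed under condition (1) of a tree pre-typing relation. Suppose $u = \treesend{\pr}{m''}{u_0}$ with $\pr \neq \pq$, as happens whenever $t$ begins with a send to a third participant. Then condition (1) requires $u_0$ to be related to some $\ltypesendone{\pq}{\ell'_j}{\basetype'_j}{\sty{V}}$, which is the original problem one step further down.

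\textbf{The fix.} Add all such pairs at once. Let $R$ be ${\hastype{}{}}$ together with every $(x, \ltypesendone{\pq}{\ell'}{\basetype'}{\sty{X}'})$ such that $\reduceaction{x}{x'}{\sendaction{\pq}{\msg{\ell'}{w}}}$, $w : \basetype'$ and $\hastype{x'}{\sty{X}'}$.
\begin{itemize}
\item If $x$ is rooted at a $\pq$-send, the reduction is by \RuleName{Send}. Its continuation is $x'$, and \RuleName{$\reduceinternalsymbol$-base} gives condition (1).
\item If $x = \treesend{\pr}{m''}{x_0}$ with $\pr \neq \pq$ and $m'' = \msg{\ell''}{v''}$, $v'' : \basetype''$, the reduction is by \RuleName{SendSend}. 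So $x' = \treesend{\pr}{m''}{x'_0}$ with $\reduceaction{x_0}{x'_0}{\sendaction{\pq}{\msg{\ell'}{w}}}$. Typing of $x'$ gives $\sty{X}''$ with $\reduceinternal{\sty{X}'}{\sty{X}''}{\pr}{\ell''}{\basetype''}$ and $\hastype{x'_0}{\sty{X}''}$. Rule \RuleName{$\reduceinternalsymbol$-$\oplus$} lifts this to $\reduceinternal{\ltypesendone{\pq}{\ell'}{\basetype'}{\sty{X}'}}{\ltypesendone{\pq}{\ell'}{\basetype'}{\sty{X}''}}{\pr}{\ell''}{\basetype''}$, and $(x_0, \ltypesendone{\pq}{\ell'}{\basetype'}{\sty{X}''}) \in R$.
\item Condition (3) is the given $\pq$-send. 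Conditions (2) and (4) are vacuous, because $x$ has a send-reduction and so is a $\treesendkeyword$, and the type is an internal choice.
\end{itemize}
By \cref{union-pre-typing} this yields $\hastype{u}{\sty{U}}$, and your identification of the target with $\nftree{\sty{U}}{u}$ then goes through. Equivalently, you can induct on the finite derivation of the $\pq$-send, using a singleton relation at each step.

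\textbf{The \cref{reducemerge} fallback.} This does not replace the fix. It types $u$ only at a common supertype $\sty{W}$. The tree you obtain by \RuleName{SendSend} has $\nftree{\sty{U}'}{u'}$ as its subtree, and matching it with $\nftree{\sty{W}}{u}$ would need a further comparison of normal forms at different types, which you do not supply.
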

\begin{proof}
  We weaken the assumption $\Sends{\pp}(\sty{T})$ to $\unfold{\sty{T}} = \subst{\sty{T'}}{\ltypevar{X}_1 \mapsto \sty{T}_1, \dots}$ for some $\sty{T'}$ such that $\Sends{\pp}(\sty{T'})$, and then proceed by induction on the derivation of $\Sends{\pp}(\sty{T'})$.
  \begin{itemize}
    \item If the derivation is by rule \RuleName{Sends-base}, then the result is immediate from the definition of the typing relation.
    \item If the derivation is by rule \RuleName{Sends-$\oplus$}, then we have $\sty{T'} = \ltypesend{\pq}{\ell_i}{\basetype_i}{\sty{T'}_i}{i \in I}$, with $\pp \neq \pq$.
      By the typing relation, there is some reduction $\reduceaction{t}{t'}{\sendaction{\pq}{\msg{\ell_i}{v}}}$ with $v : \basetype_i$; this $t'$ then has type $\sty{T'}_i$ by the above.
      By the inductive hypothesis, we therefore have some reduction $\reduceaction{t'}{u'}{\sendaction{\pp}{\msg{\ell}{v}}}$; the result then follows from \cref{reduction-swapping}.
    \item If the derivation is by rule \RuleName{Send-rec}, then the result is immediate from the inductive hypothesis (introducing a new type variable).
  \end{itemize}
\end{proof}

\begin{lemma}\label{reduce-recv-nf}
  If $\reduceinternal{\sty{T}}{\sty{U}}{\pp}{\ell}{\basetype}$ and $\hastype{t}{\sty{T}}$, then there exists a $u$ of some type $\sty{U}$, such that
  \[
    \reduceaction{t}{u}{\recvaction{\pp}{\msg{\ell}{v}}}
    \quad
    \reduceaction{\nftree{\sty{T}}{t}}{\nftree{\sty{U}}{u}}{\recvaction{\pp}{\msg{\ell}{v}}}
  \]
  where $v : \basetype$.
\end{lemma}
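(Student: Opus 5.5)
We read the hypothesis as the \emph{receiving} relation $\reduceexternal{\sty{T}}{\sty{U}}{\pp}{\ell}{\basetype}$. This matches the name of the lemma, the receive reductions in its conclusion, and the way it complements \cref{reduce-send-nf}; with the sending relation the statement would not typecheck against the receive reductions. The existence of $u$ with $\hastype{u}{\sty{U}}$ and $\reduceaction{t}{u}{\recvaction{\pp}{\msg{\ell}{v}}}$ is exactly \cref{computation-tree-typing-lifting}(2). The real content is the second reduction, $\reduceaction{\nftree{\sty{T}}{t}}{\nftree{\sty{U}}{u}}{\recvaction{\pp}{\msg{\ell}{v}}}$. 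We prove it by induction on the derivation of $\reduceexternal{\sty{T}}{\sty{U}}{\pp}{\ell}{\basetype}$, generalising over $t$. Throughout, $\nftree{-}{-}$ depends only on the $\bisim$-class of its argument and only on $\unfold{\sty{T}}$.

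\emph{Base and recursion.} In the base case $\unfold{\sty{T}}$ is an external choice on $\pp$ with $\sty{U}=\sty{T}_i$. By definition, $\nftree{\sty{T}}{t}=\treerecv{\pp}{\nftree{\sty{T}_i}{u_{\msg{\ell_i}{v}}}}{\msg{\ell_i}{v}\in M}$. Here $u_{\msg{\ell}{v}}$ is, by \cref{reduction-unique-bisim}, $\bisim_{\sty{U}}$-equivalent to our $u$, so rule \RuleName{Recv} gives the required reduction. The \RuleName{$\reduceexternalsymbol$-rec} case is immediate from the inductive hypothesis, because $\nftree{}{}$ only inspects the unfolding.

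\emph{The $\&$ case ($\pq\neq\pp$).} Here $\sty{T}=\ltyperecv{\pq}{\ell'_j}{\basetype'_j}{\sty{T}_j}{j\in I}$ and $\sty{U}=\ltyperecv{\pq}{\ell'_j}{\basetype'_j}{\sty{U}_j}{j\in J}$. We write $\nftree{\sty{T}}{t}=\treerecv{\pq}{\nftree{\sty{T}_j}{t_{m'}}}{m'}$, where $\reduceaction{t}{t_{m'}}{\recvaction{\pq}{m'}}$. For each $m'$ whose label is in $J$, the inductive hypothesis gives $u_{m'}$ with $\reduceaction{t_{m'}}{u_{m'}}{\recvaction{\pp}{\msg{\ell}{v}}}$ and $\reduceaction{\nftree{\sty{T}_j}{t_{m'}}}{\nftree{\sty{U}_j}{u_{m'}}}{\recvaction{\pp}{\msg{\ell}{v}}}$. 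Rule \RuleName{RecvRecv}, keeping only the branches in $J$, yields $\reduceaction{\nftree{\sty{T}}{t}}{\treerecv{\pq}{\nftree{\sty{U}_j}{u_{m'}}}{m'}}{\recvaction{\pp}{\msg{\ell}{v}}}$. It remains to see that this tree is $\nftree{\sty{U}}{u}$, i.e.\ that $u$ receives $m'$ from $\pq$ to a tree $\bisim_{\sty{U}_j}$-equivalent to $u_{m'}$. We obtain this by commuting the two receives using \cref{reduction-swapping}(4), which is applicable because the heights are uniformly bounded by clause (4) of the typing relation. We then identify the results using \cref{reduction-unique-bisim}.

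\emph{The $\oplus$ case, and the main obstacle.} Here $\nftree{\sty{T}}{t}=\treesend{\pq}{m'}{\nftree{\sty{T}_j}{t'}}$ with $\reduceaction{t}{t'}{\sendaction{\pq}{m'}}$. The inductive hypothesis gives $t'\to u'$ receiving from $\pp$, together with the corresponding normal-form reduction, and \RuleName{RecvSend} lifts the latter under the $\treesendkeyword$. To match $\nftree{\sty{U}}{u}$, we use \cref{reduction-swapping}(3) to show that $u$ sends the same $m'$ to $\pq$ and reaches a reduct of $t'$'s receive. Uniqueness of sends in \cref{computation-tree-typing-lifting}(1) then identifies this reduct with $u'$. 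We expect the hardest step to be exactly this kind of identification, in both inductive cases: showing that the reduct built by the induction coincides, up to the right $\bisim$ and type index, with the one $\nftree{\sty{U}}{u}$ selects. The difficulty is keeping track of which session type each intermediate tree is typed at, using \cref{reducemerge} where two candidate types must be merged, before appealing to \cref{reduction-unique-bisim} and \cref{bisim-unique}.
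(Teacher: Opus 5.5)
Your reading of the hypothesis as $\reduceexternal{\sty{T}}{\sty{U}}{\pp}{\ell}{\basetype}$ is right. Your route is exactly the paper's: induction on that derivation, \RuleName{Recv}, \RuleName{RecvRecv} and \RuleName{RecvSend} on the normal-form side, and the swapping lemmas on the tree side.

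The gap is the step you flag as hardest, in the $\&$ case. To identify the \RuleName{RecvRecv}-reduct of $\nftree{\sty{T}}{t}$ with some $\nftree{\sty{U}}{u}$, you need the tree $u^*$ produced by \cref{reduction-swapping}(4) to satisfy $\hastype{u^*}{\sty{U}}$. Here $u^*$ is the tree with $\reduceaction{t}{u^*}{\recvaction{\pp}{\msg{\ell}{v}}}$ and $\reduceaction{u^*}{u_{m'}}{\recvaction{\pq}{m'}}$. You need its typing for two reasons. First, \cref{reduction-unique-bisim} only compares reducts that are already typed at $\sty{U}$. Second, \cref{computation-tree-typing-lifting}(2), which you cite to obtain $u$, is proved by building precisely this $u^*$ and asserting that it is typed. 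Nothing establishes this typing, and when a payload type is infinite it is false, as is the statement itself. Take distinct $\pp,\pq,\pr,\pr'$ and
\[
  \sty{W} = \ltyperec{\ltypevar{X}}{\ltypesendcases{\pr'}{\mltype{\ml{d}}{\unittype}.\,\ltypevar{X} \\ \mltype{\ml{e}}{\unittype}.\,\ltypesendone{\pr}{\ml{c}}{\unittype}{\ltypeend}}}
  \qquad
  \sty{T} = \ltyperecvone{\pq}{\ml{a}}{\inttype}{\ltyperecvone{\pp}{\ml{b}}{\unittype}{\sty{W}}}
  \qquad
  \sty{U} = \ltyperecvone{\pq}{\ml{a}}{\inttype}{\sty{W}}
\]
Then $\reduceexternal{\sty{T}}{\sty{U}}{\pp}{\ml{b}}{\unittype}$, and $\sty{U}$ is the only such type. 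Let $t = \treerecv{\pp}{y}{\msg{\ml{b}}{\unitterm}}$ with $y = \treesend{\pr}{\msg{\ml{c}}{\unitterm}}{\treerecv{\pq}{z_n}{\msg{\ml{a}}{n}}}$. Here $z_n$ sends $\msg{\ml{d}}{\unitterm}$ to $\pr'$ exactly $|n|$ times, then sends $\msg{\ml{e}}{\unitterm}$, then returns.

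We have $\hastype{t}{\sty{T}}$. For every $n$, the reduction $\reduceaction{t}{\treerecv{\pp}{\treesend{\pr}{\msg{\ml{c}}{\unitterm}}{z_n}}{\msg{\ml{b}}{\unitterm}}}{\recvaction{\pq}{\msg{\ml{a}}{n}}}$ has height $2$. At $\sty{W}$, the send of $\ml{c}$ in front of $z_n$ is justified by a residual $\sty{V}_{|n|}$ with $\reduceinternal{\sty{W}}{\sty{V}_{|n|}}{\pr}{\ml{c}}{\unittype}$ that permits exactly $|n|$ copies of $\ml{d}$.

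However, the only $\recvaction{\pp}{\msg{\ml{b}}{\unitterm}}$-reduct of $t$ is $y$, by \RuleName{Recv}; so here $u^* = y$. By clause (1) of \cref{def:computation-tree-typing} and \RuleName{$\reduceinternalsymbol$-$\&$}, $\hastype{y}{\sty{U}}$ needs a single $\sty{V}$ with $\reduceinternal{\sty{W}}{\sty{V}}{\pr}{\ml{c}}{\unittype}$ and $\hastype{z_n}{\sty{V}}$ for all $n$. Since $\reduceinternalsymbol$ is inductive, every such $\sty{V}$ bounds the number of $\ml{d}$'s; this is the phenomenon of \cref{ex:weird-example}. So no $u$ as in the statement exists, even though $\nftree{\sty{T}}{t}$ does reduce into $\treegmonad(X)_{\sty{U}}$ exactly as your \RuleName{RecvRecv} step says.

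The same example refutes \cref{computation-tree-typing-lifting}(2) and the existence half of \cref{bisim-normal}. Every element of $\treegmonad(X)_{\sty{T}}$ has a $\recvaction{\pp}{\msg{\ml{b}}{\unitterm}}$-reduct in $\treegmonad(X)_{\sty{U}}$, and clause (4)(b) of \cref{def:bisim} would require $t$ to match it with a reduct typed at $\sty{U}$.

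The paper's proof takes your route and says nothing at this point, so it does not fill the gap either. A correct version needs an extra hypothesis excluding such trees (the counterexample relies on infinitely many payload values), together with a genuine argument that the commuted tree is typed at $\sty{U}$.
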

\begin{proof}
  By induction on the derivation of $\reduceinternal{\sty{T}}{\sty{U}}{\pp}{\ell}{\basetype}$.
  \begin{itemize}
    \item If the proof is by \RuleName{$\reduceinternalsymbol$-base}, then the result is trivial.
    \item If the proof is by \RuleName{$\reduceinternalsymbol$-$\&$}, then the required reduction of $t$ follows from \cref{reduction-swapping}, while the required reduction of $\nftree{\sty{T}}{u}$ exists by \RuleName{RecvRecv}.
    \item If the proof is by \RuleName{$\reduceinternalsymbol$-$\oplus$}, then again the required reduction of $t$ follows from \cref{reduction-swapping}, while the required reduction of $\nftree{\sty{T}}{u}$ exists by \RuleName{RecvSend}.
    \item If the proof is by \RuleName{$\reduceinternalsymbol$-rec}, then the result is immediate from the inductive hypothesis.
  \end{itemize}
\end{proof}

\bisimnormal*
\begin{proof}
  The first part of the lemma is \cref{treegmonad-hastype}.
  For the second part of the lemma, uniqueness of $u$ is immediate from \cref{bisim-unique}.
  To prove existence, we show that $t \bisim_{\sty{T}} \nftree{\sty{T}}{t}$.
  We check each case of the definition of bisimilarity in turn.
  \begin{enumerate}
    \item There are no $\tauaction$ transitions.
    \item If $\unfold{\sty{T}} = \ltypeend$, then the result is immediate from the definition of $\nftree{\sty{T}}{t}$.
    \item If $\Sends{\pp}(\sty{T})$, then the appropriate reductions exist by \cref{reduce-send-nf}, noting that, by \cref{computation-tree-typing-lifting}, reduction by sending is deterministic.
    \item If $\reduceinternal{\sty{T}}{\sty{U}}{\pp}{\ell}{v}$, then the required reductions exist by \cref{reduce-recv-nf}, noting that, by \cref{reduction-unique-bisim}, reduction by receiving is deterministic up to typed bisimilarity.
  \end{enumerate}
\end{proof}

\begin{lemma}\label{reduce-bind}
  If $\reduceaction{t}{u}{\alpha}$, then $\reduceaction{t \bind f}{u \bind f}{\alpha}$.
\end{lemma}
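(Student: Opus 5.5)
We prove the statement by induction on the derivation of $\reduceaction{t}{u}{\alpha}$. The reduction relation on computation trees is defined inductively by the five rules \RuleName{Send}, \RuleName{Recv}, \RuleName{SendSend}, \RuleName{RecvSend} and \RuleName{RecvRecv}, so we treat each rule in turn. In every case we unfold one step of the coinductive definition of $\bind$, using its three clauses for $\treereturnkeyword$, $\treesendkeyword$ and $\treerecvkeyword$, and then reapply the same rule. The head constructor of $t$ is never $\treereturnkeyword$ in any reduction, so the clause $(\treereturn{x}) \bind f = f\,x$ never arises at the root.

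For the base cases, suppose first the reduction is by \RuleName{Send}, with $t = \treesend{\pp}{m}{u}$. Then $t \bind f = \treesend{\pp}{m}{u \bind f}$, and \RuleName{Send} gives $\reduceaction{t \bind f}{u \bind f}{\sendaction{\pp}{m}}$. Suppose instead it is by \RuleName{Recv}, with $t = \treerecv{\pp}{t_{m'}}{m' \in M}$ and $u = t_m$ for some $m \in M$. Then $t \bind f = \treerecv{\pp}{t_{m'} \bind f}{m' \in M}$, and \RuleName{Recv} reduces this to $t_m \bind f$.

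For the congruence cases, suppose the reduction is by \RuleName{SendSend} or \RuleName{RecvSend}. Then $t = \treesend{\pq}{m'}{t'}$ and $u = \treesend{\pq}{m'}{u'}$ with $\reduceaction{t'}{u'}{\alpha}$. The induction hypothesis gives $\reduceaction{t' \bind f}{u' \bind f}{\alpha}$. The bind clause for $\treesendkeyword$ rewrites both sides, so the same rule applies, with the side condition $\pp \neq \pq$ unchanged. Now suppose the reduction is by \RuleName{RecvRecv}. Then $t = \treerecv{\pq}{t_{m'}}{m' \in M}$ and $u = \treerecv{\pq}{u_{m'}}{m' \in M'}$, where $M' \subseteq M$ and $\reduceaction{t_{m'}}{u_{m'}}{\recvaction{\pp}{m}}$ for each $m' \in M'$. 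Applying the induction hypothesis to each premise, possibly infinitely many of them, gives $\reduceaction{t_{m'} \bind f}{u_{m'} \bind f}{\recvaction{\pp}{m}}$. Applying \RuleName{RecvRecv} to these premises concludes $\reduceaction{\treerecv{\pq}{t_{m'} \bind f}{m' \in M}}{\treerecv{\pq}{u_{m'} \bind f}{m' \in M'}}{\recvaction{\pp}{m}}$, which by the bind clause is exactly $\reduceaction{t \bind f}{u \bind f}{\recvaction{\pp}{m}}$.

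The only subtle point is that the induction is on the derivation of the reduction, which is well-founded even though the trees themselves are coinductive. A derivation may be infinitely branching through \RuleName{RecvRecv}, but it is still a well-founded tree, so structural induction applies; we do not need any bound on the height of the derivation. We also use that the defining clauses of $\bind$ hold as equations of computation trees, by the coinductive definition. Beyond these two observations, no step should present real difficulty.
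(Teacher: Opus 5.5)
Your proof is correct and takes the same approach as the paper, whose proof is simply a trivial induction on the derivation of the reduction. Your case analysis over the five reduction rules, unfolding one clause of the bind and reapplying the same rule, is that induction written out in full.
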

\begin{proof}
  This is a trivial induction on the proof of $\reduceaction{t}{u}{\alpha}$.
\end{proof}

\begin{lemma}\label{union-normal-forms}
  Let $\mathcal{S}$ be a collection of closed-session-type indexed sets of computation trees, such that for every $S \in \mathcal{S}$, we have that $t \in S_{\sty{T}}$ implies the following.
  \begin{enumerate}
    \item If $\unfold{\sty{T}} = \ltypeend$, then $t = \treereturn{x}$ for some $x \in X$.
    \item If $\unfold{\sty{T}} = \ltypesend{\pp}{\ell_i}{\basetype_i}{\sty{T}_i}{i \in I}$, then $t = \treesend{\pp}{\msg{\ell_i}{v}}{t'}$ for some $v : \basetype_i$ and $t' \in S_{\sty{T}_i}$.
    \item If $\unfold{\sty{T}} = \ltyperecv{\pp}{\ell_i}{\basetype_i}{\sty{T}_i}{i \in I}$, then $t = \treerecv{\pp}{t_m}{m \in M}$ for some family $(t_m)_{m \in M}$, where $M = \{\msg{\ell_i}{v} \mid i \in I \land v : \basetype_i\}$, and $t_{\msg{\ell_i}{v}} \in S_{\sty{T}_i}$ for all $(\ell_i, v) \in M$.
  \end{enumerate}
  Then the union $(\bigcup S)_{\sty{T}} = \bigcup_{S \in \mathcal{S}} S_{\sty{T}}$ satisfies the same conditions.
  Thus the typing relation $\hastype{}{}$ is the union of the largest such $\mathcal{S}$.
\end{lemma}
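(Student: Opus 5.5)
The union claim is the usual Knaster--Tarski-style argument: the three conditions are stable under unions because each one only asks for certain subtrees to lie in the \emph{same} indexed set at the appropriate smaller index. I would fix $t \in (\bigcup \mathcal{S})_{\sty{T}}$ and choose some $S \in \mathcal{S}$ with $t \in S_{\sty{T}}$. Then I would check the three conditions by cases on $\unfold{\sty{T}}$, whose shape (end, internal choice, external choice) is determined by $\sty{T}$ alone.

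In the $\ltypeend$ case, condition (1) for $S$ gives $t = \treereturn{x}$ directly. In the internal-choice case, condition (2) for $S$ gives $t = \treesend{\pp}{\msg{\ell_i}{v}}{t'}$ with $t' \in S_{\sty{T}_i} \subseteq (\bigcup\mathcal{S})_{\sty{T}_i}$. In the external-choice case, condition (3) for $S$ gives the family $(t_m)_{m \in M}$ with each $t_{\msg{\ell_i}{v}} \in S_{\sty{T}_i} \subseteq (\bigcup\mathcal{S})_{\sty{T}_i}$. A closed session type cannot unfold to a type variable, so there are no other cases. This shows the union satisfies the same conditions.

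For the final sentence, I read it as the statement about $\treegmonad(X)$: it is the largest family satisfying the conditions. The plan is to take $\mathcal{S}$ to be the collection of all families satisfying (1)--(3). By the first part, $\bigcup\mathcal{S}$ satisfies (1)--(3), so it belongs to $\mathcal{S}$. It also contains every member of $\mathcal{S}$, so it is the largest such family, and by definition this is $\treegmonad(X)$.

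Nothing here is a real obstacle. The only point needing care is that the witness subtrees for the union come from the single chosen $S$ and lie at the correct index, which is exactly why monotonicity of the union suffices.
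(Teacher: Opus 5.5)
Your proof is correct and matches the paper's argument. The paper only asserts that the union satisfies the three conditions, and that the family of normal forms is therefore the union of all such families; your case check supplies exactly the omitted detail, namely that each witness subtree lies in $S_{\sty{T}_i} \subseteq (\bigcup\mathcal{S})_{\sty{T}_i}$. You are also right to read the final sentence as being about $\treegmonad(X)$ rather than the typing relation $\hastype{}{}$, which does not satisfy these normal-form conditions; the paper's own proof has the same slip.
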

\begin{proof}
  It is easy to see that the union satisfies the above three conditions.
  The typing relation $\hastype{}{}$ satisfies these conditions by definition, and thus is given by the union of the largest $\mathcal{S}$.
\end{proof}
\subsection{Proofs for \cref{sec:semantics}}

\begin{lemma}\label{comp-recvs}
  If $\conftypedg{\config{\queue{\rho}}{\rctx{R}[\recvterm{\pp}{\ell_i}{x_i : \basetype_i}{t_i}{i \in I}]}{\queueempty}}{\basetype}{\sty{T}}$ and $\queue{\rho}$ does not contain a message from $\pp$, then $\Recvs{\pp}(\sty{T})$.
\end{lemma}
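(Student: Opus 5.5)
The plan is induction on the derivation of the configuration typing judgement. The send queue is empty, so the last rule is either \RuleName{CBase} (when $\queue{\rho}$ is also empty) or \RuleName{CRecv}; \RuleName{CSend} cannot apply.

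\emph{Case \RuleName{CRecv}.} Here $\queue{\rho} = \consfront{\pq}{\msg{\ell}{v}}{\queue{\rho}'}$, and $\pq \neq \pp$ because $\queue{\rho}$ contains no message from $\pp$. The premises give $\conftypedg{\config{\queue{\rho}'}{\rctx{R}[\dots]}{\queueempty}}{\basetype}{\sty{T}'}$ with $\reduceexternal{\sty{T}'}{\sty{T}}{\pq}{\ell}{\basetype'}$. The queue $\queue{\rho}'$ still has no message from $\pp$, so the inductive hypothesis yields $\Recvs{\pp}(\sty{T}')$. It then suffices to prove an auxiliary fact: if $\Recvs{\pp}(\sty{T}')$, $\pp \neq \pq$ and $\reduceexternal{\sty{T}'}{\sty{T}}{\pq}{\ell}{\basetype'}$, then $\Recvs{\pp}(\sty{T})$. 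I would prove this by induction on the derivation of $\reduceexternal{\sty{T}'}{\sty{T}}{\pq}{\ell}{\basetype'}$:
\begin{itemize}
\item The \RuleName{$\reduceexternalsymbol$-base} case is impossible. There $\sty{T}'$ is an external choice on $\pq \neq \pp$, so $\Recvs{\pp}(\sty{T}')$ must come from \RuleName{Recvs-$\&$}, which gives $\Recvs{\pp}$ of every branch, and the reduct is one of those branches.
\item The \RuleName{$\reduceexternalsymbol$-$\oplus$} case contradicts $\Recvs{\pp}(\sty{T}')$, since no rule derives $\Recvs{\pp}$ of an internal choice.
\item In the \RuleName{$\reduceexternalsymbol$-$\&$} case, $\sty{T}'$ is an external choice on some $\pr \neq \pq$. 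If $\pr = \pp$, the result $\sty{T}$ is an external choice on $\pp$ (with a non-empty subset of branches, by the non-emptiness requirement on choices), so \RuleName{Recvs-base} applies. Otherwise, \RuleName{Recvs-$\&$} inverts to give $\Recvs{\pp}$ of each branch, and the inner inductive hypothesis plus \RuleName{Recvs-$\&$} finish.
\item The recursive case uses \cref{inductive-predicates-unfold}.
\end{itemize}
Alternatively, this auxiliary fact could be phrased using \cref{recvs-prectype} and \cref{inductive-predicates-swapping}(3), but the direct induction seems simplest.

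\emph{Case \RuleName{CBase}.} Here $\comptypedg{\emptycontext}{\emptycontext}{\emptycontext}{\rctx{R}[\recvterm{\pp}{\ell_i}{x_i : \basetype_i}{t_i}{i \in I}]}{\basetype}{\sty{T}}$. I would prove a computation-level statement: if $\comptypedg{\Theta}{\Phi}{\Gamma}{\rctx{R}[r]}{\basetype}{\sty{T}}$ with $r$ a receive from $\pp$, then $\Recvs{\pp}(\sty{T})$. The proof is by induction on $\rctx{R}$, with an inner induction on the typing derivation to absorb uses of the subtyping rule, using \cref{subtype-lifting}(3) to push $\Recvs{\pp}$ along $\subtype_\Theta$.
\begin{itemize}
\item For $\rctx{R} = \hole$: \cref{inversion-lemma}(7) gives a receive type on $\pp$ that is a subtype of $\sty{T}$. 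This type satisfies \RuleName{Recvs-base}, and \cref{subtype-lifting}(3) transports $\Recvs{\pp}$ to $\sty{T}$.
\item For $\rctx{R} = \bindterm{x}{\rctx{R}'}{u}$: \cref{inversion-lemma}(2) gives $\sty{U}' \cdot \sty{U}'' \subtype_\Theta \sty{T}$ with $\rctx{R}'[r]$ of type $\sty{U}'$. The inductive hypothesis gives $\Recvs{\pp}(\sty{U}')$, then \cref{reduce-multiplication}(4) gives $\Recvs{\pp}(\sty{U}' \cdot \sty{U}'')$, and \cref{subtype-lifting}(3) concludes.
\item For the \syntaxkeyword{letrec} context: \cref{inversion-lemma}(8) gives the body $\rctx{R}'[r]$ typed at the same $\sty{T}$ in an extended function context, so the inductive hypothesis applies directly.
\end{itemize}

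The main obstacle I expect is the \RuleName{CRecv} case, specifically checking that the auxiliary preservation fact genuinely needs $\pp \neq \pq$. The rest is routine inversion plus monotonicity of $\Recvs{\pp}$ under subtyping and multiplication.
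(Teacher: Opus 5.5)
Your proposal is correct and follows the same route as the paper: induction on the configuration typing derivation, with \RuleName{CSend} ruled out by the empty send queue, \RuleName{CBase} handled by showing the computation's type lies above an external choice on $\pp$, and \RuleName{CRecv} handled by the inductive hypothesis plus preservation of $\Recvs{\pp}$ along a $\pq$-receive with $\pq \neq \pp$ (the paper cites \cref{inductive-predicates-swapping} for this; you prove it by direct induction). Two small slips in that auxiliary induction: the \RuleName{$\reduceexternalsymbol$-base} case is not impossible but immediate, as your own argument shows, and the recursive case needs you to invert \RuleName{Recvs-rec} and then apply \cref{inductive-predicates-substitution}(2), rather than \cref{inductive-predicates-unfold} alone.
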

\begin{proof}
  By induction on the derivation of 
  $\conftypedg{\config{\queue{\rho}}{t}{\queueempty}}{\basetype}{\sty{T}}$.
  \begin{itemize}
    \item If the derivation is by \RuleName{CBase}, then we have $\comptypedg{\emptycontext}{\emptycontext}{\emptycontext}{\rctx{R}[\recvterm{\pp}{\ell_i}{x_i : \basetype_i}{t_i}{i \in I}]}{\basetype}{\sty{T}}$.
      Hence $\sty{T}$ is a supertype of some external choice on $\pp$, and so satisfies $\Recvs{\pp}(\sty{T})$ by the definition of subtyping.
    \item The derivation cannot be by \RuleName{CSend}, because the send queue is empty.
    \item If the derivation is by \RuleName{CRecv}, then we have $\queue{\rho} = \consfront{\pq}{\msg{\ell}{v}}{\queue{\rho'}}$, where $\pq \neq \pp$ by assumption, and $\conftypedg{\config{\queue{\rho'}}{t}{\queueempty}}{\basetype}{\sty{U}}$ for some $\sty{U}$ such that $\reduceexternal{\sty{T}}{\sty{U}}{\pq}{\ell}{\basetype}$, where $v : \basetype$.
      The result therefore follows from the inductive hypothesis and \cref{inductive-predicates-swapping}.
  \end{itemize}
\end{proof}

\begin{lemma}\label{semantic-configuration-receives-immediately}
  Let $\mathcal{C} = \config{\queue{\rho}}{\rctx{R}[\recvterm{\pp}{\ell_i}{x_i : \basetype_i}{t_i}{i \in I}]}{\queue{\sigma}}$ be a configuration, such that
  $\hastype{\mathcal{C}}{\sty{U}}$ holds.
  Assume that there is a sequence
  \[
    \reduceexternal{\sty{U} = \sty{S}_0}{\reduceexternal{\cdots}{\sty{S}_n}{\pr_n}{\ell_n}{\basetype_n}}{\pr_1}{\ell_1}{\basetype_1}
  \]
  with $\pp \not\in \{\pr_1, \dots, \pr_n\}$, and either $\unfold{\sty{S}_n} = \ltypeend$, or $\Sends{\pq}(\sty{U})$ for some $\pq$ such that $\queue{\sigma}$ contains no messages for $\pq$.
  Then $\queue{\rho}$ contains a message from $\pp$.
\end{lemma}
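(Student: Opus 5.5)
We argue by contradiction. Suppose $\queue{\rho}$ contains no message from $\pp$; the aim is to show that the given sequence of external reductions is impossible.

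The first step is to strip $\queue{\sigma}$. By induction on the \RuleName{CSend} and \RuleName{CRecv} steps of the typing derivation, we pass to a configuration with empty send queue. Each \RuleName{CSend} step replaces the type by a $\sty{U}'$ with $\reduceinternal{\sty{U}'}{\sty{U}}{\pr}{\ell}{\basetype}$. \Cref{inductive-predicates-swapping}(4) then transports the external chain from $\sty{U}$ to one from the new type. The participant $\pq$ with no queued messages is unaffected by sends to other participants, so $\Sends{\pq}$ is preserved. We then apply \cref{comp-recvs}, whose hypothesis is exactly that $\queue{\rho}$ has no message from $\pp$, and obtain $\Recvs{\pp}$ of the resulting type. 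Pulling this back along the \RuleName{CSend} steps, using the $\reduceinternalsymbol$-$\&$ direction of \cref{inductive-predicates-swapping}(2), gives $\Recvs{\pp}(\sty{U})$.

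Next, $\Recvs{\pp}$ is propagated along the chain $\sty{S}_0,\dots,\sty{S}_n$. Since each $\pr_k\neq\pp$, each step $\reduceexternal{\sty{S}_{k-1}}{\sty{S}_k}{\pr_k}{\ell_k}{\basetype_k}$ preserves $\Recvs{\pp}$. To see this, write $\Recvs{\pp}(\sty{S}_{k-1})$ via \cref{recvs-prectype} as $\prectype{\sty{S}_{k-1}}{\ltyperecv{\pp}{\dots}{}{}{}}$. A reception from $\pr_k\neq\pp$ cannot consume the $\pp$-branch; formally, this is an induction on the derivation of the $\prectypesymbol$ relation together with the $\reduceexternalsymbol$ rules, in the style of \cref{inductive-predicates-swapping}(3),(6). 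Hence $\Recvs{\pp}(\sty{S}_n)$.

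Finally, we derive the contradiction. If $\unfold{\sty{S}_n}=\ltypeend$, then $\Recvs{\pp}(\sty{S}_n)$ fails by \cref{inductive-predicates-unfold}, since no rule derives $\Recvs{\pp}(\ltypeend)$. In the other case we have both $\Recvs{\pp}(\sty{U})$ and $\Sends{\pq}(\sty{U})$. These are incompatible: by \cref{inductive-predicates-unfold} we may unfold, and $\Sends{}$ forces $\unfold{\sty{U}}$ to be an internal choice while $\Recvs{}$ forces it to be an external choice. This contradiction shows that $\queue{\rho}$ must contain a message from $\pp$.

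The main obstacle is the first step: showing that the hypotheses survive removal of the send queue. In particular, transporting $\Sends{\pq}$ back needs the fact that the queued messages are not for $\pq$. If this fails in general, I would instead prove $\Recvs{\pp}$ directly for the typed configuration by induction on its derivation, handling \RuleName{CSend} with the swapping lemma \cref{inductive-predicates-swapping}(2).
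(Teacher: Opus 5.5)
The step that fails is pulling $\Recvs{\pp}$ back from the stripped configuration to $\sty{U}$. You also have \RuleName{CSend} backwards: the outer configuration's type reduces to the inner one's, $\reduceinternal{\sty{U}}{\sty{U}'}{\pr}{\ell}{\basetype}$. This is the orientation in which \cref{inductive-predicates-swapping}(4) transports your chain, so that part survives. Along such a step, however, $\Recvs{\pp}(\sty{U}')$ does not give $\Recvs{\pp}(\sty{U})$, and in fact $\Recvs{\pp}(\sty{U})$ can be false. Take $t = \longrecvterm{\pp}{\mltype{\ell}{x : \unittype}.\,\returnterm{\unitterm}}$. The configuration $\config{\queueempty}{t}{\consfront{\pr}{\msg{\ell'}{\unitterm}}{\queueempty}}$ is typed at $\sty{U} = \ltypesendone{\pr}{\ell'}{\unittype}{\ltyperecvone{\pp}{\ell}{\unittype}{\ltypeend}}$ by \RuleName{CSend} over \RuleName{CBase}, and its receive queue has no message from $\pp$. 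Yet $\sty{U}$ is an internal choice, so $\Recvs{\pp}(\sty{U})$ fails. This configuration does not meet the lemma's other hypotheses, but you derive $\Recvs{\pp}(\sty{U})$ without using them, so the implication you rely on is false, not merely unproved. Your fallback (proving $\Recvs{\pp}$ for the whole configuration by induction on its typing) fails on the same example. Hence your final contradiction, drawn at $\sty{U}$ from the original chain and $\Sends{\pq}(\sty{U})$, is unjustified whenever $\queue{\sigma}$ is non-empty.

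The repair is to draw the contradiction at the inner type, where $\Recvs{\pp}$ genuinely holds, using the hypotheses you already push inward. This is the paper's proof, by induction on the configuration typing derivation:
\begin{itemize}
\item At \RuleName{CSend}, the chain is transported by \cref{inductive-predicates-swapping}(4). The $\ltypeend$ alternative then becomes impossible, since $\sty{S}_n$ acquires an internal reduction. $\Sends{\pq}$ of the chain's endpoint is transported because the removed message is for some $\pr \neq \pq$.
\item At \RuleName{CRecv}, either the message is from $\pp$ (done), or the reception is prepended to the chain.
\item Only at \RuleName{CBase} is $\Recvs{\pp}$ obtained and the contradiction drawn.
\end{itemize}
Carry the hypothesis as $\Sends{\pq}(\sty{S}_n)$, which is what the paper's proof actually uses and what is unaffected by \RuleName{CRecv} steps. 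It follows from $\Sends{\pq}(\sty{U})$, because $\Sends{\pq}$ is preserved by external steps. Your second step, propagating $\Recvs{\pp}$ along receptions from participants other than $\pp$, is exactly what the \RuleName{CBase} case needs when $n > 0$, a point the paper's base case passes over.
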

\begin{proof}
  By induction on the derivation of $\hastype{\mathcal{C}}{\sty{U}}$.
  \begin{itemize}
    \item In the base case, we have $\hastype{\rctx{R}[\recvterm{\pp}{\ell_i}{x_i : \basetype_i}{t_i}{i \in I}]}{\sty{U}}$, which immediately tells us that $\Recvs{\pp}(\sty{U})$ holds. Thus we cannot have $\unfold{\sty{U}} = \ltypeend$ or $\Sends{\pp}(\sty{U})$.
    \item If the derivation is by rule \RuleName{CSend}, $\mathcal{C} = \config{\queue{\rho}}{\rctx{R}[\recvterm{\pp}{\ell_i}{x_i : \basetype_i}{t_i}{i \in I}]}{\consback{\queue{\sigma}}{\pr}{\mltype{\ell'}{\basetype'}}}$, and there is some $\sty{U}'$ such that $\reduceinternal{\sty{U}}{\sty{U}'}{\pr}{\ell'}{\basetype'}$ and $\config{\queue{\rho}}{\rctx{R}[\recvterm{\pp}{\ell_i}{x_i : \basetype_i}{t_i}{i \in I}]}{\queue{\sigma}}$, where $v' : \basetype'$.
      Form a new sequence
      \[
        \reduceexternal{\sty{U}' = \sty{S}'_0}{\reduceexternal{\cdots}{\sty{S}'_n}{\pr_n}{\ell_n}{\basetype_n}}{\pr_1}{\ell_1}{\basetype_1}
      \]
      where $\reduceinternal{\sty{S}_i}{\sty{S}'_i}{\pr}{\ell'}{\basetype'}$ for all $i$.
      Such a sequence exists by \cref{inductive-predicates-swapping}.
  We cannot have $\unfold{\sty{S}_n} = \ltypeend$, so $\Sends{\pq}(\sty{S}_n)$ holds and $\queue{\sigma}$ does not contain any messages for $\pp$. The latter implies $\pr \neq \pq$, and hence the former, together with $\reduceinternal{\sty{S}_n}{\sty{U}_n}{\pr}{\ell'}{\basetype'}$, implies $\Sends{\pq}(\sty{U}'_n)$.
  The result then follows from the inductive hypothesis.
     \item If the derivation is by rule \RuleName{CRecv}, then
       $\queue{\rho}$ has the form $\consfront{\pr}{\msg{\ell'}{v'}}{\queue{\rho''}}$, and there is some $\sty{T}$ such that $\reduceexternal{\sty{T}}{\sty{U}}{\pr}{\ell'}{\basetype'}$ and $\hastype{\config{\queue{\rho''}}{\rctx{R}[\recvterm{\pp}{\ell_i}{x_i : \basetype_i}{t_i}{i \in I}]}{\queue{\sigma}}}{\sty{T}}$, where $v' : \basetype'$.
       If $\pr = \pp$, then we are done, otherwise we apply the inductive hypothesis, extending the sequence of session types with $\sty{T}$.
  \end{itemize}
\end{proof}

\begin{lemma}\label{subject-reduction-computations-denotations}
  Suppose that $\comptypedg{\emptycontext}{\emptycontext}{\emptycontext}{t}{\basetype}{\sty{T}}$.
  Recalling from \cref{subject-reduction-computations} that the following reducts $u$ have appropriate types, we have the following.
  \begin{enumerate}
    \item If $\reduceaction{t}{u}{\tauaction}$, then $\sem{t} = \sem{u}$.
    \item If $\reduceaction{t}{u}{\sendaction{\pp}{\msg{\ell}{v}}}$ with $v : \basetype$, then $\reduceaction{\sem{t}}{\sem{u}}{\sendaction{\pp}{\msg{\ell}{v}}}$
    \item If $\reduceaction{t}{u}{\recvaction{\pp}{\msg{\ell}{v}}}$ with $v : \basetype$, then $\Recvs{\pp}(\sty{T})$, and for all $\sty{U}$ such that $\reduceexternal{\sty{T}}{\sty{U}}{\pp}{\ell}{\basetype'}$, we have $\reduceaction{\sem{t}}{\sem{u}}{\recvaction{\pp}{m}}$.
  \end{enumerate}
\end{lemma}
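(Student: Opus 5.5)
We prove the three parts by induction on the derivation of $\reduceaction{t}{u}{\alpha}$. Alongside it we run an inner induction on the typing derivation of $t$, which is needed to get past uses of \RuleName{$\subtype$}. Before starting, we strengthen the statement to open computations, so that the induction goes through the congruence and \RuleName{Apply} cases. Concretely, for $\comptypedg{\Theta}{\Phi}{\Gamma}{t}{\basetype}{\sty{T}}$ and any environments $\theta,\phi,\gamma$, we claim that substituting values for variables (and unfolding a recursive definition) is interpreted by updating the environment. The main tool is a substitution lemma, $\sem{\subst{t}{x \mapsto v}}_\theta(\phi,\gamma) = \sem{t}_\theta(\phi,(\gamma, x\mapsto \sem{v}(\gamma)))$, proved by routine induction on derivations.

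The companion lemma handles recursion: for the \RuleName{LetRec} environment $\phi' = (\phi, f \mapsto f')$ one has $f'(\vec v') = \sem{\subst{t}{\vec x \mapsto \vec v}}$. This holds by definition of $f'$ together with the substitution lemma along $\theta, \ltypevar{X}\mapsto \ltyperec{\ltypevar{X}}{\sty{T}}$. We also use that $\treegmonad(X)_{\sty{T}\subtype\sty{U}}$ is the identity when $\sty{T}$ and $\sty{U}$ share normal forms; this suffices because the reflexivity derivations interpret as identities.

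For part (1) the base cases are direct. \RuleName{LetR} is the unit law $\treereturn{v}\bind f = f\,v$ combined with the substitution lemma. The cases \RuleName{IfT}, \RuleName{IfF}, \RuleName{Sub}, \RuleName{LeT} and \RuleName{LeF} follow by computing the clauses of \cref{fig:denotational-semantics}. \RuleName{Apply} follows by induction on $\rctx{R}$ using the recursion lemma above. The congruence case holds because each constructor of $\rctx{R}$ is interpreted compositionally; for the $\syntaxkeyword{let}$ context this is just $\bind$ applied to equal first arguments. A complication is that the typing derivation of the reduct need not mirror that of $t$. This is exactly where we will use subtyping coercions. For a derivation ending in \RuleName{$\subtype$} we apply the inner induction hypothesis to the premise. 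We then push the coercion through, using that $\treegmonad(X)_{\sty{T}\subtype\sty{U}}$ maps $t'$ to the unique $\bisim_{\sty{U}}$-equivalent normal form (\cref{bisim-normal}), so that equal inputs give equal outputs.

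For parts (2) and (3) the base cases \RuleName{Send} and \RuleName{Recv} are the corresponding tree reductions. For \RuleName{Recv}, the type of $u$ is $\sty{T}_j$, and \RuleName{Recv} of the trees picks out $f_j(v)$, which equals $\sem{\subst{t_j}{x_j\mapsto v}}$. In the congruence case through $\bindterm{x}{\rctx{R}'[\hole]}{u}$ we use \cref{reduce-bind} to lift the reduction of $\sem{\rctx{R}'[t]}$ to the bound tree. The claim $\Recvs{\pp}(\sty{T})$ in (3) comes from \cref{subject-reduction-computations}.

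The main obstacle is the \RuleName{$\subtype$} case in (2) and (3). There $\sem{t}$ is $\treegmonad_{\sty{T}'\subtype\sty{T}}(t')$, where $\reduceaction{t'}{\sem{u}'}{\alpha}$ holds at type $\sty{T}'$, and we must show that the normalized tree performs the same action. We first try \cref{reduce-send-nf} and \cref{reduce-recv-nf}: a tree and its normal form perform matching send and receive actions, with reducts that are again normal forms of bisimilar trees. For sends we combine this with \cref{computation-tree-typing-lifting}(1), which says the send is unique, and \cref{subtype-lifting}(6), which gives $\Sends{\pp}$ at the subtype. The target reduct is then the normal form at the lifted type. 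We identify it with $\sem{u}$ using \cref{bisim-unique} together with closure of $\bisim$ under subtyping.

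For receives we use \cref{subtype-lifting}(4) to transport $\reduceexternal{\sty{T}}{\sty{U}}{\pp}{\ell}{\basetype}$ to $\sty{T}'$. We then obtain determinism up to bisimilarity from \cref{reduction-unique-bisim}. One point may need more care. The result only holds up to this normalization, so part (2) may have to be read as asserting the reduction of $\sem{t}$ into $\sem{u}$ taken at some type $\sty{U}$ with $\reduceinternal{\sty{T}}{\sty{U}}{\pp}{\ell}{\basetype}$. Handling this consistently is where we expect most of the work to lie.
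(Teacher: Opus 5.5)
Your treatment of parts (1) and (3) is sound. It is essentially the paper's induction on the reduction, made explicit: substitution and unfolding lemmas, an inner induction through \RuleName{$\subtype$}, and \cref{reduction-unique-bisim} to absorb the nondeterminism of tree receives. One small addition: the let-congruence case of (3) needs \cref{reduce-multiplication-inverse} to write an arbitrary target $\sty{U}$ of $\sty{T}_1 \cdot \sty{T}_2$ as $\sty{U}_1 \cdot \sty{T}_2$ before the induction hypothesis applies.

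The genuine gap is the \RuleName{$\subtype$} case of part (2). To make $\sem{t} = \treegmonad(X)_{\sty{T}'\subtype\sty{T}}(t')$ send to $\pp$ via \cref{reduce-send-nf}, you need $\Sends{\pp}(\sty{T})$ for the \emph{supertype} $\sty{T}$ at which you normalize. But \cref{subtype-lifting}(6) only transports $\Sends{\pp}$ from a supertype down to a subtype, and $\Sends{\pp}(\sty{T}')$ does not imply $\Sends{\pp}(\sty{T})$; that failure is exactly output anticipation.

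The step cannot be patched, because (2) is false as stated. Take $\pp \neq \pq$ and $t = \sendterm{\ell}{\unitterm}{\pp}{t'}$, where $t'$ receives a single message $\msg{\ell'}{\unitterm}$ from $\pq$ and returns $\unitterm$.
\RuleName{Send} gives $t$ the grade $\ltypesendone{\pp}{\ell}{\unittype}{\ltyperecvone{\pq}{\ell'}{\unittype}{\ltypeend}}$. This is a subtype of $\sty{S} = \ltyperecvone{\pq}{\ell'}{\unittype}{\ltypesendone{\pp}{\ell}{\unittype}{\ltypeend}}$: condition (1) of \cref{def:subtyping} holds by \RuleName{$\reduceinternalsymbol$-$\&$}, and condition (4) by \RuleName{$\reduceexternalsymbol$-$\oplus$}. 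So $t$ has grade $\sty{S}$.
Now $\reduceaction{t}{t'}{\sendaction{\pp}{\msg{\ell}{\unitterm}}}$. However, the only element of $\treegmonad(\sem{\unittype})_{\sty{S}}$ is a $\treerecvkeyword_{\pq}$ whose continuation is $\treesend{\pp}{\msg{\ell}{\unitterm}}{\treereturn{\unitterm}}$. A tree with $\treerecvkeyword$ at its root has no send transitions at all: only \RuleName{Send} and \RuleName{SendSend} produce sends, and both need a $\treesendkeyword$ at the root.

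The paper's running example is an instance. After $y \mapsto \falseterm$ and \RuleName{IfF}, the computation $u$ of \cref{ex:gppsy-example}, at the grade $\sty{T}$ of \cref{ex:gppsy-example-types}, sends to $\pq$ although $\Sends{\pq}(\sty{T})$ fails.
So your closing hedge about the type at which $\sem{u}$ is taken does not help: the transition of $\sem{t}$ does not exist. Since \cref{reduce-bind} needs the transition supplied by the induction hypothesis, the let-congruence case of (2) breaks in the same situations.

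The paper's proof does not address this either; it calls the \RuleName{Send} case ``immediate from the definition of subtyping, and \cref{bisim-normal}''. The \RuleName{CSend} clause of the configuration semantics is also undefined for the same reason, e.g.\ for the configuration that $\config{\queueempty}{t}{\queueempty}$ reaches by \RuleName{CProd}.

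What is provable, and what \cref{reduce-semantic-configuration} actually needs, is a normalized form of (2): $\sem{t} = \nftree{\sty{T}}{\treesend{\pp}{\msg{\ell}{v}}{\sem{u}}}$, i.e.\ $\sem{t} \bisim_{\sty{T}} \treesend{\pp}{\msg{\ell}{v}}{\sem{u}}$, with $\sem{u}$ taken at some $\sty{U}$ such that $\reduceinternal{\sty{T}}{\sty{U}}{\pp}{\ell}{\basetype}$. The literal transition $\reduceaction{\sem{t}}{\sem{u}}{\sendaction{\pp}{\msg{\ell}{v}}}$ then follows only when $\Sends{\pp}(\sty{T})$, via \cref{reduce-send-nf}.

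Your tools (\cref{bisim-unique}, closure of $\bisim$ under subtyping) are the right ones for this version. You would additionally need that $\bisim$ is preserved by prefixing with $\treesendkeyword_{\pp}$ along $\reduceinternalsymbol$ and by $\bind$. The \RuleName{CSend} clause also has to be read as ``the normal form of $\treesend{\pp}{m}{-}$'' rather than ``the unique $u$ that sends''.
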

\begin{proof}
  We split the proof into three cases each of which is by induction on $\reduceaction{t}{u}{\alpha}$.
  \begin{itemize}
    \item For $\alpha = \tauaction$ the \RuleName{LetR} case is immediate from the fact that $\ltypeend \subtype \sty{T}$ implies $\unfold{\sty{T}} = \ltypeend$, and hence that $\sty{T} \cdot \sty{U} \subtype \ltypeend \cdot \sty{U} = \sty{U}$.
      The \RuleName{IfT}, \RuleName{IfF}, \RuleName{Sub}, \RuleName{LeqT}, and \RuleName{LeqF} cases are all trivial.
      The \RuleName{Apply} and \RuleName{Cong} cases are both easy inductions on $\rctx{R}$.
    \item For $\alpha = \sendaction{\pp}{\msg{\ell}{v}}$, the \RuleName{Send} case is immediate from the definition of subtyping, and \cref{bisim-normal}.
      For the \RuleName{Cong} case, we proceed by induction on $\rctx{R}$. In the base case, and in the case of a recursive function definition, the result is trivial.
      In the case where $\rctx{R}[\hole] = \bindterm{x}{\rctx{R}'[\hole]}{u}$, we use \cref{reduce-bind} to obtain the desired reduction of interpretations.
    \item For $\alpha = \recvaction{\pp}{\msg{\ell}{v}}$, the \RuleName{Recv} case is immediate from \cref{subtype-lifting} and \cref{bisim-normal}.
      For the \RuleName{Cong} case, we proceed by induction on $\rctx{R}$. In the base case, and in the case of a recursive function definition, the result is trivial.
      In the case where $\rctx{R}[\hole] = \bindterm{x}{\rctx{R}'[\hole]}{u}$, we again use \cref{reduce-bind} to obtain the desired reduction of interpretations.
  \end{itemize}
\end{proof}

\begin{lemma}\label{semantic-config-recv-type-lemma}
  Assume that $\hastype{\mathcal{C} = \config{\queue{\rho}}{t}{\queue{\sigma}}}{\sty{T}}$, that $\reduceexternal{\sty{T}}{\sty{U}}{\pp}{\ell}{\basetype'}$, that $\reduceaction{t}{u}{\recvaction{\pp}{\msg{\ell}{v}}}$ with $v : \basetype'$, and that $\rho$ contains no messages from $\pp$.
  Define $\mathcal{D} = \config{\queue{\rho}}{u}{\queue{\sigma}}$. Then $\hastype{\mathcal{D}}{\sty{U}}$ and $\reduceaction{\sem{\mathcal{C}}}{\sem{\mathcal{D}}}{\recvaction{\pp}{\msg{\ell}{v}}}$.
\end{lemma}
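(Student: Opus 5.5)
We argue by induction on the derivation of $\hastype{\mathcal{C}}{\sty{T}}$, mirroring the proof of \cref{config-recv-type-lemma}. That lemma already gives $\hastype{\mathcal{D}}{\sty{U}}$. What remains is the denotational half: we must show $\reduceaction{\sem{\mathcal{C}}}{\sem{\mathcal{D}}}{\recvaction{\pp}{\msg{\ell}{v}}}$. We establish this alongside the typing claim, for every derivation of $\mathcal{D}$'s type built in parallel with that of $\mathcal{C}$.

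\emph{Base case \RuleName{CBase}.} Here $\queue{\rho}$ and $\queue{\sigma}$ are empty and $\sem{\mathcal{C}} = \sem{t}$. Item (3) of \cref{subject-reduction-computations-denotations} gives $\reduceaction{\sem{t}}{\sem{u}}{\recvaction{\pp}{\msg{\ell}{v}}}$ for the chosen $\sty{U}$ with $\reduceexternal{\sty{T}}{\sty{U}}{\pp}{\ell}{\basetype'}$, and $\sem{\mathcal{D}} = \sem{u}$.

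\emph{Case \RuleName{CSend}.} We have $\queue{\sigma} = \consback{\queue{\sigma'}}{\pq}{\msg{\ell'}{v'}}$, the premise $\mathcal{C}' = \config{\queue{\rho}}{t}{\queue{\sigma'}}$ of type $\sty{T}'$, and $\reduceinternal{\sty{T}}{\sty{T}'}{\pq}{\ell'}{\basetype''}$. Then $\sem{\mathcal{C}}$ is the unique normal form of type $\sty{T}$ that sends $\msg{\ell'}{v'}$ to $\pq$ and becomes $\sem{\mathcal{C}'}$. By \cref{inductive-predicates-swapping}(4) there is $\sty{U}'$ with $\reduceexternal{\sty{T}'}{\sty{U}'}{\pp}{\ell}{\basetype'}$ and $\reduceinternal{\sty{U}}{\sty{U}'}{\pq}{\ell'}{\basetype''}$. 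The induction hypothesis then gives $\reduceaction{\sem{\mathcal{C}'}}{\sem{\mathcal{D}'}}{\recvaction{\pp}{\msg{\ell}{v}}}$, where $\mathcal{D}' = \config{\queue{\rho}}{u}{\queue{\sigma'}}$ has type $\sty{U}'$. We conclude by a commutation ("diamond") argument on computation trees:
\[
\reduceaction{\sem{\mathcal{C}}}{\sem{\mathcal{C}'}}{\sendaction{\pq}{m'}},\qquad
\reduceaction{\sem{\mathcal{C}'}}{\sem{\mathcal{D}'}}{\recvaction{\pp}{m}}.
\]
\cref{reduction-swapping}(2) yields $t''$ with $\reduceaction{\sem{\mathcal{C}}}{t''}{\recvaction{\pp}{m}}$ and $\reduceaction{t''}{\sem{\mathcal{D}'}}{\sendaction{\pq}{m'}}$.

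It remains to identify $t''$ with $\sem{\mathcal{D}}$, the unique normal form of type $\sty{U}$ that sends to $\pq$ and reaches $\sem{\mathcal{D}'}$. We use the uniqueness parts of \cref{computation-tree-typing-lifting} and \cref{bisim-normal}. Since $\sem{\mathcal{C}}$ is a normal form, the received-reduct $t''$ is again a normal form of type $\sty{U}$; this follows by inspecting the \RuleName{Recv}/\RuleName{RecvRecv}/\RuleName{RecvSend} rules applied to normal forms. \cref{bisim-unique} then forces $t'' = \sem{\mathcal{D}}$.

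\emph{Case \RuleName{CRecv}.} We have $\queue{\rho} = \consfront{\pq}{\msg{\ell'}{v'}}{\queue{\rho'}}$ with $\pq \neq \pp$, since $\rho$ has no messages from $\pp$. Here $\sem{\mathcal{C}}$ is the $\recvaction{\pq}{m'}$-reduct of $\sem{\mathcal{C}'}$. We use \cref{inductive-predicates-swapping}(5) to reorder the two external steps on types. The induction hypothesis gives a $\recvaction{\pp}{m}$-step from $\sem{\mathcal{C}'}$. We then commute the two receives using \cref{reduction-swapping}(4), or more simply by the determinism of receive-reducts of normal forms, which again pins the result down as $\sem{\mathcal{D}}$.

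The main obstacle is the identification step in each inductive case. Commuting reductions only produces \emph{some} tree, and we must show it equals $\sem{\mathcal{D}}$ exactly. I would first try to prove a small auxiliary fact: reducts of normal forms in $\treegmonad(X)_{\sty{T}}$ along typed actions are again normal forms of the target type. Equality would then follow from \cref{bisim-unique}, avoiding any bisimulation up-to reasoning.
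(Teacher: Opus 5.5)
Your skeleton is the paper's proof. It uses induction on the configuration typing derivation, \cref{subject-reduction-computations-denotations} for \RuleName{CBase}, and \cref{inductive-predicates-swapping} with \cref{reduction-swapping} for \RuleName{CSend} and \RuleName{CRecv}; the paper says nothing more. You also correctly locate the crux: identifying the commuted tree with $\sem{\mathcal{D}}$.

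The auxiliary fact you propose for that step is false. Receive-reducts of normal forms are neither unique nor normal. \RuleName{RecvRecv} lets a reduct of a $\treerecvkeyword_{\pq}$-node with $\pq \neq \pp$ keep an arbitrary subset $M' \subseteq M$ of its branches, e.g.\ finitely many branches of a choice with an $\inttype$ payload. Such a tree lies in no $\treegmonad(X)_{\sty{U}}$. What does hold is uniqueness of the reduct \emph{inside} $\treegmonad(X)_{\sty{U}}$ for a fixed target $\sty{U}$, so you must prove membership for the specific tree you build.

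In \RuleName{CSend} this can be repaired structurally. $\sem{\mathcal{C}}$ is a finite chain of sends to participants other than $\pq$, followed by the send of $m'$ to $\pq$. By determinism of send-reducts, $t''$ is forced to be that chain, then the send of $m'$, then the subtree of $\sem{\mathcal{D}'}$ below the chain. It lies in $\treegmonad(X)_{\sty{U}}$ because the derivation of $\reduceexternal{\sty{T}}{\sty{U}}{\pp}{\ell}{\basetype'}$ must cross the chain by \RuleName{$\reduceexternalsymbol$-$\oplus$}, so $\sty{U}$ has the same $\oplus$-prefix as $\sty{T}$. You then conclude by uniqueness of send-\emph{predecessors} in $\treegmonad(X)_{\sty{U}}$. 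Neither \cref{bisim-unique} (you have no bisimulation between $t''$ and $\sem{\mathcal{D}}$) nor \cref{computation-tree-typing-lifting} (which gives only forward uniqueness) does this job.

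In \RuleName{CRecv} the step fails as written. Let $\mathcal{D}'$ be $\mathcal{D}$ without its front $\pq$-message. From $\sem{\mathcal{C}'}$ you have a span, a $\recvaction{\pq}{m'}$-step to $\sem{\mathcal{C}}$ and a $\recvaction{\pp}{m}$-step to $\sem{\mathcal{D}'}$, not a sequence. So \cref{reduction-swapping}(4) applies only after appending $\reduceaction{\sem{\mathcal{D}'}}{\sem{\mathcal{D}}}{\recvaction{\pq}{m'}}$. It then returns \emph{some} $\recvaction{\pq}{m'}$-reduct $w$ of $\sem{\mathcal{C}'}$ with $\reduceaction{w}{\sem{\mathcal{D}}}{\recvaction{\pp}{m}}$. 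Since such reducts are not unique, nothing identifies $w$ with $\sem{\mathcal{C}}$, and in general they differ. For $\sty{T}' = \ltyperecvone{\pp}{\ell}{\inttype}{\ltyperecvone{\pq}{\ell'}{\unittype}{\ltypeend}}$, the construction in the proof of \cref{reduction-swapping}(4) with $M = \{m\}$ yields a $\treerecvkeyword_{\pp}$-node with the single branch $m$, whereas $\sem{\mathcal{C}}$ has a branch for every integer.

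What is missing is a diamond lemma for typed normal receive-reducts. Take $s \in \treegmonad(X)_{\sty{T}'}$ and $\pp \neq \pq$, and compare the given path from $\sty{T}'$ to $\sty{U}$ through $\sty{T}$ with the reordered path through $\sty{U}''$ supplied by \cref{inductive-predicates-swapping}(5). The lemma says that taking the unique normal reducts along either path gives the same element of $\treegmonad(X)_{\sty{U}}$. Equivalently, the normal $\recvaction{\pp}{m}$-reduct of $\sem{\mathcal{C}}$ at $\sty{U}$ is also a $\recvaction{\pq}{m'}$-reduct of $\sem{\mathcal{D}'}$. It can be proved by induction on the two finite $\reduceexternalsymbol$-derivations, following the shape of $\unfold{\sty{T}'}$ and using uniqueness of normal reducts at a fixed target type where one derivation bottoms out. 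It does not follow from \cref{reduction-swapping}. The paper's one-line proof is silent on exactly this point, so you found the right obstacle, but your proposed resolution does not close it.
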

\begin{proof}
  By induction on the derivation of $\hastype{\mathcal{C}}{\sty{U}}$.
  \begin{itemize}
    \item If the derivation is by rule \RuleName{CBase}, then the result is immediate from \cref{subject-reduction-computations-denotations}.
    \item If the derivation is by rule \RuleName{CSend}, then we have $\queue{\sigma} = \consback{\queue{\sigma}'}{\pp}{\msg{\ell}{v}}$, $\hastype{\config{\queue{\rho}}{t}{\queue{\sigma'}}}{\sty{T}'}$, and $\reduceinternal{\sty{T}}{\sty{T}'}{\pp}{\ell}{\basetype}$, where $v : \basetype$. The result therefore follows from the inductive hypothesis \cref{inductive-predicates-swapping}, and \cref{reduction-swapping}.
    \item If the derivation is by rule \RuleName{CRecv}, then we have $\queue{\rho} = \consfront{\pq}{\msg{\ell}{v}}{\queue{\rho}'}$, $\pq \neq \pp$, $\hastype{\config{\queue{\rho'}}{t}{\queue{\sigma}}}{\sty{T}'}$, and $\reduceexternal{\sty{T}'}{\sty{T}}{\pp}{\ell}{\basetype}$, where $v : \basetype$. The result therefore follows from the inductive hypothesis, \cref{inductive-predicates-swapping}, and \cref{reduction-swapping}.
  \end{itemize}
\end{proof}

\begin{lemma}\label{reduce-semantic-configuration}\
    Assume that $\hastype{\mathcal{C}}{\sty{U}}$, where $\mathcal{C}$ is a configuration, and that $\reduceaction{\mathcal{C}}{\mathcal{D}}{\alpha}$.
    \begin{enumerate}
        \item If $\alpha = \tauaction$, then $\hastype{\mathcal{D}}{\sty{U}}$, and $\sem{\mathcal{C}} = \sem{\mathcal{D}}$.
        \item If $\alpha = \sendaction{\pp}{\msg{\ell}{v}}$, where $v : \basetype$, then there is some $\sty{T}$ such that $\reduceinternal{\sty{U}}{\sty{T}}{\pp}{\ell}{\basetype}$, $\hastype{\mathcal{D}}{\sty{T}}$, and $\reduceaction{\sem{\mathcal{C}}}{\sem{\mathcal{D}}}{\sendaction{\pp}{\msg{\ell}{v}}}$.
        \item If $\alpha = \recvaction{\pp}{\msg{\ell}{v}}$, where $v : \basetype$, and $\reduceexternal{\sty{U}}{\sty{T}}{\pp}{\ell}{\basetype}$, then $\hastype{\mathcal{D}}{\sty{T}}$, and $\reduceaction{\sem{\mathcal{C}}}{\sem{\mathcal{D}}}{\recvaction{\pp}{\msg{\ell}{v}}}$.
    \end{enumerate}
\end{lemma}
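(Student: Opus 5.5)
The proof goes by case analysis on the configuration reduction rule used to derive $\reduceaction{\mathcal{C}}{\mathcal{D}}{\alpha}$. For each rule there is an inner induction on the derivation of $\hastype{\mathcal{C}}{\sty{U}}$, i.e.\ on the queue structure via \TRULE{CBase}, \TRULE{CSend}, \TRULE{CRecv}. The typing halves of (1)--(3) are exactly \cref{subject-reduction}. So the real content is the statements about $\sem{\mathcal{C}}$ and $\sem{\mathcal{D}}$. Throughout, I use that the interpretation of a configuration is the normal form in $\treegmonad$. Hence a reduction of normal forms determines the reduct uniquely, by \cref{computation-tree-typing-lifting} and \cref{bisim-unique}.

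\emph{Case (1), $\alpha = \tauaction$.} There are three sub-rules.
\begin{itemize}
\item \RuleName{CInt}: in the \TRULE{CBase} case, \cref{subject-reduction-computations-denotations}(1) gives $\sem{t} = \sem{u}$. In the \TRULE{CSend} and \TRULE{CRecv} cases, the interpretation is defined from the inner configuration's interpretation by a uniquely determined reduction, so the inductive hypothesis transports the equality.
\item \RuleName{CProd}: the computation sends $m$ to $\pp$ and the message is pushed to the front of the send queue. In the base case, \cref{subject-reduction-computations-denotations}(2) gives $\reduceaction{\sem{t}}{\sem{u}}{\sendaction{\pp}{m}}$. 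By the \TRULE{CSend} clause of the semantics, $\sem{\config{\queueempty}{u}{\consfront{\pp}{m}{\queueempty}}}$ is the unique normal form reducing by $\sendaction{\pp}{m}$ to $\sem{u}$, so it equals $\sem{t}$. In the inductive cases I peel off the outermost queue message. Using \cref{reduction-swapping}(1)--(3) together with \cref{inductive-predicates-swapping}, I commute the new send past it and then reapply uniqueness.
\item \RuleName{CCons}: the computation consumes the last message from $\pp$ in the receive queue. I reduce this to \cref{semantic-config-recv-type-lemma}, which handles receiving when no earlier messages from $\pp$ remain, combined with uniqueness of receive reducts in normal forms.
\end{itemize}

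\emph{Case (2), rule \RuleName{CSend}.} The last message for $\pp$ leaves the send queue. Induct on the typing derivation. If the outermost \TRULE{CSend} is for $\pp$ itself, the result is the defining clause of the semantics. If it is for $\pq \neq \pp$, or the derivation ends in \TRULE{CRecv}, apply the inductive hypothesis. Then swap the two actions using \cref{reduction-swapping}(1) or (3) at the tree level and \cref{inductive-predicates-swapping}(1) or (4) at the type level. Where two candidate types $\sty{T}$ arise, use \cref{reducemerge} to pick a common one, and \cref{typing-subtyping} for the typing.

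\emph{Case (3), rule \RuleName{CRecv}.} The message is added at the front of the receive queue. This is immediate from the \TRULE{CRecv} clause of the semantics, with \cref{computation-tree-typing-lifting}(2) supplying the reduct.

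The main obstacle is the commutation bookkeeping in the inductive steps of \RuleName{CProd}, \RuleName{CCons} and \RuleName{CSend}. Each step relies on uniqueness of reducts of normal forms to identify two differently constructed trees, and the delicate part is keeping the intermediate types aligned so that both trees lie in the same $\treegmonad(X)_{\sty{T}}$.
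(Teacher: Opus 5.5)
Your proposal follows the paper's proof. It uses a case split on the configuration reduction rule with an inner induction on the typing derivation, handles the \TRULE{CBase} cases via \cref{subject-reduction-computations-denotations} and the key \RuleName{CCons} case via \cref{semantic-config-recv-type-lemma}, treats (2) via the inductive hypothesis plus \cref{inductive-predicates-swapping} and \cref{reduction-swapping}, and gets (3) directly from the \TRULE{CRecv} clause. The one place you overcomplicate is the inductive steps of \RuleName{CProd}: no commutation is needed there, and swapping past an older message to the same $\pp$ would not even be licensed by \cref{reduction-swapping}(1). The new message enters at the front of the send queue, so $\mathcal{D}$'s derivation reuses every outer \TRULE{CSend}/\TRULE{CRecv} layer of $\mathcal{C}$'s derivation unchanged. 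Hence $\sem{\mathcal{C}} = \sem{\mathcal{D}}$ follows directly from the inductive hypothesis and the uniqueness built into those semantic clauses, exactly as in the paper.
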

\begin{proof}
  For (1), there are three cases to consider: either the reduction is $\mathcal{C}$'s computation making an internal transition, or it is that computation producing a message, or it is that consuming a message.
  \begin{itemize}
    \item In the first case, we have $\mathcal{C} = \config{\queue{\rho}}{t}{\queue{\sigma}}$, $\mathcal{D} = \config{\queue{\rho}}{u}{\queue{\sigma}}$, and $\reduceaction{t}{u}{\tauaction}$.
      By \cref{subject-reduction-computations}, we have that $u$ has the same type as $t$, and by \cref{subject-reduction-computations-denotations}, we have $\sem{t} = \sem{u}$, and hence the result follows by a trivial induction on the typing derivation for $\mathcal{C}$.
    \item In the second case, we have $\mathcal{C} = \config{\queue{\rho}}{t}{\queue{\sigma}}$, $\mathcal{D} = \config{\queue{\rho}}{u}{\consfront{\pp}{\msg{\ell}{v}}{\queue{\sigma}}}$, and $\reduceaction{t}{u}{\sendaction{\pp}{\msg{\ell}{v}}}$, with $v : \basetype$.
      We proceed by induction on the derivation of $\hastype{\mathcal{C}}{\sty{U}}$.
      In the base case, we have $\hastype{t}{\sty{U}}$, so by \cref{subject-reduction-computations} there is some $\sty{T}$ such that $\reduceinternal{\sty{U}}{\sty{T}}{\pp}{\ell}{\basetype}$ and $\hastype{u}{\sty{U}}$, so that $\hastype{\mathcal{D}}{\sty{U}}$. We also have $\sem{\mathcal{C}} = \reduceaction{\sem{t}}{\sem{u}}{\sendaction{\pp}{\msg{\ell}{v}}}$ by \cref{subject-reduction-computations-denotations}, and hence $\sem{\mathcal{C}} = \sem{\mathcal{D}}$.
      Both of the inductive cases are immediate from the inductive hypothesis.
    \item In the third case, we have $\mathcal{C} = \config{\consback{\queue{\rho}}{\pp}{\msg{\ell}{v}}}{t}{\queue{\sigma}}$, $\mathcal{D} = \config{\queue{\rho}}{u}{\queue{\sigma}}$, and $\reduceaction{t}{u}{\recvaction{\pp}{\msg{\ell}{v}}}$.
      We proceed by induction on the derivation of $\hastype{\mathcal{C}}{\sty{U}}$.
      The derivation cannot be by the base case.
      If the derivation is by rule \RuleName{CRecv}, then $\mathcal{C}$ has the form $\config{\consfront{\pq}{\msg{\ell'}{v'}}{\queue{\rho}'}}{t}{\queue{\sigma}}$, and we have $v' : \basetype'$, and $\hastype{\mathcal{C}'}{\sty{U}'}$, where $\mathcal{C'} = \config{\queue{\rho'}}{t}{\queue{\sigma}}$, and $\reduceexternal{\sty{U}'}{\sty{U}}{\pq}{\ell'}{\basetype'}$.
      If $\pq = \pp$, then we have $\msg{\ell'}{v'} = \msg{\ell}{v}$, and $\queue{\rho} = \queue{\rho'}$ does not contain a message from $\pp$.
      The result then follows from \cref{subject-reduction} and \cref{semantic-config-recv-type-lemma}.
      Otherwise, the result is immediate from the inductive hypothesis.
      If the derivation is by rule \RuleName{CSend}, then the result is again immediate from the inductive hypothesis.
  \end{itemize}

  For (2), we necessarily have $\mathcal{C} = \config{\queue{\rho}}{t}{\consback{\queue{\sigma}}{\pp}{\msg{\ell}{v}}}$ and $\mathcal{D} = \config{\queue{\rho}}{t}{\queue{\sigma}}$.
  We proceed by induction on the derivation of $\hastype{\mathcal{C}}{\sty{U}}$.
  The derivation cannot be by the base case.
  If it is by rule \RuleName{CSend}, then $\mathcal{C}$ has the form $\config{\queue{\rho}}{t}{\consback{\queue{\sigma'}}{\pq}{\msg{\ell'}{v'}}}$, and we have $v' : \basetype'$, and $\hastype{\mathcal{C}'}{\sty{U}'}$, where $\mathcal{C'} = \config{\queue{\rho}}{t}{\queue{\sigma'}}$, and $\reduceinternal{\sty{U}}{\sty{U}'}{\pq}{\ell'}{\basetype'}$.
  By the definition of $\sem{\mathcal{C}}$, we have $\reduceaction{\sem{\mathcal{C}}}{\sem{\mathcal{C'}}}{\sendaction{\pq}{\msg{\ell'}{v'}}}$.
  We consider two cases:
  \begin{itemize}
    \item If $\pq = \pp$, then $(\queue{\sigma'}, \ell', \basetype') = (\queue{\sigma}, \ell, \basetype)$. The result is then immediate, by taking $\sty{T} = \sty{U'}$.
    \item If $\pq \neq \pp$, then by the inductive hypothesis we have $\sty{T'}$ such that $\reduceinternal{\sty{U}'}{\sty{T'}}{\pp}{\ell}{\basetype}$, $\hastype{\config{\queue{\rho}}{t}{\queue{\sigma'}}}{\sty{T'}}$, and $\reduceaction{\sem{\mathcal{C}'}}{\sem{\config{\queue{\rho}}{t}{\queue{\sigma'}}}}{\sendaction{\pp}{\msg{\ell}{v}}}$. By \cref{inductive-predicates-swapping} there is some $\sty{T}$ such that $\reduceinternal{\sty{U}}{\reduceinternal{\sty{T}}{\sty{T'}}{\pq}{\ell'}{\basetype'}}{\pp}{\ell}{\basetype}$. We also have $\reduceaction{\sem{\mathcal{C}}}{\reduceaction{\sem{\config{\queue{\rho}}{t}{\queue{\sigma}}}}{\sem{\mathcal{C'}}}{\sendaction{\pq}{\msg{\ell'}{\basetype'}}}}{\sendaction{\pp}{\msg{\ell}{v}}}$, by \cref{reduction-swapping}. Finally, we have $\hastype{\config{\queue{\rho}}{t}{\queue{\sigma}}}{\sty{T}}$ using rule \RuleName{CSend}.
  \end{itemize}
  Finally, if the proof is by rule \RuleName{CRecv}, then $\mathcal{C}$ has the form $\config{\consfront{\pq}{\msg{\ell'}{v'}}{\queue{\rho}}}{t}{\consback{\queue{\sigma}}{\pp}{\msg{\ell}{v}}}$, and we have $v' : \basetype'$, and $\hastype{\mathcal{C}'}{\sty{U}'}$, where $\mathcal{C'} = \config{\queue{\rho'}}{t}{\consback{\queue{\sigma}}{\pp}{\msg{\ell}{v}}}$, and $\reduceexternal{\sty{U}'}{\sty{U}}{\pq}{\ell'}{\basetype'}$.
  By the definition of $\sem{\mathcal{C}}$, we have $\reduceaction{\sem{\mathcal{C'}}}{\sem{\mathcal{C}}}{\recvaction{\pq}{\msg{\ell'}{v'}}}$.
  By the inductive hypothesis, we have $\sty{T'}$ such that $\reduceinternal{\sty{U}'}{\sty{T'}}{\pp}{\ell}{\basetype}$, $\hastype{\config{\queue{\rho}'}{t}{\queue{\sigma}}}{\sty{T'}}$, and $\reduceaction{\sem{\mathcal{C}'}}{\sem{\config{\queue{\rho}'}{t}{\queue{\sigma}}}}{\sendaction{\pp}{\msg{\ell}{v}}}$. By \cref{inductive-predicates-swapping} there is some $\sty{T}$ such that $\reduceinternal{\sty{U}}{\sty{T}'}{\pp}{\ell}{\basetype}$ and $\reduceexternal{\sty{T'}}{\sty{T}}{\pq}{\ell'}{\basetype'}$.
  We also have $\reduceaction{\sem{\mathcal{C}}}{\sem{\config{\queue{\rho}}{t}{\queue{\sigma}}}}{\sendaction{\pp}{\msg{\ell}{v}}}$ and $\reduceaction{\sem{\mathcal{C'}}}{\sem{\config{\queue{\rho}}{t}{\queue{\sigma}}}}{\recvaction{\pq}{\msg{\ell'}{\basetype'}}}$, by \cref{reduction-swapping}. Finally, we have $\hastype{\config{\queue{\rho}}{t}{\queue{\sigma}}}{\sty{T}}$ using rule \RuleName{CRecv}.

  (3) is trivial, using \RuleName{CRecv}.
\end{proof}

\begin{definition}
  A computation $\comptypedg{\Theta}{\Phi}{\emptycontext}{t}{\basetype}{\sty{T}}$ \emph{progresses} when
  there is a reduction $\reduceactionmany{t}{u}{\tauaction}$ such that one of the following holds.
  \begin{itemize}
    \item $u$ has the form $\rctx{R}[\returnterm{v}]$, where $\rctx{R}[{-}]$ does not contain any non-recursive let-bindings;
    \item $u$ has the form $\rctx{R}[\sendterm{\ell}{v}{\pp}{t}]$, where $t$ progresses;
    \item $u$ has the form $\rctx{R}[\recvterm{\pp}{\ell_i}{x_i : \basetype_i}{u_i}{i \in I}]$, and $\Recvs{\pp}(\sty{T})$ holds, and $u_i[x_i \mapsto v_i]$ progresses for all $v: \basetype_i$ and $\reduceexternal{\sty{T}}{\sty{U}}{\pp}{\ell_i}{\basetype_i}$.
  \end{itemize}
  Moreover we require the following.
  \begin{itemize}
    \item If $\Sends{\pp}(\sty{T})$ then there is some $h$, such that for every finite sequence of reductions
      $\reduceaction{t = t_1}{\reduceaction{\cdots}{t_n}{\alpha_h}}{\alpha_1}$
      where $\comptypedg{\emptycontext}{\Phi}{\emptycontext}{t}{\basetype}{\sty{T}_i}$, $\sty{T} = \sty{T}_1$,
      $\alpha_i = \sendaction{\pq}{\msg{\ell'}{v}}$ implies $\reduceinternal{\sty{T}_i}{\sty{T}_{i + 1}}{\pq}{\ell'}{\basetype'}$ with $v : \basetype$,
      and $\alpha_i = \tauaction$ implies $\sty{T}_i = \sty{T}_{i + 1}$,
      then: there is no $i$ such that $\alpha_i$ has the form $\recvaction{\pp}{m}$, and there is some $i$ such that $\alpha_i = \sendaction{\pp}{\msg{\ell}{v}}$.
    \item If $\reduceexternal{\sty{T}}{\sty{U}}{\pp}{\ell}{\basetype}$, then there is some $h$, such that for every finite sequence of reductions
      $\reduceaction{t = t_1}{\reduceaction{\cdots}{t_n}{\alpha_h}}{\alpha_1}$
      where $\comptypedg{\emptycontext}{\Phi}{\emptycontext}{t}{\basetype}{\sty{T}_i}$, $\sty{T} = \sty{T}_1$,
      $\alpha_i = \sendaction{\pq}{\msg{\ell'}{v}}$ implies $\reduceinternal{\sty{T}_i}{\sty{T}_{i + 1}}{\pq}{\ell'}{\basetype'}$ with $v : \basetype$,
      $\alpha_i = \recvaction{\pq}{\msg{\ell'}{v}}$ implies $\reduceexternal{\sty{T}_i}{\sty{T}_{i + 1}}{\pq}{\ell'}{\basetype'}$ with $v : \basetype$,
      and $\alpha_i = \tauaction$ implies $\sty{T}_i = \sty{T}_{i + 1}$,
      then there is some $i$ such that $\alpha_i = \recvaction{\pp}{\msg{\ell}}$.
  \end{itemize}
\end{definition}

\begin{lemma}\label{computation-progress}
  Assume that $\comptypedg{\Theta}{\emptycontext}{\emptycontext}{t}{\basetype}{\sty{T}}$.
  Let $t = \letrecterm{f_1}{(x_{11}, \dots, x_{1n_1})}{u_1}{\cdots \letrecterm{f_n}{(x_{11}, \dots, x_{1n_n})}{u_n}{t'}}$
  be a computation, such that
  \[
    \comptypedg{\Theta, X}{\Phi_i}{\Gamma, x_{i1} : \basetype_{i1}, \dots }{u_i}{\basetype'_i}{\sty{U}'_i}
  \]
  where $\Phi_i = f_1 : (\basetype_{11}, \dots, \basetype_{1n_1}) \gto{\ltyperec{\ltypevar{X}}{\sty{U}_1}} \basetype'_1, \dots,f_{i} : (\basetype_{1{i-1}}, \dots, \basetype_{1n_{i-1}}) \gto{\ltypevar{X}} \basetype'_{i - 1}$,
  and such that 
  \[
    \letrecterm{f_1}{(x_{11}, \dots, x_{1n_1})}{u_1}{\letrecterm{f_1}{(x_{11}, \dots, x_{1n_1})}{u_{i-1}}{u_i}}(\gamma, v_1,\dots,v_{n_i})
  \]
  progresses for all $v_1 : \basetype_1, \dots, v_n : \basetype_n$.
  Also assume that
  \[
    \comptypedg{\emptycontext}{\Phi_{n+1}}{\Gamma}{t'}{\basetype_i}{\sty{T}}
  \]
  For all $\gamma$, the computation 
  $t(\gamma)$ progresses.
\end{lemma}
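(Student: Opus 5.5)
The plan is to prove the statement by induction on the typing derivation of the body $t'$, generalised over the environment $\gamma$ and over the function context $\Phi_{n+1}$. The hypothesis that each closed-up recursive body $u_i$ progresses serves as the base case for applications. Throughout, $t(\gamma)$ is compared with its reduct: by \cref{subject-reduction-computations} the reducts keep appropriate types, and by \cref{subject-reduction-computations-denotations} $\tauaction$-steps do not change the interpretation. The first three (\emph{shape}) clauses of progression and the two \emph{bounded-horizon} clauses (the ones about $\Sends{\pp}$ and $\reduceexternalsymbol$) are handled separately.

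For the shape clauses I would case-split on the last typing rule for $t'$.
\begin{itemize}
\item \RuleName{Ret}, \RuleName{$+$} and \RuleName{$<$}: the computation reduces in at most one $\tauaction$-step to $\rctx{R}[\returnterm{v}]$. Here $\rctx{R}$ consists only of the recursive bindings.
\item \RuleName{If}: take one \RuleName{IfT}/\RuleName{IfF} step, then apply the inductive hypothesis to the chosen branch.
\item \RuleName{Send}: the term already has the second shape, and the continuation progresses by the inductive hypothesis.
\item \RuleName{Recv}: $\Recvs{\pp}(\sty{T})$ holds because $\sty{T}$ is a supertype of an external choice on $\pp$ (as in \cref{comp-recvs}). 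The branches progress by the inductive hypothesis applied to each substitution instance $\gamma, x_i\mapsto v$.
\item \RuleName{App}: one \RuleName{Apply} step unfolds $f_i$ into $u_i$ under the enclosing recursive definitions, which is exactly the computation assumed to progress.
\item \RuleName{LetRec}: extend the list of definitions by one. The new body progresses by the inductive hypothesis applied to that body, and guardedness ensures the recursion in the hypothesis is well-founded.
\item \RuleName{Let}: run the inductive hypothesis on the first component. If it reaches $\returnterm{v}$, apply \RuleName{LetR} and then the inductive hypothesis for the second component. If it reaches a send or receive, the $\syntaxkeyword{let}$ becomes part of the reduction context, and progression of the continuations follows by a nested induction.
\item \RuleName{$\subtype$}: transport along \cref{subtype-lifting}.
\end{itemize}

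For the two bounded-horizon clauses, I would use the denotational semantics rather than the syntax. Since $\Sends{\pp}(\sty{T})$ holds, \cref{reduce-send-nf} and \cref{computation-tree-typing-lifting} give a normal-form tree in $\treegmonad(\sem{\basetype})_{\sty{T}}$ that sends to $\pp$ at a finite depth. That depth is bounded by the height of the derivation of $\Sends{\pp}$, because unfolding of $\mu$ is guarded. Any type-respecting reduction sequence of $t$ is matched, via \cref{subject-reduction-computations-denotations}, by a sequence of reductions of interpretations. The key claim is that the number of non-$\tauaction$ steps before the send to $\pp$ is bounded by $h$ and that no receive from $\pp$ can occur. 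To bound the $\tauaction$-steps between visible actions, I would argue that a computation cannot make infinitely many consecutive $\tauaction$-steps: each \RuleName{Apply} must be followed by a send or receive (guardedness), and all other $\tauaction$-rules decrease term size. The receive clause is handled dually, using \cref{computation-tree-typing-lifting}(2) and the height bound in condition (4) of \cref{def:computation-tree-typing}.

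The step I expect to be the main obstacle is the \RuleName{Let} case of the horizon clauses. There, the bound $h$ has to be assembled from the bound for $t$ and from a uniform bound over all possible results of $t$ for $u$. The latter is only finite because $\Sends{\pp}(\sty{T}\cdot\sty{T}')$ decomposes through \cref{reduce-multiplication-inverse} into a finite derivation. I would first try to prove, as a separate lemma, that the number of consecutive $\tauaction$-steps from any well-typed closed computation is bounded by a measure (term size plus the number of unguarded applications). Then $h$ can be taken as the product of that measure and the derivation height of $\Sends{\pp}$ or of $\reduceexternalsymbol$.
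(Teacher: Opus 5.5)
\textbf{There is a genuine gap in the bounded-horizon clauses.} Your treatment of the shape clauses follows the paper's proof: induction on the derivation of $t'$, with the same case split. The bounded-horizon clauses, however, cannot be obtained from the denotational semantics. The tree $\sem{t}$ is the normal form at the \emph{given} type $\sty{T}$, so it orders actions as $\sty{T}$ dictates, not as $t$ performs them. A send of $t$ to some $\pq \neq \pp$ is matched, via \RuleName{SendSend}, by a tree reduction that may fire \emph{below} the $\pp$-send node. So the depth of the first $\pp$-send in $\sem{t}$ (equivalently, the height of the derivation of $\Sends{\pp}(\sty{T})$) does not bound how many actions $t$ performs before it sends to $\pp$.

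Concretely, take $\sty{T} = \ltypesendone{\pp}{\ell}{\unittype}{\ltyperec{\ltypevar{X}}{\ltypesendone{\pq}{\ell'}{\unittype}{\ltypevar{X}}}}$. Let $t$ define $f() = \sendterm{\ell'}{\unitterm}{\pq}{f()}$, then send $N$ messages $\msg{\ell'}{\unitterm}$ to $\pq$, then $\msg{\ell}{\unitterm}$ to $\pp$, then call $f()$.
\begin{itemize}
\item For every $N$, the syntactic type of $t$ is a subtype of $\sty{T}$, because outputs to distinct participants commute.
\item Taking $\sty{T}_i = \sty{T}$ for all $i$ is an admissible type-respecting sequence, since $\reduceinternal{\sty{T}}{\sty{T}}{\pq}{\ell'}{\unittype}$.
\item Both $\sem{t} = \treesend{\pp}{\msg{\ell}{\unitterm}}{\treesend{\pq}{\msg{\ell'}{\unitterm}}{\cdots}}$ and the derivation of $\Sends{\pp}(\sty{T})$ (a single \RuleName{Sends-base}) are independent of $N$.
\end{itemize}
Yet any valid $h$ must exceed $N$. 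Trees also have no $\tauaction$-transitions, so they say nothing about $\tauaction$-runs either.

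\textbf{How the paper obtains the bound.} The bound has to be read off the typing derivation of $t'$, inside the same induction you use for the shape clauses. This is what the paper does.
\begin{itemize}
\item \RuleName{Send} and \RuleName{Recv} increase $h$ by one.
\item \RuleName{$\subtype$} transfers the $\Sends{\pp}$ and $\reduceexternalsymbol$ hypotheses from the supertype to the subtype by \cref{subtype-lifting}(4),(6). In the example, this is exactly where $N$ enters: the finite derivation that matters is that of $\Sends{\pp}$ for the syntactic type, not for $\sty{T}$.
\item \RuleName{Let} is handled by induction on the derivation of $\Sends{\pp}(\sty{T}\effmul\sty{T}')$ (resp.\ of $\reduceexternalsymbol$). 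On each branch, either the first computation itself performs the required action, or it returns within boundedly many steps and the second computation takes over.
\end{itemize}
\Cref{reduce-multiplication-inverse} does not give this decomposition, since it presupposes $\Sends{\pp}(\sty{T})$ (resp.\ $\Recvs{\pp}(\sty{T})$).

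\textbf{Two further points.} Your concern about uniformity over the values returned (or received) by the first computation is legitimate; the paper's sketch leaves it implicit. It is resolved by strengthening the induction hypothesis so that $h$ depends only on the derivation and on the bounds assumed for the $f_i$, not on $\gamma$.

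Your $\tauaction$-measure (term size plus number of unguarded applications) is wrong as stated. With $g_0() = \returnterm{\unitterm}$ and $g_{i+1}() = \bindterm{y}{g_i()}{g_i()}$, a term of size $O(k)$ makes about $2^k$ consecutive $\tauaction$-steps. Only finiteness and uniformity of such runs are needed, and both follow from guardedness and lexical scoping within the same derivation induction.
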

\begin{proof}
  By induction on the typing derivation for $t'$
  In each case, we note that after a reduction, the reduct is well-typed by \cref{subject-reduction-computations}.
  \begin{itemize}
  \item For \TRULE{$\subtype$}, we note that \cref{subtype-lifting} enables us to transfer our $\Sends{\pp}$ and $\reduceexternalsymbol$ assumptions to the subtype, so that the result follows from the inductive hypothesis.
  \item For \TRULE{Ret}, \TRULE{+} and \TRULE{<}, we have $\unfold{\sty{T}} = \ltypeend$, and the result is therefore trivial.
  \item For \TRULE{Send}, the result follows from the inductive hypothesis, increasing $h$ by $1$ to account for the extra reduction.
  \item For \TRULE{Recv}, the result follows from the indcutive hypothesis, again increasing $h$ by $1$ to account for the extra reduction.
  \item For \TRULE{Let}, the computation to reduce first progresses, by the inductive hypothesis. If it reduces to a $\returnterm$, then we can continue the reduction by rule \TRULE{LetR}, and the result follows from the inductive hypothesis on the other computation. Otherwise, it reduces to either a $\sendtermkeyword$ or to a $\recvtermkeyword$, which then progresses. To take care of the second part of the definition of progresses, we proceed by induction on $\Sends{\pp}{(\sty{T} \cdot \sty{T'})}$ to show that, either we can satisfy the second part using the fact that the first computation progresses, or that the first computation will reduce to a $\returntermkeyword$ in finitely many steps, after which we can satisfy the second part using the second computation. Similarly for $\reduceexternalsymbol$.
  \item For \TRULE{If}, the result is immediate from the inductive hypothesis.
  \item For \TRULE{LetRec}, the result is immediate from the inductive hypothesis, adding a new binding to the sequence of recursive functions. Note that this new recursive function progresses by guardedness of recursive calls.
  \item For \TRULE{App}, the result is immediate from our assumptions about recursive functions.
  \end{itemize}
\end{proof}

\begin{lemma}\label{progress-happens}
  Let $\mathcal{C} = \config{\queue{\rho}}{t}{\queue{\sigma}}$ be a configuration, such that
  $\hastype{\mathcal{C}}{\sty{U}}$ holds.
  \begin{itemize}
    \item If $\unfold{\sty{U}} = \ltypeend$, then there is some reduction $\reduceactionmany{\mathcal{C}}{\mathcal{D}}{\tauaction}$ with $\Result{(\mathcal{D})}$ defined.
    \item If $\Sends{\pp}(\sty{U})$, then there is some reduction $\reduceactionmany{\mathcal{C}}{\mathcal{D}}{\sendaction{\pp}{m}}$.
  \end{itemize}
\end{lemma}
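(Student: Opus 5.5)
We prove both items by induction on the derivation of $\hastype{\mathcal{C}}{\sty{U}}$ (that is, of $\conftypedg{\mathcal{C}}{\basetype}{\sty{U}}$), after first handling the base case \RuleName{CBase} using \cref{computation-progress}.

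In the base case $\mathcal{C} = \config{\queueempty}{t}{\queueempty}$ with $\comptypedg{\emptycontext}{\emptycontext}{\emptycontext}{t}{\basetype}{\sty{U}}$. We apply \cref{computation-progress} with no recursive bindings ($n = 0$), which tells us that $t$ progresses.
\begin{itemize}
\item If $\unfold{\sty{U}} = \ltypeend$, we use the first clause of the definition of progress. We argue that the reduct $u$ with $\reduceactionmany{t}{u}{\tauaction}$ must have the form $\rctx{R}[\returnterm{v}]$. Indeed, a $\rctx{R}[\sendterm{\ell}{v}{\pp}{t'}]$ or $\rctx{R}[\recvtermkeyword_{\pp}\dots]$ reduct would, by \cref{subject-reduction-computations} and inversion (\cref{inversion-lemma}), force $\sty{U}$ to be a supertype of a choice. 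But $\ltypeend$ has no such subtypes, since subtyping with $\unfold{\sty{U}}=\ltypeend$ forces the same on the left via conditions (1)–(4).
\item Lifting the $\tauaction$-reductions of $t$ to $\mathcal{C}$ via \RuleName{CInt} then gives $\mathcal{D}$ with $\Result(\mathcal{D})$ defined.
\item If $\Sends{\pp}(\sty{U})$, the ``moreover'' clause of progress supplies a bound $h$ such that every typed reduction sequence of length $h$ contains no receive from $\pp$ and does contain a $\sendaction{\pp}{\msg{\ell}{v}}$.
\end{itemize}

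To use that bound, we build one concrete maximal run of $t$. It takes $\tauaction$ steps freely and takes send steps (recording $\reduceinternalsymbol$ on the type, using \cref{subject-reduction-computations}(2)). It must not need to receive before sending to $\pp$: if it were stuck at a receive, progress would give $\Recvs{\pq}$ of the current type. Using \cref{subtype-lifting} and the fact that $\Sends{\pp}$ is preserved along $\reduceinternal{}{}{\pq}{}{}$ steps with $\pq \neq \pp$, a $\Sends{\pp}$ type cannot satisfy $\Recvs{\pq}$ at its head, which is a contradiction. So within $h$ steps $t$ emits $\sendaction{\pp}{m}$. At configuration level each send step of $t$ becomes a \RuleName{CProd} $\tauaction$ step, except the one to $\pp$. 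For that one we follow \RuleName{CProd} by \RuleName{CSend}; since the queue for $\pp$ was empty, the message just enqueued is the one emitted. This yields $\reduceactionmany{\mathcal{C}}{\mathcal{D}}{\sendaction{\pp}{m}}$.

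For the inductive cases we proceed as follows.
\begin{itemize}
\item \RuleName{CSend}: here $\mathcal{C} = \config{\queue{\rho}}{t}{\consback{\queue{\sigma}}{\pq}{\msg{\ell}{v}}}$ with $\reduceinternal{\sty{U}}{\sty{T}}{\pq}{\ell}{\basetype'}$. In the $\ltypeend$ case this is impossible, since $\reduceinternalsymbol$ from a type unfolding to $\ltypeend$ has no derivation. In the $\Sends{\pp}$ case with $\pq = \pp$, rule \RuleName{CSend} of the operational semantics fires immediately. If $\pq \neq \pp$, then $\Sends{\pp}(\sty{T})$ holds, by induction on the $\Sends{\pp}$ derivation inverting the $\reduceinternalsymbol$ step (the dual of the swapping argument in \cref{reduce-multiplication-inverse}). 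The inductive hypothesis then gives a run of $\config{\queue{\rho}}{t}{\queue{\sigma}}$, which we replay in $\mathcal{C}$ with the extra queued message left untouched. This is legitimate because configuration reductions never inspect other participants' queue entries.
\item \RuleName{CRecv}: here $\mathcal{C} = \config{\consfront{\pq}{\msg{\ell}{v}}{\queue{\rho}}}{t}{\queue{\sigma}}$ and the premise has type $\sty{T}$ with $\reduceexternal{\sty{T}}{\sty{U}}{\pq}{\ell}{\basetype'}$. We cannot simply replay, since the computation may need to consume this message.
\end{itemize}

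The main obstacle is this \RuleName{CRecv} case. I would reduce it to the denotational side, using \cref{reduce-semantic-configuration} and \cref{semantics-correct}-style reasoning. First, by \cref{bisim-normal} and \cref{computation-tree-typing-lifting}, the interpretation $\sem{\mathcal{C}}$ is a normal form of type $\sty{U}$ that sends to $\pp$, or is $\treereturn{x}$. Second, I would strengthen the induction hypothesis to: every configuration typed at a type whose head obligation is to send to $\pp$ (or to end) reaches that action, measured by the finite height of the $\Sends{\pp}$ derivation plus the queue length. The computation's receives from the front queue are then absorbed by \RuleName{CCons} $\tauaction$ steps, while \cref{config-recv-type-lemma} keeps the reducts typed with a type still satisfying $\Sends{\pp}$.
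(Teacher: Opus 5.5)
Your base case via \cref{computation-progress} and your \RuleName{CSend} case are fine. The gap is exactly where you place the obstacle: the \RuleName{CRecv} case is unproved for both items, and neither repair you sketch works.

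Induction on the typing derivation is the wrong decomposition here. In that case the premise $\config{\queue{\rho}}{t}{\queue{\sigma}}$ is typed at some $\sty{T}$ with $\reduceexternal{\sty{T}}{\sty{U}}{\pq}{\ell}{\basetype'}$. Such a $\sty{T}$ in general satisfies neither $\unfold{\sty{T}} = \ltypeend$ nor $\Sends{\pp}(\sty{T})$. For example, let $t$ receive an $\ell$-message from $\pq$ and then send to $\pp$, with that message already in $\queue{\rho}$. The premise has type $\ltyperecvone{\pq}{\ell}{\inttype}{\ltypesendone{\pp}{\ell'}{\inttype}{\ltypeend}}$ and the conclusion has type $\ltypesendone{\pp}{\ell'}{\inttype}{\ltypeend}$, so the induction hypothesis says nothing.

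Your denotational route is circular. Knowing that $\sem{\mathcal{C}}$ sends to $\pp$ yields a reduction of $\mathcal{C}$ only via \cref{semantics-correct}. The proof of that theorem (the lemma $\mathcal{C} \bisim_{\sty{T}} \sem{\mathcal{C}}$ at the end of the appendix for \cref{sec:semantics}) invokes the present lemma in its clauses (2) and (3).

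Your strengthened hypothesis lacks the fact that carries the whole argument. When the computation is blocked on a receive from some $\pq$, a message from $\pq$ must already be in $\queue{\rho}$; otherwise the configuration is simply stuck, which is precisely the deadlock to be excluded. The paper gets this from \cref{semantic-configuration-receives-immediately}. With the empty chain of $\reduceexternalsymbol$-steps, its hypothesis is exactly $\unfold{\sty{U}} = \ltypeend$, or $\Sends{\pp}(\sty{U})$ with no message for $\pp$ in $\queue{\sigma}$.

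The paper then inducts on the size of $\queue{\rho}$ rather than on the derivation:
\begin{enumerate}
\item Run $t$ using \cref{computation-progress}, lifting its steps by \RuleName{CInt} and \RuleName{CProd}.
\item When $t$ waits on $\pq$, fire the \RuleName{CCons} step. This is a $\tauaction$-step, so it preserves the type $\sty{U}$ by \cref{subject-reduction} (or \cref{reduce-semantic-configuration}(1)) and strictly shrinks $\queue{\rho}$.
\item Apply the induction hypothesis.
\end{enumerate}

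Two smaller points follow from this. First, \cref{config-recv-type-lemma} is not the relevant lemma: it concerns a message arriving from outside when $\queue{\rho}$ holds none from that participant. Second, your measure (height of the $\Sends{\pp}$ derivation plus queue length) cannot be the whole story. \RuleName{CProd} steps leave both the configuration type and $\queue{\rho}$ unchanged, so between consumptions you still need the computation-level bounds of \cref{computation-progress}.

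The paper's written argument for the second item is terse here as well. Its claim that $t$ itself admits a type satisfying $\Sends{\pp}$ fails on the example above. It is the queue-size argument that does the work, and you should follow it in both items.
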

\begin{proof}
  In both cases, we proceed by well-founded induction on the size of the queue $\rho$.
  For (1), we invoke \cref{computation-progress}.
  \begin{itemize}
    \item If $\reduceactionmany{t}{\rctx{R}[\returnterm{v}]}{\tauaction}$, then since $\hastype{\mathcal{C}}{\sty{U}}$ holds, both $\queue{\rho}$ and $\queue{\sigma}$ are necessarily empty, so we can take $\mathcal{D} = \mathcal{C}$.
    \item We cannot have $\reduceactionmany{t}{\rctx{R}[\sendterm{\ell}{v}{\pp}{u}]}{\tauaction}$, because the reduct does not have type $\ltypeend$, and thus neither can $t$.
    \item If $\reduceactionmany{t}{\rctx{R}[\recvterm{\pp}{\ell_i}{x_i : \basetype_i}{u_i}{i \in I}]}{\tauaction}$, then by \cref{semantic-configuration-receives-immediately}, the receive queue $\queue{\rho}$ contains a message from $\pp$. Thus the configuration reduces, this reduction reduces the size of the queue, and the reduct has type $\sty{T}$ by \cref{reduce-semantic-configuration}, so the result follows from the inductive hypothesis.
  \end{itemize}
  For (2), if $\queue{\sigma}$ contains a message for $\pp$, then we have the required reduction trivially.
  Otherwise, an induction on the derivation of $\hastype{\mathcal{C}}{\sty{U}}$, using \cref{inductive-predicates-swapping}, shows that we can assign to $t$ some type $\sty{T}$ such that $\Sends{\pp}(\sty{T})$.
  We can therefore appeal to \cref{computation-progress} to get the correct reduction.
\end{proof}

\begin{lemma}
  Let $\mathcal{C}$ be a configuration.
  If $\hastype{\mathcal{C}}{\sty{T}}$, then $\mathcal{C} \bisim_{\sty{T}} \sem{\mathcal{C}}$.
\end{lemma}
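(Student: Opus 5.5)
We prove this by coinduction. We exhibit a typed bisimulation $R$ between the typed transition system of configurations (for the fixed result type $\basetype$) and that of computation trees, and show that it contains every pair $(\mathcal{C}, \sem{\mathcal{C}})$. Concretely, define $R_{\sty{S}}$ to relate $\mathcal{C}$ and $s$ whenever $\hastype{\mathcal{C}}{\sty{S}}$, $\hastype{s}{\sty{S}}$, and $s \bisim_{\sty{S}} \sem{\mathcal{C}}$; here $\sem{\mathcal{C}}$ is computed from some derivation typing $\mathcal{C}$ at a type $\sty{T} \subtype \sty{S}$, coerced to $\sty{S}$ by the subtyping function $\treegmonad(\sem{\basetype})_{\sty{T} \subtype \sty{S}}$. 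Including the closure under $\bisim$ on the tree side, and under subtyping, is what lets the diagram close after a step. On the tree side, $\bisim_{\sty{S}}$ is transitive and closed under subtyping by the lemma on $\bisim_{\sty{S}}$ proved earlier, and \cref{bisim-normal} lets us replace trees by their normal forms. Our target is then $\mathcal{C} \, R_{\sty{T}} \, \sem{\mathcal{C}}$, which holds by \cref{bisim-reflexive}.

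We check the four clauses of \cref{def:bisim} for a pair $\mathcal{C} \, R_{\sty{S}} \, s$.
\begin{enumerate}
\item \textbf{$\tauaction$-steps.} Trees have no $\tauaction$ steps, so only direction (a) matters. If $\mathcal{C}$ takes a $\tauaction$ step to $\mathcal{D}$, then \cref{reduce-semantic-configuration}(1) gives $\hastype{\mathcal{D}}{\sty{S}}$ and $\sem{\mathcal{D}} = \sem{\mathcal{C}}$, so $s$ answers with the empty sequence.
\item \textbf{Results.} If $\unfold{\sty{S}} = \ltypeend$, then $s$ and $\sem{\mathcal{C}}$ are bisimilar trees of a type unfolding to $\ltypeend$, so they have the same $\Result$. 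By \cref{progress-happens}, $\mathcal{C}$ $\tauaction$-reduces to some $\mathcal{D}$ with $\Result(\mathcal{D})$ defined. It remains to show that these two results agree. For this, $\sem{\mathcal{D}} = \sem{\mathcal{C}}$ along the way, and a configuration $\config{\queueempty}{\rctx{R}[\returnterm{v}]}{\queueempty}$ denotes $\treereturn{v}$.
\item \textbf{Sends.} Assume $\Sends{\pp}(\sty{S})$. For direction (a), a send step of $\mathcal{C}$ is handled by \cref{reduce-semantic-configuration}(2), which gives a corresponding send step of $\sem{\mathcal{C}}$. We then transfer this to $s$ using clause (3) of $s \bisim_{\sty{S}} \sem{\mathcal{C}}$. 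The two resulting types are merged by \cref{reducemerge}, and subtyping closure reestablishes $R$. For direction (b), a send step of $s$ is matched by one of $\sem{\mathcal{C}}$. By \cref{progress-happens}, $\mathcal{C}$ can $\tauaction$-reduce and then send to $\pp$; its denotation then sends too, again by \cref{reduce-semantic-configuration}. Determinism of sending for typed trees (\cref{computation-tree-typing-lifting}(1)) forces the message to be the same.
\item \textbf{Receives.} Assume $\reduceexternal{\sty{S}}{\sty{T}}{\pp}{\ell}{\basetype}$. For direction (a), we use \cref{reduce-semantic-configuration}(3) and then clause (4) on the tree side. For direction (b), we use rule \RuleName{CRecv} of the operational semantics: every configuration can always receive into its queue. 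The new configuration is typed at $\sty{T}$ by \TRULE{CRecv}, and its denotation is by definition the unique receive-reduct of $\sem{\mathcal{C}}$. \cref{reduction-unique-bisim} identifies that reduct with the reduct of $s$ up to $\bisim_{\sty{T}}$.
\end{enumerate}

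The main obstacle is direction (b) of the send clause. There the tree side can emit a message immediately, while the configuration may first need to perform $\tauaction$ steps and consume queued messages. This relies on the progress machinery of \cref{progress-happens}, itself built on \cref{computation-progress}, and on checking that the $\tauaction$ steps taken along the way preserve both typing and the denotation. A further bookkeeping difficulty is that $\sem{-}$ is defined per derivation. We therefore consistently take $R$ up to $\bisim$ on the tree side, so that the choice of derivation never matters.
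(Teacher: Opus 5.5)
Your proposal is correct and follows essentially the same route as the paper. The paper also checks the four clauses of \cref{def:bisim} directly, using \cref{reduce-semantic-configuration}, \cref{progress-happens} together with determinism of sending, and \cref{reduction-unique-bisim} for the receive case. The difference is that you state the bisimulation candidate explicitly and close it under $\bisim$ on the tree side and under subtyping, using \cref{reducemerge} to reconcile the types after a send. The paper leaves this closure implicit even though it is needed, e.g.\ its receive case only yields $\sem{\mathcal{D}} \bisim_{\sty{U}} t$ rather than equality.
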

\begin{proof}
  We verify each case of \cref{def:bisim} in turn.
  \begin{enumerate}
    \item For (1), (a) follows from \cref{reduce-semantic-configuration}(1); (b) is trivial, because there are no $\tauaction$ reductions for computation trees.
    \item For (2), if $\Result(\mathcal{C}) = x$, then $\mathcal{C} = \config{\queueempty}{\treereturn{x}}{\queueempty}$. This implies $\unfold{\sty{T}} = \ltypeend$, and we also have $\sem{\mathcal{C}} = \treereturn{x}$, which satisfies $\Result(\treereturn{x}) = x$.
      If $\Result{\sem{\mathcal{C}}} = x$, then $\sem{\mathcal{C}} = \treereturn{x}$, so we have $\unfold{\sty{T}} = \ltypeend$. By \cref{progress-happens},
      there is some $\mathcal{D}$ such that $\reduceactionmany{\mathcal{C}}{\mathcal{D}}{\tauaction}$, with $\Result(\mathcal{D})$ defined. By \cref{reduce-semantic-configuration}, we have $\sem{C} = \sem{D}$, and hence that $\Result{(\mathcal{D})} = x$.
    \item For (3), (a) follows from \cref{reduce-semantic-configuration}(2). For (b), if $\Sends{\pp}(\sty{T})$, then there is a reduction $\reduceactionmany{\mathcal{C}}{\mathcal{D}}{\sendaction{\pp}{m}}$ by \cref{progress-happens}. Hence if $\reduceaction{\sem{\mathcal{C}}}{t}{\sendaction{\pp}{m'}}$, we have $m = m'$ and $t = \sem{\mathcal{D}}$ by \cref{reduce-semantic-configuration}.
    \item For (4), (a) follows from \cref{reduce-semantic-configuration}(3). For (b), if $\reduceexternal{\sty{T}}{\sty{U}}{\pp}{\ell}{\basetype}$ and $\reduceaction{\sem{\mathcal{C}}}{t}{\recvaction{\pp}{m}}$, then there is a unique $\mathcal{D}$ such that $\reduceaction{\mathcal{C}}{\mathcal{D}}{\recvaction{\pp}{m}}$, and we obtain $\sem{\mathcal{D}} \bisim_{\sty{U}} t$ by \cref{reduce-semantic-configuration} and \cref{reduction-unique-bisim}.
  \end{enumerate}
\end{proof}

\subsection{Proofs for \cref{sec:session-calculus}}

\begin{definition}
    We write $\sty{T} \extgeq \sty{U}$ when $\sty{T}$ can be written as $\sty{T} = \mergeltype{\sty{T}_i}{i \in I}$, with $I$ a finite $\sty{U} = \mergeltype{\sty{T}_i}{i \in J}$ for some non-empty $J \subseteq I$.
\end{definition}

\begin{lemma}\label{extgeq-subtype}
    If $\sty{T} \extgeq \sty{U}$, then $\sty{T} \suptype \sty{U}$.
\end{lemma}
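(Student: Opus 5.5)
The plan is to prove the statement by coinduction. I exhibit an asynchronous type pre-simulation (\cref{asynchronous-type-pre-simulation}) containing all pairs $(\sty{U}, \sty{T})$ with $\sty{T} \extgeq \sty{U}$; then \cref{union-pre-simulation} gives $\sty{U} \subtype \sty{T}$, i.e.\ $\sty{T} \suptype \sty{U}$. Concretely, let $R$ relate $\sty{U} = \mergeltype{\sty{T}_i}{i \in J}$ to $\sty{T} = \mergeltype{\sty{T}_i}{i \in I}$ whenever $\emptyset \neq J \subseteq I$ are finite and both merges are defined. Full merging is associative and commutative wherever it is defined, and it is idempotent. Hence it suffices to treat the binary case: if $\sty{T} = \sty{U} \bmergeltype \sty{V}$ is defined, then $\sty{U}$ relates to $\sty{T}$, taking $\sty{U}$ as the merge over $J$ and $\sty{V}$ as the merge over $I \setminus J$. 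I keep the general $R$, however, because it is closed under taking continuations.

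First I need a lemma relating unfolding and merging. Merging only combines types with the same outermost constructor, and it commutes with substitution of a recursive type for its bound variable. Therefore $\unfold{\sty{U} \bmergeltype \sty{V}} = \unfold{\sty{U}} \bmergeltype \unfold{\sty{V}}$. I prove this by induction on the number of leading $\mu$s, adjusting the recursive-type clause $\ltyperec{\ltypevar{X}}{\sty{T}} \bmergeltype \ltyperec{\ltypevar{X}}{\sty{T}'} = \ltyperec{\ltypevar{X}}{(\sty{T} \bmergeltype \sty{T}')}$ for substitution. Consequently $\unfold{\sty{T}}$ and $\unfold{\sty{U}}$ have the same head. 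It is $\ltypeend$ for both, or the same variable for both, which settles condition (5); or the same choice on the same $\pp$.

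Next I check the remaining conditions, noting that every check is made at the outermost head, so the \RuleName{$\reduceinternalsymbol$-base}/\RuleName{Sends-base} style rules suffice, together with \cref{inductive-predicates-unfold}.

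\begin{itemize}
\item \textbf{Internal choice.} Merging requires identical label sets, so both heads are $\ltypesend{\pp}{\ell_k}{\basetype_k}{\cdot}{k \in K}$. Their continuations are the merges over $J$ and over $I$ of the corresponding branches of the $\sty{T}_i$. Condition (1) holds via \RuleName{$\reduceinternalsymbol$-base} and \cref{inductive-predicates-unfold}, with the continuations related again by $R$. Condition (3) follows from \RuleName{Sends-base}.
\item \textbf{External choice.} The label set of $\sty{T}$ is the union of those of the $\sty{T}_i$, $i \in I$, and it contains the label set of $\sty{U}$. Condition (2) is \RuleName{Recvs-base}. For condition (4), each branch $k$ of $\sty{T}$ corresponds to the merge over $I_k = \{i \in I \mid k \text{ occurs in } \sty{T}_i\}$. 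The subtype $\sty{U}$ has branch $k$ exactly when $J_k = J \cap I_k \neq \emptyset$, and then its continuation is the merge over $J_k$. So the pair of continuations lies in $R$ with $J_k \subseteq I_k$, and the required $\reduceexternalsymbol$ step is \RuleName{$\reduceexternalsymbol$-base}.
\end{itemize}

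This is the ``fewer branches in the supertype'' direction. It is consistent with the congruence rule of \cref{admissible-subtyping}, which says that a receive with more branches is a subtype. I will double-check this orientation against the direction in which $\extgeq$ is used.

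I expect the main obstacle to be the bookkeeping around recursion. One part is the unfolding/merge commutation. The other is that the continuations produced after unfolding must themselves be presentable as merges over an index subset; this needs substitution to distribute over $\bmergeltype$, so that the continuations of $\unfold{\sty{T}_i}$ merge to the continuations of $\unfold{\sty{T}}$. I would prove that distribution lemma first, by a straightforward induction on the merge definition, and then the rest is routine.
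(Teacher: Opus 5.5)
Your pre-simulation is oriented the wrong way, and the inequality you set out to prove is false. You put $(\sty{U},\sty{T})$ into $R$, so $\sty{U}$, the merge over $J$, sits on the subtype side. Condition (4) of \cref{asynchronous-type-pre-simulation} then demands a $\reduceexternalsymbol$-step of $\sty{U}$ for \emph{every} branch of $\unfold{\sty{T}}$. That includes the labels $k$ with $J \cap I_k = \emptyset$; your bullet notes that $\sty{U}$ lacks these branches and then passes over them. Since $\unfold{\sty{U}}$ is itself an external choice on $\pp$, only \RuleName{$\reduceexternalsymbol$-base} could apply, so no such step exists. Concretely, take $\sty{T}_1 = \ltyperecvone{\pp}{\ell_1}{\unittype}{\ltypeend}$, $\sty{T}_2 = \ltyperecvone{\pp}{\ell_2}{\unittype}{\ltypeend}$, $I = \{1,2\}$ and $J = \{1\}$; then $\sty{T}_1 \subtype (\sty{T}_1 \bmergeltype \sty{T}_2)$ fails at $\ell_2$.

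So, read as you read it, with $\suptype$ the converse of $\subtype$ (as in the proof of \cref{subtype-so-si}), the printed statement is false. The paper's one-line proof uses the $\&$-congruence of \cref{admissible-subtyping}, which puts more branches on the left, so it delivers $\sty{T} \subtype \sty{U}$. That is also what the proof of \cref{session-subject} needs, to retype $\mathcal{C}_{\pr}$ from $\project{\gty{G}}{\pr}$ to $\sty{T}_{\pr}$ via \cref{configuration-subsumption}. Your slogan ``fewer branches in the supertype'' describes this direction, and the orientation check you deferred would have caught the mismatch. Swapping the pairs fixes this part. Condition (4) then asks only for the branches of $\sty{U}$, which are among those of $\sty{T}$, with continuations the merges over $I_k \supseteq J_k$. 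The remaining conditions are then as you describe, provided $\unfold{\sty{T}}$ and $\unfold{\sty{U}}$ have the same head.

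Separately, the recursion lemma you plan to prove first is false as stated. Let $\sty{V}_1 = \ltyperec{\ltypevar{X}}{\ltyperecvone{\pp}{\ell_1}{\unittype}{\ltypevar{X}}}$ and $\sty{V}_2 = \ltyperec{\ltypevar{X}}{\ltyperecvone{\pp}{\ell_2}{\unittype}{\ltypeend}}$. The $\ell_1$-continuation of $\unfold{\sty{V}_1 \bmergeltype \sty{V}_2}$ is $\sty{V}_1 \bmergeltype \sty{V}_2$, while that of $\unfold{\sty{V}_1} \bmergeltype \unfold{\sty{V}_2}$ is $\sty{V}_1$; these differ even as trees. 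Unfolding a merge substitutes the \emph{merged} recursive type. Hence the continuations of $\unfold{\sty{T}}$ are not merges of continuations of the $\unfold{\sty{T}_i}$, and your argument that $R$ is closed under continuations fails.

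This can be rescued. Because $R$ quantifies existentially over the family, it coincides, by the associativity, commutativity and idempotence you invoke, with the absorption relation $\sty{U} \bmergeltype \sty{T} = \sty{T}$. Absorption is preserved by substitution: if $\sty{A} \bmergeltype \sty{B} = \sty{B}$ and $\sty{V} \bmergeltype \sty{W} = \sty{W}$, then $\subst{\sty{A}}{\ltypevar{X} \mapsto \sty{V}} \bmergeltype \subst{\sty{B}}{\ltypevar{X} \mapsto \sty{W}} = \subst{\sty{B}}{\ltypevar{X} \mapsto \sty{W}}$. It is therefore preserved by unfolding, which is the lemma you actually need, and it also gives the common head.

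The paper sidesteps this bookkeeping. Its proof is a structural induction on the merged syntax through the admissible rules of \cref{admissible-subtyping}:
\begin{itemize}
\item reflexivity for $\ltypeend$ and for variables;
\item $\oplus$-congruence, with equal label sets;
\item $\&$-congruence, with the labels over $J$ a subset of those over $I$;
\item $\mu$-congruence, with the induction hypothesis taken over $\Theta, \ltypevar{X}$.
\end{itemize}
The coinductive handling of recursive types is already packaged inside the $\mu$ rule.
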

\begin{proof}
  Immediate from \cref{admissible-subtyping} and the definition of full merging.
\end{proof}

\begin{lemma}\label{merge-merges}
    If $\sty{T}_i \extgeq \sty{U}_i$ for each $i \in I$, and the merge $\sty{T} = \mergeltype{\sty{T}_i}{i \in I}$ exists, then the merge $\sty{U} = \mergeltype{\sty{U}_i}{i \in I}$ also exists, and satisfies $\sty{T} \extgeq \sty{U}$.
\end{lemma}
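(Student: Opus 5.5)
The plan is to first fix the reading of $\extgeq$: $\sty{T} \extgeq \sty{U}$ means $\sty{T}$ is a full merge of a finite family $\{\sty{T}_k\}_{k \in K}$ and $\sty{U}$ is the merge of a non-empty subfamily $\{\sty{T}_k\}_{k \in K'}$. Unfolding this for each $i \in I$, we write $\sty{T}_i = \mergeltype{\sty{T}_{ik}}{k \in K_i}$ and $\sty{U}_i = \mergeltype{\sty{T}_{ik}}{k \in K'_i}$ with $\emptyset \neq K'_i \subseteq K_i$. The whole argument then reduces to two algebraic facts about full merging as a partial operation.

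The first fact is that full merging is associative and commutative wherever defined, so that merges of finite families are well-defined up to equality. I would prove this by induction on the structure of the session types, following the clauses of \cref{fig:projection}. Every clause is componentwise: $\ltypeend$ and type variables require equality, internal choices require identical label sets and merge pointwise, external choices take unions of labels and merge the shared branches, and $\mu$ merges bodies. Associativity and commutativity of each clause therefore follow from the inductive hypothesis together with associativity and commutativity of union and intersection of label sets.

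The second fact, which carries the real content, is \emph{downward closure}: if $\mergeltype{\sty{S}_k}{k \in K}$ exists and $\emptyset \neq K' \subseteq K$, then $\mergeltype{\sty{S}_k}{k \in K'}$ exists. I would again prove this by induction on the types, inspecting the head constructor. Definedness of the full merge forces all $\sty{S}_k$ to have the same head: the same participant, the same label set in the internal-choice case, the same variable, or the same binder. Any subfamily inherits this agreement. In the external-choice case, the subfamily merge merges, for each label, the branches of exactly those members containing that label. Those branches form a non-empty subfamily of the branches merged in the whole family, so the inductive hypothesis gives definedness.

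With these two facts the lemma follows. The full merge $\sty{T}$ equals the merge of the combined family $\{\sty{T}_{ik}\}_{i \in I, k \in K_i}$ by associativity. Since $\bigcup_i K'_i$ is a non-empty subfamily, downward closure gives definedness of its merge, which by associativity again equals $\mergeltype{\sty{U}_i}{i \in I} = \sty{U}$. This exhibits $\sty{T} \extgeq \sty{U}$, with witnessing family $\{\sty{T}_{ik}\}$ and index subset $\{(i,k) \mid k \in K'_i\}$.

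The main obstacle I expect is the external-choice case of associativity. There, the partition of labels into $I \cap J$, $I \setminus J$ and $J \setminus I$ must be tracked across three operands. I would handle it by giving an equivalent characterisation of the merge of a family: the label set is the union of the members' label sets, and each label's branch is the merge of the branches of those members carrying that label. Associativity and downward closure are then both direct from this description.
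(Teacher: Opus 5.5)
Your proposal is correct and is essentially the paper's proof: write each $\sty{T}_i$ as a merge of $\{\sty{T}_{ik}\}_{k \in K_i}$, flatten to the family indexed by pairs $(i,k)$, and restrict to the non-empty subfamily $\{(i,k) \mid k \in K'_i\}$. The paper's three-line argument silently uses associativity and commutativity of full merging, and that the restricted merge is defined. Your two auxiliary facts supply exactly these: associativity in the strong sense that one side is defined iff the other is (which your family-level characterisation of the merge gives), and downward closure.
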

\begin{proof}
    We have $\sty{T}_i = \mergeltype{\sty{T}_{ij}}{j \in J_i}$ with $\sty{U}_i = \mergeltype{\sty{T}_{ij}}{j \in J'_i}$ for some $J'_i \subseteq J_i$.
    Hence $\sty{T} = \mergeltype{\sty{T}_{ij}}{i \in I, j \in J_i}$, so $\sty{T} \extgeq \mergeltype{\sty{T}_{ij}}{i \in I, j \in J'_i} = \mergeltype{\sty{U}_i}{i \in I} = \sty{U}$.
\end{proof}

\begin{lemma}\label{merge-reduceexternals}\
  Let $I$ be a finite set, and let $J \subset I$ be non-empty.
  If the merge $\sty{T} = \mergeltype{\sty{T}_i}{i \in I}$ is defined, and $\reduceexternal{\sty{T}_i}{\sty{U}_i}{\pp}{\ell}{\basetype}$ for each $i \in I$, then the merge $\sty{U} = \mergeltype{\sty{U}_i}{i \in J}$ is defined and satisfies $\reduceexternal{\sty{T}}{\sty{U}}{\pp}{\ell}{\basetype}$.
\end{lemma}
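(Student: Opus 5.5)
The plan is to argue by induction on the structure of the merged types, following the recursive definition of full merging in \cref{fig:projection}. Since the merge $\mergeltype{\sty{T}_i}{i \in I}$ is defined, all the $\sty{T}_i$ share the same outermost constructor. The exceptions are external choices, where only the participant must agree. For an external choice, the merged label set is the union of the label sets, and branches with a common label are merged recursively. I would induct simultaneously on the derivations of $\reduceexternal{\sty{T}_i}{\sty{U}_i}{\pp}{\ell}{\basetype}$, using the maximum height over $i \in I$ as the measure; this is possible because $I$ is finite. At each stage I case-split on the common shape of the $\sty{T}_i$. I read the statement as intending $J \subseteq I$, with the hypothesis on $\sty{U}_i$ needed only for $i \in J$.

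In the \emph{base case}, each $\sty{T}_i$ is an external choice on $\pp$ and $\sty{U}_i$ is its $\ell$-branch. The merge $\sty{T}$ is then an external choice on $\pp$, and its $\ell$-branch is the merge over all $i \in I$ of the $\ell$-branches. A subtlety arises when some $\sty{T}_i$ lacks label $\ell$. Because the hypothesis only covers $i \in J$, the merge's $\ell$-branch is instead the merge of the $\sty{U}_i$ over those $i$ that carry label $\ell$. That index set contains $J$, so I would relate the two merges via $\extgeq$ and \cref{merge-merges}. This suggests strengthening the induction hypothesis: conclude that $\reduceexternal{\sty{T}}{\sty{U}'}{\pp}{\ell}{\basetype}$ for some $\sty{U}' \extgeq \mergeltype{\sty{U}_i}{i \in J}$. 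One can then return to the exact statement, either by observing that in the base case the branch obtained is exactly the merge over the relevant indices, or by subtyping via \cref{extgeq-subtype}. I expect this bookkeeping to be the main obstacle. The $\ell$-branch obtained is the merge over all indices carrying $\ell$, not just over $J$, so the strengthened form may well be what the later projection argument actually needs.

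In the \emph{inductive cases}, the $\sty{T}_i$ are either internal choices on some $\pq$ with identical label sets, or external choices on some $\pq \neq \pp$, or recursive types $\ltyperec{\ltypevar{X}}{\sty{T}'_i}$. For internal choices, rule \RuleName{$\reduceexternalsymbol$-$\oplus$} gives reductions in every branch. Merging is branchwise here, so the induction hypothesis applies per branch, and \RuleName{$\reduceexternalsymbol$-$\oplus$} reassembles the result. For external choices on $\pq$, each $i$ keeps a subset $K_i$ of branches. For the merged type, I would keep the union of the branches kept for $i \in J$, apply the induction hypothesis per label to the indices that keep that label, and conclude with \RuleName{$\reduceexternalsymbol$-$\&$}. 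For recursion, the merge commutes with unfolding, since substituting a merge of $\mu$-types into a merge yields the merge of the substitutions. So I can unfold and apply the induction hypothesis together with \RuleName{$\reduceexternalsymbol$-rec}, citing \cref{inductive-predicates-unfold} and the substitution lemmas \cref{inductive-predicates-substitution}. Type variables and $\ltypeend$ admit no reductions, so those cases are vacuous.
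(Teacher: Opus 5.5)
Your skeleton (induction on the maximal height of the given $\reduceexternalsymbol$-derivations, split into the base, $\oplus$, $\&$ and recursive cases) is the paper's own. Your reading of the hypothesis, with reductions only for $i \in J$, is the one \cref{global-reduction} actually uses.

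\textbf{Base case.} Your worry here is justified, and the paper's ``trivial by the definition of merge'' hides the problem: the lemma is false as stated. Take $\pp \neq \pq$ and $a \neq b$. Let $\sty{V}_a = \ltyperecvone{\pq}{a}{\basetype}{\ltypeend}$, $\sty{V}_b = \ltyperecvone{\pq}{b}{\basetype}{\ltypeend}$, $\sty{T}_1 = \ltyperecvone{\pp}{\ell}{\basetype}{\sty{V}_a}$, $\sty{T}_2 = \ltyperecvone{\pp}{\ell}{\basetype}{\sty{V}_b}$, and $J = \{1\}$. The only $\sty{U}$ with $\reduceexternal{\sty{T}_1 \bmergeltype \sty{T}_2}{\sty{U}}{\pp}{\ell}{\basetype}$ is $\sty{V}_a \bmergeltype \sty{V}_b \neq \sty{V}_a$.

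So your step of ``returning to the exact statement'' fails. \RuleName{$\reduceexternalsymbol$-base} pins the reduct to the branch itself, so $\reduceexternalsymbol$ is not closed under passing to a supertype of the target, and \cref{extgeq-subtype} cannot help. Nor is the branch the merge over $J$ once indices outside $J$ also carry $\ell$.

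The weakened conclusion you float is therefore what has to be proved: $\reduceexternal{\sty{T}}{\sty{U}'}{\pp}{\ell}{\basetype}$ for some $\sty{U}' \extgeq \mergeltype{\sty{U}_i}{i \in J}$. Commit to it. You will also need $\extgeq$ to be a congruence for choices, so that the $\oplus$ and $\&$ cases reassemble. Then carry the weakened form through \cref{global-reduction}, composing with \cref{merge-merges}, into \cref{session-subject}; there \cref{extgeq-subtype} and \cref{configuration-subsumption} absorb the slack.

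\textbf{Recursive case.} This is a second genuine gap, and the paper's proof shares it: full merging does not commute with unfolding. Let $\sty{T}_1 = \ltyperec{\ltypevar{X}}{\ltyperecvone{\pp}{\ell}{\basetype}{\ltyperecvone{\pq}{a}{\basetype}{\ltypevar{X}}}}$, and let $\sty{T}_2$ be the same with $b$ in place of $a$. For their merge $\sty{T}$, the only $\sty{U}$ with $\reduceexternal{\sty{T}}{\sty{U}}{\pp}{\ell}{\basetype}$ is $\ltyperecvone{\pq}{a}{\basetype}{\sty{T}} \bmergeltype \ltyperecvone{\pq}{b}{\basetype}{\sty{T}}$. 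The merge of the unfoldings instead reduces to $\sty{U}_1 \bmergeltype \sty{U}_2 = \ltyperecvone{\pq}{a}{\basetype}{\sty{T}_1} \bmergeltype \ltyperecvone{\pq}{b}{\basetype}{\sty{T}_2}$. A $\mu$-subterm under an external branch present in only some components becomes, after unfolding, the merge of all components. This example also refutes the statement with $J = I$.

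What is true is that the unfolding of the merge is $\extgeq$ the merge of the unfoldings. So after applying your induction hypothesis to the latter, you must still transfer the resulting step to the former. That step is not a subderivation of your assumptions, so the height induction does not supply it.

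You need an extra transport lemma, for instance: if $\sty{A} \extgeq \sty{B}$ and $\reduceexternal{\sty{B}}{\sty{B}'}{\pp}{\ell}{\basetype}$, then $\reduceexternal{\sty{A}}{\sty{A}'}{\pp}{\ell}{\basetype}$ for some $\sty{A}' \extgeq \sty{B}'$. This can be proved by induction on the derivation for $\sty{B}$, using that merging is associative, commutative and idempotent. You then combine it with transitivity of $\extgeq$.
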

\begin{proof}
    Since $I$ is finite, there is some natural number $h$ that bounds the height of the assumed derivations of $\reduceexternal{\sty{T}_i}{\sty{U}_i}{\pp}{\ell}{\basetype}$.
    We proceed by induction on $h$.
    If $h = 0$, then the derivations are all by \RuleName{$\reduceexternalsymbol$-base}, and the result is trivial by the definition of merge.
    If $h > 0$, then the proofs may all be by \RuleName{$\reduceexternalsymbol$-rec}. In this case, we have $\sty{T}_i = \ltyperec{\ltypevar{X}}{\sty{T}'_i}$, and the merge $\mergeltype{\unfold{\sty{T}_i}}{i \in I} = \subst{(\mergeltype{\sty{T}'_i}{i \in I})}{\ltypevar{X} \mapsto \ltyperec{\ltypevar{X}}{\sty{T}}}$ is defined. The result therefore follows from the inductive hypothesis.
    Otherwise, the derivations are either all by \RuleName{$\reduceexternalsymbol$-$\oplus$}, or all by \RuleName{$\reduceexternalsymbol$-$\&$}.
    In both cases the result is immediate from the inductive hypothesis and the definition of merge; in the $\&$ case also using the fact that a finite union of finite sets is again finite.
\end{proof}

\begin{lemma}\label{unfold-project}
  Let $\gty{G} = \gtyperec{\gtypevar{X}}{\gty{G'}}$ be a closed global type over $\role{R} = \{\pr_1, \dots, \pr_n\}$, such that the projection $\project{\gty{G}}{\pr}$ is defined for all $\pr \in \role{R}$.
  Then the projections $\project{\subst{\gty{G'}}{\gtypevar{X} \mapsto \gty{G}}}{\pr}$ are defined, and satisfy
  \[
    \project{\subst{\gty{G'}}{\gtypevar{X} \mapsto \gty{G}}}{\pr}
    =
    \begin{cases}
        \ltypeend & \text{if}~\project{\gty{G}}{\pr} = \ltypeend
        \\
        \subst{(\project{\gty{G}'}{\pr})}{\ltypevar{X} \mapsto \project{\gty{G'}}{\pr}}
        ~\text{if}~\project{\gty{G}}{\pr} = \ltyperec{\ltypevar{X}}{\sty{T}}
    \end{cases}
  \]
\end{lemma}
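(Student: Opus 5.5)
The plan is to prove a more general substitution lemma for projection and then specialise it. The generalisation is: for any global type $\gty{H}$ (possibly open), any variable $\gtypevar{X}$, and any closed $\gty{G}$ whose projections are all defined, if $\project{\gty{H}}{\pr}$ is defined then so is $\project{\subst{\gty{H}}{\gtypevar{X} \mapsto \gty{G}}}{\pr}$. Moreover, it should equal $\subst{(\project{\gty{H}}{\pr})}{\ltypevar{X} \mapsto \project{\gty{G}}{\pr}}$, up to the degenerate collapses that the recursive case of projection performs.

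Before the induction I would first record the basic fact that substitution commutes with full merging. If $\sty{T}_i \bmergeltype \sty{T}_j$ is defined, then so is $\subst{\sty{T}_i}{\ltypevar{X}\mapsto \sty{V}} \bmergeltype \subst{\sty{T}_j}{\ltypevar{X}\mapsto \sty{V}}$, and the two sides agree. This goes by induction on the merge definition in \cref{fig:projection}. The only delicate clause is $\ltypevar{X} \bmergeltype \ltypevar{X}$, which becomes $\sty{V} \bmergeltype \sty{V}$ and so needs merge to be idempotent; I would prove idempotence by a trivial induction.

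The main proof is then an induction on $\gty{H}$.
\begin{itemize}
\item For $\gtypeend$ and for type variables other than $\gtypevar{X}$ the claim is immediate. For $\gtypevar{X}$ itself the substitution just yields $\project{\gty{G}}{\pr}$.
\item For a communication $\pp \to \pq$, the three cases of projection follow from the induction hypothesis. The merging case additionally uses the commutation fact above.
\item For a nested $\gtyperec{\gtypevar{Y}}{\gty{H}'}$ with $\gtypevar{Y} \neq \gtypevar{X}$, I must check that the case split of projection is preserved. This requires $\project{\gty{H}'}{\pr} = \ltypevar{X}$ to be excluded, or handled, when $\project{\gty{G}}{\pr}$ is itself a variable; since $\gty{G}$ is closed, its projection is never a variable.
\end{itemize}

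Now specialise to $\gty{H} = \gty{G'}$. Since $\project{\gty{G}}{\pr}$ is defined, $\project{\gty{G'}}{\pr}$ is defined, and the three recursive cases apply.
\begin{itemize}
\item The case $\project{\gty{G'}}{\pr} = \ltypevar{X'} \neq \ltypevar{X}$ is impossible, because $\gty{G}$ is closed.
\item If $\project{\gty{G'}}{\pr} = \ltypevar{X}$, then $\project{\gty{G}}{\pr} = \ltypeend$. The generalised lemma gives the substituted projection as $\project{\gty{G}}{\pr} = \ltypeend$, which is the first clause.
\item Otherwise $\project{\gty{G}}{\pr} = \ltyperec{\ltypevar{X}}{(\project{\gty{G'}}{\pr})}$, and the lemma gives $\subst{(\project{\gty{G'}}{\pr})}{\ltypevar{X} \mapsto \project{\gty{G}}{\pr}}$. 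This is the second clause, reading its substituent as the recursive type $\project{\gty{G}}{\pr}$, i.e.\ the one-step unfolding.
\end{itemize}

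I expect the main obstacle to be the interaction of substitution with the degenerate recursive-projection cases inside $\gty{H}$. Consider an inner $\gtyperec{\gtypevar{Y}}{\gty{H}'}$ whose projection collapses to $\ltypeend$ or to a variable. After substitution, the collapse must still happen in the same way, and a recursive type with a body whose projection was previously non-variable must not become one whose body is merely a variable. Guardedness of $\gty{G}$, together with the fact that a closed projection is never a bare variable, should resolve this. Merge definedness under substitution is the second point needing care.
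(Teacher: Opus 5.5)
Your plan fails at exactly the point you flagged as the main obstacle, and the resolution you propose for it runs the wrong way.

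Take the nested case $\gty{H} = \gtyperec{\gtypevar{Y}}{\gty{H}'}$ with $\project{\gty{H}'}{\pr} = \ltypevar{X}$. Projection collapses $\project{\gty{H}}{\pr}$ to $\ltypevar{X}$. After substitution, the body projects to $\project{\gty{G}}{\pr}$, which is closed and therefore \emph{not} a variable. So the third clause of recursive projection fires and yields $\ltyperec{\ltypevar{Y}}{\project{\gty{G}}{\pr}}$, with a vacuous binder, instead of $\project{\gty{G}}{\pr}$. Closedness of $\gty{G}$ is what destroys the collapse. Guardedness does not help, because a vacuous global binder is trivially guarded.

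Concretely, for distinct $\pp, \pq, \pr$ take
\[
  \gty{G} = \gtyperec{\gtypevar{X}}{\gtypecommone{\pr}{\pp}{\ell_0}{\basetype}{\gtyperec{\gtypevar{Y}}{\gtypecommone{\pp}{\pq}{\ell}{\basetype}{\gtypevar{X}}}}}.
\]
All projections of $\gty{G}$ are defined, and $\project{\gty{G}}{\pr} = \ltyperec{\ltypevar{X}}{\ltypesendone{\pp}{\ell_0}{\basetype}{\ltypevar{X}}}$. However, $\project{\subst{\gty{G'}}{\gtypevar{X} \mapsto \gty{G}}}{\pr} = \ltypesendone{\pp}{\ell_0}{\basetype}{\ltyperec{\ltypevar{Y}}{\project{\gty{G}}{\pr}}}$. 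This is not syntactically the unfolding, even with your (correct) reading of the substituent in the second clause as $\project{\gty{G}}{\pr}$.

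Definedness fails too. After $\ell_0$, insert a communication from $\pp$ to $\pq$ whose $\ell_1$-branch is the $\gtyperec{\gtypevar{Y}}{\cdots}$ above and whose $\ell_2$-branch is $\gtypevar{X}$. All projections of $\gty{G}$ stay defined, since $\pr$ merges $\ltypevar{X}$ with $\ltypevar{X}$. But projecting the unfolding onto $\pr$ requires merging $\ltyperec{\ltypevar{Y}}{\project{\gty{G}}{\pr}}$ with $\project{\gty{G}}{\pr}$. Full merging rejects this: once the binders are aligned, a recursive type must be merged with an internal choice. So your claim that definedness survives substitution is false. Your merge/substitution commutation cannot rescue it, because the operands after substitution are no longer substitution instances of the original ones. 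The lemma is thus not true with the definitions of \cref{fig:projection} taken literally.

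The missing ingredient is to rule out vacuous global binders. Suppose every $\gtyperec{\gtypevar{Y}}{\gty{K}}$ in the global types considered has $\gtypevar{Y}$ free in $\gty{K}$. Then a short induction shows that whenever $\project{\gty{K}}{\pr}$ is a variable $\ltypevar{Z}$, the only free variable of $\gty{K}$ is $\gtypevar{Z}$:
\begin{itemize}
\item merging produces a variable only as $\ltypevar{Z} \bmergeltype \ltypevar{Z}$;
\item for $\gtyperec{\gtypevar{W}}{\gty{K}'}$, the collapse to a variable other than $\ltypevar{W}$ would force $\gtypevar{W} = \gtypevar{Z}$, a contradiction.
\end{itemize}
Hence the collapse-to-another-variable clause of projection never fires, and substituting a closed (non-vacuous) type preserves non-vacuity. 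Your induction then proves the exact commutation $\project{\subst{\gty{H}}{\gtypevar{X} \mapsto \gty{G}}}{\pr} = \subst{(\project{\gty{H}}{\pr})}{\ltypevar{X} \mapsto \project{\gty{G}}{\pr}}$, including definedness via your merge lemma with idempotence. Your specialisation then yields both clauses.

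Alternatively, you could change projection so that it never introduces a vacuous $\ltyperec{\ltypevar{Y}}{\cdot}$. Or you could prove the statement only modulo $\ltyperec{\ltypevar{Y}}{\sty{T}} \approx \sty{T}$ for $\ltypevar{Y}$ not free in $\sty{T}$, with merging also taken modulo $\approx$.

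The rest of your outline is fine: the outer case analysis, excluding $\ltypevar{X'} \neq \ltypevar{X}$ by closedness, and merge commuting with closed substitution.
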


\begin{lemma}\label{global-reduction}
  Let $\gty{G}$ be a closed global type over $\role{R} = \{\pr_1, \dots, \pr_n\}$, such that the projection $\project{\gty{G}}{\pr}$ is defined for all $\pr \in \role{R}$.
  Assume that $\reduceinternal{(\project{\gty{G}}{\pr_j})}{\sty{T}_{\pr_j}}{\pr_k}{\ell}{\basetype}$.
  Then there exist session types $\sty{T}_{\pr}$ for each $\pr \in \role{R}\setminus \{\pr_j\}$, and some global type $\gty{G}'$ over $\{\pr_1, \dots, \pr_n\}$, such that $\sty{T}_{\pr} = \project{\gty{G}'}{\pr}$ for all $\pr \in \role{R}$, and such that
  \[
    \reduceexternal{(\project{\gty{G}}{\pr_k})}{\sty{T}_{\pr_k}}{\pr_j}{\ell}{\basetype}
    \qquad
    (\project{\gty{G}}{\pr}) \extgeq \sty{T}_{\pr}~\text{for each}~\pr \in \mathsf{R} \setminus \{\pr_j, \pr_k\}
  \]
\end{lemma}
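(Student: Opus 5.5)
We prove \cref{global-reduction} by induction on the height of the derivation of $\reduceinternal{(\project{\gty{G}}{\pr_j})}{\sty{T}_{\pr_j}}{\pr_k}{\ell}{\basetype}$, generalising the statement so that the hypothesis may concern the projection of any closed global type obtained by unfolding. First we normalise $\gty{G}$. If $\gty{G} = \gtyperec{\gtypevar{X}}{\gty{G}'}$, then \cref{unfold-project} shows that each projection of the unfolding $\subst{\gty{G}'}{\gtypevar{X} \mapsto \gty{G}}$ is defined. Each such projection is either $\ltypeend$ or literally the unfolding of $\project{\gty{G}}{\pr}$. By \cref{inductive-predicates-unfold}, the relations $\reduceinternalsymbol$, $\reduceexternalsymbol$ and $\extgeq$ (via merges of unfoldings) are invariant under this replacement. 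Guardedness guarantees that after finitely many unfoldings $\gty{G}$ has the form $\gtypecomm{\pp}{\pq}{\ell_i}{\basetype_i}{\gty{G}_i}{i \in I}$. The case $\gtypeend$ is impossible, because $\ltypeend$ has no $\reduceinternalsymbol$-transition.

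We then split on the position of $\pr_j$ and $\pr_k$ relative to $\pp$ and $\pq$.

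\textbf{Case $\pp = \pr_j$ and $\pq = \pr_k$.} The derivation ends with \RuleName{$\reduceinternalsymbol$-base}, so $\ell = \ell_i$ and $\sty{T}_{\pr_j} = \project{\gty{G}_i}{\pr_j}$. We take $\gty{G}' = \gty{G}_i$. Then $\pr_k$ has the matching external choice and \RuleName{$\reduceexternalsymbol$-base} applies. For every other $\pr$, $\project{\gty{G}}{\pr}$ is the merge over all $i$ of $\project{\gty{G}_i}{\pr}$, so $\project{\gty{G}}{\pr} \extgeq \project{\gty{G}_i}{\pr}$ directly from the definition of $\extgeq$.

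\textbf{Case $\pp = \pr_j$ and $\pq \neq \pr_k$.} The derivation ends with \RuleName{$\reduceinternalsymbol$-$\oplus$}, giving a subset $J \subseteq I$ and reductions of each $\project{\gty{G}_i}{\pr_j}$ for $i \in J$. We apply the induction hypothesis to each $\gty{G}_i$, obtaining $\gty{G}'_i$, and set $\gty{G}' = \gtypecomm{\pr_j}{\pq}{\ell_i}{\basetype_i}{\gty{G}'_i}{i \in J}$. For $\pq$, the projection is an external choice restricted to $J$. For a third party $\pr$, \cref{merge-merges} combines the componentwise $\extgeq$ relations into a single one. For $\pr_k$, \cref{merge-reduceexternals} produces the merged $\reduceexternalsymbol$-step.

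\textbf{Case $\pq = \pr_j$.} Here $\pr_j$ receives, and the derivation ends with \RuleName{$\reduceinternalsymbol$-$\&$}, which keeps all branches. The induction hypothesis applies to each $\gty{G}_i$, and $\gty{G}'$ keeps all of them.

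\textbf{Case $\pr_j \notin \{\pp, \pq\}$.} The projection $\project{\gty{G}}{\pr_j}$ is a merge, so its $\reduceinternalsymbol$-step must first be pushed into each branch. This is the reverse of \cref{merge-reduceexternals}: an internal transition out of a full merge induces transitions out of every component, because merging is the identity on internal choices and on branches whose labels coincide. Having done so, we apply the induction hypothesis branchwise and reassemble $\gty{G}'$ with the same $\pp \to \pq$ communication. The reassembly uses \cref{merge-merges} for the third parties and \cref{merge-reduceexternals} for $\pr_k$. If $\pr_k \in \{\pp, \pq\}$, we instead use \RuleName{$\reduceexternalsymbol$-$\oplus$} or \RuleName{$\reduceexternalsymbol$-$\&$}, whichever fits.

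\textbf{Expected obstacle.} The delicate steps are two. The first is this last case, namely proving the inverse-merge property of $\reduceinternalsymbol$ together with compatibility of the resulting merges, so that the projections of $\gty{G}'$ are actually defined. The second is keeping the induction measure well-founded across the unfoldings. We would handle the first by a separate lemma, proved by induction on merge derivations. We would handle the second by measuring the height of the $\reduceinternalsymbol$-derivation, noting that \RuleName{$\reduceinternalsymbol$-rec} strictly decreases this height.
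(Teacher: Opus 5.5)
There is a genuine gap, and it sits in the step you flag as delicate. It cannot be closed, because the statement is false once recursion sits under a merge. Unfolding does not commute with full merging: unfolding a merge of $\mu$-types substitutes the \emph{merged} recursion into the merged body, not each component's own recursion. Concretely, take
\[
  \gty{G} = \gtypecomm{\pp}{\pq}{c_i}{\unittype}{\gty{G}_i}{i \in \{1,2\}},
  \qquad
  \gty{G}_i = \gtyperec{\gtypevar{X}}{\gtypecommone{\pr_j}{\pr_k}{\ell}{\unittype}{\gtypecommone{\pq}{\pr_j}{a_i}{\unittype}{\gtypevar{X}}}}.
\]
All projections are defined. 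We have $\project{\gty{G}}{\pq} = \ltyperecv{\pp}{c_i}{\unittype}{\sty{A}_i}{i \in \{1,2\}}$ with $\sty{A}_i = \ltyperec{\ltypevar{X}}{\ltypesendone{\pr_j}{a_i}{\unittype}{\ltypevar{X}}}$. Also $\project{\gty{G}}{\pr_j}$ is the merge $\sty{M} = \ltyperec{\ltypevar{X}}{\ltypesendone{\pr_k}{\ell}{\unittype}{\ltyperecv{\pq}{a_i}{\unittype}{\ltypevar{X}}{i \in \{1,2\}}}}$, whose unique step with label $\ell$ to $\pr_k$ is $\reduceinternal{\sty{M}}{\ltyperecv{\pq}{a_i}{\unittype}{\sty{M}}{i \in \{1,2\}}}{\pr_k}{\ell}{\unittype}$. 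The components $\project{\gty{G}_i}{\pr_j}$ step to $\ltyperecvone{\pq}{a_i}{\unittype}{\project{\gty{G}_i}{\pr_j}}$. The merge of these has $\project{\gty{G}_i}{\pr_j}$ where the target has $\sty{M}$, so your reassembled $\gty{G}'$ has the wrong $\pr_j$-projection.

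No other $\gty{G}'$ works either. A full merge is a $\mu$-type, or an external choice from a given participant, only if all its components are. Hence every $\sty{U}$ with $\project{\gty{G}}{\pq} \extgeq \sty{U}$ is an external choice from $\pp$ with $\mu$-type continuations, while $\project{\gty{G}'}{\pr_j}$ must be an external choice from $\pq$. An induction on $\gty{G}'$ shows that no global type has both properties. $\gtypeend$, variables and a root $\mu$ give the wrong $\pr_j$-shape. A communication involving $\pr_j$ would have to go from $\pq$ to $\pr_j$; this, like every communication involving $\pq$ other than one from $\pp$ to $\pq$, gives the wrong $\pq$-shape. All remaining communications pass both properties on to every branch.

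So this is not an idea the paper supplies and you missed; its proof passes over the same point. Its \RuleName{$\reduceinternalsymbol$-rec} case with $\pr_j$ outside the top communication just appeals to the inductive hypothesis ``as for the base case'', which needs exactly this decomposition. Moreover, both its argument and your normalisation treat $\extgeq$ and the equalities $\sty{T}_{\pr} = \project{\gty{G}'}{\pr}$ as invariant under unfolding. These are syntactic, and \cref{inductive-predicates-unfold} only covers $\reduceinternalsymbol$, $\reduceexternalsymbol$, $\Sends{\pp}$ and $\Recvs{\pp}$.

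A plausible repair keeps the $\reduceexternalsymbol$-step for $\pr_k$ but weakens the rest of the conclusion to subtyping: $\sty{T}_{\pr_j} \subtype \project{\gty{G}'}{\pr_j}$, $\sty{T}_{\pr_k} \subtype \project{\gty{G}'}{\pr_k}$, and $\project{\gty{G}}{\pr} \subtype \project{\gty{G}'}{\pr}$ for the other $\pr$. This is all \cref{session-subject} uses, via \cref{configuration-subsumption}. In the example, $\gty{G}' = \gtypecomm{\pp}{\pq}{c_i}{\unittype}{\gtypecommone{\pq}{\pr_j}{a_i}{\unittype}{\gty{G}_i}}{i \in \{1,2\}}$ then works.

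Separately, your induction measure is too weak even where the decomposition is fine. In the third-party case the pushed-down derivations are no shorter than the original, already for \RuleName{$\reduceinternalsymbol$-base}. Applying the induction hypothesis branchwise therefore needs a secondary induction on $\gty{G}$, lexicographically after the height, as in the paper's nested induction.
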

\begin{proof}
  By induction on $\reduceinternal{(\project{\gty{G}}{\pr_j})}{\sty{T}_{\pr_j}}{\pr_k}{\ell}{\basetype}$.
  \begin{itemize}
    \item If the proof is by rule \RuleName{$\reduceinternalsymbol$-base}, then we have $\project{\gty{G}}{\pr_j} = \ltypesend{\pr_k}{\ell_i}{\basetype_i}{\sty{T}_{\pr_ji}}{i \in I}$ with $(\ell, \basetype, \sty{T}_{\pr_j}) = (\ell_i, \basetype_i, \sty{T}_{\pr_ji})$ for some $i \in I$.
      We proceed by induction on $\gty{G}$.
      By the definition of projection, $\gty{G}$ can only be a communication.
      There are two cases to consider.
      \begin{itemize}
        \item If $\gty{G} = \gtypecomm{\pr_j}{\pr_k}{\ell_i}{\basetype_i}{\gty{G}_i}{i \in I}$, then we have $\sty{T}_{\pr_ji} = \project{\gty{G}_i}{\pr_j}$. Since $\project{\gty{G}}{\pr_k} = \ltyperecv{\pr_j}{\ell_i}{\basetype_i}{(\project{\gty{G}_i}{\pr_k})}{i \in I}$ and $\project{\gty{G}}{\pr_i} = \mergeltype{\project{\gty{G}_i}{\pr}}{i \in I} \extgeq \sty{T}_{\pr}$ for each $\pr \in \mathsf{R} \setminus \{\pr_j, \pr_k\}$, we can conclude by taking $\gty{G'} = \gty{G}_i$.
        \item Otherwise, we have $\gty{G} = \gtypecomm{\pp}{\pq}{\ell'_{i'}}{\basetype'_{i'}}{\gty{G}_{i'}}{i \in I'}$, with $\pr_j \not\in \{\pp, \pq\}$.
          By the definition of merging it follows that $\project{\gty{G}_{i'}}{\pr_j} = \ltypesend{\pr_k}{\ell_i}{\basetype_i}{\sty{T}_{\pr_jii'}}{i \in I}$, for each $i' \in I'$, and $\sty{T}_{\pr_ji} = \mergeltype{\sty{T}_{\pr_jii'}}{i' \in I'}$.
          By the inductive hypothesis, there exist $\sty{T}_{\pr i'}$ and $\gty{G}'_{i'}$, such that $\sty{T}_{\pr i'} = \project{\gty{G}'_{i'}}{\pr}$ for all $\pr \in \role{R}$, and such that
  \[
    \reduceexternal{(\project{\gty{G}_{i'}}{\pr_k})}{\sty{T}_{\pr_ki'}}{\pr_j}{\ell}{\basetype}
    \qquad
    (\project{\gty{G}_{i'}}{\pr}) \extgeq \sty{T}_{\pr i'}~\text{for each}~\pr \in \mathsf{R} \setminus \{\pr_j, \pr_k\}
  \]
          Define $\gty{G}' = \gtypecomm{\pp}{\pq}{\ell'_{i'}}{\basetype'_{i'}}{\gty{G}'_{i'}}{i \in I'}$.
          The projection $\sty{T}_{\pr_k} = \project{\gty{G'}}{\pr_j}$ exists and satisfies $\reduceexternal{(\project{\gty{G}}{\pr_k})}{\sty{T}_{\pr_k}}{\pr_j}{\ell}{\basetype}$, either by the congruence rules for $\reduceexternalsymbol$, if $\pr_k \in \{\pp, \pq\}$, or by \cref{merge-reduceexternals} otherwise.
          For every $\pr \in \mathsf{R} \setminus \{\pr_j, \pr_k\}$, the projection $\sty{T}_{\pr_k} = \project{\gty{G'}}{\pr_j}$ also exists and satisfies $(\project{\gty{G}}{\pr}) \extgeq \sty{T}_{\pr}$, either by the definition of merging, if $\pr \in \{\pp, \pq\}$, or by \cref{merge-merges}.
          Finally, we have $\project{\gty{G}'}{\pr_j} = \sty{T}_{\pr_j}$ by definition.
      \end{itemize}
    \item If the proof is by rule \RuleName{$\reduceinternalsymbol$-$\oplus$}, then we can write $\project{\gty{G}}{\pr_j} = \ltypesend{\pr'}{\ell_i}{\basetype_i}{\sty{T}'_{\pr_ji}}{i \in I}$ and $\sty{T}_{\pr_j} = \ltypesend{\pr'}{\ell_i}{\basetype_i}{\sty{T}'_{\pr_ji}}{i \in J}$, where $\pr' \neq \pr_k$, $J\subseteq I$, and $\reduceinternal{\sty{T}'_{\pr_ji}}{\sty{T}_{\pr_ji}}{\pr_k}{\ell}{\basetype}$ for all $i \in I$.
      We proceed by induction on $\gty{G}$.
      By the definition of projection, $\gty{G}$ can only be a communication.
      There are two cases to consider.
      \begin{itemize}
        \item If $\gty{G} = \gtypecomm{\pr_j}{\pr'}{\ell_i}{\basetype_i}{\gty{G}_i}{i \in I}$, then we have $\sty{T}'_{\pr_ji} = \project{\gty{G}_i}{\pr_j}$.
          For every $i \in J$, the inductive hypothesis provides $\gty{G}'_i$ and $\sty{T}_{\pr i}$ such that $\sty{T}_{\pr i} = \project{\gty{G}'_i}{\pr}$, and such that
  \[
    \reduceexternal{(\project{\gty{G}_i}{\pr_k})}{\sty{T}_{\pr_ki}}{\pr_j}{\ell}{\basetype}
    \qquad
    (\project{\gty{G}_{i}}{\pr}) \extgeq \sty{T}_{\pr i}~\text{for each}~\pr \in \mathsf{R} \setminus \{\pr_j, \pr_k\}
  \]
          Define $\gty{G'} = \gtypecomm{\pr_j}{\pr'}{\ell_i}{\basetype_i}{\gty{G}'_i}{i \in J}$, $\sty{T}_{\pr'} = \ltyperecv{\pr_j}{\ell_i}{\basetype_i}{\sty{T}_{\pr'i}}{i \in J}$, and $\sty{T}_{\pr} = \mergeltype{\sty{T}_{\pr i}}{i \in J}$ for $\pr \in \role{R} \setminus \{\pr_j, \pr'\}$.
          The latter exists and satisfies $\reduceexternal{(\project{\gty{G}}{\pr_k})}{\sty{T}_{\pr_k}}{\pr_j}{\ell}{\basetype}$ by \cref{merge-reduceexternals} for $\pr = \pr_k$,
          and exists and satisfies $(\project{\gty{G}}{\pr}) \extgeq \sty{T}_{\pr}$ for $\pr \neq \pr_k$ by \cref{merge-merges}.
          We have $(\project{\gty{G}}{\pr'}) \extgeq \sty{T}_{\pr'}$ by the definition of merge.
        \item Otherwise, we have $\gty{G} = \gtypecomm{\pp}{\pq}{\ell'_{i'}}{\basetype'_{i'}}{\gty{G}_{i'}}{i \in I'}$, with $\pr_j \not\in \{\pp, \pq\}$. We then appeal to the inductive hypthesis, as for \RuleName{$\reduceinternalsymbol$-base}.
      \end{itemize}
    \item If the proof is by rule \RuleName{$\reduceinternalsymbol$-$\&$}, then we can write $\project{\gty{G}}{\pr_j} = \ltyperecv{\pr'}{\ell_i}{\basetype_i}{\sty{T}'_{\pr_ji}}{i \in I}$ and $\sty{T}_{\pr_j} = \ltyperecv{\pr'}{\ell_i}{\basetype_i}{\sty{T}'_{\pr_ji}}{i \in I}$, where $\reduceinternal{\sty{T}'_{\pr_ji}}{\sty{T}_{\pr_ji}}{\pr_k}{\ell}{\basetype}$ for all $i \in I$.
      We proceed by induction on $\gty{G}$.
      By the definition of projection, $\gty{G}$ can only be a communication.
      There are two cases to consider.
      \begin{itemize}
        \item If $\gty{G} = \gtypecomm{\pr'}{\pr_j}{\ell_i}{\basetype_i}{\gty{G}_i}{i \in I}$, then we have $\sty{T}_{\pr_ji} = \project{\gty{G}_i}{\pr_j}$.
          For every $i \in I$, the inductive hypothesis provides $\gty{G}'_i$ and $\sty{T}_{\pr i}$ such that $\sty{T}_{\pr i} = \project{\gty{G}'_i}{\pr}$, and such that
  \[
    \reduceexternal{(\project{\gty{G}_i}{\pr_k})}{\sty{T}_{\pr_ki}}{\pr_j}{\ell}{\basetype}
    \qquad
    (\project{\gty{G}_{i}}{\pr}) \extgeq \sty{T}_{\pr i}~\text{for each}~\pr \in \mathsf{R} \setminus \{\pr_j, \pr_k\}
  \]
          Define $\gty{G'} = \gtypecomm{\pr_j}{\pr'}{\ell_i}{\basetype_i}{\gty{G}'_i}{i \in I}$, $\sty{T}_{\pr'} = \ltypesend{\pr_j}{\ell_i}{\basetype_i}{\sty{T}_{\pr'i}}{i \in I}$, and $\sty{T}_{\pr} = \mergeltype{\sty{T}_{\pr i}}{i \in I}$ for $\pr \in \role{R} \setminus \{\pr_j, \pr'\}$.
          If $\pr' = \pr_k$, then we have $\reduceexternal{(\project{\gty{G}}{\pr_k})}{\sty{T}_{\pr_k}}{\pr_j}{\ell}{\basetype}$, either by rule \RuleName{$\reduceexternalsymbol$-$\oplus$} if $\pr' = \pr_k$, or by \cref{merge-reduceexternals} otherwise.
          For $\pr \in \role{R} \setminus \{\pr_j, \pr_k\}$ we also have $(\project{\gty{G}}{\pr}) \extgeq \sty{T}_{\pr}$, either by the definition of merge, if $\pr = \pr'$, or by \cref{merge-merges} otherwise.
        \item Otherwise, we have $\gty{G} = \gtypecomm{\pp}{\pq}{\ell'_{i'}}{\basetype'_{i'}}{\gty{G}_{i'}}{i \in I'}$, with $\pr_j \not\in \{\pp, \pq\}$. We then appeal to the inductive hypothesis, as for \RuleName{$\reduceinternalsymbol$-base}.
    \item If the proof is by rule \RuleName{$\reduceinternalsymbol$-rec}, then we can write $\project{\gty{G}}{\pr_j} = \ltyperec{\ltypevar{X}}{\sty{T}'_{\pr_j}}$, and we have $\reduceinternal{\subst{\sty{T'}_{\pr_j}}{\ltypevar{X} \mapsto \sty{T}_{\pr_j}}}{\sty{T}_{\pr_j}}{\pr_k}{\ell}{\basetype}$.
        We proceed by induction on $\gty{G}$, noting that $G$ can only be either a recursive type, or a communication.
        \begin{itemize}
            \item If $\gty{G} = \ltyperec{\gtypevar{X}}{\gty{G'}}$, then we have $\sty{T}'_{\pr_j} = \project{\gty{G'}}{\sty{T}'_{\pr_j}}$. By \cref{unfold-project}, the projections $\project{\subst{\gty{G'}}{\gtypevar{X} \mapsto \gty{G}}}{\pr}$ are defined, and are obtained from the projections of $\gty{G}$ by unfolding. The result then follows from the inductive hypothesis, using \cref{inductive-predicates-unfold}.
            \item Otherwise, we have $\gty{G} = \gtypecomm{\pp}{\pq}{\ell'_{i'}}{\basetype'_{i'}}{\gty{G}_{i'}}{i \in I'}$, with $\pr_j \not\in \{\pp, \pq\}$. We then appeal to the inductive hypothesis, as for \RuleName{$\reduceinternalsymbol$-base}.
        \end{itemize}
      \end{itemize}
  \end{itemize}
\end{proof}

\sessionsubjred*
\begin{proof}
   Assume that $\mathcal{M} = {(\namepart{\pr_1}{\mathcal{C}_1}, \namepart{\pr_2}{\mathcal{C}_2}, \dots, \namepart{\pr_n}{\mathcal{C}_n})}$ has type $\gty{G}$,
   and write $\mathcal{M'} = {(\namepart{\pr_1}{\mathcal{D}_1}, \namepart{\pr_2}{\mathcal{D}_2}, \dots, \namepart{\pr_n}{\mathcal{D}_n})}$.
   There are two cases to consider.
   \begin{itemize}
     \item If $\beta = \tauaction_{\pr_j}$, then we have $\reduceaction{\mathcal{C}_j}{\mathcal{D}_j}{\tauaction}$, and $\mathcal{C}_i = \mathcal{D}_i$ for each $i \neq j$. Thus for all $i$, the configurations $\mathcal{C}_i$ and $\mathcal{D}_i$ have the same session type; for $i = j$ this is by \cref{subject-reduction}, while for $i \neq j$ this is trivial. Thus $\mathcal{M}'$ has type $\gty{G}$.
     \item If $\beta = \commaction{\pr_j}{\pr_k}{\msg{\ell}{v}}$ with $v : \basetype$, then we have $\reduceaction{\mathcal{C}_j}{\mathcal{D}_j}{\sendaction{\pr_k}{\msg{\ell}{v}}}$, $\reduceaction{\mathcal{C}_k}{\mathcal{D}_k}{\recvaction{\pr_j}{\msg{\ell}{v}}}$, and $\mathcal{C}_i = \mathcal{D}_i$ for each $i \not\in \{j, k\}$.
       By \cref{subject-reduction}, the configuration $\mathcal{C}_j$ has some session type $\sty{T}_{\pr_j}$ such that $\reduceinternal{(\project{G}{\pr_j})}{\sty{T}_{\pr_j}}{\pr_k}{\ell}{\basetype}$.
       By \cref{global-reduction}, it follows that there are session types $\sty{T}_{\pr}$ ($\pr \neq \pr_j$) and a global type $\gty{G}'$ over $\{\pr_1, \dots, \pr_n\}$, such that $\sty{T}_{\pr} = \project{\gty{G}'}{\pr}$ for all $\pr$, and such that
  \[
    \reduceexternal{(\project{\gty{G}}{\pr_k})}{\sty{T}_{\pr_k}}{\pr_j}{\ell}{\basetype}
    \qquad
    (\project{\gty{G}}{\pr}) \extgeq \sty{T}_{\pr}~\text{for}~\pr \not\in \{\pr_j, \pr_k\}
  \]
      It follows that $\mathcal{D}_{\pr}$ has session type $\sty{T}_{\pr}$ for all $\pr$; for $\pr = \pr_k$ this is by \cref{subject-reduction}, while for $\pr \not\in\{\pr_j, \pr_k\}$ this is by \cref{extgeq-subtype} and \cref{configuration-subsumption}.
      The session $\mathcal{M}'$ therefore has type $\gty{G}$.
   \end{itemize} 
\end{proof}

\liveness*
\begin{proof}
  For (1), first note that if $\queue{\sigma}_{ij}$ is non-empty, then $\pr_i$ can send a message to another participant.
  By fairness, such a reduction is present in the sequence, and indeed, unless $\pr_i$ receives a message as desired, there is some $i'' \geq i$ for which $\queue{\sigma}_{ji''}$ is empty.
  Then, by \cref{comp-recvs}, we have $\Recvs{\pp}{(\project{\gty{G}}{\pr_j})}$.
  We can now proceed by induction on the proof of $\Recvs{\pp}{(\project{\gty{G}}{\pr_j})}$.
  \begin{itemize}
    \item If the proof is by \RuleName{Recvs-base}, then $\project{G}{\pr_j}$ is an external choice on $\pp$, where $\gty{G}$ is the global type associated to the $i''$th configuration.
      We proceed by induction on the global type $\gty{G}$.
      Note that $\gty{G}$ can only be a communication, since otherwise, $\project{G}{\pr_j}$ would not be an external choice.
      We consider two cases.
      \begin{itemize}
        \item If $\gty{G} = \gtypecomm{\pp}{\pr_j}{\ell_j}{\basetype_j}{\gty{G}_j}{j \in J}$, then $\project{\gty{G}}{\pp}$ is an internal choice. The interpretation of the corresponding configuration then sends some $m$ to $\pr_j$, and thus by \cref{semantics-correct}, so does the configuration itself. Thus the global action $\commaction{\pp}{\pr_j}{m}$ appears in the reduction sequence by fairness.
        \item Otherwise, we have $\gty{G} = \gtypecomm{\pq}{\pq'}{\ell_j}{\basetype_j}{\gty{G}_j}{j \in J}$, with $\pr \not\in \{\pq, \pq'\}$.
          It follows from \cref{semantics-correct} that the global action $\commaction{\pq}{\pq'}{\msg{\ell_i}{v}}$ is enabled, for some $i \in I$ and $v : \basetype_i$.
          Thus that action appears in the reduction sequence, and the result follows from the inductive hypothesis.
      \end{itemize}
    \item If the proof is by \RuleName{Recvs-$\&$}, then $\project{G}{\pr_j}$ is an external choice on some other participant $\pq$.
      Just as in the previous case, a communication $\pq \to \pr_j$ is enabled, and thus appears in the reduction sequence; the result then follows from the inductive hypothesis.
    \item If the proof is by \RuleName{Recvs-rec}, then we again proceed by induction on $\gty{G}$. If $\gty{G}$ is a communication, then that communication appears in the reduction sequence, and we continue using the inductive hypothesis.
      Otherwise, $\gty{G}$ is a recursive type, and we can again continue using the inductive hypothesis.
  \end{itemize}

  For (2), we have $\queue{\sigma_j} = \consfront{\pp}{\msg{\ell}{v}}{\queue{\sigma}'}$, where $m = \msg{\ell}{v}$, and $v$ has some type $\basetype$.
  We then have that $\conftypedg{\config{\queue{\sigma'}}{t}{\queue{\rho}}}{\basetype_i}{\sty{T}}$ for some $\sty{T}$ such that $\reduceexternal{\sty{T}}{\project{\gty{G}}{\pr_j}}{\pp}{\ell}{\basetype}$, where $\gty{G}$ is the global type of $\mathcal{M}_{i + 1}$.
  To show that a suitable reduction appears in the given reduction sequence, we then proceed by induction on $\reduceexternal{\sty{T}}{\project{\gty{G}}{\pr_j}}{\pp}{\ell}{\basetype}$, in a similar manner to the proof of (1).
\end{proof}

\section{Comparison with the asynchronous subtyping of \cite{GPPSY2023}}

We prove that, for closed session types, our definition of asynchronous subtyping is equivalent to that of \cite{GPPSY2023}.
(Unlike us, \cite{GPPSY2023} do not give a definition of subtyping for non-closed session types.)
Our subtyping is therefore \emph{sound} and \emph{complete} in the sense of \cite{GPPSY2023}.
The equivalence is \cref{thm:gppsy-equivalence} below.

The definition in \cite{GPPSY2023} makes heavy use of session \emph{trees}.
These are generated coinductively as follows.
\[
\str{T}, \str{U} \Coloneqq \ltypeend
~~|~~
\ltypesend{\role{p}}{\ell_i}{\basetype_i}{\str{T}_i}{i \in I}
~~|~~
\ltyperecv{\role{p}}{\ell_i}{\basetype_i}{\str{T}_i}{i \in I}
\]

Every closed session type has an \emph{unfolding} as a session tree $\typetree{\sty{T}}$, defined by

\[
  \typetree{\sty{T}} =
  \begin{cases}
    \ltypeend
    & \text{if}~\unfold{\sty{T}} = \ltypeend
    \\
    \ltypesend{\role{p}}{\ell_i}{\basetype_i}{\typetree{\sty{T}_i}}{i \in I}
    & \text{if}~\unfold{\sty{T}} = \ltypesend{\role{p}}{\ell_i}{\basetype_i}{\sty{T}_i}{i \in I}
    \\
    \ltyperecv{\role{p}}{\ell_i}{\basetype_i}{\typetree{\sty{T}_i}}{i \in I}
    & \text{if}~\unfold{\sty{T}} = \ltyperecv{\role{p}}{\ell_i}{\basetype_i}{\sty{T}_i}{i \in I}
  \end{cases}
\]

\subsection{The subtyping definition of \cite{GPPSY2023}}

In this section, we spell out the definition of asynchronous subtyping used in \cite{GPPSY2023}.
We write $\subtype_{\mathrm{GPPSY}}$ for this relation, to distinguish the definition in \cite{GPPSY2023} from ours.

\begin{definition}
  A session tree is \emph{single-input} (SI) when each external choice has only one branch; 
  \emph{single-output} (SO) when each internal choice has only one branch;
  \emph{single-input-single-output} (SISO) when it is both.
  \[
\begin{array}{l@{~}c@{~}l@{~}c@{~}l@{~}c@{~}l}
\str{T}^{\SI} &\Coloneqq& \ltypeend
&|&
\ltypesend{\role{p}}{\ell_i}{\basetype_i}{\str{T}^{\SI}_i}{i \in I}
&|&
\ltyperecvone{\role{p}}{\ell}{\basetype}{\str{T}^{\SI}}
\\
\str{T}^{\SO} &\Coloneqq& \ltypeend
&|&
\ltypesendone{\role{p}}{\ell}{\basetype}{\str{T}^{\SO}}
&|&
\ltyperecv{\role{p}}{\ell_i}{\basetype_i}{\str{T}^{\SO}_i}{i \in I}
\\
\str{T}^{\SISO} &\Coloneqq& \ltypeend
&|&
\ltypesendone{\role{p}}{\ell}{\basetype}{\str{T}^{\SISO}}
&|&
\ltyperecvone{\role{p}}{\ell}{\basetype}{\str{T}^{\SISO}}
\end{array}
\]
  If $\str{T}$ is a session tree, we write $\siset{\str{T}}$ and $\soset{\str{T}}$ for the sets of SI and SO \emph{decompositions} of $\str{T}$ respectively:
  \begin{gather*}
    \siset{\ltypeend} = \{\ltypeend\}
    \quad
    \begin{array}{r@{~}c@{~}l}
    \siset{\ltypesend{\pp}{\ell_i}{\basetype_i}{\str{T}_i}{i \in I}}
    &=& \{\ltypesend{\pp}{\ell_i}{\basetype_i}{\str{T}'_i}{i \in I} \mid \str{T}'_i \in \siset{\str{T}_i}\}
    \\
    \siset{\ltyperecv{\pp}{\ell_i}{\basetype_i}{\str{T}_i}{i \in I}}
    &=& \bigcup_{i \in I} \{\ltyperecvone{\pp}{\ell_i}{\basetype_i}{\str{T}'} \mid \str{T}' \in \siset{\str{T}_i}\}
    \end{array}
    \\
    \soset{\ltypeend} = \{\ltypeend\}
    \quad
    \begin{array}{r@{~}c@{~}l}
    \soset{\ltypesend{\pp}{\ell_i}{\basetype_i}{\str{T}_i}{i \in I}}
    &=& \bigcup_{i \in I} \{\ltypesendone{\pp}{\ell_i}{\basetype_i}{\str{T}'} \mid \str{T}' \in \soset{\str{T}_i}\}
    \\
    \soset{\ltyperecv{\pp}{\ell_i}{\basetype_i}{\str{T}_i}{i \in I}}
    &=& \{\ltyperecv{\pp}{\ell_i}{\basetype_i}{\str{T}'_i}{i \in I} \mid \str{T}'_i \in \soset{\str{T}_i}\}
    \end{array}
  \end{gather*}
\end{definition}

\begin{definition}
  We let $\actset{\str{T}}$ be the smallest family of sets indexed by SISO session trees $\str{T}$, such that the following hold.
  \[
    \actset{\ltypeend} = \emptyset
    \quad
    \actset{\ltypesendone{\pp}{\ell}{\basetype}{\str{T}}} = \{\pp!\} \cup \actset{\str{T}}
    \quad
    \actset{\ltyperecvone{\pp}{\ell}{\basetype}{\str{T}}} = \{\pp?\} \cup \actset{\str{T}}
  \]
\end{definition}

\begin{definition}
  For every participant $\pp$, we define two sets of \emph{prefixes} inductively, by the following, where $\pq$ ranges over participants distinct from $\pp$.
  \begin{align*}
    \mathcal{A}^{(\pp)}
    &\Coloneqq
    \ltyperecvoneprefix{\pq}{\ell}{\basetype}
    ~|~
    \ltyperecvone{\pq}{\ell}{\basetype}{\mathcal{A}^{(\pp)}}
    \\
    \mathcal{B}^{(\pp)}
    &\Coloneqq
    \ltyperecvoneprefix{\pr}{\ell}{\basetype}
    ~|~
    \ltypesendoneprefix{\pp}{\ell}{\basetype}
    ~|~
    \ltyperecvone{\pr}{\ell}{\basetype}{\mathcal{B}^{(\pp)}}
    ~|~
    \ltypesendone{\pq}{\ell}{\basetype}{\mathcal{B}^{(\pp)}}
  \end{align*}
\end{definition}

\begin{definition}
    The \emph{refinement} order $\refines$ on SISO session trees is generated coinductively by the following rules.
    \begin{gather*}
        \begin{prooftree}
            \hypo{\str{T} \refines \str{U}}
            \infer[double]1{\ltyperecvone{\pp}{\ell}{\basetype}{\str{T}} \refines \ltyperecvone{\pp}{\ell}{\basetype}{\str{U}}}
        \end{prooftree}
        \quad
        \begin{prooftree}
            \hypo{\str{T} \refines \mathcal{A}^{(\pp)}.\,\str{U}}
            \hypo{\actset{\str{T}} = \actset{\mathcal{A}^{(\pp)}.\,\str{U}}}
            \infer[double]2{\ltyperecvone{\pp}{\ell}{\basetype}{\str{T}} \refines \mathcal{A}^{(\pp)}.\,\ltyperecvone{\pp}{\ell}{\basetype}{\str{U}}}
        \end{prooftree}
        \\
        \begin{prooftree}
            \hypo{\str{T} \refines \str{U}}
            \infer[double]1{\ltypesendone{\pp}{\ell}{\basetype}{\str{T}} \refines \ltypesendone{\pp}{\ell}{\basetype}{\str{U}}}
        \end{prooftree}
        \quad
        \begin{prooftree}
            \hypo{\str{T} \refines \mathcal{B}^{(\pp)}.\,\str{U}}
            \hypo{\actset{\str{T}} = \actset{\mathcal{B}^{(\pp)}.\,\str{U}}}
            \infer[double]2{\ltypesendone{\pp}{\ell}{\basetype}{\str{T}} \refines \mathcal{B}^{(\pp)}.\,\ltypesendone{\pp}{\ell}{\basetype}{\str{U}}}
        \end{prooftree}
        \\
        \begin{prooftree}
            \hypo{}
            \infer[double]1{\ltypeend \refines \ltypeend}
        \end{prooftree}
    \end{gather*}
\end{definition}

\begin{definition}
    The \emph{asynchronous subtyping relation} for closed session types, as defined in \cite{GPPSY2023}, is given by
    \[
      \sty{T} \subtype_{\mathrm{GPPSY}} \sty{U}
      ~\Leftrightarrow~
      \typetree{\sty{T}} \subtype_{\mathrm{GPPSY}} \typetree{\sty{U}}
    \]
    where for session trees, $\subtype_{\mathrm{GPPSY}}$ is given by
    \[
      \str{T} \subtype_{\mathrm{GPPSY}} \str{U}
      ~\Leftrightarrow~
      \forall \str{T}' \in \soset{\str{T}}, \str{U}' \in \siset{\str{U}}.\,
      \exists \str{T}'' \in \siset{\str{T}'}, \str{U}'' \in \soset{\str{U}'}.\,
      \str{T}'' \refines \str{U}''
    \]
\end{definition}

\subsection{Asynchronous subtyping for session trees}

To show the equivalence with $\subtype_{\mathrm{GPPSY}}$, we first define a subtyping relation $\subtype$ for session trees, in the same style that we use for session types, such that $\sty{T} \subtype \sty{U}$ is equivalent to $\typetree{\sty{T}} \subtype \typetree{\sty{U}}$.
It will then remain to show that, for session trees, $\subtype$ is the same relation as $\subtype_{\mathrm{GPPSY}}$.

First, we define relations and predicates for session trees, analogous to those we have for session types.
The definitions are inductively, and are given by the rules of \cref{fig:inductive-trees}.
\begin{figure}[t]
\begin{minipage}[t]{0.65\textwidth}
\begin{gather*}
  \shortintertext{\fbox{$\reduceinternal{\str{U}}{\str{U}'}{\pp}{\ell}{\basetype}$}}
  \\[-2.5ex]
  \begin{array}{l}
  \RuleName{$\reduceinternalsymbol$-base}
  ~
  \begin{prooftree}
    \hypo{}
    \infer1
    {\reduceinternal{\ltypesend{\pp}{\ell_i}{\basetype_i}{\str{U}_i}{i \in I}}{\str{U}_i}{\pp}{\ell_i}{\basetype_i}}
  \end{prooftree}
  \end{array}
  \\
  \begin{array}{l}
  \RuleName{$\reduceinternalsymbol$-$\oplus$}
  ~
  \begin{prooftree}
    \hypo{\pp \neq \pq}
    \hypo{J \subseteq I}
    \hypo{{\reduceinternal{\str{U}_i}{\str{U}'_i}{\pp}{\ell}{\basetype}}~\text{for all}~i \in J}
    \infer3
    {\reduceinternal{\ltypesend{\pq}{\ell'_i}{\basetype'_i}{\str{U}_i}{i \in I}}{\ltypesend{\pq}{\ell'_i}{\basetype'_i}{\str{U}'_{i}}{i \in J}}{\pp}{\ell}{\basetype}}
  \end{prooftree}
  \end{array}
  \\[0ex]
  \begin{array}{l}
  \RuleName{$\reduceinternalsymbol$-$\&$}
  ~
  \begin{prooftree}
    \hypo{{\reduceinternal{\str{U}_i}{\str{U}'_i}{\pp}{\ell}{\basetype}}~\text{for all}~i \in I}
    \infer1
    {\reduceinternal{\ltyperecv{\pq}{\ell'_i}{\basetype'_i}{\str{U}_i}{i \in I}}{\ltyperecv{\pq}{\ell'_i}{\basetype'_i}{\str{U}'_{i}}{i \in J}}{\pp}{\ell}{\basetype}}
  \end{prooftree}
  \end{array}
\end{gather*}
\vspace{-5ex}
\end{minipage}%
\begin{minipage}[t]{0.35\textwidth}
\begin{gather*}
  \shortintertext{\fbox{$\Sends{\pp}(\str{T})$}}
  \begin{array}{l}
  \RuleName{Sends-base}
  \\[1ex]
  ~\begin{prooftree}
    \hypo{}
    \infer1
    {\Sends{\pp}(\ltypesend{\pp}{\ell_i}{\basetype_i}{\str{T}_i}{i \in I})}
  \end{prooftree}
  \end{array}
  \\
  \begin{array}{l}
  \RuleName{Sends-$\oplus$}
  \\[1ex]
  ~\begin{prooftree}
    \hypo{\Sends{\pp}(\str{T}_j)~~\text{for all}~j \in J}
    \infer1
    {\Sends{\pp}(\ltypesend{\pq}{\ell_j}{\basetype_i}{\str{T}_j}{j \in J})}
  \end{prooftree}
  \end{array}
\end{gather*}
\end{minipage}

\begin{minipage}[t]{0.65\textwidth}
\begin{gather*}
  \shortintertext{\fbox{$\reduceexternal{\str{T}}{\str{T}'}{\pp}{\ell}{\basetype}$}}
  \\[-2.5ex]
  \begin{array}{l}
  \RuleName{$\reduceexternalsymbol$-base}
  ~
  \begin{prooftree}
    \hypo{}
    \infer1
    {\reduceexternal{\ltyperecv{\pp}{\ell_i}{\basetype_i}{\str{T}_i}{i \in I}}{\str{T}_i}{\pp}{\ell_i}{\basetype_i}}
  \end{prooftree}
  \end{array}
  \\
  \begin{array}{l}
  \RuleName{$\reduceexternalsymbol$-$\&$}
  ~
  \begin{prooftree}
    \hypo{\pp \neq \pq}
    \hypo{J \subseteq I}
    \hypo{{\reduceexternal{\str{T}_i}{\str{T}'_i}{\pp}{\ell}{\basetype}}~\text{for all}~i \in J}
    \infer3
    {\reduceexternal{\ltyperecv{\pq}{\ell'_i}{\basetype'_i}{\str{T}_i}{i \in I}}{\ltyperecv{\pq}{\ell'_i}{\basetype'_i}{\str{T}'_{i}}{i \in J}}{\pp}{\ell}{\basetype}}
  \end{prooftree}
  \end{array}
  \\[0ex]
  \begin{array}{l}
  \RuleName{$\reduceexternalsymbol$-$\oplus$}
  ~
  \begin{prooftree}
    \hypo{{\reduceexternal{\str{T}_i}{\str{T}'_i}{\pp}{\ell}{\basetype}}~\text{for all}~i \in I}
    \infer1
    {\reduceexternal{\ltypesend{\pq}{\ell'_i}{\basetype'_i}{\str{T}_i}{i \in I}}{\ltypesend{\pq}{\ell'_i}{\basetype'_i}{\str{T}'_{i}}{i \in J}}{\pp}{\ell}{\basetype}}
  \end{prooftree}
  \end{array}
\end{gather*}
\end{minipage}%
\begin{minipage}[t]{0.35\textwidth}
\begin{gather*}
  \shortintertext{\fbox{$\Recvs{\pp}(\str{U})$}}
  \begin{array}{l}
  \RuleName{Recvs-base}
  \\[1ex]
  ~\begin{prooftree}
    \hypo{}
    \infer1
    {\Recvs{\pp}(\ltyperecv{\pp}{\ell_i}{\basetype_i}{\str{U}_i}{i \in I})}
  \end{prooftree}
  \end{array}
  \\
  \begin{array}{l}
  \RuleName{Recvs-$\&$}
  \\[1ex]
  ~\begin{prooftree}
    \hypo{\Recvs{\pp}(\str{U}_j)~~\text{for all}~j \in J}
    \infer1
    {\Recvs{\pp}(\ltypesend{\pq}{\ell_j}{\basetype_i}{\str{U}_j}{j \in J})}
  \end{prooftree}
  \end{array}
\end{gather*}
\end{minipage}

\caption{Four inductively defined relations and predicates on session trees}
\label{fig:inductive-trees}
\end{figure}

We use this to define asynchronous subtyping for session trees.
    \begin{definition}
    \emph{Asynchronous subtyping} is the largest relation $\subtype$ on session trees, such that the following hold when $\str{T} \subtype \str{U}$.
    \begin{enumerate}
        \item If ${\str{T}} = \ltypesend{\pp}{\ell_i}{\basetype_i}{\str{T}_i}{i \in I}$, then for every $i \in I$, there is some $\str{U}_i$ such that $\reduceinternal{\str{U}}{\str{U}_i}{\pp}{\ell_i}{\basetype_i}$ and $\str{T}_i \subtype \str{U}_i$.
        \item If ${\str{T}} = \ltyperecv{\pp}{\ell_i}{\basetype_i}{\str{T}_i}{i \in I}$, then $\Recvs{\pp}(\str{U})$.
        \item If ${\str{U}} = \ltypesend{\pp}{\ell_i}{\basetype_i}{\str{U}_i}{i \in I}$, then $\Sends{\pp}(\str{T})$.
        \item If ${\str{U}} = \ltyperecv{\pp}{\ell_i}{\basetype_i}{\str{U}_i}{i \in I}$, then for every $i \in I$, there is some $\str{T}_i$ such that $\reduceexternal{\str{T}}{\str{T}_i}{\pp}{\ell_i}{\basetype_i}$ and $\str{T}_i \subtype \str{U}_i$.
    \end{enumerate}
\end{definition}

\begin{lemma}\label{inductive-tree-type}
  In the following, all of the session types are assumed to be closed.
  \begin{enumerate}
    \item If $\reduceinternal{\sty{T}}{\sty{U}}{\pp}{\ell}{\basetype}$ then $\reduceinternal{\typetree{\sty{T}}}{\typetree{\sty{U}}}{\pp}{\ell}{\basetype}$.
      If $\reduceinternal{\typetree{\sty{T}}}{\str{U}}{\pp}{\ell}{\basetype}$ then there is some $\sty{U}$ such that $\typetree{\sty{U}} = \str{U}$ and $\reduceinternal{\sty{T}}{\sty{U}}{\pp}{\ell}{\basetype}$.
    \item $\Sends{\pp}(\sty{T})$ is equivalent to $\Sends{\pp}(\typetree{\sty{T}})$.
    \item If $\reduceexternal{\sty{T}}{\sty{U}}{\pp}{\ell}{\basetype}$ then $\reduceexternal{\typetree{\sty{T}}}{\typetree{\sty{U}}}{\pp}{\ell}{\basetype}$.
      If $\reduceexternal{\typetree{\sty{T}}}{\str{U}}{\pp}{\ell}{\basetype}$ then there is some $\sty{U}$ such that $\typetree{\sty{U}} = \str{U}$ and $\reduceexternal{\sty{T}}{\sty{U}}{\pp}{\ell}{\basetype}$.
    \item $\Recvs{\pp}(\sty{T})$ is equivalent to $\Recvs{\pp}(\typetree{\sty{T}})$.
  \end{enumerate}
\end{lemma}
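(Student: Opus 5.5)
The plan is to prove each item by induction on the relevant inductive derivation, relating session-type derivations to session-tree derivations through the definition of $\typetree{-}$. The key fact is that $\typetree{\sty{T}} = \typetree{\unfold{\sty{T}}}$, and that the root of $\typetree{\sty{T}}$ is determined by $\unfold{\sty{T}}$. Since $\sty{T}$ is closed and guarded, $\unfold{\sty{T}}$ is never a type variable. Hence $\typetree{\sty{T}}$ is always $\ltypeend$ or a choice whose children are the trees of the children of $\unfold{\sty{T}}$.

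For the forward directions of (1) and (3), I will induct on the derivation of $\reduceinternal{\sty{T}}{\sty{U}}{\pp}{\ell}{\basetype}$ (respectively $\reduceexternalsymbol$).
\begin{itemize}
\item The base, $\oplus$ and $\&$ cases map rule-for-rule onto the tree rules of \cref{fig:inductive-trees}. Each uses the inductive hypotheses on the branches and the observation that $\typetree{}$ commutes with choice constructors on non-recursive types.
\item In the rec case, $\reduceinternal{\subst{\sty{U}'}{\ltypevar{X} \mapsto \ltyperec{\ltypevar{X}}{\sty{U}'}}}{\sty{U}}{\pp}{\ell}{\basetype}$ holds by a smaller derivation. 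The substituted type has the same tree as $\ltyperec{\ltypevar{X}}{\sty{U}'}$, so the inductive hypothesis gives the result directly.
\end{itemize}
The forward directions of (2) and (4) are the same induction on $\Sends{\pp}$ and $\Recvs{\pp}$.

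For the backward directions, I will induct on the tree derivation, say of $\reduceinternal{\typetree{\sty{T}}}{\str{U}}{\pp}{\ell}{\basetype}$. By \cref{inductive-predicates-unfold}, it suffices to derive $\reduceinternal{\unfold{\sty{T}}}{\sty{U}}{\pp}{\ell}{\basetype}$, and $\unfold{\sty{T}}$ is a choice whose structure matches the root of $\typetree{\sty{T}}$.
\begin{itemize}
\item In the base case, $\str{U} = \typetree{\sty{T}_i}$, so we take $\sty{U} = \sty{T}_i$.
\item In the $\oplus$ case, with $\unfold{\sty{T}} = \ltypesend{\pq}{\ell'_i}{\basetype'_i}{\sty{T}_i}{i \in I}$, the inductive hypothesis applies to each $\typetree{\sty{T}_i}$ for $i \in J$. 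It yields $\sty{U}_i$ with $\typetree{\sty{U}_i} = \str{U}'_i$. We take $\sty{U} = \ltypesend{\pq}{\ell'_i}{\basetype'_i}{\sty{U}_i}{i \in J}$, whose tree is the required $\str{U}$.
\item The $\&$ case is identical.
\end{itemize}
For (2) and (4), the backward direction is the same induction on the tree predicate, again applying \cref{inductive-predicates-unfold} to pass from $\unfold{\sty{T}}$ back to $\sty{T}$.

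The main subtlety is that the tree relations are inductive and must be finite, so the backward inductions are well-founded. The type-side forward induction also needs the rec rule to be absorbed without changing the tree, which the definition of $\typetree{-}$ via $\unfold{-}$ provides. I also expect a minor nuisance in the typographical indexing of the tree $\&$ and $\oplus$ rules in \cref{fig:inductive-trees}. There the index sets $I$ and $J$ in the conclusions must be read as in the type-level rules.
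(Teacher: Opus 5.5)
Your treatment of (1) and (3), and of the backward directions of (2) and (4), matches the paper's proof. You induct on the type-level derivation for the forward parts, absorbing \RuleName{$\reduceinternalsymbol$-rec} because one-step unrolling does not change the tree. You induct on the finite tree derivation for the backward parts, using \cref{inductive-predicates-unfold} to pass from $\unfold{\sty{T}}$ back to $\sty{T}$.

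The gap is in the forward directions of (2) and (4), which you treat as ``the same induction''. They are not the same. Unlike \RuleName{$\reduceinternalsymbol$-rec}, the rule \RuleName{Sends-rec} (and likewise \RuleName{Recvs-rec}) does not unroll the recursion: its premise is $\Sends{\pp}(\sty{T}')$ for the \emph{open} body $\sty{T}'$ of $\ltyperec{\ltypevar{X}}{\sty{T}'}$. Your induction hypothesis only speaks about closed types, and $\typetree{\sty{T}'}$ is not even defined. So the rec case cannot be closed by your ``the substituted type has the same tree'' argument, because no substituted type appears in the premise.

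You need to strengthen the statement. Prove, by induction on the derivation of $\Sends{\pp}(\sty{S})$ for arbitrary $\sty{S}$, that $\Sends{\pp}(\typetree{\subst{\sty{S}}{\theta}})$ holds for every environment $\theta$ closing $\sty{S}$.
\begin{itemize}
\item The base and $\oplus$ cases go through as you describe, since substitution commutes with the choice constructors.
\item In the rec case $\sty{S} = \ltyperec{\ltypevar{X}}{\sty{S}'}$, you have $\typetree{\subst{\sty{S}}{\theta}} = \typetree{\subst{\sty{S}'}{\theta'}}$ with $\theta' = \theta, \ltypevar{X} \mapsto \subst{\sty{S}}{\theta}$. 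This is exactly your tree-invariance-under-unrolling fact, so the hypothesis at $\theta'$ applies.
\end{itemize}

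Alternatively, note that a derivation of $\Sends{\pp}(\sty{S})$ never reaches a type variable. It can therefore be replayed on $\subst{\sty{S}}{\theta}$ with the same height. Hence $\Sends{\pp}(\subst{\sty{T}'}{\ltypevar{X} \mapsto \ltyperec{\ltypevar{X}}{\sty{T}'}})$ has a derivation strictly shorter than that of $\Sends{\pp}(\ltyperec{\ltypevar{X}}{\sty{T}'})$, and you can induct on height. This fits the paper's remark that \cref{inductive-predicates-unfold} is used in \emph{both} directions of (2). The same repair is needed for $\Recvs{\pp}$ in (4).
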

\begin{proof}
  The first part of (1) is a trivial induction on the derivation of $\reduceinternal{\sty{T}}{\sty{U}}{\pp}{\ell}{\basetype}$.
  The second part of (1) is an easy induction on $\reduceinternal{\typetree{\sty{T}}}{\str{U}}{\pp}{\ell}{\basetype}$, but using \cref{inductive-predicates-unfold}.
  For (2), both directions are again easy inductions, but in both directions we use \cref{inductive-predicates-unfold}.
  The proofs of (3) and (4) are similar.
\end{proof}

\begin{lemma}\label{subtype-to-subtree}
    If $\sty{T}$ and $\sty{U}$ are closed session types, then $\sty{T} \subtype \sty{U}$ is equivalent to $\typetree{\sty{T}} \subtype \typetree{\sty{U}}$.
\end{lemma}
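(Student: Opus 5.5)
Both directions go by coinduction: we exhibit a relation and check that it satisfies the relevant definition of subtyping. The only tools needed are \cref{inductive-tree-type}, which transports the four inductive relations and predicates between closed session types and their session trees, and \cref{inductive-predicates-unfold}, which lets us replace a type by its unfolding. A useful preliminary is that $\typetree{\sty{T}} = \typetree{\unfold{\sty{T}}}$, and that the outermost constructor of $\typetree{\sty{T}}$ is the outermost constructor of $\unfold{\sty{T}}$. This holds because, by guardedness and closedness, $\unfold{\sty{T}}$ is $\ltypeend$ or a choice and is never a type variable. In particular condition (5) of \cref{def:subtyping} is vacuous here, since $\Theta$ is empty.

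\emph{Forward direction.} Define $R = \{(\typetree{\sty{T}}, \typetree{\sty{U}}) \mid \sty{T} \subtype \sty{U},~\sty{T}, \sty{U}~\text{closed}\}$ and show that $R$ satisfies the tree subtyping conditions; it is then contained in the largest such relation.
\begin{itemize}
\item Condition (1): suppose $\typetree{\sty{T}} = \ltypesend{\pp}{\ell_i}{\basetype_i}{\str{T}_i}{i \in I}$. Then $\unfold{\sty{T}} = \ltypesend{\pp}{\ell_i}{\basetype_i}{\sty{T}_i}{i \in I}$ with $\typetree{\sty{T}_i} = \str{T}_i$. Subtyping gives $\reduceinternal{\sty{U}}{\sty{U}_i}{\pp}{\ell_i}{\basetype_i}$ with $\sty{T}_i \subtype \sty{U}_i$. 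The first half of \cref{inductive-tree-type}(1) then gives $\reduceinternal{\typetree{\sty{U}}}{\typetree{\sty{U}_i}}{\pp}{\ell_i}{\basetype_i}$, and $(\str{T}_i, \typetree{\sty{U}_i}) \in R$.
\item Condition (4) is the dual, using the first half of \cref{inductive-tree-type}(3).
\item Conditions (2) and (3) follow directly from the equivalences \cref{inductive-tree-type}(4) and (2).
\end{itemize}

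\emph{Backward direction.} Define $S = \{(\sty{T}, \sty{U}) \mid \typetree{\sty{T}} \subtype \typetree{\sty{U}}\}$ and show that $S$ is an asynchronous type pre-simulation (\cref{asynchronous-type-pre-simulation}); then \cref{union-pre-simulation} gives $S \subseteq {\subtype}$.
\begin{itemize}
\item Condition (1): suppose $\unfold{\sty{T}} = \ltypesend{\pp}{\ell_i}{\basetype_i}{\sty{T}_i}{i \in I}$. Tree subtyping gives, for each $i$, a tree $\str{U}_i$ with $\reduceinternal{\typetree{\sty{U}}}{\str{U}_i}{\pp}{\ell_i}{\basetype_i}$ and $\typetree{\sty{T}_i} \subtype \str{U}_i$. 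We need to lift $\str{U}_i$ back to a type. The second half of \cref{inductive-tree-type}(1) supplies exactly this: a closed $\sty{U}_i$ with $\typetree{\sty{U}_i} = \str{U}_i$ and $\reduceinternal{\sty{U}}{\sty{U}_i}{\pp}{\ell_i}{\basetype_i}$. Hence $(\sty{T}_i, \sty{U}_i) \in S$.
\item Condition (4) is symmetric, using the second half of \cref{inductive-tree-type}(3).
\item Conditions (2) and (3) again follow from the predicate equivalences.
\end{itemize}

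The main obstacle is this lifting of reducts in the backward direction. It is the reason \cref{inductive-tree-type} needs its second halves, and that is where the real work lies. The tree derivation may pass through $\reduceinternalsymbol$-$\oplus$ with a chosen subset $J$, or through $\reduceinternalsymbol$-$\&$. At each such node we must rebuild a type-level reduct whose tree is the given tree reduct. We do this by induction on the tree derivation, interleaving unfoldings by means of \cref{inductive-predicates-unfold} and $\reduceinternalsymbol$-rec. We also need that the types so constructed are closed, so that $\typetree{-}$ is defined on them. This holds because every reduct of a closed type built from subterms of its unfoldings is closed. Everything else in the proof is routine unwinding of definitions.
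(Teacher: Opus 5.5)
Your proposal is correct and takes the same route as the paper, which simply declares the lemma an immediate corollary of \cref{inductive-tree-type}. Your two coinductive relations $R$ (on trees) and $S$ (on closed types) spell out that corollary, with the second halves of parts (1) and (3), which lift tree reducts back to type reducts, doing the substantive work in the backward direction.
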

\begin{proof}
  This is an immediate corollary of \cref{inductive-tree-type}.
\end{proof}

\subsection{Single-input and single-output}
We show that our definition of session tree subtyping is equivalent to one involving SISO session trees.
\begin{lemma}\label{subtree-congruences}\
    \begin{enumerate}
        \item If $\str{U} = \ltyperecv{\pp}{\ell_i}{\basetype_i}{\str{U}_i}{i \in I}$, then $\str{T} \subtype \str{U}$ holds iff, for every $i \in I$, there is some $\str{T}_i$ such that $\reduceexternal{\str{T}}{\str{T}_i}{\pp}{\ell_i}{\basetype_i}$ and $\str{T}_i \subtype \str{U}_i$.
        \item If $\str{T} = \ltypesend{\pp}{\ell_i}{\basetype_i}{\str{T}_i}{i \in I}$, then $\str{T} \subtype \str{U}$ holds iff, for every $i \in I$, there is some $\str{U}_i$ such that $\reduceinternal{\str{U}}{\str{U}_i}{\pp}{\ell_i}{\basetype_i}$ and $\str{T}_i \subtype \str{U}_i$.
    \end{enumerate}
\end{lemma}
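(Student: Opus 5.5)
The plan is to mirror the proof of \cref{subtype-congruences}, adapting it to session trees. On trees there is no unfolding and no recursion rule, so some cases become simpler. The difficulty is that the trees themselves are non-wellfounded.

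We treat (1) in detail; (2) is dual, with the roles of $\reduceinternalsymbol$/$\Recvs{}$ exchanged with $\reduceexternalsymbol$/$\Sends{}$. The ``only if'' direction is condition (4) of the definition of tree subtyping.

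For the ``if'' direction, we use coinduction. Define $R$ as $\subtype$ together with all pairs $(\str{T}, \str{U})$ where $\str{U} = \ltyperecv{\pp}{\ell_i}{\basetype_i}{\str{U}_i}{i \in I}$ and, for every $i \in I$, there is $\str{T}_i$ with $\reduceexternal{\str{T}}{\str{T}_i}{\pp}{\ell_i}{\basetype_i}$ and $\str{T}_i \subtype \str{U}_i$. We then show that $R$ satisfies the four conditions. It suffices to check the new pairs.
\begin{itemize}
\item Condition (4) holds by assumption.
\item Condition (3) is vacuous, since $\str{U}$ is an external choice.
\end{itemize}

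For conditions (1) and (2), we case on the form of $\str{T}$. Since $I$ is finite and non-empty, we may fix a bound $h$ on the heights of the derivations $\reduceexternal{\str{T}}{\str{T}_i}{\pp}{\ell_i}{\basetype_i}$. The cases are as follows.
\begin{itemize}
\item If $\str{T}$ is an external choice on $\pp$, then every derivation is by \RuleName{$\reduceexternalsymbol$-base}. Condition (1) is vacuous, and condition (2) holds because $\Recvs{\pp}(\str{U})$ follows by \RuleName{Recvs-base}.
\item If $\str{T}$ is an external choice on $\pq \neq \pp$, then condition (1) is vacuous. For condition (2), each $\str{T}_i$ is again an external choice on $\pq$. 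Hence $\Recvs{\pq}(\str{U}_i)$ follows from $\str{T}_i \subtype \str{U}_i$, and $\Recvs{\pq}(\str{U})$ then follows by the $\&$ rule for $\Recvs{}$.
\item If $\str{T} = \ltypesend{\pq}{\ell'_j}{\basetype'_j}{\str{T}'_j}{j \in J}$, then each $\str{T}_i$ is derived by \RuleName{$\reduceexternalsymbol$-$\oplus$}. So $\str{T}_i = \ltypesend{\pq}{\ell'_j}{\basetype'_j}{\str{T}'_{ij}}{j \in J}$, with $\reduceexternal{\str{T}'_j}{\str{T}'_{ij}}{\pp}{\ell_i}{\basetype_i}$ by shorter derivations. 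Condition (2) is vacuous. For condition (1), use $\str{T}_i \subtype \str{U}_i$ to obtain $\reduceinternal{\str{U}_i}{\str{U}'_{ij}}{\pq}{\ell'_j}{\basetype'_j}$ with $\str{T}'_{ij} \subtype \str{U}'_{ij}$. Set $\str{U}'_j = \ltyperecv{\pp}{\ell_i}{\basetype_i}{\str{U}'_{ij}}{i \in I}$; then $\reduceinternal{\str{U}}{\str{U}'_j}{\pq}{\ell'_j}{\basetype'_j}$ holds by the $\&$ rule for $\reduceinternalsymbol$. Finally, $(\str{T}'_j, \str{U}'_j) \in R$ by the definition of $R$.
\item $\str{T} = \ltypeend$ is impossible, because no $\reduceexternalsymbol$ step leaves $\ltypeend$.
\end{itemize}

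The main obstacle is the last step of the internal-choice case. A naive appeal to an inductive hypothesis on $h$ would only give $\str{T}'_j \subtype \str{U}'_j$ inductively. Instead we close the coinduction by putting these pairs into $R$ directly, which is why $R$ is defined to contain all such ``external-choice witnesses.'' The height bound is then not needed for the coinduction at all; it would matter only if one tried to establish $\subtype$ by induction, as the paper does for types. Since $R$ is a pre-simulation-style relation, it is contained in $\subtype$, and this gives the ``if'' direction.
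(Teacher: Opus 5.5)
Your proof is correct. The case analysis on $\str{T}$ is the same as the paper's: the paper proves this lemma by repeating the proof of \cref{subtype-congruences} without recursion. Where you differ is in how you close the argument.

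\textbf{The paper's route.} It fixes a uniform bound $h$ on the heights of the derivations of $\reduceexternal{\str{T}}{\str{T}_i}{\pp}{\ell_i}{\basetype_i}$ and proceeds by induction on $h$. In the internal-choice case the derivations for $\str{T}'_j$ are strictly shorter, so the inductive hypothesis gives $\str{T}'_j \subtype \str{U}'_j$. At each stage only the single pair $(\str{T},\str{U})$ is added to $\subtype$.

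\textbf{Your route.} You add all such witness pairs at once and check that $R$ satisfies the four clauses. The pair $(\str{T}'_j,\str{U}'_j)$ then falls back into $R$ by definition. In effect, the paper proves by induction that each height-stratum of your set of new pairs lies in $\subtype$, whereas you handle their union in one coinductive step.

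\textbf{What each buys.} Your route makes the height bound unnecessary, and with it the appeal to finiteness of $I$. Your remarks about $h$ and ``shorter derivations'' can therefore be dropped; only non-emptiness of $I$ is used, to exclude $\str{T} = \ltypeend$. The paper's route stays within the single relation $\subtype$ and matches the height-bound inductions used elsewhere in its appendix. Your argument would also transfer to the session-type version, by casing on the unfolding of $\sty{T}$ and using \cref{inductive-predicates-unfold}.
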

\begin{proof}
  The proof is exactly the same as the proof \cref{subtype-congruences}, but without type variables and recursive types.
\end{proof}

\begin{lemma}\label{so-is-subtype}
    Let $\str{T}$ be a session tree.
    \begin{enumerate}
        \item $\str{T}' \subtype \str{T}$ for all $\str{T}' \in \soset{\str{T}}$.
        \item $\str{T} \subtype \str{T}'$ for all $\str{T}' \in \siset{\str{U}}$.
    \end{enumerate}
\end{lemma}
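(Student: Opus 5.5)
Both parts will be proved by coinduction: for each, we exhibit a relation containing the pairs in question and check that it satisfies the four clauses of tree subtyping. (Part (2) presumably should read $\str{T}' \in \siset{\str{T}}$.)

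\emph{Part (1).} Let $R = \{(\str{T}', \str{T}) \mid \str{T}' \in \soset{\str{T}}\} \cup {\subtype}$, and show that $R$ satisfies the defining conditions of tree subtyping. The case $\str{T} = \ltypeend$ is trivial. Suppose $\str{T} = \ltypesend{\pp}{\ell_i}{\basetype_i}{\str{T}_i}{i \in I}$. Then $\str{T}' = \ltypesendone{\pp}{\ell_j}{\basetype_j}{\str{T}''}$ with $\str{T}'' \in \soset{\str{T}_j}$ for some $j \in I$.
\begin{itemize}
\item Clause (1): $\reduceinternal{\str{T}}{\str{T}_j}{\pp}{\ell_j}{\basetype_j}$ holds by base, and $\str{T}'' \,R\, \str{T}_j$.
\item Clause (3): $\Sends{\pp}(\str{T}')$ holds by Sends-base.
\item Clauses (2) and (4) are vacuous, or immediate since both trees have the same outermost shape.
\end{itemize}
Now suppose $\str{T} = \ltyperecv{\pp}{\ell_i}{\basetype_i}{\str{T}_i}{i \in I}$. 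Then $\str{T}'$ is the same external choice with components $\str{T}'_i \in \soset{\str{T}_i}$.
\begin{itemize}
\item Clause (2): $\Recvs{\pp}(\str{T})$ holds by base.
\item Clause (4): for each $i$, $\reduceexternal{\str{T}'}{\str{T}'_i}{\pp}{\ell_i}{\basetype_i}$ holds by base, and $\str{T}'_i \,R\, \str{T}_i$.
\end{itemize}
Since $\subtype$ is the largest relation satisfying these conditions, $R \subseteq {\subtype}$, which gives (1). Alternatively, one can invoke \cref{subtree-congruences} directly: the internal choice case is (2) of that lemma, and the external choice case is (1).

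\emph{Part (2).} The argument is dual. Take $R = \{(\str{T}, \str{T}') \mid \str{T}' \in \siset{\str{T}}\}$. If $\str{T}$ is an external choice, then $\str{T}'$ picks a single branch $j$. Clause (2) holds because $\str{T}'$ is an external choice on $\pp$. Clause (4) holds for that one branch via $\reduceexternal{\str{T}}{\str{T}_j}{\pp}{\ell_j}{\basetype_j}$. If $\str{T}$ is an internal choice, then $\str{T}'$ keeps all branches with SI components. Clause (1) holds via base reductions on $\str{T}'$, and clause (3) holds via Sends-base.

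\emph{Main obstacle.} The only delicate point is that the tree-level figure for $\reduceinternalsymbol$-$\&$ and $\reduceexternalsymbol$-$\oplus$ carries a stray index $J$. Our argument only ever uses the base rules, so it is unaffected. No real difficulty is expected.
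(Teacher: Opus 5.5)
Your proof is correct and is essentially the paper's: both check the four clauses of tree subtyping directly, coinductively, for the relation pairing each tree with its SO (dually, SI) decompositions, and both use only the base rules of the four inductive relations. The one thing to drop is the aside about invoking \cref{subtree-congruences} instead. That lemma needs $\str{T}'' \subtype \str{T}_j$ for the subtrees, which is the claim being proved, and since session trees can be infinite there is no induction to justify it, so only your coinductive argument works.
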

\begin{proof}
    We give the proof of (1); the proof of (2) is similar.
    We show each of the cases of the definition of $\str{T}' \subtype \str{T}$ in turn.
    \begin{enumerate}
        \item If $\str{T}' = \ltypesendone{\pp}{\ell}{\basetype}{\str{U}'}$, then $\reduceinternal{\str{T}}{\str{U}}{\pp}{\ell}{\basetype}$ for some $\str{U}$ with $\str{U'} \in \soset{\str{U}}$.
        \item If $\str{T}' = \ltyperecv{\pp}{\ell_i}{\basetype_i}{\str{U}'_i}{i \in I}$, then $\Recvs{\pp}(\str{T})$ holds trivially.
        \item If $\str{T} = \ltypesend{\pp}{\ell_i}{\basetype_i}{\str{U}'_i}{i \in I}$, then $\Sends{\pp}(\str{T})$ holds trivially.
        \item If $\str{T} = \ltyperecv{\pp}{\ell_i}{\basetype_i}{\str{U}'_i}{i \in I}$, then for each $i$ we have $\reduceexternal{\str{T}}{\str{U}}{\pp}{\ell_i}{\basetype_i}$ for some $\str{U}$ with $\str{U'}_i \in \soset{\str{U}}$.
    \end{enumerate}
\end{proof}

\begin{lemma}\label{subtype-finite}
    \begin{enumerate}
        \item Assume that, for every $\str{T}' \in \soset{\str{T}}$, there is some $\str{U}' \in \soset{U}$ such that there exists a $\str{S'}$ with $\reduceinternal{\str{U}'}{\str{S}'}{\pp}{\ell}{\basetype}$ and $\str{T}' \subtype \str{S}'$.
          Then there is some $\str{S}$ such that $\reduceinternal{\str{U}}{\str{S}}{\pp}{\ell}{\basetype}$ and, for all $\str{T}' \in \soset{\str{T}}$ there exists $\str{S'} \in \soset{\str{S}}$ such that $\str{T}' \subtype \str{S}'$.
        \item Assume that, for every $\str{T}' \in \soset{\str{T}}$, there is some $\str{U}' \in \soset{U}$ such that there exists a $\str{S'}$ with $\reduceexternal{\str{T}'}{\str{S}'}{\pp}{\ell}{\basetype}$ and $\str{S}' \subtype \str{U}'$.
          Then there is some $\str{S}$ such that $\reduceexternal{\str{T}}{\str{S}}{\pp}{\ell}{\basetype}$ and, for all $\str{T}' \in \soset{\str{T}}$ there exists $\str{S'} \in \soset{\str{S}}$ such that $\str{S}' \subtype \str{U}'$.
    \end{enumerate}
\end{lemma}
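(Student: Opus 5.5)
The plan is to prove (1) by induction on the structure of the (finite) derivations of the hypotheses. Then (2) follows by the dual argument, exchanging the rôles of $\reduceinternalsymbol$ and $\reduceexternalsymbol$ and the rules \RuleName{$\reduceinternalsymbol$-$\oplus$} and \RuleName{$\reduceinternalsymbol$-$\&$}.

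The first step is a reduction to a single derivation on $\str{U}$. Each $\str{U}' \in \soset{\str{U}}$ is obtained from $\str{U}$ by keeping one branch of every internal choice. A derivation of $\reduceinternal{\str{U}'}{\str{S}'}{\pp}{\ell}{\basetype}$ visits only a finite prefix of $\str{U}'$, and that prefix lifts to a derivation on $\str{U}$. The rules are:
\begin{itemize}
\item \RuleName{$\reduceinternalsymbol$-base} lifts to \RuleName{$\reduceinternalsymbol$-base};
\item \RuleName{$\reduceinternalsymbol$-$\oplus$} on $\pq \neq \pp$, with the kept branch, lifts to \RuleName{$\reduceinternalsymbol$-$\oplus$} with $J$ a singleton;
\item \RuleName{$\reduceinternalsymbol$-$\&$} lifts to \RuleName{$\reduceinternalsymbol$-$\&$}.
\end{itemize}
In this way each $\str{T}' \in \soset{\str{T}}$ determines a partial derivation on $\str{U}$. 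The key construction is to \emph{combine} all these derivations into one by taking unions of the index sets $J$ at each $\oplus$-node on $\pq \neq \pp$. This is the tree analogue of \cref{reducemerge}. Concretely, I define $\str{S}$ by recursion down $\str{U}$:
\begin{itemize}
\item at an internal choice on $\pp$ with a chosen label $\ell$, take the $\ell$-branch;
\item at an internal choice on $\pq \neq \pp$, keep exactly the branches used by some $\str{T}'$;
\item at an external choice, keep all branches.
\end{itemize}

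The main obstacle is well-foundedness of this recursion. The set $\soset{\str{T}}$ may be infinite, so the heights of the individual derivations need not be bounded, and then the combined object would not be a finite derivation. I would first try a compactness-style argument. Suppose the derivations were unbounded. Then by König's lemma, since choices are finitely branching, there is an infinite path in $\str{U}$ along which no $\pp!\ell$ is ever reached, yet arbitrarily deep prefixes are used. From such a path I would build, step by step, an element $\str{T}' \in \soset{\str{T}}$ whose chosen $\str{U}'$ has no finite derivation, contradicting the hypothesis. This step needs care, because the $\str{U}'$ depends on $\str{T}'$. I would handle it by choosing the $\str{T}'$ branches according to the infinite path, using the fact that each $\str{T}'$ only needs its chosen $\str{U}'$ to agree with $\str{U}$ on the portion actually used.

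Granting a uniform bound $h$, the induction on $h$ is routine. At each node of $\str{U}$, the hypothesis restricts to the subtrees, and the inductive hypothesis gives a combined $\str{S}_i$ for each used branch $i$. These $\str{S}_i$ are assembled with \RuleName{$\reduceinternalsymbol$-$\oplus$} or \RuleName{$\reduceinternalsymbol$-$\&$}.

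It remains to show that every $\str{T}' \in \soset{\str{T}}$ has some $\str{S}'' \in \soset{\str{S}}$ with $\str{T}' \subtype \str{S}''$. Take the given $\str{S}'$ coming from $\str{U}'$. Choosing in $\str{S}$ the same branches at internal choices yields $\str{S}'' \in \soset{\str{S}}$ that coincides with $\str{S}'$, because $\str{S}$ retains all the branches $\str{U}'$ used. Thus $\str{T}' \subtype \str{S}'' = \str{S}'$.
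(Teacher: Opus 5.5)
Your overall architecture is the paper's. You lift each per-$\str{T}'$ derivation on $\str{U}'$ to one on $\str{U}$, and you merge these by taking unions of the kept branches at internal choices on participants other than $\pp$. Well-foundedness then comes from a uniform height bound, and you finish by checking that each $\str{S}'$ lies in $\soset{\str{S}}$. The merge and the final check are fine.

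The gap is the uniform bound. It is the only non-routine step, you leave it as a sketch, and the sketch does not work as stated.
\begin{itemize}
\item König's lemma on the merged derivation gives an infinite path in $\str{U}$. Such a path does not determine an element of $\soset{\str{T}}$. The branches of $\str{T}$ are linked to nodes of $\str{U}$ only through the coinductive relation $\str{T}' \subtype \str{S}'$, which reorders actions. So ``choosing the $\str{T}'$ branches according to the path'' is not a well-defined construction.
\item Exhibiting a $\str{T}'$ whose \emph{chosen} $\str{U}'$ has no finite derivation would not contradict the hypothesis. The hypothesis only asks for \emph{some} suitable $\str{U}'$.
\end{itemize}
The paper runs the König argument on the internal choices of $\str{T}$ instead: at each choice it keeps a branch below which the heights stay unbounded. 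That at least produces a limit $\str{T}_\ast \in \soset{\str{T}}$.

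Even with the König step on $\str{T}$, you need a continuity property to finish. If $\str{T}_\ast$ has a suitable derivation of height $h_0$, then every $\str{T}' \in \soset{\str{T}}$ that agrees with $\str{T}_\ast$ on a large enough finite part must have one of height at most $h_0$. This is not automatic, because $\subtype$ is coinductive. Here is an example:
\begin{itemize}
\item Take $\str{V} = \ltypesendone{\pq}{\ell_1}{\basetype}{\str{V}}$ and $\str{U} = \ltypesendone{\pp}{\ell}{\basetype}{\str{V}}$.
\item Let $\str{T}$ offer, at every level, a choice to $\pq$ between $\ell_1$ (continuing as $\str{T}$) and $\ell_2$ (then $\ltypeend$).
\item The limit $\str{V} \in \soset{\str{T}}$ has the height-$0$ derivation $\reduceinternal{\str{U}}{\str{V}}{\pp}{\ell}{\basetype}$, with $\str{V} \subtype \str{V}$.
\item Yet every finite-depth neighbourhood of $\str{V}$ contains some $\str{T}_n$ (send $\ell_1$ $n$ times, then $\ell_2$) that has no suitable derivation at all.
\end{itemize}
So having bad elements arbitrarily close does not make the limit bad; the argument must use the hypothesis at the nearby points.

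One way to get the continuity property is as follows. Take single-output $\str{T}'$ and $\str{U}'$ with $\str{T}' \subtype \str{S}'$. Condition (3) of that relation, together with the base rule, forces the branch of $\str{U}'$ at each internal choice on $\pq \neq \pp$ above the first $\pp$-node to be the label of the corresponding send of $\str{T}'$ to $\pq$. Hence the derivation, and its height, is determined by a finite part of $\str{T}'$. Under the hypothesis, the height is then locally constant on the compact space $\soset{\str{T}}$, and therefore bounded.

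The paper's write-up likewise asserts that its limit has no suitable $\str{U}'$ without supplying this step.
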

\begin{proof}
  We give the proof of (1); the proof of (2) is similar.
  Let $\soset{\str{T}} = \{\str{T'}_k \mid k \in K\}$.
  We first note that we can assume there is some natural number $h$, 
  for every $k \in K$, there exists $\str{U}'_k \in \soset{U}$ and $\str{S'}_k$ with $\reduceinternal{\str{U}'_k}{\str{S}'_k}{\pp}{\ell}{\basetype}$ and $\str{T}'_k \subtype \str{S}'_k$, where the proof of $\reduceinternal{\str{U}'_k}{\str{S}'_k}{\pp}{\ell}{\basetype}$ has height at most $h$.
  For if this were not the case, then for every $h$, there would be some $k \in K$, such that for every $\str{U}' \in \soset{U}$ and $\str{S'} \in \soset{S}$ with $\str{T}'_k \subtype \str{S}'$, any derivation of $\reduceinternal{\str{U}'}{\str{S}'}{\pp}{\ell}{\basetype}$ has height greater than $h$.
  This would enable us to construct some $\str{T}' \in \soset{\str{T}}$ such that there no suitable $\str{U}'$, by inspecting $\str{T}$:
  \begin{itemize}
    \item If $\str{T} = \ltypeend$, then take $\str{T}' = \ltypeend$.
    \item If $\str{T} = \ltypesend{\pp}{\ell_i}{\basetype_i}{\str{T}_i}{i \in I}$, then since $I$ is finite, there is some $i$ such that for every $h$, there is some $\str{T}'_0$ as above, that has the form $\ltypesendone{\pp}{\ell_i}{\basetype_i}{\str{T}''_0}$. Thus we can take $\str{T}' = \ltypesendone{\pp}{\ell_i}{\basetype_i}{\str{T}''}$ for some $\str{T''}$.
    \item If $\str{T} = \ltyperecv{\pp}{\ell_i}{\basetype_i}{\str{T}_i}{i \in I}$, then we can take $\str{T}' = \ltyperecv{\pp}{\ell_i}{\basetype_i}{\str{T}''_i}{i \in I}$ for some $\str{T''}_i$.
  \end{itemize}

  We can therefore show the result by well-founded induction on $h$, noting that all of the assumed derivations of $\reduceinternalsymbol$ are by the same rule.
  \begin{itemize}
    \item If the derivations are by rule \RuleName{$\reduceinternalsymbol$-base}, then the result is trivial.
    \item If the derivations are by rule \RuleName{$\reduceinternalsymbol$-$\oplus$}, then we have
      \begin{gather*}
        \str{U} = \ltypesend{\pq}{\ell_i}{\basetype_i}{\str{U}_i}{i \in I}
        \qquad
        \str{U}'_k = \ltypesendone{\pq}{\ell_{i_k}}{\basetype_{i_k}}{\str{U}'_{ki_k}}
        \\
        \pp \neq \pq
        \qquad
        \reduceinternal{\str{U}'_i}{\str{U}'_{ki_k}}{\pp}{\ell}{\basetype}~\text{for all}~k \in K
      \end{gather*}
      Define $I' = \{i_k \mid k \in K\}$.
      For every $i \in I'$, let $\str{S}_i$ be any session tree such that $\reduceinternal{\str{U}'_i}{\str{S}_i}{\pp}{\ell}{\basetype}$, and such that for all $k$, there is some $\str{U}''_i \in \soset{\str{S}_i}$ with $\str{T}'_k \subtype \str{S}'_k$.
      These exist by the inductive hypothesis.
      We can then take $\str{U}' = \ltypesend{\pq}{\ell_i}{\basetype_i}{\str{U}'_i}{i \in I'}$, using finiteness of $h$ to prove $\reduceinternal{\str{U}}{\str{U}'}{\pp}{\ell}{\basetype}$ by \RuleName{$\reduceinternalsymbol$-$\oplus$}.
    \item If the derivation is by rule \RuleName{$\reduceinternalsymbol$-$\&$}, then so is the derivation of every other $\reduceinternal{\str{T}}{\str{U}_k}{\pp}{\ell}{\basetype}$.
      We have
      \begin{gather*}
        \sty{T} = \ltyperecv{\pq}{\ell_i}{\basetype_i}{\sty{T}_i}{i \in I}
        \qquad
        \sty{U}_k = \ltyperecv{\pq}{\ell_i}{\basetype_i}{\sty{U}_{ki}}{i \in I}
        \\
        \reduceinternal{\sty{T}_i}{\sty{U}_{ki}}{\pp}{\ell}{\basetype}~\text{for all}~k \in K, i \in I
      \end{gather*}
      The proof is similar to the previous case, taking $\sty{U}' = \ltyperecv{\pq}{\ell_i}{\basetype_i}{\sty{U}'_i}{i \in I'}$ for some $\sty{U}'_i$.
  \end{itemize}
\end{proof}

\begin{lemma}\label{subtype-so-si}\
    \begin{enumerate}
        \item $\str{T} \subtype \str{U}$ holds if and only if for all $\str{T}' \in \soset{\str{T}}$, there exists $\str{U}' \in \soset{\str{U}}$ such that $\str{T}' \subtype \str{U}'$.
        \item $\str{T} \subtype \str{U}$ holds if and only if for all $\str{U}' \in \siset{\str{U}}$, there exists $\str{T}' \in \siset{\str{T}}$ such that $\str{T}' \subtype \str{U}'$.
    \end{enumerate}
\end{lemma}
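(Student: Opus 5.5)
We prove (1); (2) is dual, swapping the roles of internal and external choice, $\mathsf{SO}$ with $\mathsf{SI}$, and \cref{subtype-finite}(1) with (2).

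For the \emph{only if} direction, take $\str{T}' \in \soset{\str{T}}$. By \cref{so-is-subtype}(1) we have $\str{T}' \subtype \str{T}$, hence $\str{T}' \subtype \str{U}$ by transitivity. Transitivity for trees is proved exactly as in \cref{admissible-subtyping}, via the tree analogue of \cref{subtype-lifting}. The remaining task is to replace $\str{U}$ by some $\str{U}' \in \soset{\str{U}}$. Since $\str{T}'$ is single-output, we build $\str{U}'$ coinductively alongside $\str{T}'$:
\begin{itemize}
\item Whenever $\str{U}$ (at the current position) is an internal choice on $\pp$, condition (3) gives $\Sends{\pp}(\str{T}')$. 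Since $\str{T}'$ has only one branch at each output, there is a unique path of outputs in $\str{T}'$ reaching the $\pp$-output. Condition (1), applied along that path together with \cref{subtree-congruences}, determines which single branch $\ell_i$ of $\str{U}$ is actually used.
\item We keep that branch, and retain all branches of external choices.
\end{itemize}
Rather than building $\str{U}'$ directly, it is cleaner to define the relation
\[
R = \{(\str{T}',\str{U}') \mid \str{T}' \text{ is SO},\ \str{U}' \in \soset{\str{U}},\ \str{T}' \subtype \str{U}\}
\]
suitably closed under the $\reduceinternalsymbol$/$\reduceexternalsymbol$ steps, and then check that $R$ is a subtyping relation.

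For the \emph{if} direction, we show that
\[
R = \{(\str{T},\str{U}) \mid \forall \str{T}'\in\soset{\str{T}}.\,\exists \str{U}'\in\soset{\str{U}}.\,\str{T}'\subtype\str{U}'\}
\]
satisfies the four clauses of tree subtyping. We check each clause in turn.
\begin{itemize}
\item \textbf{Clause (1).} If $\str{T}$ is an internal choice on $\pp$, then each branch $\ell_i$ yields SO decompositions $\ltypesendone{\pp}{\ell_i}{\basetype_i}{\str{T}''}$. For each, the corresponding $\str{U}'$ admits $\reduceinternal{\str{U}'}{\str{S}'}{\pp}{\ell_i}{\basetype_i}$ with $\str{T}''\subtype\str{S}'$. \cref{subtype-finite}(1) glues these into a single $\str{S}$ with $\reduceinternal{\str{U}}{\str{S}}{\pp}{\ell_i}{\basetype_i}$ and $(\str{T}_i,\str{S})\in R$.
\item \textbf{Clause (4).} This is symmetric to clause (1), using \cref{subtype-finite}(2).
\item \textbf{Clause (2).} If $\str{T}$ is an external choice on $\pp$, then so is every $\str{T}'$. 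Hence $\Recvs{\pp}(\str{U}')$ holds for each witness $\str{U}'$. We lift this back to $\str{U}$ by an induction on $\Recvs{\pp}$, observing that a $\Recvs{\pp}$ derivation only traverses external choices, which SO decomposition preserves entirely.
\item \textbf{Clause (3).} If $\str{U}$ is an internal choice on $\pp$, then so is each $\str{U}'$, giving $\Sends{\pp}(\str{T}')$ for every $\str{T}'\in\soset{\str{T}}$. A $\Sends{\pp}$ derivation only inspects internal choices, all of whose branches are enumerated by the decompositions. We therefore get $\Sends{\pp}(\str{T})$ by induction, choosing for each branch the decompositions passing through it.
\end{itemize}

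\textbf{Main obstacle.} The hardest steps are clause (3) of the \emph{if} direction and the uniform-height issue hidden in the use of \cref{subtype-finite}. Both require that the inductive witnesses across the possibly infinitely many decompositions have bounded derivation height. For clause (3), I would first try a König-style argument mirroring the one in the proof of \cref{subtype-finite}. Suppose no finite $\Sends{\pp}$ derivation for $\str{T}$ exists. Then, since choices are finitely branching, we can pick branches so as to construct a single $\str{T}'\in\soset{\str{T}}$ admitting no finite $\Sends{\pp}$ derivation, which is a contradiction.
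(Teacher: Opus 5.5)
Your proposal follows the paper's proof essentially step for step. For \emph{only if}, \cref{so-is-subtype}(1) and transitivity reduce the problem to a single-output $\str{T}' \subtype \str{U}$, and a decomposition of $\str{U}$ is then selected using $\Sends{\pp}(\str{T}')$ at internal choices on $\pp$, keeping external choices whole. For \emph{if}, the four clauses are checked directly, with \cref{subtype-finite} for clauses (1) and (4) and liftings of $\Recvs{\pp}$ and $\Sends{\pp}$ for (2) and (3).

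Your contrapositive argument for clause (3) spells out a step the paper only asserts. It needs no height bound: just follow a failing branch at each internal choice on some $\pq \neq \pp$. The uniform-height issue for clauses (1) and (4) is already handled inside \cref{subtype-finite}, so it is not an obstacle here. One small point on part (2): you need single-input analogues of both parts of \cref{subtype-finite}, which is stated only for single-output decompositions. Swapping its two parts is not the right dualisation.

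One step is wrong as written. The relation $R$ you offer as the cleaner formalisation of the \emph{only if} direction (reading $\str{U}$ as existentially quantified) does not tie the branches kept in $\str{U}'$ to the outputs of $\str{T}'$, so it is not a subtyping relation. Take $\str{U} = \ltypesend{\pp}{\ell_i}{\unittype}{\ltypeend}{i \in \{1,2\}}$ and $\str{T}' = \ltypesendone{\pp}{\ell_1}{\unittype}{\ltypeend} \subtype \str{U}$. Then $R$ contains $(\str{T}', \ltypesendone{\pp}{\ell_2}{\unittype}{\ltypeend})$, which fails clause (1). Closing $R$ under further steps only adds pairs, so it cannot repair this.

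What must be shown to be a subtyping relation is the set of pairs $(\str{T}', \str{U}')$ where $\str{U}'$ is the decomposition produced by your branch-selecting construction. For this, the construction must continue as follows:
\begin{itemize}
\item at an internal choice on $\pp$, with the residual of $\str{T}'$ (its first $\pp$-output deleted) in the selected branch;
\item at an external choice on $\pp$, in branch $i$, with a reduct $\reduceexternal{\str{T}'}{\str{T}'_i}{\pp}{\ell_i}{\basetype_i}$.
\end{itemize}
The real work is then two facts. First, the residual is again a subtype of the selected branch, so the construction is well defined. Second, clause (1) holds for these pairs: an output step of $\str{U}$ matching the first output of $\str{T}'$ is mirrored by the selected decomposition, landing on the selected decomposition of the result. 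Both follow by induction on the derivation of that output step of $\str{U}$.
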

\begin{proof}
  We give the proof of (1), the proof of (2) is similar.
  For the only if direction, we note that by \cref{so-is-subtype} and transitivity of subtyping, it is enough to show that, when $\str{T}'$ is SO, $\str{T}' \subtype \str{U}$ implies there exists an SO $\str{U}'$ such that $\str{T'} \subtype \str{U'}$.
  We construct $\str{U}'$ by inspecting $\str{U}$.
  \begin{itemize}
    \item If $\str{U} = \ltypeend$, then take $\str{U'} = \ltypeend$.
    \item If $\str{U} = \ltypesend{\pp}{\ell_i}{\basetype_i}{\str{U}_i}{i \in I}$, then $\Sends{\pp}(\str{T}')$. We proceed by induction on $\Sends{\pp}(\str{T}')$. In the base case, we have $\str{T} = \ltypesendone{\pp}{\ell'}{\basetype'}{\str{T}'}$, where $(\ell', \basetype') = (\ell_i, \basetype_i)$ and $\str{T'} \subtype \str{U}_i$. We then take $\str{U'} = \ltypesendone{\pp}{\ell}{\basetype}{\str{U}'_i}$, where $\str{T}' \subtype \str{U}'_i$. In the indcutive step, we have $\str{T} = \ltypesendone{\pq}{\ell}{\basetype}{\str{T}'}$ with $\pp \neq \pq$. Thus there is some $\str{S} \suptype \str{T}'$ such that $\reduceinternal{\str{U}}{\str{S}}{\pq}{\ell}{\basetype}$. $\str{S}$ necessarily has the form $\ltypesend{\pp}{\ell_i}{\basetype_i}{\str{U'}_i}{i \in J}$ with $J \subseteq I$ and $\reduceinternal{\str{U}_i}{\str{U}'_i}{\pq}{\ell}{\basetype}$.
      The inductive hypothesis then provides us with an element of $J$, such that we can take $\str{U}' = \ltypesendone{\pp}{\ell}{\basetype}{\str{U}'_j}$, where $\str{T}' \subtype \str{U}'_j$.
    \item If $\str{U} = \ltyperecv{\pp}{\ell_i}{\basetype_i}{\str{U}_i}{i \in I}$, for each $i$ there is some $\str{T}_i$ such that $\reduceexternal{\str{T}}{\str{T}_i}{\pp}{\ell_i}{\basetype_i}$. Take $\str{U}' = \ltyperecv{\pp}{\ell_i}{\basetype_i}{\str{U}'_i}{i \in I}$, where $\str{U}'_i \in \soset{\str{U}_i}$ satisfies $\str{U}_i \subtype \str{T}_i$.
  \end{itemize}
 
  For the other direction, suppose that,
  for all $\str{T}' \in \soset{\str{T}}$, there exists $\str{U}' \in \soset{\str{U}}$ such that $\str{T}' \subtype \str{U}'$.
  We show that $\str{T} \subtype \str{U}$ by verifying each case of the definition of subtyping.
  \begin{enumerate}
    \item If $\str{T} = \ltypesend{\pp}{\ell_i}{\basetype_i}{\str{T}_i}{i \in I}$, then for every $\str{T}'_i \in \soset{\str{T}_i}$, there is some $\str{U}_i \in \soset{U}$ such that there exists a $\str{U}'_i$ with $\reduceinternal{\str{U}_i}{\str{U}'_i}{\pp_i}{\ell_i}{\basetype_i}$ and $\str{T}'_i \subtype \str{U}'_i$.
       The result follows from \cref{subtype-finite}.
    \item If $\str{T} = \ltyperecv{\pp}{\ell_i}{\basetype_i}{\str{T}_i}{i \in I}$, then since $\soset{\str{T}}$ is inhabited, there is some $\str{U}' \in \soset{U}$ such that $\Recvs{\pp}(\str{U}')$. The latter implies $\Recvs{\pp}(\str{U})$.
    \item If $\str{U} = \ltypesend{\pp}{\ell_i}{\basetype_i}{\str{U}_i}{i \in I}$, then for every $\str{T'} \in \soset{\str{T}}$ we have $\Sends{\pp}(\str{T}')$; this implies $\Sends{\pp}(\str{T})$.
    \item If $\str{U} = \ltyperecv{\pp}{\ell_i}{\basetype_i}{\str{U}_i}{i \in I}$, then for every $i \in I$ and $\str{T'} \in \soset{\str{T}}$ there is some $\str{T'}_i$ such that $\reduceexternal{\str{T}'}{\str{T}'_i}{\pp}{\ell_i}{\basetype_i}$, and some $\str{U}'_i \in \soset{\str{U}_i}$ such that $\str{T}'_i \subtype \str{U}'_i$.
       The result follows from \cref{subtype-finite}.
  \end{enumerate}
\end{proof}

\begin{lemma}\label{subtype-siso}
    $\str{T} \subtype \str{U}$ holds if and only if
    \[
      \forall \str{T}' \in \soset{\str{T}}, \str{U}' \in \siset{\str{U}}.\,
      \exists \str{T}'' \in \siset{\str{T}'}, \str{U}'' \in \soset{\str{U}'}.\,
      \str{T}'' \subtype \str{U}''
    \]
\end{lemma}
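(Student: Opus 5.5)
The plan is to chain the two characterisations of \cref{subtype-so-si}, one inside the other. For the forward direction, assume $\str{T} \subtype \str{U}$ and fix $\str{T}' \in \soset{\str{T}}$ and $\str{U}' \in \siset{\str{U}}$. First I obtain $\str{T}' \subtype \str{T}$ from \cref{so-is-subtype}(1) and $\str{U} \subtype \str{U}'$ from \cref{so-is-subtype}(2). Transitivity of tree subtyping then gives $\str{T}' \subtype \str{U}'$. Transitivity holds for trees by the argument of \cref{admissible-subtyping}, via the tree analogue of \cref{subtype-lifting}, with recursion removed.

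Next I apply \cref{subtype-so-si}(2) to $\str{T}' \subtype \str{U}'$. Since $\str{U}'$ is SI, we have $\str{U}' \in \siset{\str{U}'}$, so there is some $\str{T}'' \in \siset{\str{T}'}$ with $\str{T}'' \subtype \str{U}'$. I then apply \cref{subtype-so-si}(1) to this. Since $\str{T}''$ is SO, we have $\str{T}'' \in \soset{\str{T}''}$, which yields some $\str{U}'' \in \soset{\str{U}'}$ with $\str{T}'' \subtype \str{U}''$. The facts needed here are these. An SI decomposition of an SO tree is SISO, and similarly with the roles swapped. An SI tree lies in its own SI decomposition set, and similarly for SO. Both are immediate coinductive checks on the definitions of $\siset{-}$ and $\soset{-}$.

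For the converse, assume the $\forall\exists$ condition. By \cref{subtype-so-si}(1) it suffices to show, for each $\str{T}' \in \soset{\str{T}}$, that $\str{T}' \subtype \str{V}$ for some $\str{V} \in \soset{\str{U}}$. The intermediate goal I aim for is $\str{T}' \subtype \str{U}$ itself. From that, \cref{subtype-so-si}(1) applied to $\str{T}' \subtype \str{U}$, using $\str{T}' \in \soset{\str{T}'}$, provides the required $\str{V}$.

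To get $\str{T}' \subtype \str{U}$, I use \cref{subtype-so-si}(2). It reduces to showing that for every $\str{U}' \in \siset{\str{U}}$ there is some $\str{S} \in \siset{\str{T}'}$ with $\str{S} \subtype \str{U}'$. The hypothesis gives $\str{T}'' \in \siset{\str{T}'}$ and $\str{U}'' \in \soset{\str{U}'}$ with $\str{T}'' \subtype \str{U}''$. Combining this with $\str{U}'' \subtype \str{U}'$ from \cref{so-is-subtype}(1) and transitivity, I take $\str{S} = \str{T}''$.

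The main obstacle I expect is not this chaining, which is formal, but the bookkeeping it relies on. First, transitivity of $\subtype$ for trees must be available; I would justify it by transporting \cref{subtype-lifting} to trees verbatim. Second, \cref{so-is-subtype}(2) has a typo, and what I need from it is $\str{U} \subtype \str{U}'$ for $\str{U}' \in \siset{\str{U}}$. I would briefly re-verify that statement, dually to part (1), before using it.
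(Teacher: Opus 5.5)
Your proof is correct and follows the paper's route. The paper simply says the lemma is immediate from \cref{subtype-so-si}, and your chaining of its two parts, using \cref{so-is-subtype} (with the typo correctly repaired) and transitivity of tree subtyping (which the paper also uses implicitly in proving \cref{subtype-so-si}), spells that out. Incidentally, you could drop both appeals to transitivity: an SO (resp.\ SI) tree is its own unique SO (resp.\ SI) decomposition, so the ``if'' directions of \cref{subtype-so-si} already give $\str{T}' \subtype \str{U}$ in the forward direction and $\str{T}'' \subtype \str{U}'$ in the converse.
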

\begin{proof}
   This is immediate from \cref{subtype-so-si}.
\end{proof}

\subsection{Subtyping and refinement}
In light of the above results, to prove that our definition of subtyping is equivalent to that of \cite{GPPSY2023}, it is enough to show that our subtyping $\subtype$ coincides with refinement $\refines$ for SISO session trees.

\begin{lemma}\label{transition-prefix}
  In the following, assume all session trees are SISO.
  \begin{enumerate}
    \item $\reduceinternal{\str{T}}{\str{T}'}{\pp}{\ell}{\basetype}$ if and only if either $\str{T} = \ltypesendone{\pp}{\ell}{\basetype}{h'}$, or there exist $\mathcal{B}^{(\pp)}$ and $\str{T}''$ such that $\str{T} = \mathcal{B}^{(\pp)}.\ltypesendone{\pp}{\ell}{\basetype}{\str{T}''}$ and $\str{T}' = \mathcal{B}^{(\pp)}.\str{T}''$.
    \item $\Recvs{\pp}(\str{T})$ if and only if there exist $\ell$, $\basetype$, $\str{T}'$ such that either $\str{T} = \ltyperecvone{\pp}{\ell}{\basetype}{\str{T}'}$, or there exist $\mathcal{A}^{(\pp)}$ and $\str{T}''$ such that $\str{T} = \mathcal{A}^{(\pp)}.\ltyperecvone{\pp}{\ell}{\basetype}{\str{T}''}$ and $\str{T}' = \mathcal{A}^{(\pp)}.\str{T}''$.
  \end{enumerate}
\end{lemma}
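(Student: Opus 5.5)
We prove both parts by induction, using the fact that in the SISO setting every internal and external choice has exactly one branch. This makes the rules of \cref{fig:inductive-trees} essentially deterministic: rule \RuleName{$\reduceinternalsymbol$-$\oplus$} must take $J = I$, since the conclusion has to remain SISO with a non-empty choice. Both directions of (1) and (2) then amount to comparing the shape of a derivation with the shape of a prefix.

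For (1), ``only if'' is by induction on the derivation of $\reduceinternal{\str{T}}{\str{T}'}{\pp}{\ell}{\basetype}$.
\begin{itemize}
\item In the base case $\str{T} = \ltypesendone{\pp}{\ell}{\basetype}{\str{T}'}$, which is the first disjunct.
\item For \RuleName{$\reduceinternalsymbol$-$\oplus$} we have $\str{T} = \ltypesendone{\pq}{\ell'}{\basetype'}{\str{T}_0}$ with $\pq \neq \pp$ and $\reduceinternal{\str{T}_0}{\str{T}_0'}{\pp}{\ell}{\basetype}$. The inductive hypothesis writes $\str{T}_0$ either as $\ltypesendone{\pp}{\ell}{\basetype}{\str{T}_0'}$ or as $\mathcal{B}^{(\pp)}.\ltypesendone{\pp}{\ell}{\basetype}{\str{T}''}$ with $\str{T}_0' = \mathcal{B}^{(\pp)}.\str{T}''$. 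Prefixing with $\ltypesendone{\pq}{\ell'}{\basetype'}{}$ gives a new $\mathcal{B}^{(\pp)}$ by the grammar clause for sends to $\pq \neq \pp$. In the first sub-case the base prefix $\ltypesendoneprefix{\pq}{\ell'}{\basetype'}$ is not itself a $\mathcal{B}^{(\pp)}$, but $\ltypesendone{\pq}{\ell'}{\basetype'}{\ltypesendoneprefix{\pp}{\ell}{\basetype}}$ is not what we want either, so I would reread the grammar carefully here.
\item The \RuleName{$\reduceinternalsymbol$-$\&$} case is analogous, using the clause $\ltyperecvone{\pr}{\ell}{\basetype}{\mathcal{B}^{(\pp)}}$, where $\pr$ is arbitrary, possibly equal to $\pp$.
\end{itemize}
``If'' is by induction on the structure of $\mathcal{B}^{(\pp)}$, applying the same rules in reverse.

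For (2), the argument is parallel. ``Only if'' proceeds by induction on $\Recvs{\pp}(\str{T})$. The base case gives the first disjunct. The \RuleName{Recvs-$\&$} case, whose conclusion in the tree version should be an external choice on $\pq \neq \pp$, extends an $\mathcal{A}^{(\pp)}$ by $\ltyperecvone{\pq}{\ell'}{\basetype'}{}$; the witness $\str{T}'$ is obtained by deleting the located $\pp$-input. ``If'' is by induction on $\mathcal{A}^{(\pp)}$.

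The main obstacle is the boundary case of the prefix grammars. The case where $\str{T}$ is a one-step prefix followed directly by the $\pp$-action has to be matched with the base clauses $\ltypesendoneprefix{\pp}{\ell}{\basetype}$ and $\ltyperecvoneprefix{\pq}{\ell}{\basetype}$, and it is easy to be off by one. Concretely, I expect to have to read $\mathcal{B}^{(\pp)}.\str{U}$ as plugging $\str{U}$ into the hole of the prefix, and treat a base clause as a prefix of length one. With that reading, the inductive step from the first disjunct produces the base prefix $\ltypesendoneprefix{\pq}{\ell'}{\basetype'}$, allowed after reading $\pq$ as any participant other than $\pp$. I would verify this against the refinement rules, since their intended use in the next lemma fixes the convention. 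The rest is routine.
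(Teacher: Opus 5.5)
Your proposal is correct and is the same argument as the paper's, whose proof is just ``easy inductions'': induction on the derivation of $\reduceinternalsymbol$ or $\Recvs{\pp}$ for ``only if'', and on the prefix $\mathcal{B}^{(\pp)}$ or $\mathcal{A}^{(\pp)}$ for ``if''. Your resolution of the boundary case is right: the base clause of $\mathcal{B}^{(\pp)}$ must be read as a single send to some $\pq \neq \pp$, as in GPPSY (with the printed send to $\pp$ the lemma fails, e.g.\ for $\ltypesendone{\pq}{\ell'}{\basetype'}{\ltypesendone{\pp}{\ell}{\basetype}{\ltypeend}}$), and likewise the tree rule Recvs-$\&$ must conclude with an external choice; since that rule has no side condition $\pq \neq \pp$, an instance with $\pq = \pp$ is simply covered by the first disjunct.
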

\begin{proof}
  Both of these are easy inductions.
\end{proof}

\begin{lemma}\label{actset-lemma}
  If $\str{T}$ and $\str{U}$ are SISO, and $\str{T} \subtype \str{U}$, then $\actset{\str{T}} = \actset{\str{U}}$.
\end{lemma}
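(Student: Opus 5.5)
The plan is to show that subtyping between SISO trees preserves the set of action kinds. Recall that $\actset{\str{T}}$ is the set of symbols $\pp!$ and $\pp?$ that occur somewhere along the single path of $\str{T}$. It is the least fixed point of its defining equations, so membership $\pp! \in \actset{\str{T}}$ is witnessed by a finite depth. We therefore prove the two inclusions $\actset{\str{T}} \subseteq \actset{\str{U}}$ and $\actset{\str{U}} \subseteq \actset{\str{T}}$ separately. Each is done by induction on the depth at which the action first occurs, generalising over all SISO pairs $\str{T} \subtype \str{U}$.

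For $\actset{\str{T}} \subseteq \actset{\str{U}}$ we induct on the depth $d$ of the first occurrence of the action $a$ in $\str{T}$, and case on the head of $\str{T}$. The head cannot be $\ltypeend$, since then $a$ would not occur.

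\begin{itemize}
\item If $\str{T} = \ltypesendone{\pq}{\ell}{\basetype}{\str{T}_1}$, clause (1) of subtyping gives $\reduceinternal{\str{U}}{\str{U}_1}{\pq}{\ell}{\basetype}$ with $\str{T}_1 \subtype \str{U}_1$. By \cref{transition-prefix}(1), $\str{U}$ is either $\ltypesendone{\pq}{\ell}{\basetype}{\str{U}_1}$ or $\mathcal{B}^{(\pq)}.\ltypesendone{\pq}{\ell}{\basetype}{\str{U}''}$ with $\str{U}_1 = \mathcal{B}^{(\pq)}.\str{U}''$. In both cases $\actset{\str{U}} = \{\pq!\} \cup \actset{\str{U}_1}$. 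If $a = \pq!$ we are done. Otherwise $a$ occurs in $\str{T}_1$ at depth $d-1$, and the induction hypothesis applied to $\str{T}_1 \subtype \str{U}_1$ finishes this case.
\item If $\str{T} = \ltyperecvone{\pq}{\ell}{\basetype}{\str{T}_1}$, clause (2) gives $\Recvs{\pq}(\str{U})$. Using \cref{transition-prefix}(2), write $\str{U} = \mathcal{A}.\ltyperecvone{\pq}{\ell'}{\basetype'}{\str{U}''}$, where $\mathcal{A}$ is a possibly empty $\mathcal{A}^{(\pq)}$ prefix.
\end{itemize}

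The receive case needs more care, because we need $\str{T}_1 \subtype \mathcal{A}.\str{U}''$ and also $\ell=\ell'$. The natural tool is \cref{subtree-congruences}(1) applied to the outermost receive in $\str{U}$. For a receive on some $\pr \neq \pq$ at the head of $\mathcal{A}$, it gives a $\str{T}'$ with $\reduceexternal{\str{T}}{\str{T}'}{\pr}{\ell_0}{\basetype_0}$. By the SISO analogue of \cref{transition-prefix} for $\reduceexternalsymbol$, this $\str{T}'$ has the form $\ltyperecvone{\pq}{\ell}{\basetype}{\str{T}_1'}$, where $\str{T}_1'$ is $\str{T}_1$ with that receive removed. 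We can then peel $\mathcal{A}$ one receive at a time, checking at each step that the peeled receive lies in both action sets. The peeling is an inner induction on the length of $\mathcal{A}$.

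At the end, $\str{U}$'s head is a receive on $\pq$. Clause (4) together with the $\reduceexternalsymbol$-base rule forces $\ell=\ell'$ and $\basetype=\basetype'$, and yields the residual subtyping $\str{T}_1 \subtype \mathcal{A}.\str{U}''$. Removing an element of the prefix does not change action sets except for that element, so the induction hypothesis again gives $a \in \actset{\str{U}}$.

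The reverse inclusion $\actset{\str{U}} \subseteq \actset{\str{T}}$ is dual. Induct on the depth of the action in $\str{U}$, using clause (3) with $\Sends{}$ and the $\mathcal{B}$ prefix, and clause (4) with $\reduceexternalsymbol$ and the $\mathcal{A}$ prefix, mirroring the above.

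The main obstacle is the peeling argument in the receive case. We must show that a residual $\str{T}_1 \subtype \str{U}_1$ really exists, so that the induction can continue. With the coinductive clauses alone, sends in $\str{U}$ placed before the matching receive (the $\mathcal{B}$ case, via \cref{subtree-congruences}(2)) must be commuted past it. I would first establish an auxiliary lemma: for SISO trees, if $\str{T}\subtype\str{U}$ and $\str{T}$ starts with action $\pq?$ or $\pq!$, then $\str{U}$ has the corresponding $\mathcal{A}$ or $\mathcal{B}$ decomposition and the residual trees are again related by $\subtype$. Then the main induction reduces to this lemma plus the action-set bookkeeping.
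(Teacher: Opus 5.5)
Your proof is correct and is essentially the paper's argument. Both use the same two inclusions and the same case split on the head of $\str{T}$, with clause (1) plus \cref{transition-prefix} handling sends. For receives, both rest on the residual $\str{T}_1 \subtype \mathcal{A}.\str{U}''$: the paper just cites this as the $\prectypesymbol$-lifting (the tree analogue of \cref{subtype-prectype}), and the proof of that is exactly your peeling induction, re-wrapping each peeled receive with the ``if'' direction of \cref{subtree-congruences}(1).

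One correction to your last paragraph. $\Recvs{\pq}(\str{U})$ forces the prefix $\mathcal{A}$ to consist only of receives, so in that case there are no sends in $\str{U}$ to commute. Send-peeling is needed only in the reverse inclusion: when $\str{U}$ starts with a send, you peel the sends on the $\str{T}$ side, dually via \cref{subtree-congruences}(2).
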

\begin{proof}
  We show $\actset{\str{T}} \subseteq \actset{\str{U}}$; the proof of $\actset{\str{U}} \subseteq \actset{\str{T}}$ is similar.
  \begin{itemize}
    \item If $\str{T} = \ltypeend$, then $\actset{\str{T}} = \emptyset$, so this is trivial.
    \item If $\str{T} = \ltypesendone{\pp}{\ell}{\basetype}{\str{T}'}$, then $\reduceinternal{\str{U}}{\str{U}'}{\pp}{\ell}{\basetype}$ for some $\str{U}'$ such that $\str{T}' \subtype \str{U}'$.
      Thus $\actset{\str{T}} = \{\pp!\} \cup \actset{\str{T}'} \subseteq \{\pp!\} \cup \actset{\str{U}'} \subseteq \actset{\str{U}}$.
    \item If $\str{T} = \ltyperecvone{\pp}{\ell}{\basetype}{\str{T}'}$, then $\Recvs{\pp}(\str{U})$, so $\prectype{\str{U}}{\ltyperecvone{\pp}{\ell}{\basetype}{\str{U}'}}$ for some $\str{U}'$ such that $\str{T}' \subtype \str{U}'$.
      We then have that $\actset{\str{T}} = \{\pp?\} \cup \actset{\str{T}'} \subseteq \{\pp?\} \cup \actset{\str{U}'} \subseteq \actset{\str{U}}$.
  \end{itemize}
\end{proof}

\begin{lemma}\label{subtyping-prefix-congruences}
  In the following, assume all session trees are SISO.
  \begin{enumerate}
    \item If $\str{T} \subtype \str{U}$, then $\ltyperecvone{\pp}{\ell}{\basetype}{\str{T}} \subtype \ltyperecvone{\pp}{\ell}{\basetype}{\str{U}}$
    \item If $\str{T} \subtype \mathcal{A}^{(\pp)}.\str{U}$, then $\ltyperecvone{\pp}{\ell}{\basetype}{\str{T}} \subtype \mathcal{A}^{(\pp)}.\ltyperecvone{\pp}{\ell}{\basetype}{\str{U}}$
    \item If $\str{T} \subtype \str{U}$, then $\ltypesendone{\pp}{\ell}{\basetype}{\str{T}} \subtype \ltypesendone{\pp}{\ell}{\basetype}{\str{U}}$
    \item If $\str{T} \subtype \mathcal{B}^{(\pp)}.\str{U}$, then $\ltypesendone{\pp}{\ell}{\basetype}{\str{T}} \subtype \mathcal{B}^{(\pp)}.\ltypesendone{\pp}{\ell}{\basetype}{\str{U}}$
  \end{enumerate}
\end{lemma}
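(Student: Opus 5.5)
For each item I construct a subtyping witness and check it clause by clause. Since everything is SISO there is at most one branch at each node, so the clauses of the tree subtyping definition reduce to single checks. Items (1) and (3) are the two special cases of the congruence rules. For (3), clause (1) of the definition is discharged by \RuleName{$\reduceinternalsymbol$-base} applied to $\ltypesendone{\pp}{\ell}{\basetype}{\str{U}}$, and this gives exactly the residual $\str{U}$ with $\str{T} \subtype \str{U}$. Clause (3) is \RuleName{Sends-base}, and the other clauses are vacuous. This is precisely \cref{subtree-congruences}(2). Item (1) is dual and follows from \cref{subtree-congruences}(1), with clause (2) discharged by \RuleName{Recvs-base}.

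For (2) I argue by induction on the prefix $\mathcal{A}^{(\pp)}$. Write $\mathcal{A}^{(\pp)} = \ltyperecvoneprefix{\pq}{\ell'}{\basetype'}$ or $\mathcal{A}^{(\pp)} = \ltyperecvone{\pq}{\ell'}{\basetype'}{\mathcal{A}'}$, where $\pq \neq \pp$. In both cases the supertype $\str{V} = \mathcal{A}^{(\pp)}.\ltyperecvone{\pp}{\ell}{\basetype}{\str{U}}$ is an external choice on $\pq$. I verify the clauses of the definition for $\ltyperecvone{\pp}{\ell}{\basetype}{\str{T}} \subtype \str{V}$ directly.
\begin{itemize}
  \item Clause (2) requires $\Recvs{\pp}(\str{V})$. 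This holds by \RuleName{Recvs-$\&$} down the $\pq$-receives of $\mathcal{A}$, since all of them are on participants other than $\pp$, followed by \RuleName{Recvs-base}. Equivalently, it follows from \cref{transition-prefix}(2).
  \item Clause (4) requires a $\str{T}_1$ with $\reduceexternal{\ltyperecvone{\pp}{\ell}{\basetype}{\str{T}}}{\str{T}_1}{\pq}{\ell'}{\basetype'}$ and $\str{T}_1 \subtype \mathcal{A}'.\ltyperecvone{\pp}{\ell}{\basetype}{\str{U}}$. In the base case $\mathcal{A}'$ is empty.
  \item Clauses (1) and (3) are vacuous, because neither side is an internal choice at the root.
\end{itemize}

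The main obstacle is constructing $\str{T}_1$ in clause (4). From the hypothesis $\str{T} \subtype \mathcal{A}.\str{U}$, clause (4) gives a $\str{T}'$ with $\reduceexternal{\str{T}}{\str{T}'}{\pq}{\ell'}{\basetype'}$ and $\str{T}' \subtype \mathcal{A}'.\str{U}$. Since $\pp \neq \pq$, rule \RuleName{$\reduceexternalsymbol$-$\&$} with $J$ the singleton branch yields $\reduceexternal{\ltyperecvone{\pp}{\ell}{\basetype}{\str{T}}}{\ltyperecvone{\pp}{\ell}{\basetype}{\str{T}'}}{\pq}{\ell'}{\basetype'}$. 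It then remains to show $\ltyperecvone{\pp}{\ell}{\basetype}{\str{T}'} \subtype \mathcal{A}'.\ltyperecvone{\pp}{\ell}{\basetype}{\str{U}}$. In the base case this is item (1). Otherwise it is the induction hypothesis on the shorter prefix $\mathcal{A}'$.

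This induction is well-founded because prefixes are finite. Subtyping itself is coinductive, but no coinduction is needed here: every step above constructs the required residual and then appeals either to the hypothesis or to an already-proved item. Cleanly, I would use \cref{subtree-congruences}(1) to reduce the goal to clause (4) alone, noting that its proof also discharges clause (2) automatically.

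For (4) I proceed by the analogous induction on $\mathcal{B}^{(\pp)}$, splitting into its constructors. In the base cases $\mathcal{B} = \ltyperecvoneprefix{\pr}{\ell'}{\basetype'}$ or $\mathcal{B} = \ltypesendoneprefix{\pp}{\ell'}{\basetype'}$, and in the inductive cases the head is $\pr?$ with $\pr$ arbitrary, possibly $\pp$, or $\pq!$ with $\pq \neq \pp$. The constructions are as follows.
\begin{itemize}
  \item Clause (1) is the main condition to establish. I show $\reduceinternal{\mathcal{B}.\ltypesendone{\pp}{\ell}{\basetype}{\str{U}}}{\mathcal{B}.\str{U}}{\pp}{\ell}{\basetype}$ by \cref{transition-prefix}(1), that is, by \RuleName{$\reduceinternalsymbol$-$\&$} and \RuleName{$\reduceinternalsymbol$-$\oplus$} through the prefix. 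The hypothesis $\str{T} \subtype \mathcal{B}.\str{U}$ then gives the required residual subtyping.
  \item The remaining clauses concern the head of $\mathcal{B}$. If the head is $\pr?$, clause (4) is handled by \RuleName{$\reduceexternalsymbol$-$\oplus$} together with the induction hypothesis, exactly as in (2).
  \item If the head is $\pq!$, clause (3) holds by \RuleName{Sends-$\oplus$} together with $\Sends{\pq}(\str{T})$. The latter follows from the hypothesis $\str{T} \subtype \mathcal{B}.\str{U}$.
\end{itemize}
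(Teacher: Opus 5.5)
Your proof is correct and takes the paper's route: (1) and (3) are immediate from \cref{subtree-congruences} or a direct clause check, and (2) and (4) go by induction on the prefix, pushing the leading prefix action past the new $\pp$-action via \RuleName{$\reduceexternalsymbol$-$\&$} or \RuleName{$\reduceexternalsymbol$-$\oplus$} and closing with the induction hypothesis, or with (1)/(3) at the base. One small slip: the output base case of $\mathcal{B}^{(\pp)}$ must be $\pq!$ with $\pq \neq \pp$, not the $\pp!$ you copied from the paper's typo (with $\pp!$, item (4) fails in general), but your case analysis already treats it as $\pq!$, so the argument stands.
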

\begin{proof}
  The proofs of (1) and (3) are trivial.
  The proofs of (2) and (4) are easy inductions on the prefixes.
\end{proof}

\begin{lemma}\label{refinement-lemma}
  If $\str{T}$ and $\str{U}$ are both SISO, then we have $\str{T} \subtype \str{U}$ iff $\str{T} \refines \str{U}$.
\end{lemma}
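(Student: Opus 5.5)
We prove the two inclusions separately, both coinductively: $\refines$ is the greatest relation closed under its (double-line) rules, and $\subtype$ on session trees is the greatest relation satisfying the four conditions of the tree subtyping definition.

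For $\refines \subseteq \subtype$, it suffices to show that $\refines$, restricted to SISO trees, satisfies those four conditions. We case on the last rule of $\str{T} \refines \str{U}$. The rules for $\ltypeend$, and the two plain prefix rules for $\ltyperecvone{\pp}{\ell}{\basetype}{\cdot}$ and $\ltypesendone{\pp}{\ell}{\basetype}{\cdot}$, are immediate. For the $\mathcal{B}^{(\pp)}$ rule, $\str{T} = \ltypesendone{\pp}{\ell}{\basetype}{\str{T}'}$ and $\str{U} = \mathcal{B}^{(\pp)}.\ltypesendone{\pp}{\ell}{\basetype}{\str{U}'}$. Condition (1) follows from \cref{transition-prefix}(1), which gives $\reduceinternal{\str{U}}{\mathcal{B}^{(\pp)}.\str{U}'}{\pp}{\ell}{\basetype}$, together with the premise $\str{T}' \refines \mathcal{B}^{(\pp)}.\str{U}'$. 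Condition (2) is vacuous.

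Conditions (3) and (4) concern the head of $\str{U}$, which is the head of $\mathcal{B}^{(\pp)}$. If that head is $\ltypesendone{\pq}{\cdot}{\cdot}{}$ with $\pq \neq \pp$, we need $\Sends{\pq}(\str{T})$. Since $\str{T}$ starts with a send to $\pp$, by Sends-$\oplus$ this reduces to $\Sends{\pq}(\str{T}')$. Here the action-set side condition $\actset{\str{T}'} = \actset{\mathcal{B}^{(\pp)}.\str{U}'}$ guarantees that $\str{T}'$ eventually outputs to $\pq$. However, it does not by itself rule out an input arriving before that output. We therefore need an auxiliary lemma: if $\str{T}' \refines \str{V}$ and $\str{V}$ starts with $\pq!$, then $\Sends{\pq}(\str{T}')$. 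We prove it by induction on the position of the first $\pq!$ in $\str{T}'$, which is finite by the action-set equality, and we argue that refinement cannot place an input before it. Condition (4) is handled similarly, using $\reduceexternalsymbol$-$\oplus$ to pass under the leading send, and \cref{transition-prefix}-style decompositions for $\reduceexternalsymbol$. The $\mathcal{A}^{(\pp)}$ case is dual.

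For $\subtype \subseteq \refines$, we show that $\subtype$ on SISO trees is closed backwards under the refinement rules: whenever $\str{T} \subtype \str{U}$, some refinement rule applies whose premises are again in $\subtype$. We case on $\str{T}$. If $\str{T} = \ltypeend$, condition (3)/(4) forces $\str{U} = \ltypeend$. If $\str{T} = \ltypesendone{\pp}{\ell}{\basetype}{\str{T}'}$, condition (1) gives $\reduceinternal{\str{U}}{\str{U}'}{\pp}{\ell}{\basetype}$ with $\str{T}' \subtype \str{U}'$. \Cref{transition-prefix}(1) then either makes $\str{U} = \ltypesendone{\pp}{\ell}{\basetype}{\str{U}'}$, so the plain rule applies, or gives $\str{U} = \mathcal{B}^{(\pp)}.\ltypesendone{\pp}{\ell}{\basetype}{\str{U}''}$ with $\str{U}' = \mathcal{B}^{(\pp)}.\str{U}''$. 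The action-set premise then follows from \cref{actset-lemma}. If $\str{T}$ is an input from $\pp$, condition (2) gives $\Recvs{\pp}(\str{U})$ and \cref{transition-prefix}(2) provides the $\mathcal{A}^{(\pp)}$ decomposition. We still have to show $\str{T}' \subtype \mathcal{A}^{(\pp)}.\str{U}''$. For this we use the $\prectypesymbol$ characterisation, namely \cref{subtype-lifting}(2) transported to trees, which yields the residual of $\str{U}$ after the $\pp$-input as a supertype of $\str{T}'$.

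The main obstacle is the auxiliary lemma in the first direction: showing that $\refines$ implies the liveness-flavoured conditions (3) and (4), in particular $\Sends{\pq}$. This is where the action-set side conditions have to be used inductively. I would first try induction on the finite depth of the first occurrence of the relevant action, which exists by the action-set equality, unfolding $\refines$ once at each step.
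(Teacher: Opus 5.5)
Your ``only if'' direction ($\subtype \subseteq \refines$) is essentially the paper's. You case on $\str{T}$, split with \cref{transition-prefix} into the plain rule or the $\mathcal{B}^{(\pp)}$/$\mathcal{A}^{(\pp)}$ rule, and discharge the side condition with \cref{actset-lemma}; do this in the input case too, not only for sends. The one difference is the input case. There the paper matches the label and obtains $\str{T}' \subtype \mathcal{A}^{(\pp)}.\str{U}''$ by lifting the $\reduceexternalsymbol$-transition of $\str{U}$ back to $\str{T}$ (the tree analogue of \cref{subtype-lifting}(4)). You use the $\prectypesymbol$ form (\cref{subtype-lifting}(2)) instead; both work.

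The ``if'' direction is where you genuinely diverge. The paper gets it in one line from \cref{subtyping-prefix-congruences}, i.e.\ from closure of $\subtype$ under the refinement rules read premises-to-conclusion, and that lemma never mentions $\actset{}$. On its own, such closure only reaches the inductively generated part of $\refines$. You instead verify that $\refines$ itself satisfies the four subtyping conditions, which is what the greatest-fixed-point inclusion actually requires.

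This is exactly where the $\actset{}$ premises become indispensable. Without them, take $\str{T} = \ltypesendone{\pp}{\ell}{\basetype}{\str{T}}$ and $\str{U} = \ltypesendone{\pq}{\ell'}{\basetype'}{\str{T}}$ with $\pq \neq \pp$. Then $\str{T} \refines \str{U}$ would hold coinductively via the $\mathcal{B}^{(\pp)}$ rule, yet $\Sends{\pq}(\str{T})$ fails. So the paper's route is shorter, while yours costs two liveness-style auxiliary lemmas but makes the coinductive argument explicit and shows what the side conditions are for.

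To complete your route, two things are needed:
\begin{itemize}
\item State the input analogue of your send lemma precisely: if $\str{T}' \refines \ltyperecvone{\pr}{\ell'}{\basetype'}{\str{V}}$ and $\pr? \in \actset{\str{T}'}$, then $\reduceexternal{\str{T}'}{\str{T}''}{\pr}{\ell'}{\basetype'}$ with $\str{T}'' \refines \str{V}$. Prove it by induction on the depth of the first $\pr?$ in $\str{T}'$, passing under sends via the $\mathcal{B}$ rule and under inputs from other participants via the $\mathcal{A}$ rule.
\item The witnesses you construct, e.g.\ $\ltypesendone{\pp}{\ell}{\basetype}{\str{T}''}$ against $\mathcal{B}'.\ltypesendone{\pp}{\ell}{\basetype}{\str{U}'}$ in condition (4), must again lie in $\refines$. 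This requires $\actset{\str{T}''} = \actset{\mathcal{B}'.\str{U}'}$, which follows from the easy fact that $\str{T} \refines \str{U}$ implies $\actset{\str{T}} = \actset{\str{U}}$ (induction on the depth at which an action occurs).
\end{itemize}
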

\begin{proof}
  For the only if direction, assume that $\str{T} \subtype \str{U}$.
  We proceed by inspecting $\str{T}$.
  \begin{itemize}
    \item If $\str{T} = \ltypeend$, then we have $\str{U} = \ltypeend$, and $\ltypeend \refines \ltypeend$.
    \item If $\str{T} = \ltypesendone{\pp}{\ell}{\basetype}{\str{T}'}$, then $\reduceinternal{\str{U}}{\str{U}'}{\pp}{\ell}{\basetype}$ with $\str{T}' \subtype \str{U}'$, so by \cref{transition-prefix}, we can split into two cases.
      If $\str{U} = \ltypesendone{\pp}{\ell}{\basetype}{\str{U}'}$, then the result is immediate.
      Otherwise, there exist $\mathcal{B}^{(\pp)}$ and $\str{U}''$ such that $\str{U} = \mathcal{B}^{(\pp)}.\ltypesendone{\pp}{\ell}{\basetype}{\str{U}''}$ and $\str{U}' = \mathcal{B}^{(\pp)}.\str{U}''$.
      We have $\actset{\str{T'}} = \actset{\str{U'}} = \actset{\mathcal{B}^{(\pp)}.\str{U}''}$ by \cref{actset-lemma}, and $\str{T}' \refines \str{U}' = \mathcal{B}^{(\pp)}.\str{U}''$ coinductively.
    \item If $\str{T} = \ltyperecvone{\pp}{\ell}{\basetype}{\str{T}'}$, then $\Recvs{\pp}(\str{U})$, so by \cref{transition-prefix}, we can split into two cases.
      If $\str{U} = \ltyperecvone{\pp}{\ell}{\basetype}{\str{U}'}$, the result is immediate.
      Otherwise, we have $\str{U} = \mathcal{A}^{(\pp)}.\ltyperecvone{\pp}{\ell'}{\basetype'}{\str{U}''}$ and $\str{U}' = \mathcal{A}^{(\pp)}.\str{U}''$.
      Since $\reduceexternal{\str{U}}{\str{U}'}{\pp}{\ell'}{\basetype'}$, it follows from $\str{T} \subtype \str{U}$ that $\ell' = \ell$, $\basetype' = \basetype$, and $\str{U}' \subtype \str{T}'$.
      We have $\actset{\str{T}'} = \actset{\str{U}'} = \actset{\mathcal{A}^{(\pp)}.\str{U}''}$ by \cref{actset-lemma}, and $\str{T}' \refines \str{U}' = \mathcal{A}^{(\pp)}.\str{T}''$ coinductively.
  \end{itemize}

  The if direction follows from \cref{subtyping-prefix-congruences}.
\end{proof}

\subsection{The equivalence}

\begin{theorem}\label{thm:gppsy-equivalence}
    Let $\sty{T}$ and $\sty{U}$ be closed session types.
    We have $\sty{T} \subtype \sty{U}$ if and only if $\sty{T} \subtype_{\mathrm{GPPSY}} \sty{U}$.
\end{theorem}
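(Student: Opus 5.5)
The plan is to chain together the reductions already established in this appendix, so that the theorem becomes a composition of equivalences. Let $\sty{T}$ and $\sty{U}$ be closed. By \cref{subtype-to-subtree}, $\sty{T} \subtype \sty{U}$ is equivalent to $\typetree{\sty{T}} \subtype \typetree{\sty{U}}$. By definition, $\sty{T} \subtype_{\mathrm{GPPSY}} \sty{U}$ is equivalent to $\typetree{\sty{T}} \subtype_{\mathrm{GPPSY}} \typetree{\sty{U}}$. It therefore suffices to show that, for arbitrary session trees $\str{T}, \str{U}$, our tree subtyping $\str{T} \subtype \str{U}$ coincides with $\str{T} \subtype_{\mathrm{GPPSY}} \str{U}$.

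Next, I would apply \cref{subtype-siso}. It rewrites $\str{T} \subtype \str{U}$ as
\[
  \forall \str{T}' \in \soset{\str{T}}, \str{U}' \in \siset{\str{U}}.\,
  \exists \str{T}'' \in \siset{\str{T}'}, \str{U}'' \in \soset{\str{U}'}.\,
  \str{T}'' \subtype \str{U}''.
\]
This has exactly the shape of the definition of $\subtype_{\mathrm{GPPSY}}$ on trees, except that the innermost relation is $\subtype$ rather than $\refines$. To replace one by the other I need the following observation. If $\str{T}'$ is SO, then every $\str{T}'' \in \siset{\str{T}'}$ is SISO, since SI-decomposition only prunes external choices and preserves single-output internal choices. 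Dually, if $\str{U}'$ is SI, then every element of $\soset{\str{U}'}$ is SISO. This is a routine coinductive check on the definitions of $\siset{-}$ and $\soset{-}$. With it in hand, \cref{refinement-lemma} lets me substitute $\str{T}'' \refines \str{U}''$ for $\str{T}'' \subtype \str{U}''$ inside the quantifiers, and the two formulas become literally the same.

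The main obstacle is not in this assembly, which is bookkeeping, but in the correctness of the lemmas it rests on. The most delicate is \cref{subtype-siso}, through \cref{subtype-so-si} and \cref{subtype-finite}: the König-style bounded-height argument there must be sound for infinite trees. If that argument turned out to be shaky, I would first try to prove the two halves of \cref{subtype-so-si} separately by exhibiting explicit pre-simulations on trees. The SISO coincidence \cref{refinement-lemma} also needs care: the $\actset{-}$ side condition must match, which is handled by \cref{actset-lemma}. Given those results as stated, the theorem follows by composing the three equivalences above.
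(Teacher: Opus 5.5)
Your proposal is correct and matches the paper's proof, which derives the theorem as an immediate consequence of \cref{subtype-to-subtree}, \cref{subtype-siso} and \cref{refinement-lemma}, composed just as you describe. You also state explicitly that SI-decompositions of SO trees, and SO-decompositions of SI trees, are SISO; this is the step the paper leaves implicit when it applies \cref{refinement-lemma} inside the quantifiers.
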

\begin{proof}
    This is an immediate consequence of \cref{subtype-to-subtree}, \cref{subtype-siso}, and \cref{refinement-lemma}.
\end{proof}

\end{document}